\theoremstyle{plain}
\newtheorem{theorem}{Theorem}[section]
\newtheorem{lemma}[theorem]{Lemma}
\newtheorem{claim}[lemma]{Claim}
\newtheorem{proposition}[lemma]{Proposition}
\newtheorem{observation}[lemma]{Observation}
\newtheorem{corollary}[lemma]{Corollary}
\newtheorem{remark}[lemma]{Remark}
\newtheorem{definition}[lemma]{Definition}
\newcommand{\R}{\mathbb{R}}
\newcommand{\E}{\mathbb{E}}
\newcommand{\gnp}{\mathcal{G}_{n,p}}
\newcommand{\gnhalf}{\mathcal{G}_{n,\frac12}}
\newcommand{\erdosrenyi}{Erd\H{o}s--R\'enyi~}
\title{Quality control in sublinear time: a case study via random graphs}
\author{Cassandra Marcussen\thanks{School of Engineering and Applied Sciences, Harvard University, Cambridge, Massachusetts, USA. Email: cmarcussen@g.harvard.edu. Supported in part by an NDSEG fellowship, and by NSF Award 2152413 and a Simons Investigator Award to Madhu Sudan.} \\
\and Ronitt Rubinfeld\thanks{Computer Science and Artificial Intelligence Laboratory, MIT, Cambridge, Massachusetts, USA. Email: ronitt@csail.mit.edu. Supported by the NSF TRIPODS program (award DMS-2022448)
and CCF-2310818.} \\ 
\and Madhu Sudan\thanks{School of Engineering and Applied Sciences, Harvard University, Cambridge, Massachusetts, USA. Email:
madhu@cs.harvard.edu. Supported
in part by a Simons Investigator Award, NSF Award CCF 2152413, and AFOSR award FA9550-25-1-0112.}}
\date{\today}
\begin{document}

\maketitle

\abstract{
Many algorithms are designed to work well on average over inputs. When running such an algorithm on an arbitrary input, we must ask: Can we trust the algorithm on this input? We identify a new class of algorithmic problems addressing this, which we call ``Quality Control Problems.'' These problems are specified by a (positive, real-valued) ``quality function'' $\rho$ and a distribution $\mathcal{D}$ such that, with high probability, a sample drawn from $\mathcal{D}$ is ``high quality,'' meaning its $\rho$-value is near $1$. The goal is to accept inputs $x \sim \mathcal{D}$ and reject potentially adversarially generated inputs $x$ with $\rho(x)$ far from $1$. The objective of quality control is thus weaker than either component problem: testing for ``$\rho(x) \approx 1$'' or testing if $x \sim \mathcal{D}$, and offers the possibility of more efficient algorithms.

In this work, we consider the sublinear version of the quality control problem, where $\mathcal{D} \in \Delta(\{0,1\}^N)$ and the goal is to solve the $(\mathcal{D},\rho)$-quality problem with $o(N)$ queries and time. As a case study, we consider random graphs, i.e., $\mathcal{D} = \mathcal{G}_{n,p}$ (and $N = \binom{n}2$), and the $k$-clique count function $\rho_k := C_k(G)/\E_{G' \sim \mathcal{G}_{n,p}}[C_k(G')]$, where $C_k(G)$ is the number of $k$-cliques in $G$. Testing if $G \sim \mathcal{G}_{n,p}$ with one sample, let alone with sublinear query access to the sample, is of course impossible. Testing if $\rho_k(G)\approx 1$ requires $p^{-\Omega(k^2)}$ samples. In contrast, we show that the quality control problem for $\mathcal{G}_{n,p}$ (with $n \geq p^{-ck}$ for some constant $c$) with respect to $\rho_k$ can be tested with $p^{-O(k)}$ queries and time, showing quality control is provably superpolynomially more efficient in this setting. More generally, for a motif $H$ of maximum degree $\Delta(H)$, the respective quality control problem can be solved with $p^{-O(\Delta(H))}$ queries and running time.
}

\thispagestyle{empty}

\newpage

\tableofcontents
\thispagestyle{empty}

\newpage

\setlength{\parskip}{\medskipamount}%

\setcounter{page}{1}
\section{Introduction}

In this work, we formalize a new class of problems --- which we call ``quality control'' problems --- that aim to bridge the gap between algorithms designed for random inputs and adversarially generated examples. We then study a specific class of quality control problems for some basic graph-theoretic properties on random graphs that show the (sublinear) algorithmic possibilities in this space. We start by giving an illustrative example for this class of problems.

Suppose you are in a bookstore trying to find a book to your taste. We assume that given a book, if you read it fully, you would know if you like it or not. However, you do not have the time to read the whole book in the store (not to mention that the storekeepers may not be too happy either if you read books fully and then return them to the shelves claiming you don’t like them). You do have time to scan the book before you determine whether to buy it or not. You would not like to have a false positive (buying the book if it is not to your taste). You might not mind a false negative, though: if you reject this book, there are plenty of other books aimed at people like you. So, as long as the scanning procedure typically accepts books aimed at people like you, you should be happy --- you'll find something to buy and read. Our goal is to take advantage of the abundance of books satisfying our taste to make fast asymmetric decisions. We refer to this class of problems as \textit{quality control problems}, which we formalize in the next section.

Of course, the example above is just for illustration purposes (and we do not give an efficient algorithm for filtering books). The primary motivation for our work is that there are many settings where it is easier --- either complexity-theoretically or analytically --- to design algorithms for instances drawn according to some distribution $\mathcal{D}$. Such algorithms notoriously perform poorly on adversarially generated examples. For example, while finding a satisfying assignment for a $k$-SAT formula is NP-complete \cite{DBLP:conf/stoc/Cook71}, there exist polynomial-time algorithms for finding a satisfying assignment with high probability over uniformly random $k$-SAT instances (in some parameter regimes) \cite{DBLP:conf/focs/ChvatalR92}. Other examples appear in Sum-of-Squares algorithms for planted cliques (e.g. \cite{DBLP:conf/stoc/MekaPW15}) and throughout learning theory. Counts of motifs (i.e., patterns) in random graphs also display this property: for example, $\binom{n}{3} p^3$ is an estimate for the number of triangles in an \erdosrenyi graph with parameters $n, p$ with high probability, but is not a good estimate for many specific graphs. 

Will an algorithm that performs well on average-case instances work well on a specific instance? Quality control problems aim to address this. The goal is to construct an algorithm for certifying the safety of using an algorithm defined to work well over distribution $\mathcal{D}$ on a specific input, based on the ``quality'' (a positive, real-valued function $\rho$) of the input. It is impossible to test if a single input was drawn from $\mathcal{D}$. For quality control, one option is to estimate $\rho$ on the input; does the weakened objective allow for more efficient algorithms than this? In the context of our illustrative example of choosing books, quality control corresponds to our scanning procedure. We were fine rejecting a book we might have liked, as long as we don't reject such books too often. On the other hand, we want to trust that we will not choose books we will not enjoy, since choosing a bad book is costly.

\subsection{Quality Control problems}

A generic quality control problem is given by a distribution $\mathcal{D}$ supported, say, on $\{0,1\}^n$, and a function  $\rho:\{0,1\}^n \to \R_{\geq 0}$ that takes a value in $1 \pm \varepsilon$ with high probability on the distribution $\mathcal{D}$ and some $\epsilon > 0$. Our goal is to design a quality control algorithm $A$ that takes as input $x \in \{0,1\}^n$ and outputs an accept/reject verdict in time $t(n)$. The completeness property that we would like is that $A(x)$ accepts with high probability when $x \sim D$. The soundness condition we would like is that, for all $x$ such that $|\rho(x) - 1| > \epsilon$, $A(x)$ rejects with high probability. We define quality control algorithms formally below.

\begin{definition}[Quality control algorithms]\label{def:quality-control-intro}
    For function $\rho:\mathcal{X} \to \R_{\geq 0}$, small constant $\varepsilon > 0$, and a distribution $\mathcal{D}: \mathcal{X} \to \mathbb{R}_{\geq 0}$ such that $\mathbb{P}_{x \sim \mathcal{D}}\left[\left| \rho(x) - 1 \right| \leq \varepsilon \right] \geq 1 - o(1)$, a $(\mathcal{D}, \rho)$-quality control algorithm $A$ must satisfy the following:
    \begin{enumerate}
        \item $\mathbb{P}_{x \sim \mathcal{D}; R}\left[ A(x) = \textsc{Accept}\right] \geq 1 - o(1)$, where the probability is taken over distribution $\mathcal{D}$ and the randomness $R$ used by algorithm $A$.
        \item For all $x$ satisfying $\left|\rho(x) - 1 \right| > \varepsilon$,  $\mathbb{P}_{R}\left[ A(x) = \textsc{Reject}\right] \geq 2/3$.
    \end{enumerate}
\end{definition}

Above, we use $o(1)$ to mean that a quantity approaches $0$ as the runtime of the algorithm tends to infinity. (Formally, the quality control algorithm takes a error parameter $\tau$ as input, and for every $\tau>0$ there exists a constant $c$ such that the quality control algorithm with error parameter $\tau$ accepts $x\sim \mathcal{D}$ with probability at least $1-\tau$ and the runtime of the quality control algorithm with parameter $\tau$ is at most $c$ times larger than the runtime with error parameter set to some fixed constant, say, $1/3$.)

Let us first compare the related problems of testing if an input $x$ was drawn from $\mathcal{D}$ and approximating $\rho(x)$. First, testing if $x \sim \mathcal{D}$ is impossible, since we are only receiving \textit{one} input from the distribution. Second, while an approximation algorithm for $\rho(x)$ can indeed by made into a quality control algorithm, we want to construct algorithms that are \textit{more efficient} than those for approximating/certifying $\rho$ on worst-case inputs.

In the context of sublinear algorithms, we assume that the algorithm $A$ is given oracle access to $x$ (so the computation may be denoted $A^x$) and restrict the algorithm to make $q(n)$ queries to $x$. 
Thus a quality control problem is specified by a pair $(\mathcal{D},\rho)$ (we assume $\epsilon$ is some fixed tiny constant), and the performance of the algorithm is captured by the query complexity $q = q(n)$ and runtime $t = t(n)$. 

To see how this captures our motivating setting, a book is a string $x \in \{0,1\}^n$. Our taste is captured by the function $\rho$ with the ideal book having $\rho(x) = 1$, and our preference reduces as $\rho(x)$ gets further away from 1. The distribution $\mathcal{D}$ captures books aimed to satisfy our taste.  The soundness condition ensures that we won’t buy a book that is not to our taste (with high probability). The completeness condition ensures we will find a book to our taste if a decent fraction of the books in the store are drawn from distribution $\mathcal{D}$. (So the bookstore does not have to exclusively carry books to our taste; it simply needs to carry a decent fraction of such books.)

Quality control algorithms, in our opinion, give the right set of constraints that, on the one hand, allow the algorithms to take advantage of the distribution $\mathcal{D}$ to gain algorithmic efficiency, while, on the other hand, do not fail catastrophically on adversarially generated examples. In particular, one significant concern with algorithms designed for specific distributions is that they typically assume vast amounts of independence among features. Such independence is notoriously hard to test, let alone being impossible to verify for a {\em single} instance. Quality control algorithms give us a path around such concerns by focusing on the validity of the conclusion rather than the assumptions used to design the algorithm.

Quality control algorithms do already exist in the literature, but to our knowledge, this is the first time the problem class has been formally introduced. In \Cref{sec:previous-works}, we describe some previous instances of quality control problems that have been studied in the literature and also explain the relationship to the field of average-case complexity. In this paper, we show the viability of sublinear-time quality control algorithms for basic classes of graph-theoretic parameters relative to basic classes of random graphs, notably \erdosrenyi graphs $\gnp$ --- where every pair of vertices has an edge between them with probability $p$ independently of all other edges.  (As we explain later, these algorithms work significantly faster than proven lower bounds for estimating the same parameters in the worst-case setting, while satisfying our soundness requirement.) We elaborate on our problems next.

\subsection{Counting motifs in random graphs: problems and our results}

Our testbed for quality control algorithms is problems that estimate basic counts of constant-sized subgraphs (i.e., motifs) in the \erdosrenyi model of random graphs $\gnp$ -- in which a graph on $n$ vertices is generated by including each possible edge independently with probability $p$. Most of our results are best captured by setting $\epsilon$ to some fixed constant (say $0.1$), fixing $\rho$, and then letting $p \to 0$, and then letting $n \to \infty$ much faster than, say, polynomially in $1/p$. This captures ``sparse'' and ``dense'' settings of $\gnp$. We aim for query and time complexities independent of $n$ that grow as slowly in $1/p$ as possible. 

Motif counting is a natural case study for random graphs since motif statistics give structural information about the graph. 
In the dense setting, where $p \in (0, 1)$, motif counts matching that expected from $\gnp$ can imply quasirandom properties \cite{chung1989quasi, DBLP:journals/rsa/ShapiraY10} or convergence of a graph sequence to a certain limiting object \cite{DBLP:journals/jct/LovaszS06a}. Quasirandomness is a graph property with many applications to theoretical computer science, notably due to its relation to graph expansion. Second, the counts of motifs in graphs parallels the role of moments of a distribution in various algorithmic contexts -- e.g. \cite{bickel2011method} -- providing useful information about the underlying graph, such as parameters of network models that fit one's data. Third, sublinear approximate motif counting is well-studied in the setting of arbitrary/general graphs (as we will describe in \Cref{sec:previous-works}), and therefore, the superpolynomial improvement to the complexity in our setting can be readily compared.

\paragraph{k-clique counting} We consider the quality parameter corresponding to counting the number of $k$-cliques in a random graph $G \sim \gnp$. This problem is the technical core of this paper where we show that quality control algorithms offer significant benefits over worst-case algorithms. In this setting, we prove tight bounds on the query and runtime complexity of quality control. Let $\mu_k = \mu_k(n,p)$ denote the expected number of $k$-cliques in $\gnp$. For every $p > 0$ and $k \in \mathbb{N}$, if $n$ is sufficiently large as a function of $p$ and $k$, then it is well-known that the number of $k$-cliques for $G \sim \gnp$ concentrates around $(1\pm o(1))\mu_k$. We let our parameter $\rho_k(G)$ be the number of $k$-cliques in $G$ divided by $\mu_k$. Since $\rho_k(G)$ for $G \sim \gnp$ concentrates around $1$, this leads to the question: what is the complexity of a sublinear time algorithm solving $(\gnp, \rho_k)$-quality control? We consider this question in the setting of \textit{sublinear graph access}: the algorithm has access to the graph $G$ in the form of adjacency matrix queries. That is, the algorithm queries a pair $u, v \in G$, and receives the response $1$ if $(u, v)$ is an edge in $G$ and $0$ otherwise.

It is trivial to produce a $(1 \pm \varepsilon)$-approximation of the count of $k$-cliques that succeeds with probability $2/3$ over $G \sim \gnp$ (for $n \geq p^{-ck}$ for some constant $c$) by simply outputting $\mu_k$, but this does not verify whether $\mu_k$ is indeed a $(1 \pm \varepsilon)$-approximation of the number of $k$-cliques on our \textit{specific} instance. On the other hand, it is too costly to compute/certify the number of $k$-cliques for every worst-case graph; we are fine rejecting graphs such that $\mu_k$ is not a good estimate. Algorithms for testing whether a worst-case graph has around $\mu_k$ $k$-cliques require roughly $\Theta\left(1/p^{\binom{k}{2}}\right)$ queries\footnote{This follows from subroutines in \cite{DBLP:journals/siamcomp/EdenRS20} that have query complexity and runtime of roughly $n/(\mu_k)^{1/k}$ $+ m^{k/2}/\mu_k$.}, with even stronger query access to the underlying graph \cite{DBLP:journals/siamcomp/EdenRS20}.  Assuming that the average degree and arboricity of the input graph correspond to $\gnp$ -- towards the goal of applying \cite{eden2018faster} -- yields a $\Theta\left(1/p^{\binom{k}{2}-k+1}\right)$ worst-case query complexity (omitting $\text{polylog}(n)$ factors), in a stronger query access model. The exponent on $1/p$ still depends quadratically on $k$.

We circumvent the $\Theta(k^2)$ bound on the exponent of $1/p$ that arises from sublinear algorithms for estimating $k$-cliques in worst-case graphs. Our result in this setting is an asymptotically tight bound of $\Theta(k)$ on the exponent of $1/p$ for the quality control problem. Specifically we show:

\begin{theorem}\label{thm:intro-cliques}
    There exist constants $c_1, c_2, c_3$ such that for every constant $k$, the quality control problem $(\gnp, \rho_k)$ corresponding to estimating the number of $k$-cliques in a graph $G$ supposedly drawn according to $\gnp$ is solvable in $O(p^{-c_1 k})$ queries and time given adjacency matrix access to $G$, provided $n \geq p^{-c_2 k}$. Furthermore no algorithm solves this problem with $o(p^{-c_2 k})$ queries.
\end{theorem}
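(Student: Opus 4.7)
The plan is to prove the upper and lower bounds separately.

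\textbf{Upper bound.} The algorithm I would analyze is induced-subgraph sampling: pick a uniformly random set $S\subseteq V(G)$ of size $s=\Theta(p^{-ck})$ for an appropriate constant $c$, query all $\binom{s}{2}=O(p^{-2ck})$ pairs inside $S$, enumerate the $k$-cliques in $G[S]$ exactly, and accept iff $C_k(G[S])\in(1\pm\varepsilon/2)\binom{s}{k}p^{\binom{k}{2}}$. By linearity, $\E_S[C_k(G[S])]=\binom{s}{k}/\binom{n}{k}\cdot C_k(G)=\binom{s}{k}p^{\binom{k}{2}}\rho_k(G)$, so correctness in both directions reduces to showing that $C_k(G[S])$ concentrates around its mean.

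\textbf{Completeness.} When $G\sim\gnp$, the induced subgraph $G[S]$ is itself distributed as $\mathcal{G}_{s,p}$, and the classical second-moment bound for clique counts in \erdosrenyi graphs gives concentration as soon as $s\gg p^{-(k-2)/2}$; the choice $s=\Theta(p^{-ck})$ for any constant $c\geq 1$ is more than enough, fixing $c_1=2c$.

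\textbf{Soundness (the main obstacle).} For adversarial $G$ the variance over a random $S$ satisfies
\[
\mathrm{Var}_S[C_k(G[S])]\;\lesssim\;\E_S[C_k(G[S])]\;+\;\sum_{j=1}^{k-1}N_j(G)\cdot(s/n)^{2k-j},
\]
where $N_j(G)$ is the number of ordered pairs of $k$-cliques in $G$ intersecting in exactly $j$ vertices. The difficulty is bounding each $N_j(G)$ for arbitrary $G$, especially $j=k-1$ (two cliques sharing a $(k-1)$-face), where the adversary can cluster cliques in a small dense pocket to inflate $N_j$ well beyond its value in $\gnp$. I would attack this with a Kruskal--Katona-style inequality that uniformly upper-bounds $N_j(G)$ by a power of $C_k(G)$ and a factor in $n$; combined with $s=\Theta(p^{-ck})$, this should translate into $\mathrm{Var}_S/\E_S^2=O(\varepsilon^2)$ and give the required Chebyshev concentration around $\rho_k(G)\binom{s}{k}p^{\binom{k}{2}}$. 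If a direct combinatorial inequality proves insufficient, a fallback is to precede the sampler with an auxiliary local-regularity test (e.g., a check that a few random low-order cliques do not have anomalously many extensions), adding only $p^{-O(k)}$ queries while ruling out the bad variance regime.

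\textbf{Lower bound.} For the matching $p^{-\Omega(k)}$ query lower bound, I would apply Yao's principle with two distributions. The yes-distribution is $\gnp$; the no-distribution samples $G\sim\gnp$ together with a uniformly random vertex set $T\subseteq V$ of size $|T|=\Theta(\varepsilon^{1/k}\,n\,p^{(k-1)/2})$ and then sets every pair in $T\times T$ to be an edge. The planted complete subgraph contributes $\Theta(\varepsilon)\mu_k$ extra $k$-cliques (since $\binom{|T|}{k}=\Theta(\varepsilon)\mu_k$), so $|\rho_k(G')-1|>\varepsilon$ with high probability in the no-distribution. For any deterministic algorithm making $q$ queries, the expected number of queries whose endpoints both lie in $T$ is at most $q\cdot\binom{|T|}{2}/\binom{n}{2}=O(q\,p^{k-1})$; whenever $q=o(p^{-(k-1)})$ this is $o(1)$, so by Markov, with probability $1-o(1)$ no query falls inside the planted region and the algorithm's transcript is $o(1)$-close in total variation to its transcript on $\gnp$. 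Hence no algorithm distinguishes the two cases with constant advantage, establishing $q=\Omega(p^{-(k-1)})=p^{-\Omega(k)}$.
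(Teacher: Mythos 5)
Your lower-bound argument is correct and close in spirit to the paper's. The paper re-randomizes all edges touching a random set $S$ (rather than planting a clique) so that the same construction generalizes to arbitrary motifs $H$ that may require \emph{induced} copies; for $k$-cliques specifically your planted-clique construction is cleaner and even yields a slightly better exponent ($p^{-(k-1)}$ vs.\ the paper's $p^{-(k-1)/2}$), which is fine since the theorem only asks for $p^{-\Omega(k)}$.

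For the upper bound, however, there is a genuine gap, and it sits exactly where you flag the "main obstacle." Your plan is to run a \emph{single-level} test (count only $k$-cliques in $G[S]$) and argue concentration via Chebyshev, which reduces to bounding $N_j(G)$ for $1\le j\le k-1$ over arbitrary $G$. This does not go through with the tools you name. Kruskal--Katona gives \emph{lower} bounds on shadow sizes; it does not give an upper bound on the co-degree quantities $D_j(J) = \#\{\text{$k$-cliques containing the $j$-clique }J\}$, and $N_j(G)\le\sum_J D_j(J)^2$ can be inflated far above its $\gnp$ value by an adversary that concentrates cliques on a few $j$-faces, while keeping $C_k(G)$ and even $C_{k-1}(G)$ in the right ranges. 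With only the trivial bound $D_j(J)\le\binom{n}{k-j}$, the $j$-term of your variance ratio scales like $1/(s^j p^{\binom{k}{2}})$; to kill the $j=1$ term you would need $s\gtrsim p^{-\binom{k}{2}}$, i.e.\ $p^{-\Theta(k^2)}$ queries, not $p^{-O(k)}$. Your proposed "fallback" --- precede the sampler with a local-regularity test on lower-order cliques --- is in fact the paper's actual route, not a backup; it is the essential ingredient, and it is not a single extra filter but a full \emph{induction on $\ell$}: one verifies $\ell$-clique counts for every $\ell\le k$, using the verified $(\ell-1)$-level to establish a notion the paper calls exponentially robust quasirandomness, and from that derives concentration of the $\ell$-clique count in a fresh sample via a high-probability-bounded-differences version of McDiarmid's inequality (the bounded-difference constant is controlled precisely by the certified $(\ell-1)$-clique density). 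Chebyshev on the raw adversarial graph does not substitute for this cascade.

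There is a second, independent gap: the theorem bounds both queries \emph{and time} by $p^{-O(k)}$, but your algorithm "enumerates the $k$-cliques in $G[S]$ exactly," which for a worst-case subsample $G[S]$ on $s=p^{-\Theta(k)}$ vertices takes time $\Theta(s^k)=p^{-\Theta(k^2)}$. The paper closes this by first verifying that $G[S]$ is jumbled (degree and co-degree checks), rejecting if not, and then invoking the Conlon--Fox--Zhao adaptation of the Duke--Lefmann--R\"odl approximate-counting algorithm for jumbled graphs, which runs in time $\exp(\mathrm{poly}(k/\varepsilon))\cdot s^{O(1)}$. Without this step, or an alternative fast counter that is correct on all graphs a quality-control algorithm may accept, your bound on time does not hold.
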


This theorem is formalized and proved as \Cref{thm:k-cliques} in \Cref{sec:query-complexity-cliques}. To illustrate this result, we note that there exist constants $k_0, c_0$ such that for all $k \geq k_0$ and $c \geq c_0$, when $n \approx p^{-ck}$, our algorithm from \Cref{thm:intro-cliques} uses $o(n)$ queries and time, while the algorithm from \cite{eden2018faster} requires $\Omega(n^2)$ queries and time.

One ingredient in our proof of \Cref{thm:intro-cliques} is a non-trivial polynomial time quality control algorithm for estimating the number of $k$-cliques in $\gnp$ in the standard (non-sublinear time) setting. 

\begin{theorem}\label{thm:intro-clique-poly}
    There exist constants $c_4$ and $c_5$ such that for every constant $k$, the quality control problem $(\gnp, \rho_k)$ corresponding to estimating the number of $k$-cliques in a graph $G$ supposedly drawn according to $\gnp$ can be solved in $O(n^{c_4})$ time provided $n \geq p^{-c_5 k}$. 
\end{theorem}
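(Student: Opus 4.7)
The plan is to build a polynomial-time quality control algorithm $A$ that computes a small collection of polynomial-time-verifiable ``pseudorandomness'' statistics on $G$, accepts iff each such statistic lies within a fixed small window around its $\gnp$ expectation, and relies on a deterministic counting lemma to convert matching statistics into the soundness guarantee $C_k(G) \in (1 \pm \varepsilon)\mu_k$. Concretely, $A$ computes the edge count $|E(G)|$, every vertex degree $d(v)$, and every codegree $|N(u) \cap N(v)|$ (equivalently, every entry of $A^2$). These computations all fit in $O(n^\omega) \leq O(n^3)$ time, giving $c_4 = 3$ (or slightly larger if a handful of additional constant-size subgraph counts are needed as auxiliary checks). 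Note that $c_4$ does not depend on $k$, which is the whole point: naive $k$-clique enumeration would cost $n^k$.

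For \textbf{completeness}, when $G \sim \gnp$ with $n \geq p^{-c_5 k}$, Chernoff/Bernstein bounds for sums of independent Bernoulli variables give concentration of the edge count, the degrees, and the codegrees around their means $\binom{n}{2}p$, $(n-1)p$, and $(n-2)p^2$, respectively. A union bound over the $O(n^2)$ quantities being checked, using the lower bound on $n$ to absorb the union-bound loss, shows every check passes with probability at least $9/10$.

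The \textbf{soundness} direction is the technical core: any $G$ passing all the checks must satisfy $C_k(G) = (1 \pm \varepsilon)\mu_k$ deterministically. The plan is to convert degree and codegree closeness into a \emph{spectral} pseudorandomness statement about the adjacency matrix $A$: degrees close to $np$ force the top eigenvalue of $A$ to be near $np$, while codegrees uniformly close to $np^2$ force $A^2 - np^2 J$ to have small off-diagonal mass, and hence bound the non-leading eigenvalues of $A$ by $O(\sqrt{np})$. Given such spectral pseudorandomness, a counting lemma for $(n,d,\lambda)$-graphs in the sparse regime --- in the style of Krivelevich--Sudakov or the sparse counting lemmas of Conlon--Fox--Zhao for jumbled graphs --- implies $C_k(G) = (1+o(1))\mu_k$. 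The hypothesis $n \geq p^{-c_5 k}$ is exactly what the counting lemma requires to push the error below $\varepsilon$.

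The \textbf{main obstacle} is this soundness argument in the sparse regime $p \to 0$, where the classical Chung--Graham--Wilson quasirandomness equivalences do not apply as a black box. The challenge is to select a finite family of polynomial-time-computable statistics that is simultaneously (a) concentrated when $G \sim \gnp$ and (b) rich enough to deterministically pin down $C_k(G)/\mu_k$ even for small $p$, all while keeping the time budget $n^{c_4}$ with $c_4$ independent of $k$. I expect a cleaning step that discards the few vertices whose local neighborhoods deviate substantially from $\gnp$-like behavior will be necessary: without it, an adversary could inject a small dense gadget that inflates $C_k$ by a constant factor while leaving the global edge count, degrees, and average codegree virtually untouched. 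Getting the cleaning threshold and the strength of the pseudorandomness checks to line up with the counting-lemma error for every constant $k$ will be the delicate part.
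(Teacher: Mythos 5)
Your approach genuinely differs from the paper's, and the difference matters. The paper's polynomial-time algorithm (implicit in the Clique-Quality-Efficient routine and Theorem 3.9, applied to the whole graph rather than a sample) performs the same degree/codegree checks that you propose — these verify $(p,\beta)$-jumbledness via Proposition 2.9 — but then takes one more step that you skip: it \emph{actually estimates} the $k$-clique count in the jumbled graph using the Duke--Lefmann--R\"odl-style approximate counting algorithm (Proposition 2.11, from Conlon--Fox--Zhao), and accepts or rejects based on that estimate. You instead assert that degree/codegree concentration \emph{deterministically pins down} $C_k(G) \approx \mu_k$, so no counting is needed. That is a stronger claim, and while it is likely provable in the parameter regime at hand, it is where the real difficulty lies. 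For $k=3$ it is elementary: since $\mathrm{tr}(A^3) = \sum_{u\sim v}\mathrm{codeg}(u,v)$, both degree and codegree concentration directly force triangle-count concentration. For $k\geq 4$ there is no comparable one-line identity, and you must invoke a genuine counting lemma for sparse pseudorandom graphs (Conlon--Fox--Zhao's counting lemma for subgraphs of jumbled hosts, or the $(n,d,\lambda)$-graph counting results surveyed by Krivelevich--Sudakov). Those results are stated for $d$-\emph{regular} graphs or for embeddings across a fixed vertex partition, so extending them to a merely approximately regular graph and to the total unordered clique count is nontrivial bookkeeping. The paper avoids all of this: the approximate counting algorithm does not need $C_k$ to equal anything particular; it returns whatever it is, and the quality control decision is simply a threshold on that output. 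What your approach would buy, if pushed through, is conceptual economy — no reliance on the sparse regularity machinery hidden inside the Duke--Lefmann--R\"odl algorithm — but the burden of verifying that the counting lemma error is below the fixed $\varepsilon$ for every constant $k$ is exactly as you anticipate, and it is genuinely heavier than the paper's route.

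Two concrete imprecisions worth flagging. First, your claim that codegree concentration bounds the nontrivial eigenvalues by $O(\sqrt{np})$ is only correct for $n$ below roughly $p^{-(2k-1)}$; the Frobenius estimate $\|A^2 - np^2 J - np(1-p)I\|_F \lesssim \delta n^2$ (where $\delta$ is the pointwise codegree slack) gives $\lambda_2 \lesssim \sqrt{np} + n\sqrt{\delta}$, and for $n$ in the range the theorem actually allows the second term dominates, so the spectral bound you get is $\lambda_2 \lesssim np^k/C$. This is still small enough for the counting-lemma threshold $\lambda_2 \lesssim np^{k-1}$, but the $O(\sqrt{np})$ number is wrong. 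Second, the ``cleaning step'' you anticipate is a red herring: if you verify \emph{every} degree and \emph{every} codegree pointwise (as both you and the paper do), a small planted dense gadget already fails the pointwise checks and is rejected, so there is nothing left to clean. Cleaning would only be needed if you were checking averages, which is a weaker and insufficient test.
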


Note that the requirement on $n$ is roughly comparable to the requirement that the number of $k$-cliques in a random graph drawn from $\gnp$ is concentrated. Our result shows that, in this setting, one does not need a brute-force $n^k$ time algorithm to estimate the number of $k$-cliques --- one can get a fixed polynomial-time algorithm to do so by taking advantage of the allowance to reject graphs that would arise with low probability in $\gnp$.

\paragraph{Special case: Triangle Counting} We consider the simpler parameter of quality control of the count of triangles compared to $\gnp$. We prove tight bounds on the query and runtime complexity of quality control. Let $\mu_\Delta$ be the expected number of triangles in $\gnp$, $C_{\Delta}(G)$ be the number of triangles in $G$, and $\rho_\Delta(G) = C_{\Delta}(G)/\mu_{\Delta}$. 

We consider $(\gnp, \rho_{\Delta})$-quality control in the sublinear graph access setting where the algorithm now has query access to the graph $G$ in the form of adjacency matrix, adjacency list, and degree queries. It is well-known that testing if the number of triangles in a worst-case graph is approximately $\mu_{\Delta}$ uses $\Theta(p^{-3})$ queries (up to polylogarithmic factors) \cite{eden2017approximately}. We show that the quality control problem offers a quadratic speedup over the worst-case setting, and this is tight. 

\begin{theorem}\label{thm:intro-triangles}
    The quality control problem $(\gnp, \rho_\Delta)$ corresponding to estimating the number of triangles in a graph $G$ supposedly drawn according to $\gnp$ is solvable in $\widetilde{O}(1/p)$ queries and time given adjacency list, adjacency matrix, and degree query access to $G$, provided $n \geq c_6 \cdot p^{-1}$ for some constant $c_6$. Furthermore no algorithm solves this problem with $o(1/p)$ queries.
\end{theorem}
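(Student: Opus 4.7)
For the upper bound, the plan is to design an algorithm that performs a short sequence of structural tests, each costing $\tO(1/p)$ queries, and accepts iff all of them pass. Writing $T_v$ for the triangles through $v$ and $q_v = T_v/\binom{d_v}{2}$ for the local wedge-closing density, we have $C_\Delta(G) = \tfrac{1}{3}\sum_v \binom{d_v}{2} q_v$, and under $\gnp$ both $d_v$ and $q_v$ concentrate (at $np$ and $p$ respectively). The four tests are: (T1) an edge-density check via $\tO(1/p)$ uniformly random adjacency-matrix queries; (T2) a uniform degree check on $\tO(1/p)$ uniform vertices; (T3) a degree-biased degree check, obtained by sampling a uniform $v$ and then a random neighbor via an adjacency-list query---this samples vertices roughly proportional to degree and is designed to catch few-but-very-high-degree anomalies; and (T4) a wedge-closing check, which, for $\tO(1/p)$ uniform $v$, picks a random pair of neighbors via adjacency-list queries and tests adjacency via the matrix, rejecting if the empirical closing frequency is far from $p$. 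The total query complexity is $\tO(1/p)$.

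Completeness reduces to standard Chernoff concentration in $\gnp$ under $n \geq c_6/p$ for a sufficiently large constant $c_6$: every $d_v$ equals $np(1 \pm o(1))$ with high probability, and the quantities estimated in (T1), (T3), (T4) all concentrate around $p$, $np$, and $p$ respectively; choosing constants so that each test individually accepts with probability $\geq 1 - 1/40$ gives the required overall acceptance probability $\geq 9/10$.

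The main obstacle is soundness. I would argue contrapositively: conditioned on all four tests accepting, show $|C_\Delta(G) - \mu_\Delta| \leq \varepsilon \mu_\Delta$. Tests (T2) and (T3) jointly constrain the degree sequence: (T2) bounds the \emph{number} of vertices with anomalous degree, while (T3), whose sampling weights a vertex by its degree, bounds the total \emph{degree mass} of the anomalous set. Together with (T1), these pin $W = \sum_v \binom{d_v}{2}$ close to its $\gnp$ value $n\binom{np}{2}$ and rule out ``few-but-nearly-universal'' vertex attacks, e.g., planting $\varepsilon n p^2$ near-universal vertices, which would inflate $C_\Delta$ by $\Theta(\varepsilon \mu_\Delta)$ yet remain invisible to uniform vertex sampling with $\tO(1/p)$ samples. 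Test (T4) pins the average closing density at $p$. Plugging back into $C_\Delta = \tfrac13 \sum_v \binom{d_v}{2} q_v$ and separately bounding the contribution of the small ``bad'' set using the constraints from (T2)--(T3) yields $|C_\Delta(G) - \mu_\Delta| \leq \varepsilon\mu_\Delta$, contradicting the assumption. The delicate part is converting the attack-by-attack case analysis (planted stars, planted cliques, correlated shifts of $d_v$ and $q_v$) into a single clean decomposition---this is where I anticipate the proof to be most technical.

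For the matching lower bound, I would reduce from distribution distinguishing. Pick $p' = p(1+\varepsilon)^{1/3}$, so that $\mathcal{G}_{n,p'}$ has expected triangle count $(1+\varepsilon)\mu_\Delta$ and a typical $G \sim \mathcal{G}_{n,p'}$ satisfies $|\rho_\Delta(G) - 1| \approx \varepsilon$, making it a bad input for the quality control problem. Any valid $(\gnp,\rho_\Delta)$-quality-control algorithm must accept $G \sim \gnp$ with probability $\geq 2/3$ and reject $G \sim \mathcal{G}_{n,p'}$ with probability $\geq 2/3$, so it distinguishes the two product Bernoulli measures $\mathrm{Bern}(p)^{\binom{n}{2}}$ and $\mathrm{Bern}(p')^{\binom{n}{2}}$ with total-variation advantage $\Omega(1)$. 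A Pinsker-plus-chain-rule calculation shows that $q$ adaptive adjacency queries can achieve advantage at most $O\!\bigl(\sqrt{q \cdot (p'-p)^2/p}\bigr) = O(\sqrt{q p \varepsilon^2})$; forcing this to be $\Omega(1)$ yields $q = \Omega(1/(p\varepsilon^2)) = \Omega(1/p)$ for constant $\varepsilon$, giving the claimed lower bound.
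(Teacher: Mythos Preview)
Your lower bound does not prove the theorem. The upper bound uses adjacency list and degree queries, so to be tight the lower bound must hold against algorithms with that same access, and your $\mathcal{G}_{n,p'}$-versus-$\gnp$ pair is distinguishable with $O(1)$ such queries: a single degree query returns a sample from $\mathrm{Bin}(n-1,p)$ versus $\mathrm{Bin}(n-1,p')$, whose means differ by $\Theta(\varepsilon np) = \Theta(\varepsilon\sqrt{c_6})\cdot\sqrt{np}$ standard deviations once $n\ge c_6/p$, so for large $c_6$ one degree query already separates them. Your Pinsker-plus-chain-rule calculation bounds only the information from adjacency-\emph{matrix} queries (one Bernoulli bit each) and says nothing about degree or adjacency-list queries. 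The paper instead takes the NO distribution to be $\gnp$ with a clique planted on a uniformly random set $S$ of size $\ell=\Theta(np)$: every vertex outside $S$ keeps exactly its $\gnp$ adjacency list and degree, so any query that does not touch $S$ receives identical answers in the two distributions, and with $o(n/\ell)=o(1/p)$ queries the probability of touching $S$ is $o(1)$, regardless of query type.

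For the upper bound you take a genuinely different route. The paper does not certify $C_\Delta$ from local statistics; it composes existing sublinear algorithms as filters. It first runs an arboricity tester and an edge-count tester (each $\tO(1/p)$), rejecting if either fails; surviving graphs have arboricity $\tO(np)$ and edge count $\Theta(n^2p)$, and on those the paper invokes the bounded-arboricity triangle estimator of Eden--Ron--Seshadhri as a black box, whose cost $\tO(n\alpha^2/t + m\alpha/t)$ with $\alpha=\tO(np)$, $m=\Theta(n^2p)$, $t=\Theta(n^3p^3)$ evaluates to $\tO(1/p)$. Soundness is inherited from that estimator's worst-case correctness, so no degree-and-wedge decomposition is needed. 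Your four-test scheme may well work, but there is a specific technical gap you would have to close: T4 as described (uniform $v$, random pair of neighbors) estimates the \emph{unweighted} average $\tfrac{1}{n}\sum_v q_v$, whereas $3C_\Delta=\sum_v\binom{d_v}{2}q_v$ involves the $\binom{d_v}{2}$-weighted average; these coincide only when degrees are essentially constant, so your soundness argument must control the $d_v$--$q_v$ correlation using (T2)--(T3) alone, which is exactly the step you flag as ``the delicate part'' but do not carry out.
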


This theorem is formalized and proved as \Cref{thm:triangles-usingliterature} in \Cref{sec:triangles}.

\paragraph{General Motif Counting}

Our results and techniques regarding quality control of the count of $k$-cliques also extend to counting the number of copies of any fixed graph $H$ in some larger graph $G$. Specifically, let $C_H(G)$ denote the number of labeled induced copies of the graph $H$ in a graph $G$ (see \Cref{def:C-H-G} for a formal definition), and let $\mu_H:= \E_{G \sim \gnp}[C_H(G)]$ denote the expected number of copies of $H$ in a graph drawn from $\gnp$. Finally let $\rho_H(G):= C_H(G)/\mu_H$. We consider the quality control problem $(\gnp, \rho_H)$. Our extension of \Cref{thm:intro-cliques} identifies the exact exponent of $1/p$ in the query/time complexity of quality control for every graph $H$ (up to a fixed universal constant factor). The parameter we identify is the maximum degree of any vertex in $H$, denoted $\Delta(H)$. Specifically we show: 

\begin{theorem}\label{thm:intro-motifs}
    There exist constants $c_7, c_8, c_9$ such that for every graph $H$ on $k$ vertices, the quality control problem $(\gnp, \rho_H)$ corresponding to estimating the number of copies of $H$ in a graph $G$ supposedly drawn according to $\gnp$ is solvable in $O(p^{-c_7 \Delta(H)})$ queries and running time given adjacency matrix access to $G$, provided $n \geq p^{-c_{8} k}$. Furthermore no algorithm solves this problem with $o(p^{-c_{9} \Delta(H)})$ queries.
\end{theorem}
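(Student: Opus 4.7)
The theorem has two halves --- an algorithm achieving $O(p^{-c_7 \Delta(H)})$ queries and a matching $\Omega(p^{-c_9 \Delta(H)})$ lower bound --- that I would address in turn. As a sanity check, $\Delta(K_k) = k-1$, so \Cref{thm:intro-cliques} already attains the right exponent in the clique case; the task is to distill what actually drives the complexity of the clique algorithm and show it depends only on the maximum degree, and then to exhibit a planting construction forcing the lower bound for generic $H$.

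\paragraph{Upper bound.} I would adapt the $k$-clique algorithm underlying \Cref{thm:intro-cliques}. Fix a vertex $v^{*} \in V(H)$ with $\deg_H(v^{*}) = \Delta := \Delta(H)$, set $N = N_H(v^{*})$, and write $V(H) = \{v^{*}\} \cup N \cup R$ with $R$ the ``rest,'' of constant size $k - \Delta - 1$. Build an estimator for $C_H(G)$ by sampling a random vertex $x \in V(G)$ to play the role of $v^{*}$, using $p^{-O(\Delta)}$ samples to search for $\Delta$-tuples of vertices simultaneously adjacent to $x$ (the analog of $v^{*}$'s neighborhood), and for each such tuple completing it to a full induced copy of $H$ by a constant number of further adjacency queries over $R$. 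The quantitative content is that, on $\gnp$, a random $\Delta$-tuple is fully adjacent to $x$ with probability $\Theta(p^{\Delta})$, so $p^{-O(\Delta)}$ samples produce a concentrated estimator whose completion over $R$ costs only an $O_k(1)$ multiplicative factor. Plug this into the same quality-control wrapper as in the clique case: accept iff the estimate, together with a small set of auxiliary edge-density/degree checks, matches what $\gnp$ predicts. Completeness is concentration of the estimator on $\gnp$; soundness follows because a graph with $\rho_H$ outside $1 \pm \varepsilon$ either has a visibly anomalous $H$-count or visibly anomalous degree statistics, both flagged by the checks.

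\paragraph{Lower bound.} I would run a clique-style planting argument. Set $\mathcal{D}_Y = \gnp$ and let $\mathcal{D}_N$ be $\gnp$ augmented with a planted structure $S$ designed so that (i) $S$ hosts ``cores'' that, when completed using random edges of $\gnp$, contribute at least $\varepsilon \mu_H$ additional copies of $H$ in expectation, while (ii) $|E(S)| = O(n^2 p^{\Delta})$. A natural candidate is a blow-up of the closed neighborhood $H[\{v^{*}\} \cup N]$, sized so that the expected number of random completions of each core by the remaining $k - \Delta - 1$ vertices dominates $\varepsilon \mu_H$. Since planted edges occupy only an $O(p^{\Delta})$ fraction of all pairs, any algorithm making $o(p^{-\Delta})$ adjacency queries hits $S$ with probability $o(1)$ and cannot distinguish $\mathcal{D}_Y$ from $\mathcal{D}_N$; Yao's minimax principle then yields the claimed lower bound.

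\paragraph{Main obstacle.} The technically most delicate step is the variance analysis of the upper-bound estimator for general $H$. For cliques, the vertex-transitivity of $K_k$ collapses the second moment into a single sum indexed by overlap size. For arbitrary $H$, the second moment decomposes into contributions indexed by subgraphs $F \subseteq H$ along which two sampled copies overlap, and each contribution must be bounded by $p^{-O(\Delta(H))}$. The key structural input is that, with $v^{*}$ chosen of maximum degree, the expensive pair patterns are precisely those where two rooted copies share the image of $v^{*}$ but diverge elsewhere, and this cost is governed by $\Delta(H)$ rather than by any denser subgraph of $H$; making this precise will require a careful case analysis over the possible $F$.
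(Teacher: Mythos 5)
Your high-level plan identifies the right scaling parameter ($\Delta(H)$) and the right style of lower bound (plant something touching only $\Theta(np^{\Delta})$ vertices), but both halves have concrete gaps, and both differ structurally from what the paper actually does.

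\textbf{Upper bound.} The paper does not root an estimator at a max-degree vertex. Its algorithm samples a single random multiset $S$ of $s = p^{-O(\Delta(H))}$ vertices and, for \emph{every} $H' \subseteq H$, counts labeled induced copies of $H'$ inside $G[S]$ and compares to the $\gnp$ prediction. Soundness rests on the inductive ``exponentially robust quasirandomness'' machinery (\Cref{def:exp-qr-motif}, \Cref{lem:concentration-general-P-and-D}, \Cref{lemma:inductive-general-P-and-D}): passing the checks at all levels $\ell < k$ forces, via a high-probability bounded-differences argument, the local $H$-count in $G[S]$ to track the global $\rho_H(G)$, and the density increment $r_k(H)=p^{2\Delta(H)}(1-p)^{2k-2-2\Delta(H)}$ is what controls $s$. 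Your rooted estimator has two gaps the paper's approach is designed to avoid. First, the completion step: once you fix a core $\{x\}\cup(\Delta\text{-tuple})$, the number of $R$-extensions is of order $n^{k-\Delta-1}p^{\Theta(\cdot)}$ and cannot be estimated with $O(1)$ further adjacency-matrix queries; you would need another level of sampling whose concentration again depends on worst-case structure. Second, ``auxiliary edge-density/degree checks'' are not enough for soundness: for $H = C_4$ (so $\Delta(H)=2$), $C_H(G)$ is governed by codegrees, and a graph with $\gnp$-typical degrees and edge count but anomalous codegree distribution would slip past your checks. The paper's filter is exactly stronger than this --- it tests all $H'\subseteq H$ at all sizes, not just degree/density statistics --- and that is what makes soundness go through on adversarial graphs. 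You flag the second-moment analysis for general $H$ as the delicate step and leave it open; the paper sidesteps it entirely by never needing a rooted estimator.

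\textbf{Lower bound.} Your construction over-invests. The paper's $D_N$ (\Cref{def:no-distribution}) replaces a random $\ell$-subset $S$ by a $\gnhalf$ block with Bernoulli-$1/2$ edges both inside $S$ and between $S$ and $\overline S$, for $\ell=\Theta(np^{\Delta(H)})$. The key calculation (\Cref{lem:param-sep}) is that a copy of $H'=H\setminus v^{*}$ living in the $\gnp$ part extends to $H$ via a \emph{single} vertex of $S$ with probability $2^{-(k-1)}$ instead of $\approx p^{\Delta}$; since $\mu_H\approx\mu_{H'}np^{\Delta}$, $\ell=2^{O(k^2)}np^{\Delta}$ already forces $\E[C_H(G')]\geq 2\mu_H$. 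Planting a blow-up of the $(\Delta+1)$-vertex core $H[\{v^*\}\cup N]$ does not achieve this: with $\ell = cnp^{\Delta}$ you get roughly $(\ell/(\Delta+1))^{\Delta+1}$ cores, and tracking the exponent the planted contribution scales like $c^{\Delta+1}p^{\Delta^2-e(H[N])}\mu_H$ (up to $2^{O(k^2)}$ factors), where $\Delta^2-e(H[N])\geq \Delta(\Delta+1)/2>0$, so it vanishes relative to $\mu_H$ as $p\to 0$. Enlarging $\ell$ to compensate would weaken the lower bound below $p^{-\Omega(\Delta)}$. The moral: each extra planted vertex in a core costs a factor of $\ell/n=O(p^{\Delta})$ in core count while buying back at most $p^{-\Delta}$ in completion probability, so only the single maximum-degree vertex breaks even. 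Your indistinguishability step (hitting $S$, a $2q\ell/(n-\ell)$ bound) is essentially \Cref{lem:dy-dn-indist}; the parameter-separation step is where your construction fails to close.
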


This theorem is formalized and proved as \Cref{thm:qc-motif-in-section} in \Cref{sec:query-complexity-motifs}.

We note that $\Delta(H)$ is not the most natural candidate parameter associated with counting the number of copies of $H$ in a graph. In particular, concentration results rely on the arboricity of a graph, which is usually smaller. $\Delta(H)$ is comparable in regular graphs, but then is much larger in a star.

To give some examples of motifs, this result implies that the complexity of quality control of cycles in $\gnp$ is $1/p^{\Theta(1)}$. This is much smaller than the $1/p^{O(k)}$ bound from sublinear approximation algorithms \cite{assadi2018simple} for large enough $n$ with respect to $p, k$. Next, the complexity of quality control for the star graph on $k$ vertices is $1/p^{\Theta(k)}$. This is incomparable to known approximation results \cite{DBLP:journals/corr/AliakbarpourBGP16}, which give query/time complexity $\widetilde{\Theta}(n^{1 - 1/k})$.

Our results and techniques for quality control of motif counts more generally apply to quality control of ``Sum-NC$^0$ graph parameters'' compared to $\gnp$; see \Cref{sec:NC0-properties}. 
This provides further justification that quality control of motif counts is a natural class of problems to study.

\paragraph{Beyond \erdosrenyi Graphs}

We extend the study of quality control of motif counts to a broad family of distributions over graphs that have concentration of the counts of small motifs. To illustrate that our results and techniques apply to other natural random graph distributions, we prove the following.

\begin{theorem}\label{thm:general-distributions-intro}
    Consider $(\mathcal{D}_n, \rho_k)$-quality control, for $\rho_k(G)  = C_k(G) / \mathbb{E}_{G' \sim \mathcal{D}_n}\left[C_k(G') \right]$. Define parameters $n, p$ where $n \geq p^{-c_{10} k}$ for some constant $c_{10}$. For the following settings of $\mathcal{D}_n$, the query complexity of $(\mathcal{D}_n, \rho_k)$-quality control is $1/p^{O(k)}$:
    \begin{enumerate}
        \item $\mathcal{D}_n$ is a Stochastic Block Model (as defined, e.g. in \cite{holland1983stochastic}) with $n$ vertices and minimum in-community/ between-community probability at least $p$.
        \item $\mathcal{D}_n$ is the uniform distribution over $d$-regular graphs on $n$ vertices, where $d = n p$.
    \end{enumerate}
\end{theorem}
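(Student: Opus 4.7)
The plan is to inherit the algorithmic template of \Cref{thm:intro-cliques}, retune it for the new distributions, and revisit the completeness analysis. The soundness condition of a $(\mathcal{D}_n, \rho_k)$-algorithm is, up to the change of normalization from $\mu_k(\gnp)$ to $\mu_k(\mathcal{D}_n)$, the same kind of per-graph statement as in the $\gnp$ case: the algorithm must distinguish graphs with $C_k(G) \approx \mu_k(\mathcal{D}_n)$ from those with $|C_k(G)/\mu_k(\mathcal{D}_n) - 1| > \varepsilon$. So I would use the same sampling scheme from \Cref{thm:intro-cliques} (with $\mu_k(\mathcal{D}_n)$ as the target normalization) and focus the proof on (i) checking that the sampler's worst-case rejection guarantee transfers, and (ii) reproving completeness for $G \sim \mathcal{D}_n$. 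Step (i) is immediate: the acceptance criterion only sees $C_k(G)$ via random sampling, so with the new normalization baked in, rejection on bad $G$ needs nothing from the distribution. Step (ii) reduces to establishing concentration of $C_k$ (and of any auxiliary local statistics the algorithm internally checks) around their $\mathcal{D}_n$-means, with deviation probability at most $1/10$.

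For the SBM case, edges are still mutually independent, though with heterogeneous probabilities lower-bounded by $p$. Hence $\mu_k(\mathcal{D}_n)$ is at least a constant times $p^{\binom{k}{2}} n^k / k!$, so the regime $n \geq p^{-c_{10}k}$ places us safely in the concentration range. I would apply Kim--Vu polynomial concentration to $C_k$ (viewed as a multilinear polynomial of degree $\binom{k}{2}$ in independent Bernoulli edge variables), and analogously to any auxiliary subgraph counts the algorithm inspects, to obtain $|C_k(G)/\mu_k(\mathcal{D}_n) - 1| \leq \varepsilon$ with probability at least $9/10$. Substituting these concentrations into the completeness proof of \Cref{thm:intro-cliques} then transfers the argument essentially verbatim, yielding the $p^{-O(k)}$ query bound.

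The random $d$-regular case with $d = np$ is the main obstacle, because edges are no longer independent. I would address it along one of two routes. The direct route invokes switching-method concentration inequalities for subgraph counts in random regular graphs, which show that $C_k$ and all relevant motif counts concentrate around their means at Kim--Vu-type rates sufficient for the completeness proof. A safety-net route uses a contiguity/coupling argument between the uniform $d$-regular distribution and $\gnp$ conditioned on its degree sequence being $d$-regular, thereby transferring $\gnp$-concentration with at most a constant-factor loss in failure probability that can be absorbed by tightening the Kim--Vu tail in the $\gnp$ proof. Either way, once concentration of the relevant statistics at probability $\geq 9/10$ is in hand, the completeness proof of \Cref{thm:intro-cliques} carries over and yields the $p^{-O(k)}$ bound claimed in \Cref{thm:general-distributions-intro}; I expect the switching-based direct route to be cleaner, with the contiguity approach as a fallback if the algorithm happens to inspect statistics that are more delicate under dependence.
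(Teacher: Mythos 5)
Your plan is structurally in the right direction --- reuse the sampling scheme of \Cref{thm:intro-cliques} with the normalization retargeted to $\mathcal{D}_n$, then re-establish completeness via concentration --- but it has a genuine gap: you never verify the \emph{density increment} condition, and it is exactly this quantity, not concentration per se, that determines the sample size and hence the claimed $p^{-O(k)}$ query bound. The concentration lemma underlying the algorithm has failure rate $\exp(-\Theta(r_\ell \cdot s))$, where $r_\ell$ compares the $\mathcal{D}_n$-density of $\ell$-cliques to that of its densest $(\ell-1)$-vertex subgraph (see \Cref{def:r-j-H}). Setting $s = p^{-O(k)}$ is only valid if $r_k(H) = p^{O(k)}$, and this is a quantitative property of $\mathcal{D}_n$ that is \emph{not} implied by concentration of $C_\ell$ around its mean. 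The paper makes this explicit as the second hypothesis of \Cref{thm:general-graph-motif-count-qc} and then verifies it for each case: for the SBM it shows $\mathbb{E}[C_k] \geq p^{k-1}(n-k+1)\,\mathbb{E}[C_{k-1}]$ by a one-step conditional-expectation calculation; for random $d$-regular graphs it cites the fact that $\mathbb{E}[C_k]\approx \binom{n}{k}k!\,p^{\binom{k}{2}}$, from which the ratio follows. Without this step, ``transfers the argument essentially verbatim'' is circular. Your soundness claim (``Step (i) is immediate'') is too quick for the same reason: the inheritance chain of exponential robust quasirandomness from $\ell$- to $(\ell+1)$-cliques is a distribution-dependent quantitative statement through $r_\ell$, not a mere renormalization of the accept thresholds.

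A secondary issue is the contiguity fallback for the $d$-regular case. Contiguity transfers $o(1)$-probability events between uniform $d$-regular and $\gnp$ conditioned on being $d$-regular, but your Kim--Vu-style concentration is stated for unconditional $\gnp$, and you would still need to establish concentration of $C_\ell$ under the degree-sequence conditioning --- which is the hard part. The direct switching-method route is essentially what the paper cites for this case and is the cleaner choice; the concentration tool you chose for the SBM (Kim--Vu rather than the paper's second-moment/Chebyshev argument) is overkill but valid given edge independence.
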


This theorem is formalized and proved as \Cref{cor:specific-other-models-in-section} in \Cref{sec:any-random-graph-family}. It follows from a more general result we prove --- \Cref{thm:general-graph-motif-count-qc} --- which applies broadly to other distributions over random graphs that have concentrated counts of small motifs. \Cref{sec:any-random-graph-family} defines a notion called the \textit{density increment} of a graph, which measures the ratio between the frequency of a motif and the frequency of its subgraphs on one fewer vertex. The inverse of the density increment asymptotically upper-bounds the query complexity of quality control of motif counting for distributions over random graphs with concentration properties.

\subsection{Comparison to related works}\label{sec:previous-works}

In this section, we compare our models and results to some of the past literature. 

\paragraph{Average Case Complexity}
In computational complexity, the field of average case complexity (see, for instance, the work of Levin~\cite{DBLP:journals/siamcomp/Levin86} or surveys by Goldreich~\cite{DBLP:books/sp/goldreich2011/Goldreich11h} or Bogdanov and Trevisan~\cite{DBLP:journals/fttcs/BogdanovT06})  is very close to our notions. Here, again, a problem may be described by a pair $(\mathcal{D}, \rho)$ with $\rho$ describing some function that needs to be computed on distribution $\mathcal{D}$. Here one can consider two notions of what it means for an algorithm $A$ to compute a function $\rho$ on distribution $\mathcal{D}$: A strong definition may require that $A$ is always correct, but the distribution of the runtime is nice (e.g., has low-expected value, or is mostly fast) when inputs are drawn from $\mathcal{D}$. A weak requirement may be that $A$ is always fast, but only correct on most inputs drawn from $\mathcal{D}$. When translated to our setting (of $\rho$ being a real-valued parameter that we want to test for closeness to $1$), a quality control algorithm for $(\mathcal{D}, \rho)$ is implied by a strong average case algorithm; and implies a weak average case algorithm. In \Cref{sec:average-case-complexity}, we prove that a quality control algorithm for $(\mathcal{D}, \rho)$ also implies a strong average case algorithm with the same runtime, under stronger assumptions about the concentration of $\mathcal{D}$ around its expected value.

Given this proximity to the notions of average case complexity, one may anticipate seeing instances of quality control problems arising naturally in the field. Indeed, one notable class of such problems is ``refutations of random 3SAT'' problems. Here a seminal definition of Feige~\cite{DBLP:conf/stoc/Feige02} requires that a refutation algorithm for 3SAT problem on distribution $\mathcal{D}$ should only accept {\em unsatisfiable} 3SAT inputs, while accepting most inputs drawn from $\mathcal{D}$. This corresponds exactly to a quality control problem with the parameter $\rho_{\mathrm{SAT}}(\phi)$ being $1$ if a formula $\phi$ is unsatisfiable and $0$ otherwise. 

Additionally, the problem of certifying properties of a random object has appeared in various contexts throughout theoretical computer science. Notably, natural proofs \cite{DBLP:journals/jcss/RazborovR97} examine such a problem (within the context of ruling out avenues towards super-polynomial circuit lower bounds). Additionally, as noted by \cite{Trevisan2006post}, certifying properties of random instances is a challenge and a crucial step in explicit constructions from random objects. The problem of certifying properties of random objects in these various settings can be framed as a quality control problem.

Another class of problems that have been studied in the context of average case complexity are the ``planted solution problems'' --- notably planted SAT or planted clique problems. Broadly, in this class of problems, there is a natural distribution $\mathcal{D}_0$ and then a planted distribution $\mathcal{D}_1$ that is far in total variation distance from $\mathcal{D}_0$ but has many locally similar features --- and the challenge is to distinguish the two classes. This problem has been previously studied when the algorithm has access to the entire graph (e.g., \cite{DBLP:journals/rsa/AlonKS98}), as well as in the sublinear access setting with query access to the adjacency matrix of the graph \cite{racz2020finding, huleihel2024random}. If one were to consider the quality control problem $(\mathcal{D}_1, \rho)$ where $\rho(x)=1$ if $x$ has a planted solution, then a solution to the quality control problem also solves the associated planted solution problem in both the sublinear and global frameworks. Indeed, our results imply that there is an algorithm with query complexity and runtime $p^{-O(\Delta(H))}$ that distinguishes between $\gnp$ and $\gnp$ with an adversarially placed structure contributing at least $\varepsilon \mu_H$ labeled induced copies of a constant-sized subgraph $H$, for $\mu_H:= \E_{G \sim \gnp}[C_H(G)]$.

One key difference between our motivation and those in average case complexity is the typical {\em focus}: Our primary focus is on sublinear algorithms, whereas traditional work in average case complexity focused on polynomial time algorithms. (In the case of clique counting, see, e.g., \cite{boix2021average} for counting cliques over $\gnp$ and \cite{DBLP:conf/focs/GoldreichR18} for worst-case to average-case reductions.) 

\paragraph{Sublinear Algorithms (for Motif Counting)} By now, there is also a vast body of work on sublinear algorithms (see, for instance, surveys by Rubinfeld and Shapira ~\cite{DBLP:journals/siamdm/RubinfeldS11} or Ron~\cite{DBLP:journals/fttcs/Ron09}). Approximately counting and sampling motifs in ``worst-case'' graphs with sublinear algorithms have been thoroughly studied -- e.g. \cite{fichtenberger2020sampling, biswas2021towards, eden2025approximately, DBLP:journals/corr/AliakbarpourBGP16} -- and optimal bounds have been proven in various settings \cite{assadi2018simple, eden2017approximately, eden2018faster, eden2020almost}.

While the quality control framework of course builds on notions already developed in those fields, we stress two main differences in our setting: Sublinear time algorithms are required to work for worst-case inputs, while we intend to take advantage of the distribution of inputs --- leading to a weaker completeness requirement in our case. (Indeed this is best evidenced in the $k$-clique counting results --- where it is known that checking whether the number of $k$-cliques is roughly $\binom{n}{k} p^{\binom{k}{2}}$ via approximate counting requires $p^{-\Omega(k^2)}$ queries/time in the sublinear setting \cite{eden2018faster}, while our \Cref{thm:intro-cliques} achieves a query complexity and running time of $p^{-O(k)}$.) The soundness requirements in the two models are also not identical: In sublinear algorithms it is common to place a distance measure on inputs and allow algorithms to produce outputs that would be consistent with outputs on a small perturbation of the input. Our soundness requirement is different in that it is built into the (real-valued) parameter we aim to compute and effectively we are asking for a close approximation of this parameter. 

Of course, despite the differences, many results developed in the literature on sublinear algorithms turn out to be useful for us (and sometimes they are even designed for our notion of soundness). In particular our triangle-counting result (\Cref{thm:triangles-usingliterature} in \Cref{sec:triangles}) is obtained by combining known results in sublinear algorithms, specifically for estimating arboricity \cite{eden2022approximating} and for counting triangles in bounded arboricity graphs \cite{eden2018faster}. This ability to combine algorithms is a highlight and feature of our quality control definitions: An algorithm for $(\mathcal{D}, \rho)$-quality control can run an algorithm for some task $(\mathcal{D}, \tau)$ (same distribution but unrelated parameter) as a filter without hurting either completeness or soundness! In the triangle counting setting, we do exactly this to filter out graphs whose arboricity is not roughly $pn$, and this allows us to take advantage of fast sublinear algorithms for bounded arboricity graphs. 

\paragraph{Distribution Testing and Testable Learning} The concern of ``Does the input really come from distribution $\mathcal{D}$?'' has been approached in several other ways in the literature. One natural class of problems motivated by this concern are the distribution testing problems~(see, e.g., \cite{DBLP:journals/ftcit/Canonne22}) where the goal is to determine, given access to samples from a distribution $\mathcal{D}$ if $\mathcal{D} \in \mathcal{D}_0$ or if $\mathcal{D}$ is far from $\mathcal{D}_0$. However, these results require many samples from $\mathcal{D}$ to achieve their goals. Indeed, if we had multiple samples, testing if the distribution over graphs we are receiving is close to the \erdosrenyi graph distribution would require at least exponentially many samples, in the size of the graph. Each sample itself is a very large graph, which could require significant computation to read, so this is computationally intractable for several reasons. Our goal diverges from distribution testing because we aim to do something with a {\em sublinear} access to {\em single} sample. 

The idea that one can sidestep the large sample lower bounds required to test distributions, and directly get to validating the conclusion, arose recently in work of Rubinfeld and Vasilyan~\cite{rubinfeld2023testing}, in the notion they call {\em testable learning}. They consider the setting of a learning algorithm that works with a distribution $\mathcal{D}$ with the assumption that $\mathcal{D} \in \mathcal{D}_0$. They show that it is possible to test that the output of the learner can be trusted with lower sample complexity than required to test the distribution alone, such that if the learned hypothesis is incorrect then the tester will reject, while if $\mathcal{D} \in \mathcal{D}_0$ the tester will accept. While these results still work in the setting of multiple samples from $\mathcal{D}$, our model is inspired by their model --- and the emphasis on testing the validity of the computation rather than the correctness of a distributional assumption.

\paragraph{Quasirandomness} One line of research in the combinatorics literature that aims to give ``deterministic'' definitions of randomness is the study of quasirandom graphs~(see, e.g., the survey of Krivelevich and Sudakov \cite{krivelevich2006pseudo}). The study of graph pseudorandomness --- beginning with Thomason \cite{thomason1987pseudo} and first formally defined as ``quasirandomness'' by Chung, Graham, and Wilson \cite{chung1989quasi} --- identifies equivalent deterministic structural properties possessed by graphs from $\gnp$ with high probability. While often studied for constant $p$ and $\gnp$, some work has focused on sparse quasirandom graphs ($p = o(1)$) \cite{chung2002sparse, zhao2023graph} and quasirandom graphs with other  (non-$\gnp$) degree sequences \cite{chung2008quasi}.

One way to view these works in our language is that they identify specific quality parameters $\rho$ that concentrate around $1$  for random graphs drawn from some distribution $\mathcal{D}$. Many of the results can then be described as implications showing, e.g., graphs with parameter $\rho\approx 1$ also have some parameter $\rho' \approx 1$ --- thus giving in our language reductions among quality control problems. In some cases, they also give efficient (polynomial-time) algorithms to compute some non-trivial parameters. Thus, these results form a useful toolkit for the design of quality control algorithms, and indeed, they play a significant role in our results.

\paragraph{Applications} 

Beyond applications to problems in theoretical computer science --- such as average-case complexity (see \Cref{sec:average-case-complexity}) and planted solution problems (see the ``Average Case Complexity'' section above) --- quality control has potential applications to applied algorithmic settings. The counts, locations, and presence of different motifs can be used to analyze and classify the structure of real-world networks, including technological and social networks \cite{DBLP:conf/saci/DumaT14, DBLP:conf/globalsip/DeyGP17, DBLP:conf/icwsm/ColettoGGL17, DBLP:journals/access/LiHWHL18, dey2019network}, biological networks \cite{shen2002network, DBLP:journals/bmcsb/KimLWP11, wang2014identification, patra2020review}, brain  networks \cite{sporns2004motifs, battiston2017multilayer}, and ecological networks \cite{baker2015species, simmons2019motifs, stone2019network}, among many others. On the theoretical side of the analysis of the structure of real-world networks, for tractability, simplified models of random processes that generate graphs are often considered; see, e.g. \cite{van2024random}, which discusses a variety of random graph models of real-world networks. Along with \erdosrenyi graphs, different real-world networks can be modeled by stochastic block models, the Chung-Lu graph model, and the preferential attachment model, just to name a few. However, on the practical side of work on real-world networks, the network comes from data and it may not be clear if the theoretical assumptions align with the real-world scenario. Are theoretical algorithms safe to use on the real-world data? Quality control provides an intermediate step for verifying whether a theoretical algorithm is truly safe to use on an input. This parallels the framework of testable learning defined by Rubinfeld and Vasilyan \cite{rubinfeld2023testing}, which focuses on testing the assumptions of learning algorithms. 

\subsection{Techniques and proof overviews}

Our technical results are obtained by a combination of tools from probability theory (notably ``concentration of sampled value of functions with high-probability bounded differences'' \cite{combes2024extension}), tools from quasirandomness (notably the notion of jumbledness of graphs \cite{thomason1987pseudo} and its implications for sparse quasirandomness and efficient motif counting \cite{conlon2014extremal}) and finally by the following simple but powerful insight:
\begin{quote}
    A quality control algorithm $A$ for a problem $(\mathcal{D}, \rho)$ can always run a quality control algorithm $A'$ for an unrelated parameter $\tau$ on the same distribution $\mathcal{D}$ and filter out (reject) instances that are rejected by $A'$. 
\end{quote}
In particular running $A'$ as a filter preserves soundness simply by definition and does not hurt completeness since it only rejects instances from $\mathcal{D}$ with probability $o(1)$. (We note that standard definitions in sublinear algorithms and average complexity do not readily admit such filtering, making it harder to compose algorithms designed to satisfy those definitions.) Composability is also useful when analyzing the marginals of $\mathcal{D}$. Suppose that $\mathcal{D}$ is supported on $\mathcal{X}$, and $S \subseteq \mathcal{X}$. If the marginal of $\mathcal{D}$ on $S$ --- $\mathcal{D}|_S$ --- has a concentrated parameter $\tau$, for a large proportion of $S$ of a certain size, we can run $(\mathcal{D}|_S, \tau)$-quality control as a filter as well.

Of course, in order to be useful, $\tau$ ought to be related to $\rho$ in some ways. The key to our upper bound results turns out to be in identifying the right set of related parameters $\tau$ that make quality control for $\rho$ easier. In the following, we describe these related parameters and then show how we get our results from these.

\paragraph{Clique counting: Upper Bounding Query Complexity}

Our algorithm for quality control for the problem $(\gnp, \rho_k)$ (i.e., testing if number of $k$-cliques in $G$ is roughly what it should be for a random graph from $\gnp$) is the following: Given a graph $G$, we sample a random (multi-set) $S$ of $s \approx p^{-O(k)}$ vertices of $G$ and count the number of $\ell$-cliques in the induced subgraph of $G$ on $S$, for every $\ell \in \{1,\ldots,k\}$. If all these numbers are consistent with the expectations for $\gnp$, then we accept; else, we reject. (So in this case the related parameters are $\rho_\ell$ for $\ell < k$.)

We prove that this algorithm performs quality control for $(\gnp, \rho_{\ell})$ for all $\ell \leq k$. We use the fact that the algorithm is a quality control algorithm for the count of $\ell$-cliques to improve its query complexity for quality control of the count of $(\ell+1)$-cliques.

Particularly, quality control on $\ell'$-cliques, $\ell' \leq k$, allows us to say the following: if, for some $\ell \leq k - 1$, all of the $\ell'$-clique counts for $\ell'\leq \ell$ are consistent with the expectations for $\gnp$, then a global property of the graph which we call \textit{exponentially robust quasirandomness} must hold. Exponentially robust quasirandomness, defined below, captures the property that most (multi-)sets of size $s$ have clique counts that correspond to the expected number of cliques in $\mathcal{G}_{s, p}$. This strong property is possessed by $\gnp$, can be efficiently verified with high probability with a single sampled multiset of size $s$, and allows for more efficient counting of $(\ell + 1)$-cliques. 

Typical definitions of quasirandomness involve some strong property of
graphs that requires that some local condition holds with probability one
(i.e., for all choices of some variables). A notable example is
jumbledness \cite{thomason1987pseudo}, which requires that some property holds for all pairs (or triples)
of vertices. As a result, these properties are \textit{not} efficiently testable. Exponentially robust quasirandomness is a relaxation of quasirandomness that holds ``with high probability,'' is
efficiently testable, and is strong enough to give us what we need with
respect to motif counts.

Therefore, we see that the filtering/composability property of quality control allows us to argue that accepted graphs have a strong property (exponentially robust quasirandomness) that, particularly, holds for \erdosrenyi graphs.

In the following, let $a \propto b$ stand for ``$a$ is proportional to $b$'', which means that, according to some natural and consistent scaling $c$, $a = b \cdot c$.

\begin{definition}[Exponentially robustly quasirandom (informal)]\label{def:exp-qr-intro}
    We say that a graph $G=([n],E)$ is $(\varepsilon, \alpha, s_0)$-exponentially robustly quasirandom with respect to the count of $\ell$-cliques if for every $s \geq s_0$ we have that $$\Pr_{X_1,\ldots,X_s \sim_{\mathrm{i.i.d.}} [n]}\left[ \text{\# $\ell$-cliques in $\{X_1,\ldots,X_s\}$} \propto (1\pm \varepsilon) \binom{n}{\ell}p^{\binom{\ell}{2}} \right] \geq 1 - \exp(-\alpha s).$$
\end{definition}

The formal definition is given in \Cref{def:exp-qr}.

In the proof of correctness, the property that exponentially robust quasirandomness on $\ell$-cliques for increasing $\ell$ can be checked by our algorithm is given by the following two lemmas.

First, consider the following lemma for $\ell \in \{3, 4, \dots, k\}$. For a multiset $S$, let $C_{\ell}(S)$ be the number of $\ell$-cliques in $S$ (defined formally in \Cref{def:C-ell-S}).

\begin{lemma}[Inductive lemma (informal)]\label{lem:inductive-intro}
    Consider any parameter $s > 0$. There exist parameters $\alpha_{\ell - 1}, \alpha_{\ell}$ and $\varepsilon_{\ell - 1}, \varepsilon_{\ell}$ such that the following holds. Suppose that the graph $G$ is $(\varepsilon_{\ell - 1}, \alpha_{\ell - 1}, s)$-exponentially robustly quasirandom with respect to the count of $(\ell - 1)$-cliques. Define tester $T_{\ell}$ that samples (multi-set) $S$ with $s \approx p^{-O(\ell)}$ vertices of $G$ and accepts if and only if the $\ell$-clique count in $S$ is consistent with the expectations for $\gnp$. 
    
    Then, with probability at least $1 - \exp(-\alpha_{\ell}s)$, $T_{\ell}$ accepts graphs that are $(\varepsilon_{\ell - 1}, \alpha_{\ell}, s)$-exponentially robustly quasirandom with respect to the count of $\ell$-cliques and rejects graphs that are not $(\varepsilon_{\ell}, \alpha_{\ell}, s)$-exponentially robustly quasirandom with respect to the count of $\ell$-cliques. 
\end{lemma}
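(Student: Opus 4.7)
The plan is to treat $C_\ell(S)$ as a function of the $s$ i.i.d.\ uniform vertices $X_1,\dots,X_s$ and establish concentration of $C_\ell(S)$ around its expectation by using the $(\ell-1)$-exponential robustness hypothesis as a source of coordinate-wise bounded differences. Both conclusions then fall out: completeness because the acceptance threshold sits above the concentration error, and soundness because concentration forces the sample count to track the (now-distant) mean.

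First I would compute the expectation. By linearity and the i.i.d.\ structure, $\mathbb{E}[C_\ell(S)] = \binom{s}{\ell}\cdot \ell!\,C_\ell(G)/n^\ell = (1 \pm O(\ell^2/n))\binom{s}{\ell} p^{\binom{\ell}{2}} \rho_\ell(G)$, so the expected sample count is directly proportional to the global ratio $\rho_\ell(G)$, with a distortion absorbed into the error budget since $s \ll n$.

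Next I would argue coordinate-wise bounded differences with high probability. When $X_i$ is replaced by $X_i'$, the only $\ell$-cliques in $S$ whose count can change are those using position $i$, so the change is at most $2$ times the count of $(\ell-1)$-cliques on the remaining $s-1$ samples. The hypothesized $(\varepsilon_{\ell-1}, \alpha_{\ell-1}, s-1)$-quasirandomness combined with a union bound over $i \in [s]$ yields a uniform constant $c \lesssim \binom{s-1}{\ell-1} p^{\binom{\ell-1}{2}}$ controlling every coordinate simultaneously except with probability $s\exp(-\alpha_{\ell-1}(s-1))$. I would then invoke the extension of McDiarmid's inequality from \cite{combes2024extension}, producing, for any threshold $\eta \in (\varepsilon_{\ell-1}, \varepsilon_\ell)$, the bound
\[
\Pr\!\Big[\bigl|C_\ell(S) - \mathbb{E}[C_\ell(S)]\bigr| > \eta \binom{s}{\ell} p^{\binom{\ell}{2}}\Big] \;\le\; \exp\!\bigl(-\Omega(\eta^2\,s\,p^{2(\ell-1)})\bigr) + s\exp\!\bigl(-\alpha_{\ell-1}(s-1)\bigr),
\]
which is at most $\exp(-\alpha_\ell s)$ for $s \gtrsim p^{-2(\ell-1)}/\eta^2$ and a suitable choice $\alpha_\ell = \Theta\bigl(\min\{\eta^2 p^{2(\ell-1)},\, \alpha_{\ell-1}\}\bigr)$.

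Completeness is then essentially by definition: if $G$ is $(\varepsilon_{\ell-1}, \alpha_\ell, s)$-exponentially robustly quasirandom with respect to $\ell$-cliques, the sample count lies in $(1\pm\varepsilon_{\ell-1})\binom{s}{\ell} p^{\binom{\ell}{2}} \subseteq (1\pm\eta)\binom{s}{\ell} p^{\binom{\ell}{2}}$ except with probability $\exp(-\alpha_\ell s)$, so $T_\ell$ accepts. For soundness, failure of $(\varepsilon_\ell, \alpha_\ell, s)$-quasirandomness combined with the above concentration forces $|\rho_\ell(G)-1| \gtrsim \varepsilon_\ell - \eta$, putting the expectation itself outside the $(1\pm\eta)$-window; invoking the same concentration bound in the other direction places $C_\ell(S)$ outside the window as well except with probability $\exp(-\alpha_\ell s)$, so $T_\ell$ rejects. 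The main obstacle is not any single step but the propagation of parameters through $k$ rounds of induction: $\varepsilon_\ell$ must grow at most geometrically (needing roughly $\varepsilon_\ell \gtrsim 3\varepsilon_{\ell-1}$ to absorb the concentration slack), $\alpha_\ell$ can drop by at most a $p^{2(\ell-1)}$ factor per step, and the sample size must remain at $s = p^{-O(k)}$; all three constraints are met by starting the induction from a sufficiently small $\varepsilon_1$ and sufficiently large $\alpha_1$ and $s$.
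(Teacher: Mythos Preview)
Your proposal is correct and follows essentially the same approach as the paper: compute $\mathbb{E}[C_\ell(S)]$ by linearity, use the $(\ell-1)$-quasirandomness hypothesis to get high-probability bounded differences, apply the Combes extension of McDiarmid's inequality to obtain concentration of $C_\ell(S)$ around $C_\ell(G)\cdot A_{s,\ell}$, and read off both completeness and soundness from this concentration. The paper packages the concentration step as a separate ``Concentration Lemma'' and defines the good set $\mathcal{Y}$ as multisets whose \emph{full} $(\ell-1)$-clique count lies in the expected window (rather than your coordinate-deleted variant with a union bound over $i$), but this is a cosmetic difference that only affects constants.
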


The inductive lemma follows naturally from the following ``concentration lemma.''

\begin{lemma}[Concentration lemma (informal)]\label{lem:concentration-intro}
   Consider any parameter $s > 0$. There exist parameters $\alpha_{\ell - 1}, \alpha_{\ell}$ and $\varepsilon_{\ell - 1}, \varepsilon_{\ell}$ such that the following holds. Suppose that the graph $G$ is $(\varepsilon_{\ell - 1}, \alpha_{\ell - 1}, s)$-exponentially robustly quasirandom with respect to the count of $(\ell - 1)$-cliques.

   Then,
   $$\mathbb{P}_{S \sim [n]^s}\left(C_{\ell}(S) \propto (1 \pm \varepsilon_{\ell - 1}) \cdot C_{\ell}(G)\right) \geq 1 - \exp(-\alpha_{\ell} s).$$
\end{lemma}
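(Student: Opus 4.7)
The plan is to apply a ``McDiarmid with high-probability bounded differences'' concentration inequality in the style of \cite{combes2024extension} to the function $f(X_1,\ldots,X_s) := C_\ell(X_1,\ldots,X_s)$, where $X_1,\ldots,X_s$ are i.i.d.\ uniform in $V(G) = [n]$. First I would identify the mean: by linearity and the symmetry of i.i.d.\ sampling, $\E[C_\ell(S)]$ is proportional to $C_\ell(G)$ with the same proportionality constant used in the statement's ``$\propto$'' relation. Hence the target conclusion $C_\ell(S) \propto (1 \pm \varepsilon_{\ell-1}) C_\ell(G)$ is equivalent to $|C_\ell(S) - \E[C_\ell(S)]| \le \varepsilon_{\ell-1}\E[C_\ell(S)]$.

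Next I would establish the coordinate-wise influence bound: if $S$ and $S'$ differ only in the $i$-th coordinate, then $|C_\ell(S) - C_\ell(S')| \le 2 M(S\cup S')$, where for any multiset $T$,
\[
M(T) \;:=\; \max_{v\in V(G)} C_{\ell-1}\bigl(T \cap N(v)\bigr),
\]
because every $\ell$-clique in $S$ containing $X_i$ corresponds to an $(\ell-1)$-clique in $S\cap N(X_i)$, and symmetrically after the swap. The concentration inequality then gives, for any $B > 0$,
\[
\Pr\bigl[|C_\ell(S) - \E[C_\ell(S)]| > t\bigr] \;\le\; 2\exp\!\Bigl(-\tfrac{t^2}{8 s B^2}\Bigr) \;+\; \Pr\bigl[M(S) > B\bigr].
\]

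The heart of the argument is then to choose $B = p^{-O(\ell)} \cdot \E[C_\ell(S)]/s$ such that $\Pr[M(S) > B] \le \exp(-\Omega(\alpha_{\ell-1} s))$. To see this, fix $v \in V(G)$ and condition on $t_v := |S \cap N(v)|$, which is Binomial and concentrates by Chernoff; conditionally, $S \cap N(v)$ is a uniform i.i.d.\ sample of size $t_v$ from $N(v)$. Applying the $(\ell-1)$-clique ERQ hypothesis on $G$ restricted to the neighborhood (combined with a union bound over the $n$ choices of $v$, whose $\log n$ cost is absorbed into the exponential factor when $s$ is polynomially large in $\log n$) yields the desired per-$v$ bound. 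Plugging $t = \varepsilon_{\ell-1}\E[C_\ell(S)]$ into the displayed inequality, and using the comparison $\E[C_\ell(S)] = \Theta(s^\ell p^{\binom{\ell}{2}})$ against $B \lesssim s^{\ell-1} p^{\binom{\ell}{2}}$ to verify $t^2/(sB^2) = \Omega(s)$, one extracts $\alpha_\ell$ so that the right-hand side is at most $\exp(-\alpha_\ell s)$.

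The step I expect to be hardest is controlling $M(S)$: the hypothesis only states ERQ of $(\ell-1)$-cliques in $G$ globally, not in induced subgraphs $G[N(v)]$. Two ways to close this gap are plausible. The more self-contained route (a) is to show that the \emph{global} ERQ on $(\ell-1)$-cliques already implies a uniform upper bound on $M(S)$ via a first-moment argument using $\sum_v C_{\ell-1}(S\cap N(v)) = \ell \cdot C_\ell(S) \cdot (1+o(1))$ plus a higher-moment ERQ-type identity to promote an averaged bound to a worst-case one. Alternatively (b), one leverages the filtering/composability of quality control observed in the techniques section to assume an auxiliary ERQ-like condition on neighborhoods as part of the inductive hypothesis fed into \Cref{lem:inductive-intro}. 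Either way, the delicate bookkeeping is to choose $(\varepsilon_\ell, \alpha_\ell)$ so that the inductive invocations in \Cref{lem:inductive-intro} compose cleanly up to $\ell = k$ without blowing up the tolerance $\varepsilon$ or shrinking the error exponent $\alpha$ by more than a constant per level.
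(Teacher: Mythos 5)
Your high-level framework matches the paper's: apply a McDiarmid-type inequality with high-probability bounded differences (\Cref{cor:mcdiarmid-unconditional-expectation}), conditioning on the ``good'' event that the sampled multiset has near-typical $(\ell-1)$-clique count, which the ERQ hypothesis on $(\ell-1)$-cliques supplies. However, the coordinate-influence bound you introduce --- $M(T) = \max_v C_{\ell-1}(T\cap N(v))$ --- is unnecessarily refined, and the gap you flag in controlling it is real and not closed by either of your proposed routes. Route (a) only gives an averaged bound on $\sum_v C_{\ell-1}(S\cap N(v))$ and you would need genuinely new higher-moment information (not supplied by ERQ on $(\ell-1)$-cliques) to promote it to a per-$v$ bound; route (b) silently strengthens the hypothesis of the lemma, which would have to be threaded through the entire induction in \Cref{lem:inductive-intro} and is a structural change, not a fix.

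The paper sidesteps this issue entirely by using the trivial containment: $S\cap N(v)$ is a sub-multiset of $S$, so $C_{\ell-1}(S\cap N(v)) \leq C_{\ell-1}(S)$, hence each vertex $u$ lies in at most $\ell\cdot C_{\ell-1}(S)$ labeled $\ell$-cliques of $S$ (this is \Cref{claim:bounded-differences-whp-ell-cliques}). On the good event $\mathcal{Y}$ where $C_{\ell-1}(S)\in(1\pm\gamma)E_{s,\ell-1}$, one then takes $c_i = \ell(1+\gamma)E_{s,\ell-1}$ as the bounded-difference constant. This is a factor of roughly $p^{\ell-1}$ cruder than the $B\lesssim p^{\ell-1}E_{s,\ell-1}$ you were aiming for, but it requires no information about $G[N(v)]$, and it still yields the claimed rate: the exponent in the McDiarmid bound is $\Theta\bigl(\gamma^2 E_{s,\ell}^2 / (s\, \ell^2 E_{s,\ell-1}^2)\bigr) \approx \gamma^2 s\, p^{2\ell-2}$, so taking $s_\ell = p^{-\Theta(\ell)}$ suffices, and the eventual query complexity $\binom{s_k}{2}$ is still $p^{-O(k)}$. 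In short: your attempt to get a tighter influence bound created a gap that doesn't need to exist --- the cruder, ``free'' bound $C_{\ell-1}(S\cap N(v))\leq C_{\ell-1}(S)$ is exactly what the ERQ hypothesis controls, and it is enough.
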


These lemmas are formalized and proved as \Cref{lemma:inductive-cliques} and \Cref{lem:concentration}, respectively, in \Cref{sec:qc-upperbound}.

The proof of the concentration lemma is the technical core of the proof of \Cref{thm:intro-cliques}. It follows from a repeated application of the following concentration inequality for functions satisfying a ``high-probability'' bounded differences property:

\begin{theorem}[\cite{combes2024extension} (Modified, simplifed, informal)]\label{thm:mcdiarmids-whp-intro}
    Consider a function $f: \{0, 1\}^N \to \mathbb{R}$ and a subset of the domain $\mathcal{Y} \subseteq \{0, 1\}^N$. Suppose that there exists a $c \geq 0$ such that for all $\vec{x}, \vec{y} \in \mathcal{Y}$, $$\left| f(\vec{x}) - f(\vec{y})\right| \leq c \cdot \#(i : x_i \neq y_i).$$

    Let $X_1, X_2, \dots, X_N$ be 0/1-valued independent random variables; denote $X = (X_1, X_2, \dots, X_N)$. Let $q = \mathbb{P}\left( X \not \in \mathcal{Y}\right)$. Then, for any $\delta \geq q \cdot \max(f)$:
    \begin{equation}\label{eq:differences-whp-intro}
        \mathbb{P}\left( \left|f(X) - \mathbb{E}(f(X))\right| \geq \delta \right) \lessapprox q + \exp\left(- \delta^2/(2 N c^2) \right).
    \end{equation}
\end{theorem}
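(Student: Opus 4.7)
The plan is a reduction to the classical McDiarmid bounded-differences inequality via a Lipschitz extension. Concretely, I would construct a global $c$-Lipschitz function $\tilde{f}: \{0,1\}^N \to \mathbb{R}$ that agrees with $f$ on $\mathcal{Y}$, apply McDiarmid to $\tilde{f}$, and then transfer the concentration back to $f$ at an additive cost controlled by the exceptional probability $q$.

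For the extension step, define the inf-convolution
\[ \tilde{f}(x) \;=\; \min_{y \in \mathcal{Y}} \bigl(f(y) + c \cdot d_H(x,y)\bigr), \]
where $d_H$ is the Hamming distance on $\{0,1\}^N$. Each map $x \mapsto f(y) + c\,d_H(x,y)$ is $c$-Lipschitz in the Hamming metric, so the pointwise minimum $\tilde{f}$ is as well; in particular, changing a single coordinate of $x$ alters $\tilde{f}(x)$ by at most $c$. For $x \in \mathcal{Y}$, taking $y = x$ gives $\tilde{f}(x) \leq f(x)$, and the bounded-differences hypothesis on $\mathcal{Y}$ gives $f(y) + c \cdot d_H(x, y) \geq f(x)$ for every $y \in \mathcal{Y}$, so in fact $\tilde{f}(x) = f(x)$. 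Truncating $\tilde{f}$ to the range $[\min_{y \in \mathcal{Y}} f(y),\; \max_{y \in \mathcal{Y}} f(y)]$ preserves both the Lipschitz property and the agreement with $f$ on $\mathcal{Y}$, and guarantees $|\tilde{f}|\leq \max|f|$ globally.

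Applying classical McDiarmid to the now globally $c$-Lipschitz $\tilde{f}$ at independent inputs $X_1, \ldots, X_N$ (the inequality only needs independence of the coordinates, not any particular marginal) yields $\mathbb{P}\bigl(|\tilde{f}(X) - \mathbb{E}[\tilde{f}(X)]| \geq t\bigr) \leq 2 \exp\!\left(-2 t^2 / (N c^2)\right)$. To transfer back to $f$, observe that $\{f(X) \neq \tilde{f}(X)\} \subseteq \{X \notin \mathcal{Y}\}$, which has probability at most $q$, while the mean shift satisfies
\[ \bigl|\mathbb{E}[f(X)] - \mathbb{E}[\tilde{f}(X)]\bigr| \;\leq\; \mathbb{E}\bigl[|f(X) - \tilde{f}(X)| \cdot \mathbbm{1}[X \notin \mathcal{Y}]\bigr] \;\leq\; q \cdot \max(f), \]
since the integrand vanishes on $\mathcal{Y}$ and is uniformly bounded by $\max(f)$ elsewhere (after truncation). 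Combining via a union bound and setting $t = \delta - q\cdot \max(f)$, the hypothesis $\delta \geq q \cdot \max(f)$ ensures $t$ stays of order $\delta$, and the claimed bound $q + \exp(-\delta^2/(2 N c^2))$ follows up to constants absorbed into ``$\lessapprox$''.

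The main subtlety, more than a genuine obstacle, is executing the extension and truncation carefully so that $\tilde{f}$ remains $c$-Lipschitz globally while being uniformly bounded by $\max(f)$: it is precisely this boundedness that makes the mean-shift term equal to $q \cdot \max(f)$, which in turn is what the hypothesis $\delta \geq q\cdot \max(f)$ is calibrated to absorb. Beyond that bookkeeping, the argument is a clean reduction to classical McDiarmid.
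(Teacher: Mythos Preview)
Your argument is correct and self-contained, and it takes a genuinely different route from the paper. The paper does not prove this concentration inequality from scratch: it imports Theorem~2.14 from \cite{combes2024extension} (which gives concentration of $f(X)$ around the \emph{conditional} mean $\mathbb{E}[f(X)\mid X\in\mathcal{Y}]$ with the extra slack $q\sum_i c_i$), and then, in Corollary~2.15, compares the conditional and unconditional expectations directly using $\mathbb{E}[f(X)]=(1-q)\,\mathbb{E}[f(X)\mid X\in\mathcal{Y}]+q\,\mathbb{E}[f(X)\mid X\notin\mathcal{Y}]$, together with the assumption $q\le \min\{\delta/(2\max f),\,\delta/(4\mathbb{E} f),\,1/2\}$, to absorb the mean shift into a halving of $\delta$. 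Your approach instead constructs the McShane/inf-convolution extension $\tilde f$, applies \emph{classical} McDiarmid to $\tilde f$, and pays $q$ plus the mean shift $q\cdot\max(f)$ in one step. This is more elementary (no black-box needed) and arguably cleaner: it avoids the detour through the conditional expectation and yields the unconditional bound directly. The paper's route, on the other hand, inherits the sharper form $\exp(-2(\delta/2 - qNc)^2/(Nc^2))$ from the cited result, which can be useful when $qNc \ll q\cdot\max(f)$, though for the applications in the paper this distinction is immaterial. One small point: your bound $|f-\tilde f|\le \max(f)$ after truncation tacitly uses $f\ge 0$, which is exactly the standing assumption in the paper's formal version (Corollary~2.15).
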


The formal statement is given as \Cref{cor:mcdiarmid-unconditional-expectation} in \Cref{sec:preliminaries}. 

Below, let $\mu_{\ell}(s)$ be the expected number of $\ell$-cliques in a multiset of size $s$ of vertices, when the underlying graph is drawn from $\gnp$. To prove the concentration lemma using \Cref{thm:mcdiarmids-whp-intro}, we let the random variable $X_i$ be the $i$th vertex sampled (with repetition) by the algorithm for the multiset $S$ of size $s$. These are independent. Let $\mathcal{Y}$ be the subset of size-$s$ multisets with $(\ell - 1)$-clique counts proportional to the expected amount from $\gnp$. By the exponentially robust quasirandomness assumption on the count of $(\ell - 1)$-cliques, and the setting of $s$, $q$ is very small. We set, for a constant $a > 0$, $\delta = \frac{\varepsilon}{a} \cdot \mu_{\ell}(s)$. We prove that the ``bounded difference'' $c$ satisfies $c \leq \mu_{\ell-1}(s)$, and so the right-hand side of \Cref{eq:differences-whp-intro} will approximately be (omitting constants);
$$q + \exp\left(- \delta^2/(2 N c^2) \right) \approx \exp\left(- \varepsilon^2 \mu_{\ell}(s)^2 / (s \cdot \mu_{\ell-1}(s)^2 ) \right).$$
Plugging in the values of $\mu_{\ell-1}(s), \mu_{\ell}(s)$ yields the concentration lemma, where $\alpha_{\ell}$ depends on $\mu_{\ell}(s)^2/\mu_{\ell - 1}(s)^2$ and $s$. More concretely, $\alpha_{\ell} = p^{O(\ell)}$ in the concentration lemma.

Ultimately, to obtain a low error probability for the algorithm, we need to set $s \approx \Omega(1/\alpha_{k})$, corresponding to $s \approx 1/p^{-O(k)}$.

The inductive lemma then follows easily from the concentration lemma. This is because, given the assumption about exponentially robust quasirandomness for $(\ell - 1)$-cliques, the concentration lemma tells us that the number of $\ell$-cliques in one sampled multiset $S$ will reveal whether the graph's global count of $\ell$-cliques is consistent with the expected number from $\gnp$. Moreover, if it is consistent, the concentration lemma (with a bit of manipulation of expressions) also implies exponentially robust quasirandomness of $\ell$-cliques, for some parameters.

Observe that the inductive and concentration lemmas already imply that the stated algorithm is a $(\gnp, \rho_k)$-quality control algorithm. The algorithm stated performs $T_{\ell}$ from the inductive lemma for all $\ell \in [k-1]$. If the graph is not $(\varepsilon_{\ell}, \alpha_{\ell}, s)$-exponentially robustly quasirandom with respect to $\ell$-cliques for any $\ell \in [k-1]$, it will reject, by the inductive lemma. If the graph is $(\varepsilon_{k-1}, \alpha_{k-1}, s)$-exponentially robustly quasirandom with respect to $(k-1)$-cliques, then the concentration lemma implies that the algorithm will detect whether the number of $k$-cliques corresponds to $\gnp$ or not. We prove that $G \sim \gnp$ is $(\varepsilon_{\ell - 1}, \alpha_{\ell}, s)$-exponentially robustly quasirandom for all $\ell \in [k]$ with high probability, and so the algorithm accepts input graphs drawn from $\gnp$ with high probability.

The formal proof can be found in \Cref{sec:qc-upperbound}.

\paragraph{Clique counting: Upper Bounding Time} 

As stated so far, the algorithm uses $p^{-O(k)}$ queries, for all adjacencies in $S$. However, counting the number of $\ell$-cliques, for $\ell \in [k]$, in a worst case graph could take time $p^{-O(k^2)}$. We again take advantage of the composability of quality control problems to improve the runtime to $p^{-O(k)}$. We now define a quality parameter $\rho_{\mathcal{J}}$ that corresponds to the \textit{jumbledness} of a graph. Graph jumbledness \cite{thomason1987pseudo} -- one of the earliest forms of graph pseudorandomness studied -- aims to capture a property that is held by \erdosrenyi random graphs. It captures the property that the number of edges between any vertex subsets $X, Y$ is approximately $p |X| |Y|$ in $\gnp$ (see \Cref{def:jumbled} for a formal definition).

Let $\rho_{\mathcal{J}}(G) = 1$ if the graph is \textit{jumbled} (for some specified parameters; see \Cref{sec:algorithmic-complexity}), and $\rho_{\mathcal{J}}(G) = 0$ otherwise. Therefore, $(\mathcal{G}_{s, p}, \rho_{\mathcal{J}})$-quality control should accept graphs from $\mathcal{G}_{s, p},$ (which are jumbled with high probability), and reject graphs that are not jumbled, with high probability.

For a graph on $s$ vertices, verifying jumbledness takes time $O(s^3)$. Additionally, counting $k$-cliques on a jumbled graph (for certain jumbledness parameters; see \Cref{prop:jumbled-efficient-approximate-subgraph-counting} and \Cref{sec:algorithmic-complexity}) takes time $O(s^c)$ for some universal constant $c$. Therefore, performing $(\mathcal{G}_{s, p}, \rho_{\mathcal{J}})$-quality control on the subgraph of the vertices sampled will improve the runtime to $1/p^{O(k)}$.

We therefore consider the following updated algorithm: Given a graph $G$, sample a random (multi-set) $S$ of $s \approx p^{-O(k)}$ vertices of $G$. Verify if the subgraph on $S$ is jumbled. If it is not, reject. Else, using an efficient algorithm for jumbled graphs, count the number of $\ell$-cliques in the induced subgraph of $G$ on $S$, for every $\ell \in \{1,\ldots,k\}$. If all these numbers are consistent with the expectations for $\gnp$, then we accept; else, we reject.

The formal proof can be found in \Cref{sec:algorithmic-complexity}.

\paragraph{Triangle Counting}

Our results so far already imply that quality control of the triangle count compared to $\gnp$ takes $1/p^{O(1)}$ adjacency matrix queries and runtime. For the case of triangles, we identify the exact constant in the exponent, proving a $\widetilde{O}(1/p)$ upper bound on the query complexity and runtime and a $\Omega(1/p)$ lower bound. The algorithm will rely on results from the literature on sublinear approximation of graph parameters \cite{eden2022approximating, eden2018faster}, and will correspondingly utilize more expansive sublinear access to the graph: query access to adjacency lists, the adjacency matrix, and degree queries.

Suppose that we want to apply an algorithm for estimating the number of triangles in a worst-case graph to the quality control setting. Such an algorithm would approximate the number of triangles, and accept the graph if the number corresponds approximately to the expected amount in $\gnp$ and reject otherwise. This is a plain $(\gnp, \rho_{\Delta})$-quality control problem, without composing the problem with any other quality control questions. Without any other knowledge of the underlying graph, this takes $\Theta(p^{-3})$ queries (ignoring polylogarithmic factors) \cite{eden2017approximately}. If we first perform $(\gnp, \rho_{e})$-quality control, where $\rho_e(G) = e(G)/\mathbb{E}_{G' \sim G}\left[e(G')\right]$, then subsequent $(\gnp, \rho_{\Delta})$-quality control on the accepted graphs takes $\Theta(p^{-3/2})$ queries (ignoring polylogarithmic factors) \cite{eden2017approximately}. To get to $\widetilde{O}(1/p)$, we first perform $(\gnp, \rho_{e})$-quality control and $(\gnp, \rho_{\alpha})$-quality control, where $\alpha(G)$ is the arboricity of the graph $G$ (see \Cref{def:arboricity}) and $\rho_{\alpha}(G) = \alpha(G)/\mathbb{E}_{G' \sim G}\left[\alpha(G')\right]$. On graphs accepted by both of the quality control algorithms, perform $(\gnp, \rho_{\Delta})$-quality control. Using algorithms from \cite{eden2018faster} for the approximation of $k$-clique counts in graphs with a bounded arboricity, we can reduce the bounds to $\widetilde{O}(1/p)$ queries and runtime.

For the lower bound on $(\gnp, \rho_{\Delta})$-quality control, we construct two distributions over graphs: a YES distribution, which is forced to be $\gnp$ since the quality control problem has this as the target distribution, and a NO distribution, which must have a number of triangles out of the range $(1 \pm \varepsilon) \mathbb{E}_{G' \sim G}\left[C_{\Delta}(G')\right]$, where $C_{\Delta}(G)$ is the number of triangles in graph $G$. Since we want the NO distribution to look indistinguishable from $\gnp$ except for in a small region, we plant a clique of size $\ell = O(pn)$ in a graph drawn from $\gnp$. Distinguishing a graph from itself with a planted clique has previously been studied \cite{eden2018faster, racz2020finding}, but differs either because the underlying graph is not $\gnp$ \cite{eden2018faster} or the analysis is different \cite{racz2020finding}.

The graph created by planting an $\ell$-clique in $\gnp$ has more than $(1 \pm \varepsilon) \mathbb{E}_{G' \sim G}\left[C_{\Delta}(G)\right]$ triangles, and a query must be made inside the planted clique to identify that the graph does not look like $\gnp$ and is from the NO distribution. Since the input is randomized, it will take $\Omega(1/p)$ queries until a query is made involving a vertex/edge inside the planted clique.

The formal proofs can be found in \Cref{sec:triangles}. In \Cref{sec:challenge-extending-literature}, we explain why this approach would likely yield a $1/p^{O(k^2)}$ query complexity and runtime, for larger $k$.

\paragraph{Motif counting: Upper Bounding Query Complexity and Time}

We extend the principles of the proof of quality control of clique counts to general motifs. First, we can define exponentially robust quasirandomness naturally for general induced labeled motifs. We prove inductive and concentration lemmas for the case of motifs, as well. The changes are the following:
\begin{enumerate}
    \item For the inductive lemma, we now aim to test the exponentially robust quasirandomness of the count of a labeled induced motif $H$ on $\ell$ vertices. The assumption we make now is that for all $H' \subseteq H$ such that $|H'| = \ell - 1$, the graph is $(\varepsilon_{\ell - 1}, \alpha_{\ell - 1}, s)$-exponentially robustly quasirandom, for some parameters $\varepsilon_{\ell - 1}, \alpha_{\ell - 1}, s$.
    \item The concentration lemma makes a similar new assumption about the count of all $H' \subseteq H$ such that $|H'| = \ell - 1$. We compare the count of $H$ in $S$ (see \Cref{def:C-P-S}) to the global count in the concentration lemma.
\end{enumerate}

For motif $H$, let $\mu_H(s)$ be the expected number of labeled induced copies of $H$ in a multiset of size $s$ of vertices chosen from $G \sim \gnp$. The size $s$ of the multiset sampled is chosen so that
$\mu_{H}(s)^2 / (s \cdot \mu_{H'}(s)^2)$ is large for all $H' \subset H, |H'| = \ell - 1$. Therefore, we need to look at $H'$ maximizing $\mu_{H'}(s)^2$. Since $p \leq 1/2$, this corresponds to the $H'$ obtained by removing the \textit{maximum degree vertex} from $H$. The ratio $\mu_{H}(s)^2 / (s \cdot \mu_{H'}(s)^2)$ will therefore depend on $p^{O(\Delta(H))}$, where $\Delta(H)$ is the maximum vertex degree in $H$. From this observation, we will find that we need $s \approx p^{-O(\Delta(H))}$.

The soundness of the algorithm is argued over any worst-case graph. The completeness argument only relies on the fact that $\gnp$ has concentration of counts of motifs of size at most $\ell$. Therefore, our quality control results additionally extend to \textit{other distributions over random graphs} that possess good concentration of small motif counts.

The formal proofs can be found in \Cref{sec:query-complexity-motifs} (motifs over $\gnp$) and \Cref{sec:any-random-graph-family} (motifs over general random graph distributions).

\paragraph{Motif Counting: Lower Bound}

As in the lower bound for quality control of the triangle count, the YES distribution is $\gnp$. For the NO distribution, plant a structure in a graph drawn from $\gnp$ so that the number of induced labeled copies of a motif $H$ is greater than $(1 \pm \varepsilon) \mathbb{E}_{G' \sim G}\left[C_H(G')\right]$, where $C_H(G)$ is the number of copies of $H$ in a graph $G$. We cannot just plant a large $\ell$-clique in $G$ this time, however, since this may not add enough \textit{induced} copies of $H$. Instead, we plant a graph drawn from $\mathcal{G}_{\ell, 1/2}$ for $\ell = O(n p^{\Delta(H)})$, and connect each of these $\ell$ vertices to other vertices in the graph, with edge probability $1/2$. That is, the NO distribution is a stochastic block model with two communities; Community 1 has size $n - \ell$ and in-community edge probability $p$, Community 2 has size $\ell$ and in-community edge probability $1/2$, and the between-community edge probability is $1/2$.

Given this construction, it takes $\Omega(p^{-\Delta(H)})$ queries to distinguish the YES and NO distributions. The choice of $\ell = O(n p^{\Delta(H)})$ arises because the vertices in the planted/denser region play the role of the maximum-degree vertex in new $H$ copies we count (and copies of $H' = H$ minus this vertex are counted from existing copies in the $\gnp$ part of the graph).

The formal proof can be found in \Cref{sec:motifs-lowerbound}.

\subsection{Open problems and future directions}\label{sec:future-directions}

Our work opens up a variety of future research directions. We begin with open problems that are more directly related to quality control of motif counts for $\gnp$, and then turn to broader questions in quality control.

\paragraph{Quality control of motif counts} 

First, our work has characterized $(\gnp, \rho_H)$-quality control by proving that there exist universal constants $c_1, c_2$ such that the query complexity $q(n)$ and runtime $t(n)$ are  $1/p^{c_1 \Delta(H)} \leq q(n), t(n) \leq 1/p^{c_2 \Delta(H)}$. Identifying/proving the precise universal constants in the exponent remains open. 

Second, when $p = o\left(n^{-1/(c k)} \right)$ for a constant $c$, quality control remains unstudied. For the problem of counting motifs compared to $\gnp$, our completeness guarantee required that $G \sim \gnp$ satisfies $\rho_k \approx 1$ with high probability (where $\rho_k$ is the quality function for the count of $k$-cliques), which is true when $n = \Omega\left(p^{-c k} \right)$. It seems natural that such a property needs to hold, given our formulation of quality control. Is there a different natural notion of $(D, \rho)$-quality control when the distribution $D$ we are comparing against does not have concentration of the value of $\rho$?

Third, $\gnp$ has many other graph parameters that are well-concentrated. Understanding quality control for these graph parameters over $\gnp$ is a compelling future direction. For example, understanding quality control for the minimum vertex cover, diameter, and chromatic number are suggested future directions.

Fourth, when sublinear algorithms for counting motifs are studied, it is often natural to study \textit{sampling} motifs as well \cite{biswas2021towards, eden2025approximately, eden2020almost}. While sampling may not fit as naturally with counting in the quality control framework, what can we say about efficient algorithms for sampling motifs in graphs accepted by the algorithm? The notion of exponentially robust quasirandomness suggests that sampling a random multiset of a certain size $s$ and choosing a uniform random $k$-clique in the multiset could suffice to produce an almost-uniform sample of a motif in the graph. Such an algorithm would have an efficient query complexity, but understanding and improving its runtime would require further consideration.

Finally, there is still more to understand about the runtime for quality control over other random graph distributions. While we have proven a query complexity and runtime bound for $(\mathcal{D}_n, \rho_H)$-quality control for general random graph distributions $\mathcal{D}_n$, the runtime requires further study because we are not able to extend the techniques from $\gnp$ as directly. The query complexity argument was extendable to general graph distributions because it had only relied on $\gnp$'s property of having concentration of the counts of small subgraphs. However, the runtime argument for $\gnp$ required testing whether the input graph is jumbled (\Cref{def:jumbled}), which holds for $G \sim \gnp$ with high probability but not necessarily for graphs drawn from other random graph distributions one might like to consider. For example, in \Cref{thm:general-distributions-intro}, while the query complexity is $1/p^{O(k)}$ for quality control of the $k$-clique count, the runtime still would scale with $1/p^{O(k^2)}$ for other distributions over graphs.

\paragraph{Other directions in quality control}

In this paper, we performed a case study of our newly-identified class of quality control problems by studying the quality control of motif counts over \erdosrenyi graphs. However, quality control need not be studied only over graphs, nor does it need to be studied in the sublinear access model only. The question of whether an algorithm that works well on average over a distribution can safely be applied to an input is relevant for a variety of different inputs. For example, one could imagine performing $(D, \rho)$ quality control where $D = \mathcal{N}(0, I_n)$ (and so the input is \textit{a single} vector drawn from this high-dimensional Gaussian distribution) and $\rho_k$ is the $k$th norm (for some reasonable definition) of a vector $v \in \mathbb{R}^n$. What can be said about quality control in such settings beyond graphs? Additionally, global algorithms also naturally fit into the quality control model. Our proofs relied on a local-to-global property of exponentially robust quasirandomness (local motif counts reveal information about global counts); see \Cref{thm:intro-clique-poly} for the global algorithm we considered. Imaginably, global algorithms can also contribute to the understanding of quality control in sublinear access models for other families of input objects and problems.

Additionally, understanding the algorithmic implications of quality control -- both from a theoretical and practical perspective -- has many open directions. As in the testable learning paradigm of \cite{rubinfeld2023testing}, in some settings, testing whether an algorithm is safe to use on an input provides improved guarantees. In our context, algorithms that assume the graph is from $\gnp$, but really only use motif counts, would be safe to lift from average-case graphs to worst-case graphs (which may be deemed safe to use the algorithm on, or otherwise rejected), given the results in this paper. There is much unexplored regarding understanding the ways in which quality control can be used as a component of a broader algorithmic pipeline.

\subsection{Paper outline}

The rest of the paper is organized as follows. First, in \Cref{sec:preliminaries}, we introduce preliminary definitions and results related to graphs and sublinear access models, pseudorandom graphs, properties of \erdosrenyi graphs, and bounded difference concentration inequalities. In \Cref{sec:qc-upperbound}, we study quality control of $k$-clique counts compared to \erdosrenyi graphs. \Cref{sec:qc-upperbound} and \Cref{sec:algorithmic-complexity} incorporate the main technical elements for the query complexity and runtime proofs, which will be applied and extended in later sections for motif counts, other random graph models, and general graph parameters. \Cref{sec:query-complexity-motifs} extends the results to quality control of counts of general motifs compared to \erdosrenyi graphs. The main technical elements of the lower bound proofs in the paper can be found in \Cref{sec:motifs-lowerbound}. We recommend that a reader primarily interested in the core technical ideas of our paper focus on \Cref{sec:qc-upperbound}, \Cref{sec:algorithmic-complexity}, and \Cref{sec:motifs-lowerbound}. \Cref{sec:any-random-graph-family} looks at quality control for motif counts compared to other random graph models with concentration. This section includes the most general forms of our results for quality control of motif counts and the technical lemmas. 

In \Cref{sec:triangles}, we focus on the case of counting triangles compared to \erdosrenyi graphs, relying on results from the literature on sublinear approximate counting to strengthen the quality control result for triangles. Next, in Sections \ref{sec:NC0-properties} and \ref{sec:average-case-complexity} of the Appendix, we provide two alternative perspectives on quality control of motifs. First, \Cref{sec:NC0-properties} further extends the results for quality control of graph parameters expressible as ``sum-NC$^0$'' circuits (see \Cref{def:sum-nc0}). Second, in \Cref{sec:average-case-complexity}, we describe the interplay between average-case complexity and quality control and show how our quality control algorithm implies an algorithm for approximately counting motifs with average-case runtime $1/p^{O(\Delta(H))}$ where $\Delta(H)$ is the maximum vertex degree of a motif $H$. Lastly, \Cref{sec:appendix} provides proofs of concentration of subgraph counts for the \erdosrenyi graph model and another model used for our lower bound proofs.

\section{Preliminaries}\label{sec:preliminaries}

In this section, we provide a background on relevant notation, definitions, and results related to graphs, sublinear access models, pseudorandom graphs, properties of \erdosrenyi graphs, and concentration inequalities. We also prove some elementary results. The preliminary results that we prove in this paper are \Cref{prop:jumbled-from-deg-codeg}, \Cref{lem:Gnp-motif-concentration}, \Cref{lem:Gnp-clique-concentration},  \Cref{prop:Gnp-jumbled-conditions}, Corollary \Cref{cor:Gnp-is-jumbled}, \Cref{prop:jumbled-subgraphs}, \Cref{cor:jumbled-subgraphs}, and \Cref{cor:mcdiarmid-unconditional-expectation}.

A reader familiar with graph theory and sublinear access models can skip \Cref{sec:preliminary-graph-sublinear}. \Cref{sec:preliminary-pseudorandom} introduces the notion of \textit{jumbled graphs} from \cite{thomason1987pseudo}, highlighting known result proving that jumbledness is efficiently verifiable (\Cref{prop:jumbled-from-deg-codeg}) and implies efficient algorithms for approximate motif counting (\Cref{prop:jumbled-efficient-approximate-subgraph-counting}). \Cref{sec:erdos-renyi-properties} covers various properties we will utilize about \erdosrenyi graphs: motif counts are concentrated around their expectations, arboricity is bounded with high probability, and the graph is jumbled with high probability. \Cref{sec:preliminary-concentrationinequality} highlights McDiarmid's inequality \cite{mcdiarmid1989method} and an extension \cite{combes2024extension} that we will crucially rely on.

\subsection{Graphs and sublinear access models}\label{sec:preliminary-graph-sublinear}

We begin by introducing some notation. For a graph $G$, let $V(G)$ be the set of vertices of $G$, and $v(G) = |V(G)|$ be the number of vertices. Let $E(G)$ be the edges, and $e(G) = |E(G)|$ be the number of edges. Let $\alpha(G)$ be the arboricity of the graph $G$, which is defined as the minimal number of forests that are needed to cover the edges of $G$.

For a graph $G$, an induced subgraph $H \subseteq G$ is the subgraph that contains exactly the edges of $G$ whose endpoints are both in $V(H)$.

We will consider algorithms that have \textit{query access} to the input graph $G$. This means that the algorithm does not have full access to the input graph, and instead must perform \textit{queries} to access information about the underlying structure of the graph. The main object of interest in this paper is the \textit{query complexity} of the algorithm, which is the maximum number of queries that the algorithm must make on any input graph in order to produce its output.

We consider algorithms under the following query access models (see Chapters 8 and 9 of \cite{DBLP:books/cu/Goldreich17} on testing graph properties):
\begin{enumerate}
    \item Adjacency list queries: The algorithm makes queries of the form $(u, i)$ for $u \in V(G)$ and $i \in [n-1]$. If $u$ has at least $i$ neighbors, it obtains as a response the $i$-th neighbor of vertex $u$, according to some arbitrary but consistent ordering of the neighbors of $u$. Else, it obtains as a response $\bot$.
    \item Adjacency matrix queries: The algorithm makes queries of the form $(u, v)$ for $u, v \in V(G)$. It obtains as a response a $1$ if $(u, v) \in E(G)$ and $0$ otherwise. 
\end{enumerate}

When the adjacency lists of vertices are sorted (according to some arbitrary, consistent ordering of the vertices), adjacency list access is so powerful that it can simulate adjacency matrix access. Particularly, a query to the adjacency matrix can be answered with $O(\log n)$ queries to a vertex's sorted adjacency list. Most work on sublinear algorithms for approximating motif counts on graphs (e.g., \cite{assadi2018simple, eden2017approximately, eden2018faster}) assumes access to both the adjacency matrix and adjacency lists (and sometimes even more, such as access to random edge samples). On the other hand, literature on hypothesis testing for planted solution problems in sublinear access models typically assumes adjacency matrix access only (e.g., \cite{racz2020finding, huleihel2024random}).

We will primarily study quality control algorithms with adjacency matrix queries. However, we will also study algorithms with adjacency list queries in certain settings (triangle counting); leveraging previous results that use adjacency list queries allows us to obtain improved and optimal query complexity in this setting.

We now define some relevant terminology related to graphs.

\begin{definition}[Graph parameter]\label{def:graph-parameter}
    Let $\mathcal{G}(n)$ be the family of graphs on $n$ vertices. A graph parameter is a function $F: \mathcal{G}(n) \to \mathbb{R}$ that is invariant under graph isomorphism. That is, for any graph $G \in \mathcal{G}(n)$, and any graph $G'$ resulting from permuting the vertices of $G$, $F(G) = F(G')$. 
\end{definition}

\begin{definition}[Motif]
    A motif of size $k$ on a graph $G$ is a labeled induced subgraph on $k$ vertices of $G$. Typically, motifs refer to small graphs that frequently match size-$k$ induced subgraphs of a random graph model being considered.
\end{definition}

\begin{definition}[Motif count]\label{def:C-H-G}
    For a motif $H$ and a graph $G$, let $C_H(G)$ be the number of labeled, induced copies of $H$ in $G$. Equivalently, $C_H(G)$ equals the number of injective maps $\psi: V(H) \to V(G)$ such that for all $(u, v) \in E(H)$, $(\psi(u), \psi(v)) \in E(G)$, and for all $(w, x) \not \in E(H)$, $(\psi(w), \psi(x)) \not \in E(G)$.
\end{definition}

\begin{definition}[Arboricity]\label{def:arboricity}
    The arboricity $\alpha(G)$ of a graph $G$ is the minimal number of forests that the graph's edges can be partitioned into.
\end{definition}

The arboricity of a graph measures the graph's density, in the sense that, in a graph of arboricity at most $\alpha$, every subgraph has average degree at most $O(\alpha)$.

\subsection{Pseudorandom graphs}\label{sec:preliminary-pseudorandom}

One of the main characterizations of graphs that we leverage is that of \textit{jumbledness}. Initially defined by Thomason \cite{thomason1987pseudo} and extended to more general notions of graph pseudorandomness by subsequent papers -- notably, \cite{chung1989quasi} -- jumbledness characterizes some respect in which the edges in the graph are uniform/regular. The property of being jumbled can be checked efficiently in the size of the graph and implies efficient algorithms for subgraph counting.

\begin{definition}[Jumbled graph (\cite{thomason1987pseudo} as presented in \cite{krivelevich2006pseudo})]\label{def:jumbled}
    A graph $G = (V, E)$ is $(p, \beta)$-jumbled if for every pair of vertex subsets $X, Y \subseteq 
    V$ the following holds:
    $$\left| e(X, Y) - p | X| |Y| \right| \leq \beta \sqrt{|X||Y|}.$$
\end{definition}

Above, $e(X, Y)$ counts unordered edges between $X$ and $Y$. When $X = Y$, $e(X, Y)$ thus equals the number of edges in the subgraph on $X$.

Let us understand the values of $(p, \beta)$ for different graphs. 
First, \erdosrenyi graphs $\gnp$ are known to be almost surely $(p, O(\sqrt{np}))$-jumbled \cite{krivelevich2006pseudo}. In general, the definition above implies $\beta \geq 0$. $\beta = 0$ when the graph $G = (V, E)$ is a complete bipartite graph and $p = 1$, or $G$ is the empty graph and $p = 0$. On the other hand, the value of $\beta$ could be high ($\Theta(n p)$), setting $p$ to be the average degree of the graph) if there exists a very sparse or dense region in the graph, compared to the average degree of the graph.

Setting $\beta \geq n$ is trivial, because for any graph $e(X, Y) \leq |X| |Y|$, and so $\beta \geq n \geq \sqrt{|X||Y|}$ yields an inequality that is always true/provides no meaningful information. Indeed, the property of being $(p, \Theta(np))$-jumbled does not reveal much about the graph, as linear-sized very sparse/dense subgraphs can still exist. On the other hand, $(p, o(np))$-jumbledness in the dense case (constant $p$) implies strong properties about the underlying graph \cite{chung1989quasi}. In general, graph sequences that satisfy these strong properties are called \textit{quasirandom} (or pseudorandom) graphs. See the survey by Krivelevich and Sudakov on pseudorandom graphs \cite{krivelevich2006pseudo} for further information. 

Thomason \cite{thomason1987pseudo} gives efficiently verifiable sufficient conditions for jumbledness.

\begin{proposition}[\cite{thomason1987pseudo} as in \cite{krivelevich2006pseudo}]
    Let $G = (V, E)$ be a graph with $|V| = n$ such that every vertex has degree at least $np$, for $p \in (0, 1)$. If every pair of vertices in $G$ has at most $np^2 + \ell$ common neighbors, then $G$  is $(p, \sqrt{(p + \ell)n})$-jumbled.
\end{proposition}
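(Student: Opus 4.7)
The plan is to reduce the jumbledness bound to a one-sided estimate on the second-moment quantity
\[
S(X) := \sum_{v \in V} \bigl(d_X(v) - p|X|\bigr)^2, \qquad d_X(v) := |N(v) \cap X|,
\]
and then control $S(X)$ using the degree and codegree hypotheses. First, I would observe the identity $\sum_{y \in Y}(d_X(y) - p|X|) = e(X,Y) - p|X||Y|$, so Cauchy--Schwarz yields
\[
\bigl(e(X,Y) - p|X||Y|\bigr)^2 \leq |Y| \sum_{y \in Y}(d_X(y) - p|X|)^2 \leq |Y| \cdot S(X).
\]
Thus it suffices to prove $S(X) \leq (p + \ell)\,n\,|X|$; taking square roots then gives the jumbledness bound directly.

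Next I would expand $S(X)$ by double counting. Using the identities $\sum_v d_X(v) = \sum_{x \in X} d(x) =: D$ and $\sum_v d_X(v)^2 = D + \sum_{x \neq x' \in X} c(x,x')$, where $c(x,x')$ is the codegree, I obtain
\[
S(X) = D\,(1 - 2p|X|) + \sum_{x \neq x' \in X} c(x,x') + p^2 |X|^2 n.
\]
Applying the codegree hypothesis $c(x,x') \leq np^2 + \ell$ bounds the middle sum by $|X|(|X|-1)(np^2 + \ell)$. In the regime $|X| \geq 1/(2p)$, the coefficient $1 - 2p|X|$ is nonpositive, so the degree hypothesis $d(x) \geq np$ (yielding $D \geq |X|np$) gives $D(1-2p|X|) \leq |X|np(1 - 2p|X|)$. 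A direct expansion then shows the three $np^2 |X|^2$ contributions cancel exactly (coefficients $-2,+1,+1$), leaving
\[
S(X) \leq |X|np(1-p) + \ell\,|X|(|X|-1) \leq |X|\,n\,(p + \ell),
\]
using $|X| \leq n$ in the last step, which is exactly what is needed.

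The main obstacle is the regime where both $|X|, |Y| < 1/(2p)$, since there the coefficient $1 - 2p|X|$ is positive, and the only available bound $D \leq |X| n$ is too crude to produce the clean cancellation above (it yields the weaker estimate $S(X) \leq |X|n(1+\ell)$). For this I would argue in two steps. First, applying the same Cauchy--Schwarz reduction with the roles of $X$ and $Y$ swapped bounds $(e(X,Y)-p|X||Y|)^2 \leq |X|\cdot S(Y)$, so the analysis above also disposes of any case where $|Y| \geq 1/(2p)$. Second, in the residual regime $|X|,|Y| < 1/(2p)$, the trivial bound $|e(X,Y) - p|X||Y|| \leq |X||Y| < 1/(4p^2)$ suffices: this is at most $\sqrt{(p+\ell)n \cdot |X||Y|}$ provided $n \geq 1/(4p^2(p+\ell))$, a mild condition that holds whenever the jumbledness bound $\beta = \sqrt{(p+\ell)n}$ is non-vacuous on sets of the relevant scale.
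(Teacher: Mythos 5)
Your approach mirrors the paper's own proof of its two-sided version, \Cref{prop:jumbled-from-deg-codeg}: reduce to a second-moment quantity via Cauchy--Schwarz, then expand by degree/codegree sums. The algebra in the regime $|X| \geq 1/(2p)$ is correct, and the exact cancellation of the three $np^2|X|^2$ contributions is the same device the paper's proof exploits; you have also correctly put your finger on the subtlety the one-sided hypotheses create, namely that the coefficient $1-2p|X|$ of the degree sum changes sign, so the lower bound $D\geq np|X|$ only helps when that coefficient is nonpositive.

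The gap is the residual regime $|X|,|Y|<1/(2p)$, and what you've written does not close it. The trivial bound $|e(X,Y)-p|X||Y||\leq |X||Y|$ is at most $\sqrt{(p+\ell)n|X||Y|}$ only when $|X||Y|\leq(p+\ell)n$, so you do need $n\geq 1/(4p^2(p+\ell))$ --- but your justification that this is innocuous has the logic backwards. At the boundary scale $|X||Y|\approx 1/(4p^2)$, the jumbledness bound $\sqrt{(p+\ell)n|X||Y|}$ beats the trivial bound precisely when $(p+\ell)n<|X||Y|$, i.e.\ when $n<1/(4p^2(p+\ell))$ --- the opposite of what you need. So the case you punt on is exactly the one where the proposition carries content. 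Moreover, the target $S(X)\leq(p+\ell)n|X|$ is simply false for arbitrary graphs when $|X|$ is small: for $X=\{x_0\}$ with $x_0$ of degree near $n$ one gets $S(X)\approx n(1-p)^2$, which exceeds $(p+\ell)n$ for small $p,\ell$. Closing this regime therefore genuinely requires something extra --- either an explicit lower bound on $n$, or using the codegree hypothesis to also cap the maximum degree (it does: the trivial bound $D<n|X|$ is far too lossy). The paper sidesteps the whole issue by proving and using only the two-sided \Cref{prop:jumbled-from-deg-codeg}, where no sign-splitting is needed, and explicitly remarks that a proof of the one-sided version ``is not readily found online.''
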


The proof of the proposition above is not readily found online. We prove a similar statement, given below.

\begin{proposition}\label{prop:jumbled-from-deg-codeg}
    Let $G = (V, E)$ where $|V| = n$. Consider a parameter $\delta > 0$. If every vertex in $G$ has degree in $(n-1)(p \pm \delta)$, and every pair of distinct vertices in $G$ has $(n -2)(p^2 \pm \delta)$ common neighbors, then $G$ is $(p, n \sqrt{3\delta} + \sqrt{n(\delta + p)})$-jumbled.
\end{proposition}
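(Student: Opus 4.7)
The plan is a Cauchy--Schwarz argument in the spirit of the standard spectral proof of the expander mixing lemma: control a second moment using the degree and codegree hypotheses, and pass to the edge-count estimate. Fix $X, Y \subseteq V(G)$. Up to a correction of size $O(p|X \cap Y|) \leq O(p\sqrt{|X||Y|})$ (which comes from reconciling the ``ordered'' sum $\sum_{v \in Y}|N(v)\cap X|$ with the unordered count $e(X,Y)$ when $X$ and $Y$ overlap, and is ultimately absorbable into the claimed $\beta$), we have $e(X,Y) - p|X||Y| \approx \sum_{v \in Y}(|N(v) \cap X| - p|X|)$. Cauchy--Schwarz then gives
\[
\Bigl|\sum_{v \in Y}\bigl(|N(v) \cap X| - p|X|\bigr)\Bigr|^2 \;\leq\; |Y| \cdot \sum_{v \in V}\bigl(|N(v) \cap X| - p|X|\bigr)^2.
\]

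The heart of the argument is to estimate the squared sum on the right using the hypotheses. Expanding, $\sum_v |N(v)\cap X| = \sum_{u \in X} d_u$, and $\sum_v |N(v)\cap X|^2 = \sum_{u,u' \in X} |N(u) \cap N(u')|$, which splits into $|X|$ diagonal terms equal to $d_u$ plus $|X|(|X|-1)$ off-diagonal codegree terms. Write $d_u = (n-1)p + \epsilon_u$ with $|\epsilon_u| \leq (n-1)\delta$, and $\mathrm{codeg}(u,u') = (n-2)p^2 + \eta_{u,u'}$ with $|\eta_{u,u'}| \leq (n-2)\delta$. The $\delta$-independent terms in $\sum_v(|N(v)\cap X| - p|X|)^2$ collapse via the algebraic identity $(|X|-1)(n-2) - 2|X|(n-1) + n|X| = -(n-2)$, which forces the leading $p^2|X|^2$ contributions to cancel and leaves a main piece $|X|[(n-1)p - (n-2)p^2] \leq |X|n(p+\delta)$. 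The $\epsilon$- and $\eta$-error contributions (including the cross term from $-2p|X|\sum_{u \in X}\epsilon_u$) can be grouped and bounded, using $p \leq 1$, by at most $3|X|^2 n\delta$. Combining,
\[
\sum_v \bigl(|N(v) \cap X| - p|X|\bigr)^2 \;\leq\; |X|n(p+\delta) + 3|X|^2 n\delta.
\]

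Finally, taking the square root (via $\sqrt{a+b} \leq \sqrt{a} + \sqrt{b}$) and using $|X| \leq n$ yields
\[
\Bigl|\sum_{v \in Y}\bigl(|N(v) \cap X| - p|X|\bigr)\Bigr| \;\leq\; \sqrt{|X||Y|}\bigl(\sqrt{n(p+\delta)} + n\sqrt{3\delta}\bigr),
\]
and the $O(p\sqrt{|X||Y|})$ correction from the first step is absorbed into $\sqrt{n(p+\delta)}$ (using $p \leq \sqrt{np} \leq \sqrt{n(p+\delta)}$), yielding the claimed jumbledness bound.

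The main obstacle is the bookkeeping in the second step: the exact cancellation of the leading $p^2|X|^2$ contributions (via the combinatorial identity above) is essential, and the cross term $-2p(|X|-1)\sum_{u \in X}\epsilon_u$ has a coefficient depending on both $p$ and $|X|$, so extracting exactly the constant $3$ (rather than a larger factor scaling with $p$ or $n$) requires some care in grouping the error terms. After that, the remainder of the argument is mechanical.
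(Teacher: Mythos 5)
Your proof follows exactly the same route as the paper's: it is the CODEG $\Rightarrow$ DISC Cauchy--Schwarz argument from Zhao's book, with the degree and codegree hypotheses used to kill the leading $p^2$ terms and bound the $\delta$-error contributions, and with $p \le 1$ used to extract the constant $3$. That part checks out.

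The one place you deviate is the ``correction term'' you insert to reconcile $e(X,Y)$ with $\sum_{y\in Y}|N(y)\cap X|$, and that step is both unnecessary and incorrectly sized. In the convention actually used in this literature (and in the proof the paper follows), $e(X,Y)$ is the \emph{ordered} count, $e(X,Y)=|\{(x,y)\in X\times Y: \{x,y\}\in E\}|$, so the identity $e(X,Y)=\sum_{y\in Y}|N(y)\cap X|$ is exact and no correction is needed. (The paper's parenthetical note ``$e(X,Y)$ counts unordered edges'' is misleading: with the unordered convention and $X=Y$, one would have $e(X,X)=e(G[X])\approx p|X|^2/2$, so $|e(X,X)-p|X|^2|\approx p|X|^2/2$, which exceeds $\beta|X|$ for small $\delta$ and $np$ large, and the proposition would simply be false.) If you nonetheless try to work with the unordered count, the discrepancy is $e(G[X\cap Y])$, which under the degree hypothesis can be as large as $|X\cap Y|\cdot(n-1)(p+\delta)/2$ --- a factor of $n$ bigger than the $O(p|X\cap Y|)$ you claimed --- and a term of that order is \emph{not} absorbable into $\beta\sqrt{|X||Y|}$ when $\delta$ is small. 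So: drop the correction, use the ordered definition, and the rest of your argument goes through as written.
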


\begin{proof}
    We follow and modify the proof that CODEG implies DISC from Chapter 3 of \cite{zhao2023graph}. Consider any two subsets $X, Y \subset V$. We begin by bounding $\left| e(X, Y) - p | X| |Y| \right|$ as follows:
    $$\frac{1}{|X|} \left| e(X, Y) - p | X| |Y| \right|^2 \leq \sum_{x \in X}\left(\text{deg}(x, Y) - p|Y| \right)^2$$
    $$\leq \sum_{x \in V(G)}\left(\text{deg}(x, Y) - p|Y| \right)^2 = \sum_{x \in V(G)}\left(\text{deg}(x, Y)^2 - 2 p |Y| \text{deg}(x, Y) + p^2 |Y|^2 \right).$$
    Using $\sum_{x \in V(G)} \text{deg}(x, Y)^2 = \sum_{y, y' \in Y}\text{codeg}(y, y')$ (where $\text{codeg}(y, y) = \text{deg}(y)$), and also that $\sum_{x \in V(G)}\text{deg}(x, Y) = \sum_{y \in Y} \text{deg}(y)$, the above expression equals:
    $$= \sum_{y, y' \in Y}\text{codeg}(y, y') - 2 p |Y| \sum_{y \in Y} \text{deg}(y) + p^2 |Y|^2 n.$$
    By the assumptions about vertex degrees and codegrees (shared neighbors) in the lemma, this is:
    $$\leq |Y|^2 (n-2) p^2 \pm |Y|^2 \delta (n-2) + |Y| (n-1) p \pm |Y| \delta (n-1) - 2 p |Y|^2 (n-1)p \pm 2 p |Y|^2 \delta (n-1) + p^2 |Y|^2 n.$$
    Removing terms that cancel each other out,  we are left with:
    $$\leq |Y|^2 \delta (n-2) + 2 p |Y|^2 \delta (n-1) + |Y| (\delta + p) (n-1) \leq 3 |Y|^2 \delta n + |Y| (\delta +  p) n.$$

    Therefore, we have so far found that:
    $$\frac{1}{|X|} \left| e(X, Y) - p | X| |Y| \right|^2  \leq 3 |Y|^2 \delta n + |Y| (\delta +  p) n.$$
    Equivalently, 
    $$\left| e(X, Y) - p | X| |Y| \right| \leq \sqrt{|X||Y| \cdot \left( 3 |Y| \delta n + (\delta + p) n\right)} 
    $$ $$= \sqrt{3|Y|\delta n  + (\delta + p) n} \cdot \sqrt{|X||Y|} \leq \sqrt{3\delta n^2 + (\delta + p) n} \cdot \sqrt{|X||Y|} \leq \left( n \sqrt{3\delta} + \sqrt{n(\delta + p)} \right) \sqrt{|X||Y|}.$$
    Therefore, $G$ is $(p, n \sqrt{3\delta} + \sqrt{n(\delta + p)})$-jumbled.
\end{proof}

Importantly for our goals of testing/counting subgraphs, approximate subgraph counts are efficiently computed  (with respect to the size of the graph, \textit{not} with respect to sublinear access models) in jumbled graphs. This efficient computability result is given in \cite{conlon2014extremal} as an extension of \cite{duke1995fast}. Before giving the theorem, we must introduce some relevant notation. First, for a graph $H$, let $L(H)$ be the graph whose vertices are the edges of $H$, and two vertices in $L(H)$ are connected by an edge if the corresponding edges in $H$ share an endpoint. Let $\text{max-deg}(L(H))$ denote the maximum degree of any vertex in $L(H)$, and let $D(L(H))$ denote the degeneracy of $L(H)$. Then define $\sigma(H)$ as follows:
\begin{equation}\label{eq:s-H}
\sigma(H) = \min\left\{ \frac{\text{max-deg}(L(H)) + 4}{2}, \frac{D(L(H)) + 6}{2}\right\}.
\end{equation}
When $H$ is a $k$-clique, $\sigma(H) = k$, as noted in \cite{conlon2014extremal}. For general motifs, $\sigma(H) \leq \Delta(H) + 2$ \cite{conlon2014extremal}.

We are now ready to state the result.

\begin{proposition}[Proposition 9.8 in \cite{conlon2014extremal}]\label{prop:jumbled-efficient-approximate-subgraph-counting}
Consider a graph $H$ on $h$ vertices such that $\sigma(H) \leq j$. Let $\varepsilon > 0$. Then, there is an absolute constant $c$, and a constant $C = C(\varepsilon, h) = \exp(\mathrm{poly}(h/\varepsilon))$, such that the following is satisfied.

Suppose that $G$ is a graph on $n$ vertices that is a (spanning subgraph of a) $(p, \beta)$-jumbled graph, where $\beta \leq p^j n/C$. Then the number of labeled copies of $H$ in $G$ can be computed up to a $\varepsilon h^h \cdot p^{e(H)}\binom{n}{h}$ error, in running time $C n^c$.
\end{proposition}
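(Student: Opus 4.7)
The plan is to combine two techniques: the Duke--Lefmann--R\"odl cylindrical partition method for efficient approximate counting, together with the counting lemma for jumbled graphs developed earlier in the Conlon--Fox--Zhao paper. The overall structure is (a) build a coarse partition of $V(G)$ so that edges between parts behave regularly, (b) estimate the number $N_H(G)$ of labeled copies of $H$ by a product formula indexed by the partition, and (c) bound the error using jumbledness and the combinatorial parameter $\sigma(H)$.

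For stage (a), I would either deterministically apply the Duke--Lefmann--R\"odl iterative splitting procedure, or (equivalently in expectation) sample a set $S\subseteq V(G)$ of $s = O_{\varepsilon,h}(1)$ vertices and partition $V(G)$ into at most $2^s$ parts $V_T$, indexed by $T\subseteq S$, where $V_T$ is the set of vertices whose neighborhood in $S$ equals $T$. For any two parts, the $(p,\beta)$-jumbledness of $G$ applied to $V_{T_1},V_{T_2}$ controls the bipartite density $d(V_{T_1},V_{T_2}) = e(V_{T_1},V_{T_2})/(|V_{T_1}||V_{T_2}|)$ through
\[
\bigl|e(V_{T_1},V_{T_2}) - p\,|V_{T_1}||V_{T_2}|\bigr| \;\le\; \beta\sqrt{|V_{T_1}||V_{T_2}|},
\]
so the parts look pseudorandom relative to one another. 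For stage (b), I define the estimate $\hat N_H$ by summing, over all injections $\phi:V(H)\to 2^S$, the product $\prod_{v\in V(H)}|V_{\phi(v)}| \cdot \prod_{\{u,v\}\in E(H)} d(V_{\phi(u)},V_{\phi(v)})$. All pairwise densities between parts can be precomputed in $O(C^2 n^2)$ time with $C = 2^s$, the sum then has $C^h$ terms each evaluable in $O(h^2)$ time, and putting these together gives an overall running time of $C n^c$ for some absolute constant $c$.

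The substantive step is stage (c), which is where the counting lemma for jumbled graphs enters and is the main technical obstacle. Comparing $\hat N_H$ to $N_H(G)$ uses a telescoping argument that, working in a fixed ordering of $E(H)$, replaces each edge's $\{0,1\}$-indicator by its density-weighted expectation. Each replacement produces an error proportional to $\beta$ times a product of factors coming from the edges already processed, and to keep the aggregated error below $\varepsilon h^h p^{e(H)}\binom{n}{h}$ one must choose the edge ordering so that the ``overlap'' at each step is small. This overlap is exactly what $\sigma(H)$ captures: using a degeneracy ordering (or bounded-degree ordering) of the line graph $L(H)$, at most $\sigma(H)$ edges of $H$ are ever simultaneously ``active'' at any fixed vertex, and therefore each telescoping error is dominated by $\beta/(p^{\sigma(H)} n)$ times the trivial count. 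Choosing the constant $C = C(\varepsilon,h) = \exp(\mathrm{poly}(h/\varepsilon))$ large enough so that $\beta \le p^j n / C$ with $j \ge \sigma(H)$ makes the total error at most $\varepsilon h^h p^{e(H)}\binom{n}{h}$.

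The hard part is entirely the counting lemma in stage (c): getting the exponent $\sigma(H)$ in the jumbledness hypothesis --- rather than, say, $\Delta(H)+2$ or $e(H)$ --- requires careful combinatorial bookkeeping in the telescoping and the use of the correct degeneracy/max-degree ordering of $L(H)$. Once this counting lemma is in hand, the Duke--Lefmann--R\"odl partition and the product-sum computation assemble straightforwardly into an algorithm with running time $C n^c$ and additive error $\varepsilon h^h p^{e(H)}\binom{n}{h}$, proving the proposition.
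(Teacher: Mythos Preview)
The paper does not give its own proof of this proposition: it is quoted verbatim as Proposition~9.8 of Conlon--Fox--Zhao and used as a black box. The only commentary the paper offers is the sentence following the statement, which says that the proposition ``hinges on the algorithm and analysis of [Duke--Lefmann--R\"odl], which applies for the dense setting'' and that Conlon--Fox--Zhao ``shows that the jumbledness assumption can be used instead of regularity to transfer the algorithm to the setting of sparse $p$.'' Your sketch is a faithful expansion of exactly this two-ingredient description (DLR cylindrical partition plus sparse counting lemma under jumbledness, with $\sigma(H)$ governing the required exponent), so it is consistent with what the paper says about the source of the result; there is simply no in-paper proof to compare against beyond that one-sentence summary.
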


The original proposition in \cite{conlon2014extremal} is stated with an error of $\varepsilon p^{e(H)} n^{h}$, which we transform into the stated error above by noting that $n^{h} \leq h^h \binom{n}{h}$.

This proposition hinges on the algorithm and analysis of \cite{duke1995fast}, which applies for the dense setting ($p$ is a constant). The algorithm of \cite{duke1995fast} relies on a variation of the Szemer{\'e}di regularity lemma, and \cite{conlon2014extremal} shows that the jumbledness assumption can be used instead of regularity to transfer the algorithm to the setting of sparse $p$.

At a high level, we will use jumbledness and its efficient verification and algorithmic implications for efficient approximate subgraph counting as follows. First, we will reduce the global quality control problem on a graph to the problem of approximately counting motifs in a small subgraph $G_S$ on around $s$ vertices. We first efficiently (i.e., in $\text{poly}(s)$ time) test for jumbledness, rejecting graphs that are not sufficiently jumbled. If the graph is jumbled, we then use the algorithm from \Cref{prop:jumbled-efficient-approximate-subgraph-counting}. As we will observe in \Cref{sec:erdos-renyi-properties}, \erdosrenyi graphs satisfy a jumbledness property on subgraphs of size around $s$, and so for quality control, it is acceptable to reject graphs with subgraphs that are not jumbled.

\subsection{Properties of \erdosrenyi graphs}\label{sec:erdos-renyi-properties}

\begin{proposition}\label{lem:Gnp-motif-concentration}
    Let $H$ be an induced motif on $k$ vertices. Consider parameters $n, p$ with $p = \omega\left(n^{-1/m_H} \cdot \varepsilon^{-2}\right)$, where $m_H := \max_{F \subseteq H} e(F)/v(F)$.
     
    For $G \sim \gnp$, with probability $1 - o_{1/p}(1)$, the number of labeled induced copies of $H$ in $G$ is in the range $(1 \pm \frac{\varepsilon}{100}) \binom{n}{v(H)} v(H)! \cdot p^{e(H)} (1-p)^{\binom{v(H)}{2} - e(H)}$. 
\end{proposition}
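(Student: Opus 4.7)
The plan is to prove this via the second moment method (Chebyshev's inequality), since $C_H(G) = \sum_{\psi} X_\psi$ is a sum of indicator random variables over injective maps $\psi: V(H) \to V(G)$, where $X_\psi$ is the event that $\psi$ is a labeled induced copy of $H$. First I would compute the expectation: for each injective $\psi$, $\Pr[X_\psi = 1] = p^{e(H)}(1-p)^{\binom{v(H)}{2}-e(H)}$, and the number of such injective maps is $\binom{n}{v(H)} v(H)!$. Linearity of expectation gives exactly the claimed $\mu_H := \binom{n}{v(H)} v(H)! \cdot p^{e(H)}(1-p)^{\binom{v(H)}{2}-e(H)}$. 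The target inequality will then follow from $\mathrm{Var}(C_H(G)) \leq c \varepsilon^2 \mu_H^2$ for a small enough constant $c$, applying Chebyshev and the $\Omega(\cdot)$ slack in the hypothesis on $p$.

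The heart of the proof is bounding the variance. I would write $\mathrm{Var}(C_H(G)) = \sum_{\psi_1, \psi_2} (\mathbb{E}[X_{\psi_1} X_{\psi_2}] - \mathbb{E}[X_{\psi_1}]\mathbb{E}[X_{\psi_2}])$, and group pairs $(\psi_1, \psi_2)$ by the isomorphism type of the shared subgraph $F$, namely the induced subgraph of $H$ on the overlap $\psi_1(V(H)) \cap \psi_2(V(H))$. Pairs with $|V(F)| \leq 1$ contribute zero covariance, so only subgraphs $F \subseteq H$ with $v(F) \geq 2$ matter. For fixed $F$, the number of such ordered pairs is $O(n^{2v(H) - v(F)})$, and the covariance of each pair is bounded by $p^{2e(H) - e(F)}(1-p)^{\text{(stuff)}}$ (with a small extra term from the $(1-p)$ factors when $e(F) = 0$ but $\binom{v(F)}{2} \geq 1$, which is easily absorbed since $p \leq 1/2$ in the regime of interest, or handled via the bound $(1-p)^a - (1-p)^b \leq |b-a| p$). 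Dividing by $\mu_H^2 = \Theta(n^{2v(H)} p^{2e(H)} (1-p)^{2\binom{v(H)}{2}-2e(H)})$, each group contributes $O\bigl(n^{-v(F)} p^{-e(F)}\bigr)$ to $\mathrm{Var}(C_H)/\mu_H^2$, up to a $(1-p)^{-O(1)}$ factor that is $O(1)$ when $p$ is bounded away from $1$.

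The key use of the hypothesis is then to bound $n^{-v(F)} p^{-e(F)}$ uniformly. Since $m_H = \max_{F \subseteq H} e(F)/v(F)$, every non-empty subgraph $F$ satisfies $e(F) \leq m_H \cdot v(F)$. Plugging in $p \geq C \cdot n^{-1/m_H} \varepsilon^{-2}$ for a sufficiently large constant $C$ gives $n^{v(F)} p^{e(F)} \geq C^{e(F)} \cdot n^{v(F) - e(F)/m_H} \cdot \varepsilon^{-2 e(F)} \geq C \cdot \varepsilon^{-2}$ because $v(F) - e(F)/m_H \geq 0$ and $e(F) \geq 1$ (using that for $v(F) = 2$, $e(F) = 0$ types contribute negligibly via the $(1-p)$-argument noted above). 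Summing over the finitely many isomorphism types of subgraphs of $H$ gives $\mathrm{Var}(C_H)/\mu_H^2 = O(\varepsilon^2 / C)$, which by Chebyshev yields $\Pr[|C_H(G) - \mu_H| \geq \tfrac{\varepsilon}{100} \mu_H] \leq 1/10$ for $C$ large enough.

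The main obstacle is really a bookkeeping one: handling the overlap subgraph $F$ with $v(F) \geq 2$ but $e(F) = 0$, for which the straightforward dominant-term bound on covariance vanishes and one must expand $(1-p)^a - (1-p)^b$ linearly in $p$ to expose an extra factor of $p$ that mimics ``$e(F) = 1$'' for the purposes of the $m_H$ bound. Since this only strengthens the exponent on $p$ by $1$ (and $\binom{v(F)}{2}/v(F)$ is a valid subgraph density proxy), the $m_H$ inequality still covers this case, and the rest of the argument is routine.
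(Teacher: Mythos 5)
Your proposal is correct and follows essentially the same route as the paper's proof in \Cref{sec:appendix-gnp}: compute the first moment, then control the variance by grouping pairs of potential copies according to the overlap subgraph $F\subseteq H$, bound each group's contribution relative to $\mu_H^2$ by $O\bigl(n^{-v(F)}p^{-e(F)}\bigr)$, and invoke $m_H$ together with Chebyshev. The one place you over-complicate is the $e(F)=0$, $v(F)\ge 2$ case: no expansion of $(1-p)^a-(1-p)^b$ is needed there, because that group already contributes only $O(n^{-v(F)})\le O(n^{-2})$, which is negligible under the hypothesis $n=\Omega(p^{-m_H}\varepsilon^{-2})$ alone (and the $(1-p)^{-\binom{v(F)}{2}+e(F)}$ factor is absorbed into a constant once $p\le 1/2$).
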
 

\cite{janson1990poisson, janson2004upper} have proven such a result previously -- with more refined guarantees -- for the case of non-induced subgraphs. We provide a proof of the result for induced subgraphs in \Cref{sec:appendix}. We note that the choice of $1/100$ in the expression above is chosen to best fit our setting, and can be changed to another constant by modifying the constant hidden in the lower bound on $p$.

From this proposition, we can derive the following consequence for the concentration of clique counts for $G \sim \gnp$.

\begin{proposition}\label{lem:Gnp-clique-concentration}
    For $G \sim \gnp$ and $p = \omega\left(n^{-2/(k-1)} \cdot \varepsilon^{-2} \right)$, the number of $k$-cliques in $G$ is in the range $(1 \pm \frac{\varepsilon}{100}) \binom{n}{k} k! \cdot p^{\binom{k}{2}}$ with probability $1 - o_{1/p}(1)$.
\end{proposition}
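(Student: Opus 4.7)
The plan is to derive this statement as a direct corollary of \Cref{lem:Gnp-motif-concentration} by specializing to $H = K_k$. The only things to verify are that (i) the hypothesis on $p$ matches once $m_{K_k}$ is computed, and (ii) the expression for the expectation in \Cref{lem:Gnp-motif-concentration} simplifies to the form in the statement for this particular $H$.

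First I would compute $m_{K_k} = \max_{F \subseteq K_k} e(F)/v(F)$. For any subgraph $F$ of $K_k$ on $v$ vertices we have $e(F) \leq \binom{v}{2}$, with equality exactly when $F$ is itself a clique on those $v$ vertices (which is realizable as a subgraph of $K_k$). Hence $e(F)/v(F) \leq (v-1)/2$, and this quantity is maximized at $v = k$, giving $m_{K_k} = (k-1)/2$. Therefore $1/m_{K_k} = 2/(k-1)$, so the hypothesis $p = \Omega(n^{-2/(k-1)} \varepsilon^{-2})$ of the current proposition is exactly the hypothesis $p = \Omega(n^{-1/m_H} \varepsilon^{-2})$ of \Cref{lem:Gnp-motif-concentration} for $H = K_k$.

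Next I would instantiate the conclusion. For $H = K_k$ we have $v(H) = k$ and $e(H) = \binom{k}{2}$, so the ``missing edges'' exponent $\binom{v(H)}{2} - e(H)$ equals $0$ and the factor $(1-p)^{\binom{v(H)}{2} - e(H)}$ collapses to $1$. Plugging these values into the conclusion of \Cref{lem:Gnp-motif-concentration}, the number of labeled induced copies of $K_k$ in $G \sim \gnp$ lies in $(1 \pm \varepsilon/100)\binom{n}{k} k! \cdot p^{\binom{k}{2}}$ with probability at least $9/10$. Under the convention of \Cref{def:C-H-G}, where ``copies'' of a motif are counted as labeled induced copies (so that $C_{K_k}(G) = k! \cdot (\text{number of unordered } k\text{-cliques})$), this is exactly the stated range.

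There is no real obstacle here: the argument is a one-line specialization, and the only thing to be careful about is the bookkeeping between labeled and unlabeled copies (which is reflected in the $k!$ factor in the statement) and the observation that, because $K_k$ has no non-edges, the $(1-p)^{(\cdot)}$ factor from \Cref{lem:Gnp-motif-concentration} is trivial. The substantive work of the concentration estimate has already been absorbed into \Cref{lem:Gnp-motif-concentration}, whose proof is deferred to \Cref{sec:appendix}.
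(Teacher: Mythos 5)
Your proof is correct and is precisely the paper's intended derivation: the paper states \Cref{lem:Gnp-clique-concentration} as a direct consequence of \Cref{lem:Gnp-motif-concentration} without writing out the specialization, and your computation $m_{K_k}=(k-1)/2$ together with the observation that $(1-p)^{\binom{k}{2}-e(K_k)}=1$ fills in exactly what the paper leaves implicit.
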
 

Next, \cite{gao2018arboricity} proves the following regarding the arboricity $\alpha(G)$ of $G \sim \gnp$ (which indeed is a precise statement that holds asymptotically almost surely, while we provide a weaker version of their statement below for simplicity).

\begin{proposition}[\cite{gao2018arboricity}]\label{prop:Gnp-arboricity-concentration}
    For all $p = \Omega\left(\frac{1}{n} \right)$, there exists an $N$ such that for all $n \geq N$, $\mathbb{P}_{G \sim \gnp}\left( \alpha(G) \leq 2 \frac{e(G)}{n-1} + 1 \right) = 1 - o_{1/p}(1)$.
\end{proposition}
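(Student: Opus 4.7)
My plan is to use the Nash--Williams arboricity formula, $\alpha(G) = \max_{S \subseteq V,\ |S| \geq 2} \lceil e(G[S])/(|S|-1) \rceil$, which reduces the proposition to showing that with probability at least $9/10$, every vertex subset $S$ of size at least $2$ satisfies $e(G[S])/(|S|-1) \leq 2 e(G)/(n-1) + 1$. First I would apply a standard Chernoff bound to $e(G) \sim \mathrm{Bin}(\binom{n}{2}, p)$ to get $e(G) = (1 \pm o(1)) \binom{n}{2} p$ with probability at least $19/20$, so that $2 e(G)/(n-1) + 1 \geq (1 - o(1)) np + 1$. It then suffices to show, with probability at least $19/20$, that every $S$ satisfies $e(G[S]) \leq ((1-o(1)) np + 1)(|S|-1)$.

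I would split this uniform bound into three regimes of $s = |S|$. For small $s$ (say $s \leq 2np + 2$), the deterministic bound $e(G[S]) \leq \binom{s}{2} = s(s-1)/2$ already gives $e(G[S])/(s-1) \leq s/2 \leq np + 1$, which is enough. For large $s$ (say $s \geq n/2$), the trivial inequality $e(G[S]) \leq e(G)$ together with $s - 1 \geq (n-1)/2$ yields $e(G[S])/(s-1) \leq 2 e(G)/(n-1)$ directly. For intermediate $s$, I would apply a multiplicative Chernoff bound to $e(G[S]) \sim \mathrm{Bin}(\binom{s}{2}, p)$ aimed at the target $((1-o(1)) np + 1)(s - 1)$; since the target exceeds the mean $\binom{s}{2} p$ by a multiplicative factor $\delta \approx 2n/s - 1 \gg 1$ in this regime, the large-deviation form of Chernoff gives a tail of order $\exp(-\Theta((s-1) n p \log(n/s)))$, which I would combine with a union bound over the $\binom{n}{s} \leq (en/s)^s$ subsets of size $s$ and a summation over $s$.

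The main obstacle I expect is the intermediate regime: the tail factor $\exp(-\Theta(s n p \log(n/s)))$ must dominate the entropy factor $(en/s)^s = \exp(s \log(en/s))$ uniformly in $s$. This balance is tight precisely when $np$ is a small constant and $s$ is a constant fraction of $n$, which forces one to use the $\delta \log \delta$ form of Chernoff rather than the weaker $\delta^2$ form, and to take the implicit constant in the hypothesis $p = \Omega(1/n)$ (equivalently the threshold $N$ in the statement) sufficiently large so that $\log(n/s)$ in the Chernoff exponent dominates $\log(en/s)$ in the entropy. Once this is arranged, the three cases combine to yield the stated uniform inequality, and hence the bound on $\alpha(G)$ via Nash--Williams. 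Invoking the finer analysis of \cite{gao2018arboricity} of course gives asymptotically sharp control on $\alpha(\gnp)$ across all regimes of $p$, but the self-contained union-bound argument above captures the essential mechanism behind this weaker proposition.
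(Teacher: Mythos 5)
The paper does not actually supply a proof here; it cites \cite{gao2018arboricity}, which establishes a sharper, asymptotically-almost-sure result on the arboricity of $\gnp$. Your self-contained argument --- Nash--Williams plus a three-regime union bound --- is therefore a genuinely different and more elementary route to the weaker probabilistic statement, and its core mechanism is sound: the deterministic bound $e(G[S]) \le \binom{|S|}{2}$ for small $S$, the trivial $e(G[S]) \le e(G)$ for large $S$, and in the middle regime the Poisson-type ($\delta\log\delta$) form of the Chernoff tail, which is indeed what is needed to dominate the $(en/s)^s$ entropy term; you also correctly identify that this balance is tightest when $np$ is a small constant. Two corrections are in order. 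First, the parenthetical ``equivalently the threshold $N$ in the statement'' is not right: if $p = c/n$ for a fixed small $c$, then $np = c$ no matter how large $N$ is, so enlarging $N$ does not help --- the argument genuinely requires the hidden constant in $p = \Omega(1/n)$ to be a sufficiently large absolute constant. In fact this is a property of the statement itself, not just of your proof: for $np$ in a neighborhood of $1$ (e.g.\ $np = 0.8$), $G \sim \gnp$ contains a cycle with probability comfortably above $1/10$, forcing $\alpha(G) \ge 2$, while $2e(G)/(n-1)+1$ concentrates near $1.8 < 2$, so the proposition as literally written fails there and the $\Omega$-constant must be taken above roughly $1$. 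Second, the Nash--Williams reduction carries an off-by-one from the ceiling: bounding the unceiled ratio $e(G[S])/(|S|-1)$ by $2e(G)/(n-1)+1$ only gives $\alpha(G) = \max_S \lceil e(G[S])/(|S|-1)\rceil \le 2e(G)/(n-1)+2$. The clean fix is to target $e(G[S])/(|S|-1) \le 2e(G)/(n-1)$ (letting the $+1$ in the statement absorb the ceiling), which tightens the small-$s$ cutoff to roughly $s \le 2np$ rather than $2np+2$ and removes the slack you were relying on there; this requires rebalancing the boundary between the small and intermediate regimes. Both issues are constant-level and readily patched, and neither affects the paper's downstream use through \Cref{cor:Gnp-arboricity-concentration}, which has ample room to spare.
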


Combining this with \Cref{lem:Gnp-clique-concentration} (for the concentration of the number of edges in $\gnp$), we obtain the following corollary.

\begin{corollary}\label{cor:Gnp-arboricity-concentration}
    For $p = \Omega\left(\varepsilon^{-2} \cdot 1/n\right)$, there exists an $N$ such that for all $n \geq N$, \\
    $$\mathbb{P}_{G \sim \gnp}\left( \alpha(G) \leq 2 n p \right) = 1 - o_{1/p}(1).$$
\end{corollary}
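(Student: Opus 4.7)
The plan is to combine Proposition \ref{prop:Gnp-arboricity-concentration} with an edge-count concentration estimate obtained as a special case of \Cref{lem:Gnp-clique-concentration}, and take a union bound.

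First, I would invoke \Cref{lem:Gnp-clique-concentration} with $k = 2$ (so $e(G) = C_2(G)/2$ up to the convention, and $\binom{n}{2} p$ is the expected number of edges). The hypothesis $p = \Omega(n^{-2/(k-1)} \varepsilon^{-2}) = \Omega(\varepsilon^{-2}/n)$ is exactly the condition in the corollary, so with probability at least $9/10$ we have
\[
e(G) \;\leq\; \Bigl(1 + \tfrac{\varepsilon}{100}\Bigr) \binom{n}{2} p.
\]
Second, I would apply Proposition \ref{prop:Gnp-arboricity-concentration} to get $\alpha(G) \leq 2 e(G)/(n-1) + 1$ with probability at least $9/10$, which is legitimate since $p = \Omega(\varepsilon^{-2}/n) = \Omega(1/n)$.

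A union bound yields that both events hold simultaneously with probability at least $1 - \tfrac{1}{10} - \tfrac{1}{10} = \tfrac{4}{5}$. On this intersection,
\[
\alpha(G) \;\leq\; \frac{2}{n-1}\Bigl(1 + \tfrac{\varepsilon}{100}\Bigr)\binom{n}{2} p + 1 \;=\; \Bigl(1 + \tfrac{\varepsilon}{100}\Bigr) n p + 1.
\]
Finally, I would pick $N$ large enough so that $np \geq \tfrac{1}{1 - \varepsilon/100}$, which forces $\bigl(1 + \tfrac{\varepsilon}{100}\bigr)np + 1 \leq 2np$; this is possible because $p = \Omega(\varepsilon^{-2}/n)$ already gives $np = \Omega(\varepsilon^{-2})$, so for all sufficiently large $n$ the threshold is met.

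There is no real obstacle here — the proof is a two-line combination of known concentration facts plus arithmetic. The only subtlety worth stating is the choice of $N$, which must absorb both the implicit threshold from Proposition \ref{prop:Gnp-arboricity-concentration} and the slack needed to turn the $(1 + \varepsilon/100)np + 1$ bound into the cleaner $2np$ bound.
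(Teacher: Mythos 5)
Your argument is exactly the one the paper intends (the paper offers no written proof, only the remark that the corollary follows by combining Proposition~\ref{prop:Gnp-arboricity-concentration} with \Cref{lem:Gnp-clique-concentration} for $k=2$), and the union bound plus the arithmetic $\frac{2}{n-1}\binom{n}{2}p = np$ and the choice of $N$ to absorb the $+1$ are all fine. One small slip: for $k=2$ the lemma's hypothesis is $p = \Omega(\varepsilon^{-2}/n^{2})$, not $\Omega(\varepsilon^{-2}/n)$; the two are not equal, but the corollary's stated assumption $p = \Omega(\varepsilon^{-2}/n)$ is strictly stronger, so the lemma still applies and the proof goes through.
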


When we focus on the algorithmic complexity of quality control for $k$-cliques matching $\gnp$ in Section \ref{sec:algorithmic-complexity}, we will utilize the notion of graph \textit{jumbledness} (\Cref{def:jumbled}) to obtain an efficient runtime. Our algorithm reduces the global quality control problem to a local problem over a subgraph of size $s$. Then, over $\gnp$ these subgraphs are jumbled; jumbledness is both efficiently checked \Cref{prop:jumbled-from-deg-codeg} and implies efficient approximate subgraph counting algorithms \Cref{prop:jumbled-efficient-approximate-subgraph-counting}. With this application in mind, we prove various statements regarding the jumbledness of $G \sim \gnp$ and its subgraphs.

\begin{proposition}\label{prop:Gnp-jumbled-conditions}
    When $p \geq \left( \frac{1200 C^4\log n}{n }\right)^{1/(4k - 1)}$, for $\delta = \frac{p^{2k}}{12 C^2}$, with probability at least $1 - 1/n$, $G \sim \gnp$ satisfies the following:
    \begin{enumerate}
        \item Every vertex in $G$ has degree in $(n-1)(p \pm \delta)$, and 
        \item Every pair of vertices in $G$ has $(n-2) (p^2 \pm \delta)$ common neighbors.
    \end{enumerate}
\end{proposition}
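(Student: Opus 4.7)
The plan is to prove both items of the proposition via standard multiplicative Chernoff bounds applied to each vertex's degree and each pair's codegree, followed by a union bound over the $n$ vertices and the $\binom{n}{2}$ pairs of distinct vertices. This is the natural approach because, conditional on the edge indicators of $\gnp$, both $\deg(v)$ and $\mathrm{codeg}(u,v)$ are sums of independent Bernoullis with known means $(n-1)p$ and $(n-2)p^2$ respectively.

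For a fixed vertex $v$, $\deg(v) \sim \mathrm{Binomial}(n-1,p)$ with mean $(n-1)p$. Applying Chernoff's inequality with relative error $\delta/p$ gives
\[
\mathbb{P}\bigl[|\deg(v) - (n-1)p| \geq \delta(n-1)\bigr] \leq 2\exp\!\left(-\frac{\delta^2(n-1)}{3p}\right).
\]
Similarly, for distinct $u,v$ the number of common neighbors is $\mathrm{Binomial}(n-2, p^2)$ with mean $(n-2)p^2$, and Chernoff with relative error $\delta/p^2$ gives
\[
\mathbb{P}\bigl[|\mathrm{codeg}(u,v) - (n-2)p^2| \geq \delta(n-2)\bigr] \leq 2\exp\!\left(-\frac{\delta^2(n-2)}{3p^2}\right).
\]

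Next I would substitute $\delta = p^{2k}/(12C^2)$. Up to universal constants, the degree-tail exponent becomes of order $p^{4k-1} n / C^4$, while the codegree-tail exponent is of order $p^{4k-2} n / C^4$; since $p \leq 1$ forces $p^{4k-2} \geq p^{4k-1}$, the degree tail is the binding one. The hypothesis $p \geq (1200\, C^4 \log n / n)^{1/(4k-1)}$ rearranges to $p^{4k-1} n / C^4 \geq 1200 \log n$, which (with the leading constant $1200$ chosen to absorb the $3 \cdot 12^2 = 432$ factor from the Chernoff exponent together with a further factor of $3$ coming from the exponent needed for the union bound) is large enough to make each individual tail probability at most $n^{-3}$. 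A union bound over the $n$ degree events and the at most $n^2$ codegree events then yields total failure probability at most $n^{-2} + n^{-1} \leq 1/n$.

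The argument is essentially routine Chernoff plus union bounding; no ingredient is deep. The main thing to verify carefully is the constant bookkeeping: namely, that the exponent $4k-1$ appearing in the hypothesis on $p$ really matches the worse of the two Chernoff tails, and that the leading constant $1200$ leaves enough slack to absorb the $432$ from Chernoff and the $O(\log n)$ factor needed so that $n \cdot \exp(-\delta^2 n/(3p)) + n^2 \cdot \exp(-\delta^2 n/(3p^2)) \leq 1/n$. One small detail: Chernoff requires the relative error $\delta/p \leq 1$, but this is automatic since $\delta/p = p^{2k-1}/(12C^2) \leq 1$ for $p \leq 1$ and $k \geq 1$.
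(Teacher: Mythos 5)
Your approach is the same as the paper's: Chernoff bound on each vertex's degree and each pair's codegree, substitute $\delta = p^{2k}/(12C^2)$, and union bound, with the hypothesis on $p$ chosen so the dominant (degree) exponent is a large enough multiple of $\log n$. One small arithmetic slip at the end: $n^{-2} + n^{-1} \leq 1/n$ is false (it equals $1/n + 1/n^2$); you should instead push the individual tails a little lower — the paper's computation yields $n^{-2.5}$ for degrees and $n^{-3.5}$ for codegrees (the latter using the extra factor $1/p$ in the codegree exponent), so that the union gives $n^{-1.5} + n^{-1.5} \leq 1/n$ for $n$ large. The slack in the constant $1200$ does accommodate this.
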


\begin{proof}
    We first prove (1). Consider a graph $G \sim \gnp$, and any vertex $u \in G$. Let us analyze the degree distribution of this vertex. Let $D$ be the random variable corresponding to the degree of vertex $u$. First, the expected degree of vertex $u$ is $E\left(D\right) = (n-1) p$, and the degree distribution of $u$ is a binomial distribution with parameters $n-1$ and $p$. Applying Chernoff bounds, $$\mathbb{P}\left(\left|D -  (n-1) p\right| \geq \delta (n-1) \right) \leq 2 \exp\left( -(n-1) p \cdot \frac{\delta^2}{3 p^2}\right) = 2 \exp\left( \frac{-(n-1) p^{(4k-1)}}{432 C^4}\right).$$
    When $p \geq \left( \frac{1200 C^4\log n}{n }\right)^{1/(4k - 1)}$, this probability is at most $n^{-2.5}$. By a union bound over all vertices in the graph, we find that with probability at least $1 - 1/(2n)$, every vertex in $G \sim \gnp$ has degree in $(n-1)(p \pm \delta)$.

    Next we prove (2) by a similar argument. Consider a graph $G \sim \gnp$ and any pair of vertices $u, v \in G$. Let's analyze the distribution of the number of common neighbors (i.e. codegree) of $u$ and $v$. Let $N$ be the random variable corresponding to the number of common neighbors of $u$ and $v$. The expected value of $N$ is $(n-2)p^2$, and its distribution is binomial with parameters $n - 2$ and $p^2$. By Chernoff bounds, we find:
    $$\mathbb{P}\left( \left| N - (n-2)p^2\right| \geq \delta(n-2)\right) \leq 2 \exp\left( - (n-2)p^2 \cdot \frac{\delta^2}{3 p^4} \right) = 2 \exp\left(\frac{-(n-2)p^{4k-2}}{432 C^4} \right).$$
    When $p \geq \left( \frac{1200 C^4\log n}{n }\right)^{1/(4k - 1)}$, the probability is then at most $n^{-3.5}$. By a union bound over pairs of vertices, with probability at least $1 - 1/(2n)$, every pair of vertices in $G$ has $(n-2)(p^2 \pm \delta) $ common neighbors. Therefore, we obtain the proposition.
\end{proof}

We observe that, by combining \Cref{prop:jumbled-from-deg-codeg} with \Cref{prop:Gnp-jumbled-conditions}, we immediately obtain the following.

\begin{corollary}\label{cor:Gnp-is-jumbled}
    For $p \geq \left( \frac{1200 C^4\log n}{n }\right)^{1/(4k - 1)}$, with probability at least $1 - 1/n$, $G \sim \gnp$ is $(p, \beta)$-jumbled for $\beta \leq p^k n / C$.
\end{corollary}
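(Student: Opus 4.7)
The plan is to combine Proposition~\ref{prop:Gnp-jumbled-conditions} with Proposition~\ref{prop:jumbled-from-deg-codeg} exactly as the preamble to the corollary advertises. First, I would invoke Proposition~\ref{prop:Gnp-jumbled-conditions} with $\delta = p^{2k}/(12C^2)$, which (under the stated hypothesis on $p$) gives that, with probability at least $1 - 1/n$ over $G \sim \gnp$, every vertex has degree in $(n-1)(p \pm \delta)$ and every pair of distinct vertices has $(n-2)(p^2 \pm \delta)$ common neighbors. On this high-probability event, Proposition~\ref{prop:jumbled-from-deg-codeg} immediately yields that $G$ is $(p,\beta_0)$-jumbled with
\[
\beta_0 \;=\; n\sqrt{3\delta} \;+\; \sqrt{n(\delta + p)}.
\]

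What remains is a purely algebraic verification that $\beta_0 \leq np^k/C$. For the first summand, substituting $\delta = p^{2k}/(12C^2)$ gives $n\sqrt{3\delta} = n\sqrt{p^{2k}/(4C^2)} = np^k/(2C)$ exactly, so half of the target budget is already spoken for. For the second summand, I would note that since $p \leq 1$ and $k \geq 1$ we have $\delta \leq p$, hence $\sqrt{n(\delta+p)} \leq \sqrt{2np}$, and it suffices to show $\sqrt{2np} \leq np^k/(2C)$, equivalently $p^{2k-1} \geq 8C^2/n$. Because $p \leq 1$ and $2k-1 \leq 4k-1$, we have $p^{2k-1} \geq p^{4k-1}$, and the hypothesis $p \geq (1200 C^4 \log n / n)^{1/(4k-1)}$ gives $p^{4k-1} \geq 1200 C^4 \log n / n \geq 8C^2/n$ (comfortably, assuming $C \geq 1$ and $n$ large enough that $\log n \geq 1$). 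Summing the two summands yields $\beta_0 \leq np^k/C$, as required.

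I do not expect a genuine obstacle here; the statement is essentially bookkeeping over the two earlier propositions. The only point worth flagging in the write-up is that the threshold on $p$ was already calibrated in Proposition~\ref{prop:Gnp-jumbled-conditions} to make the degree/codegree concentration succeed with probability $\geq 1 - 1/n$, and the calculation above shows that the same threshold carries enough slack to also absorb the $\sqrt{n(\delta+p)}$ term into the target $np^k/C$ bound, so no new constraint on $p$ needs to be imposed.
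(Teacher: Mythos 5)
Your proof is correct and takes essentially the same approach as the paper: invoke Proposition~\ref{prop:Gnp-jumbled-conditions} with $\delta = p^{2k}/(12C^2)$ to get the degree/codegree conditions with probability $\geq 1-1/n$, then feed them into Proposition~\ref{prop:jumbled-from-deg-codeg}. The only difference is that the paper simply asserts the inequality $n\sqrt{3\delta}+\sqrt{n(\delta+p)} \leq p^k n/C$ in one line, whereas you carry out the (correct) bookkeeping showing each summand is at most $np^k/(2C)$ under the stated hypothesis on $p$.
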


\begin{proof}
    By \Cref{prop:Gnp-jumbled-conditions}, with probability at least $1 - 1/n$, for $\delta = \frac{p^{2k}}{12 C^2}$, every vertex in $G$ has degree in $(n-1)(p \pm \delta)$, and every pair of vertices in $G$ has $(n -2)(p^2 \pm \delta)$ common neighbors. By \Cref{prop:jumbled-from-deg-codeg}, since $n \sqrt{3\delta} + \sqrt{n(\delta + p)} \leq p^k n / C$, this implies that with probability at least $1 - 1/n$, $G \sim \gnp$ is $(p, \beta)$-jumbled for $\beta \leq p^k n / C$.
\end{proof}

Given that, for any $1 \leq t \leq n$, the marginal of a $t$-vertex subgraph of $G \sim \gnp$ is distributed according to $G_{t, p}$, we can use the previous proposition to prove the following. For a graph $G$ and subset $S \subseteq [n]$, let $G_S$ be the induced subgraph on the vertices in $S$.

\begin{proposition}\label{prop:jumbled-subgraphs}
    Let $p \geq \left( \frac{1200 C^4\log n}{n }\right)^{1/(4k - 1)}$ and $s \geq 1200 C^4/p^{5k}$. Let $\delta = \frac{p^{2k}}{12 C^2}$. With probability $1 - o_{1/p}(1)$, for $G \sim \gnp$, at least a $(1 - o_{1/p}(1))$-fraction of subgraphs $S$ of size $s$ satisfy the following:
    \begin{enumerate}
        \item Every vertex in $G_S$ has degree in $(s-1)(p \pm \delta)$ in $G_S$, and 
        \item Every pair of vertices in $G_S$ has $(s-2) (p^2 \pm \delta)$ common neighbors in $G_S$.
    \end{enumerate}
\end{proposition}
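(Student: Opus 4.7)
The plan is to reduce this proposition directly to \Cref{prop:Gnp-jumbled-conditions} applied to $\mathcal{G}_{s,p}$, combined with a standard two-level Markov argument. The crucial observation is that for any fixed size-$s$ subset $S \subseteq [n]$, the induced subgraph $G_S$ of $G \sim \gnp$ is marginally distributed exactly as $\mathcal{G}_{s,p}$, because the edges of $G$ are mutually independent and $G_S$ sees only the edges with both endpoints in $S$.

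First I would verify the hypothesis of \Cref{prop:Gnp-jumbled-conditions} with $n$ replaced by $s$, namely $p \geq (1200 C^4 \log s / s)^{1/(4k-1)}$, which rearranges to $s \geq 1200 C^4 (\log s) / p^{4k-1}$. The assumption $s \geq 1200 C^4 / p^{5k}$ implies this bound provided $1/p^{5k}$ dominates $(\log s)/p^{4k-1}$; this comparison reduces to $(1/p)^{k+1} \gtrsim \log(1200 C^4 / p^{5k})$, which holds for $p$ below a fixed absolute constant. Once this hypothesis is checked, \Cref{prop:Gnp-jumbled-conditions} applied to a single draw $G' \sim \mathcal{G}_{s,p}$ gives that both the degree and codegree conditions hold with probability at least $1 - 1/s$.

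Now let $X(G, S) \in \{0,1\}$ be the indicator that $G_S$ violates at least one of the two conditions, and let $f(G) := \mathbb{P}_S[X(G, S) = 1]$ be the fraction of size-$s$ subsets that are ``bad'' for a fixed $G$. Marginalizing as above yields $\mathbb{E}_G[f(G)] = \mathbb{E}_{G, S}[X(G, S)] \leq 1/s$, where $S$ is a uniform random $s$-subset of $[n]$. Markov's inequality then gives $\mathbb{P}_G[f(G) > 1/10] \leq 10/s$. Since $s \geq 1200 C^4 / p^{5k} \geq 100$ (assuming $C \geq 1$), this probability is at most $1/10$. Equivalently, with probability at least $9/10$ over $G \sim \gnp$, at least a $9/10$-fraction of size-$s$ subsets $S$ satisfy both conditions, which is the desired conclusion.

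There is no substantive obstacle in this argument; the only bookkeeping is in verifying the parameter hypothesis when transferring from the $n$-vertex bound of \Cref{prop:Gnp-jumbled-conditions} to its $s$-vertex instantiation, which is a routine comparison of $1/p^{5k}$ against $(\log s)/p^{4k-1}$ in the regime $p \to 0$.
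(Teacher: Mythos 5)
Your proof is correct and takes essentially the same route as the paper: marginalize the induced subgraph $G_S$ to $\mathcal{G}_{s,p}$, invoke the $\mathcal{G}_{s,p}$ bound (you cite \Cref{prop:Gnp-jumbled-conditions} directly, which is the more precise reference; the paper routes through \Cref{cor:Gnp-is-jumbled} but uses the same intermediate degree/codegree conclusion), and then apply Markov over the number of bad subsets. The only cosmetic difference is that the paper first bounds the expected count by $\frac{1}{100}\binom{n}{s}$ and then applies Markov, while you apply Markov directly to $f(G) = \Pr_S[\text{bad}]$; both give the same $10/s \leq 1/10$.
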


\begin{proof}
    Let us call $G_S$ ``good'' if it satisfies conditions (1) and (2) above.

     For $G \sim \gnp$, let us first analyze the expected number of $S \subseteq [n], |S| = s$ such that $G_S$ is \textit{not} good.
    $$\mathbb{E}_{G \sim \gnp}\left(\sum_{S \subseteq [n] : |S| = s} 1(G_S \text{ is not good})\right) = \sum_{S \subseteq [n] : |S| = s} \mathbb{P}(G_S \text{ is not good}).$$
    Next, observe that the marginal of the graph on $S$ ($G_S$) is distributed according to $G_{s, p}$. Also observe that for $s \geq 1200 C^4 /p^{5k}$, $p \geq \left( \frac{1200 C^4\log s}{s }\right)^{1/(4k - 1)}$. Therefore, we can apply \Cref{cor:Gnp-is-jumbled} to $G_S$. We find that, with probability at most $1/s$, $G_S$ is not good.

    Therefore, the expected number of $S \subseteq [n], |S| = s$ such that $G_S$ is \textit{not} good is at most $\frac{1}{s} \cdot \binom{n}{s} = o\left( \binom{n}{s} \right)$. Therefore, by Markov's inequality, with probability at least $1 - o_{1/p}(1)$, at least a $(1 - o_{1/p}(1))$-fraction of subgraphs of size $s$ are good.
\end{proof}

We note the following corollary of this proposition, for ease of use.

\begin{corollary}\label{cor:jumbled-subgraphs}
    Let $p \geq \left( \frac{1200 C^4\log n}{n }\right)^{1/(4k - 1)}$, $s \geq 1200 C^4/p^{5k}$, and $\delta = \frac{p^{2k}}{12 C^2}$. Consider $G \sim \gnp$, and choose a random subgraph $G_S$ of $G$ on $s$ vertices. Then, with probability $1 - o_{1/p}(1)$, the subgraph on these $s$ vertices satisfies the following:
    \begin{enumerate}
        \item Every vertex in $G_S$ has degree in $(s-1)(p \pm \delta)$ in $G_S$, and 
        \item Every pair of vertices in $G_S$ has $(s-2) (p^2 \pm \delta)$ common neighbors in $G_S$.
    \end{enumerate}
\end{corollary}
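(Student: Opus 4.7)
The plan is to deduce \Cref{cor:jumbled-subgraphs} directly from \Cref{prop:jumbled-subgraphs} by a short two-step probability argument; this is a ``random $G$, random $S$'' to ``for random $G$, fraction of $S$'' conversion.

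First, I would define, for a fixed graph $G$ on $n$ vertices, the quantity $f(G)$ to be the fraction of size-$s$ subsets $S \subseteq [n]$ for which the induced subgraph $G_S$ satisfies conditions (1) and (2) of the corollary (i.e.\ every vertex of $G_S$ has degree in $(s-1)(p \pm \delta)$ and every pair has $(s-2)(p^2 \pm \delta)$ common neighbors in $G_S$). By \Cref{prop:jumbled-subgraphs}, under the stated hypotheses on $p, s, \delta$, we have $\Pr_{G \sim \gnp}[f(G) \geq 9/10] \geq 9/10$.

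Next, I would observe that the probability in the corollary decomposes: if $S$ is a uniformly random size-$s$ subset of $[n]$, then the probability (over $G$ and $S$ jointly) that $G_S$ satisfies (1) and (2) equals $\mathbb{E}_{G \sim \gnp}[f(G)]$. Applying a simple conditioning on whether $f(G) \geq 9/10$ gives
\begin{equation*}
\mathbb{E}_{G \sim \gnp}[f(G)] \;\geq\; \tfrac{9}{10} \cdot \tfrac{9}{10} + \tfrac{1}{10} \cdot 0 \;=\; \tfrac{81}{100} \;\geq\; \tfrac{4}{5},
\end{equation*}
which is exactly the claim. There is no real obstacle here: the proof is essentially bookkeeping, and the only thing to be slightly careful about is that the randomness over $S$ in the corollary can be interpreted as a uniform random size-$s$ subset (matching the counting definition of $f(G)$ used by \Cref{prop:jumbled-subgraphs}), so the two-layer expectation is valid. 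The constants $9/10$ and $9/10$ were chosen in the earlier proposition precisely so that their product exceeds $4/5$.
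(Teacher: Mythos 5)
Your proof is correct and takes essentially the same route as the paper: both arguments condition on the event that $G$ itself is "good" (i.e., at least a $9/10$-fraction of its size-$s$ subsets satisfy (1)–(2)), which Proposition~\ref{prop:jumbled-subgraphs} says holds with probability $\geq 9/10$. The paper phrases this as a union bound over the two failure events (giving success $\geq 1 - 1/10 - 1/10 = 4/5$), while you keep the product $\tfrac{9}{10}\cdot\tfrac{9}{10} = \tfrac{81}{100} \geq \tfrac45$; these are just two phrasings of the same two-layer decomposition.
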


\begin{proof}
    This corollary follows by taking a union bound over the events in \Cref{prop:jumbled-subgraphs} that $G \sim \gnp$ does not satisfy the property that a $(1 - o_{1/p}(1))$-fraction of size-$s$ subgraphs are good and that a non-good subgraph is chosen at random.
\end{proof}

\subsection{Concentration inequalities}\label{sec:preliminary-concentrationinequality}

\begin{theorem}[McDiarmid's Inequality \cite{mcdiarmid1989method}]\label{thm:mcdiarmid}
    Consider a function $f: \mathcal{X}_1 \times \mathcal{X}_2 \times \dots \times \mathcal{X}_t \to \mathbb{R}$ such that, for all $i \in [t]$ the \textit{bounded differences} property is satisfied: For all $(x_1, x_2, \dots, x_t) \in \mathcal{X}_1 \times \mathcal{X}_2 \times \dots \times \mathcal{X}_t$, $i \in [t]$ and $y_i \in \mathcal{X}_i$,
    $$\sup_{y_i \in \mathcal{X}_i} \left| f(x_1, \dots, x_{i - 1}, x_i, x_{i+1}, \dots, x_t) - f(x_1, \dots, x_{i - 1}, y_i, x_{i+1}, \dots, x_t) \right| \leq c_i.$$
    Then, for independent random variables $X_1, X_2, \dots, X_t$ where $X_i \in \mathcal{X}_i$ for $i \in [t]$, and for any $\delta > 0$:$$\mathbb{P}\left( \left|f(X_1, X_2, \dots, X_t) -  \mathbb{E}\left( f(X_1, X_2, \dots, X_t) \right)\right| \geq \delta \right) \leq 2 \exp\left( -\frac{2\delta^2}{\sum_{i = 1}^t c_i^2}\right).$$
\end{theorem}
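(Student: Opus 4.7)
The plan is to prove McDiarmid's inequality via the classical martingale method, specifically by constructing the Doob martingale associated with $f(X_1,\ldots,X_t)$ and then applying the Azuma--Hoeffding inequality. Define the filtration $\mathcal{F}_i = \sigma(X_1,\ldots,X_i)$ for $i=0,1,\ldots,t$, with $\mathcal{F}_0$ being the trivial $\sigma$-algebra, and set
\[
Z_i \;=\; \mathbb{E}\bigl[\,f(X_1,\ldots,X_t)\,\bigm|\,\mathcal{F}_i\,\bigr].
\]
Then $Z_0 = \mathbb{E}[f(X_1,\ldots,X_t)]$ and $Z_t = f(X_1,\ldots,X_t)$, and $(Z_i)_{i=0}^t$ is a martingale with respect to $(\mathcal{F}_i)$ by the tower property. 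The target quantity $f(X_1,\ldots,X_t) - \mathbb{E}[f(X_1,\ldots,X_t)]$ is exactly $\sum_{i=1}^t (Z_i - Z_{i-1})$, so the problem reduces to controlling the martingale differences $D_i := Z_i - Z_{i-1}$.

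The key step, and the one that uses both independence and the bounded differences hypothesis, is to show that each $D_i$ is bounded in a range of width at most $c_i$. Using independence of the $X_j$'s, I would write
\[
Z_i \;=\; g_i(X_1,\ldots,X_i), \qquad g_i(x_1,\ldots,x_i) \;=\; \mathbb{E}\bigl[\,f(x_1,\ldots,x_i,X_{i+1},\ldots,X_t)\,\bigr],
\]
and similarly $Z_{i-1} = g_{i-1}(X_1,\ldots,X_{i-1})$. Because $X_i$ is independent of $X_{i+1},\ldots,X_t$ and $g_{i-1}(x_1,\ldots,x_{i-1}) = \mathbb{E}_{X_i}[g_i(x_1,\ldots,x_{i-1},X_i)]$, the bounded differences property yields
\[
\sup_{x_i,y_i \in \mathcal{X}_i} \bigl|\,g_i(x_1,\ldots,x_{i-1},x_i) - g_i(x_1,\ldots,x_{i-1},y_i)\,\bigr| \;\le\; c_i,
\]
by pulling the $c_i$ bound outside the expectation over $X_{i+1},\ldots,X_t$. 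Consequently there exist (random, $\mathcal{F}_{i-1}$-measurable) quantities $A_i \le B_i$ with $B_i - A_i \le c_i$ such that $A_i \le D_i \le B_i$ almost surely.

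Once the bounded-range property for the martingale differences is established, the final step is to apply the Azuma--Hoeffding inequality to the martingale $(Z_i)$: for any $\delta>0$,
\[
\mathbb{P}\bigl(|Z_t - Z_0| \ge \delta\bigr) \;\le\; 2\exp\!\left(-\frac{2\delta^2}{\sum_{i=1}^{t} c_i^2}\right),
\]
which is exactly the stated inequality. I expect the main obstacle to be the careful verification of the range bound for $D_i$; the subtlety is that one must average over the future coordinates $X_{i+1},\ldots,X_t$ (using independence so that the conditional distribution factorizes) before invoking the coordinate-wise bounded differences hypothesis. The Azuma--Hoeffding step itself is then a standard application of the Hoeffding lemma to each $\mathbb{E}[e^{\lambda D_i}\mid \mathcal{F}_{i-1}]$ followed by iterated conditioning and optimization over $\lambda>0$.
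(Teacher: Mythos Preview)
Your proposal is the standard, correct proof of McDiarmid's inequality via the Doob martingale and Azuma--Hoeffding. Note, however, that the paper does not prove this theorem at all: it is stated in the preliminaries as a cited result from \cite{mcdiarmid1989method}, so there is no ``paper's own proof'' to compare against.
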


\begin{theorem}[McDiarmid's Inequality when differences are bounded whp \cite{combes2024extension}]\label{thm:mcdiarmid-whpdifferences}
     Consider a function $f: \mathcal{X}_1 \times \mathcal{X}_2 \times \dots \times \mathcal{X}_t \to \mathbb{R}$ and a subset of the domain $\mathcal{Y} \subseteq  \mathcal{X}_1 \times \mathcal{X}_2 \times \dots \times \mathcal{X}_t$ such that the following bounded differences property holds over elements of the subset $\mathcal{Y}$:

     For all $(x_1, x_2, \dots, x_t) \in \mathcal{Y}$ and $(y_1, y_2, \dots, y_t) \in \mathcal{Y}$,
     $$\left| f(x_1, x_2, \dots, x_t) - f(y_1, y_2, \dots, y_t) \right| \leq \sum_{i : x_i \neq y_i} c_i.$$
     Let $X_1, X_2, \dots, X_t$ be independent random variables where $X_i \in \mathcal{X}_i$ for $i \in [t]$. Let $q := 1 - \mathbb{P}\left( (X_1, X_2, \dots, X_t) \in \mathcal{Y}\right)$.

     Then, for any $\delta > 0$:
     $$\mathbb{P}\left( \left|f(X_1, X_2, \dots, X_t) -  \mathbb{E}\left( f(X_1, X_2, \dots, X_t)  | (X_1, X_2, \dots, X_t) \in \mathcal{Y} \right)\right| \geq \delta \right) $$ $$\leq 2q + 2 \exp\left( -\frac{2\max\left( 0, \delta - q \sum_{i = 1}^t c_i\right)^2}{\sum_{i = 1}^t c_i^2}\right).$$
\end{theorem}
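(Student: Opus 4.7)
The plan is to reduce to the classical McDiarmid inequality (\Cref{thm:mcdiarmid}) by extending $f|_{\mathcal{Y}}$ to a globally Lipschitz function on $\mathcal{X} := \mathcal{X}_1 \times \cdots \times \mathcal{X}_t$. Equip $\mathcal{X}$ with the weighted Hamming metric $d(x,y) = \sum_{i : x_i \neq y_i} c_i$, and define the inf-convolution extension
\[
\tilde f(x) \;:=\; \inf_{y \in \mathcal{Y}} \bigl[\, f(y) + d(x,y) \,\bigr].
\]

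Two properties of $\tilde f$ then need to be checked. First, $\tilde f(x) = f(x)$ for every $x \in \mathcal{Y}$: the bounded-differences hypothesis on $\mathcal{Y}$ gives $f(y) + d(x,y) \geq f(x)$ for every $y \in \mathcal{Y}$, with equality at $y = x$. Second, $|\tilde f(x) - \tilde f(x')| \leq d(x,x')$ for all $x, x' \in \mathcal{X}$, which is the standard $1$-Lipschitz property of inf-convolutions in a metric. Consequently $\tilde f$ satisfies the hypothesis of \Cref{thm:mcdiarmid} globally with coordinate constants $c_i$, yielding the Gaussian tail bound $\mathbb{P}(|\tilde f(X) - \mathbb{E}[\tilde f(X)]| \geq t) \leq 2\exp\!\bigl(-2 t^2 / \sum_i c_i^2\bigr)$.

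Next comes a recentering from $\mathbb{E}[\tilde f(X)]$ to $\mathbb{E}[f(X)\mid X \in \mathcal{Y}]$. Because $|\tilde f(x) - f(y)| \leq d(x,y) \leq \sum_i c_i$ for every $x \in \mathcal{X}$ and $y \in \mathcal{Y}$, decomposing $\mathbb{E}[\tilde f(X)] = (1-q)\,\mathbb{E}[\tilde f(X)\mid X\in\mathcal{Y}] + q\,\mathbb{E}[\tilde f(X)\mid X\notin\mathcal{Y}]$ and using $\tilde f = f$ on $\mathcal{Y}$ shows $|\mathbb{E}[\tilde f(X)] - \mathbb{E}[f(X)\mid X \in \mathcal{Y}]| \leq q \sum_i c_i$.

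Finally I would assemble the bound. The target probability is at most $\mathbb{P}(X\notin\mathcal{Y}) \leq q$ plus the probability of the deviation event restricted to $\{X\in\mathcal{Y}\}$; on that event $f(X) = \tilde f(X)$, so by the triangle inequality and the recentering shift the deviation forces $|\tilde f(X) - \mathbb{E}[\tilde f(X)]| \geq \max(0,\delta - q\sum_i c_i)$, to which the tail bound for $\tilde f$ applies directly. This produces exactly the exponential term in the statement; the factor $2q$ (rather than $q$) in front of the probability of failure is incurred by splitting upper and lower tails separately. No step here poses a serious obstacle — the only delicate point is verifying the two defining properties of the inf-convolution extension, after which classical McDiarmid does the heavy lifting.
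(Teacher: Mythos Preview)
The paper does not actually prove this theorem: it is quoted as a result from \cite{combes2024extension} and used as a black box (the paper only proves the downstream \Cref{cor:mcdiarmid-unconditional-expectation}). So there is no in-paper argument to compare against.

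That said, your proof sketch is correct and is essentially the standard route to this type of result. The inf-convolution (McShane--Whitney) extension $\tilde f(x) = \inf_{y\in\mathcal{Y}}[f(y)+d(x,y)]$ does agree with $f$ on $\mathcal{Y}$ and is globally $1$-Lipschitz for the weighted Hamming metric, so classical McDiarmid applies to $\tilde f$. Your recentering bound $|\mathbb{E}[\tilde f(X)] - \mathbb{E}[f(X)\mid X\in\mathcal{Y}]|\le q\sum_i c_i$ is also right: since $|\tilde f(x)-f(y)|\le \sum_i c_i$ for every $x\in\mathcal{X}$ and $y\in\mathcal{Y}$, one has $|\tilde f(x)-m|\le \sum_i c_i$ pointwise, and the decomposition $\mathbb{E}[\tilde f(X)] = (1-q)m + q\,\mathbb{E}[\tilde f(X)\mid X\notin\mathcal{Y}]$ finishes it.

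One minor remark: your assembly already yields the bound $q + 2\exp(\cdots)$, which is stronger than the stated $2q + 2\exp(\cdots)$; the ``splitting upper and lower tails separately'' justification you give for the extra factor of $2$ on $q$ is not actually needed. Also make sure to note (or silently use) that $f$ is automatically bounded on $\mathcal{Y}$ by the Lipschitz hypothesis there, so the infimum defining $\tilde f$ is finite whenever $\mathcal{Y}\ne\emptyset$.
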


We will utilize the following corollary of \Cref{thm:mcdiarmid-whpdifferences}, which compares $f(X_1, X_2, \dots, X_t)$ to its expectation instead of the expectation conditioned on $(X_1, X_2, \dots, X_t) \in \mathcal{Y}$.

\begin{corollary}\label{cor:mcdiarmid-unconditional-expectation}
     Consider a bounded, nonnegative function $f: \mathcal{X}_1 \times \mathcal{X}_2 \times \dots \times \mathcal{X}_t \to \mathbb{R}$ and a subset of the domain $\mathcal{Y} \subseteq  \mathcal{X}_1 \times \mathcal{X}_2 \times \dots \times \mathcal{X}_t$ such that the following bounded differences property holds over elements of the subset $\mathcal{Y}$: For all $(x_1, x_2, \dots, x_t) \in \mathcal{Y}$ and $(y_1, y_2, \dots, y_t) \in \mathcal{Y}$,
     $$\left| f(x_1, x_2, \dots, x_t) - f(y_1, y_2, \dots, y_t) \right| \leq \sum_{i : x_i \neq y_i} c_i.$$
     
    Let $X_1, X_2, \dots, X_t$ be independent random variables where $X_i \in \mathcal{X}_i$ for $i \in [t]$. Let $q := 1 - \mathbb{P}\left( (X_1, X_2, \dots, X_t) \in \mathcal{Y}\right)$. Consider any $\delta > 0$. If $q \leq \min\{\frac{\delta}{2 \max(f)}, \frac{\delta}{4 \mathbb{E}(f(X_1, X_2, \dots, X_t))} , \frac{1}{2}\}$, then 
     $$\mathbb{P}\left( \left|f(X_1, X_2, \dots, X_t) -  \mathbb{E}\left( f(X_1, X_2, \dots, X_t) \right)\right| \geq \delta \right) \leq 2q + 2 \exp\left( -\frac{2\max\left( 0, \delta/2 - q \sum_{i = 1}^t c_i\right)^2}{\sum_{i = 1}^t c_i^2}\right).$$
\end{corollary}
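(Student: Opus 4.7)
The plan is to reduce to \Cref{thm:mcdiarmid-whpdifferences}, which already gives concentration of $f(X)$ around the conditional expectation $\mathbb{E}(f(X) \mid X \in \mathcal{Y})$. To transfer this to concentration around the unconditional mean $\mathbb{E}(f(X))$, I will show that under the given hypotheses on $q$ the two means differ by at most $\delta/2$, and then invoke the triangle inequality, losing a factor of two in the exponent's numerator.

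For the expectation comparison, I would start from the law of total expectation,
\[
\mathbb{E}(f(X)) \;=\; (1-q)\,\mathbb{E}(f(X) \mid X \in \mathcal{Y}) + q\,\mathbb{E}(f(X) \mid X \notin \mathcal{Y}),
\]
which rearranges to the clean identity
\[
\mathbb{E}(f(X)) - \mathbb{E}(f(X) \mid X \in \mathcal{Y}) \;=\; q\bigl(\mathbb{E}(f(X) \mid X \notin \mathcal{Y}) - \mathbb{E}(f(X) \mid X \in \mathcal{Y})\bigr).
\]
Because $f$ is nonnegative and bounded by $\max(f)$, both conditional expectations lie in $[0,\max(f)]$, so the right-hand side has absolute value at most $q \cdot \max(f)$. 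The hypothesis $q \le \delta/(2\max(f))$ then gives $|\mathbb{E}(f(X)) - \mathbb{E}(f(X) \mid X \in \mathcal{Y})| \le \delta/2$. The hypothesis $q \le 1/2$ ensures the conditioning event has positive probability (so the conditional expectation is well-defined and $1/(1-q) \le 2$ in any auxiliary manipulation), while $q \le \delta/(4\,\mathbb{E}(f(X)))$ is available as a backup for a looser variant of the same inequality that one would obtain by expanding $\mathbb{E}(f(X) \mid X \in \mathcal{Y}) = \mathbb{E}(f(X)\cdot \mathbf{1}_{X \in \mathcal{Y}})/(1-q)$ and bounding each piece separately.

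With $|\mathbb{E}(f(X)) - \mathbb{E}(f(X) \mid X \in \mathcal{Y})| \le \delta/2$ in hand, the triangle inequality yields the set inclusion
\[
\{|f(X) - \mathbb{E}(f(X))| \ge \delta\} \;\subseteq\; \{|f(X) - \mathbb{E}(f(X) \mid X \in \mathcal{Y})| \ge \delta/2\},
\]
so applying \Cref{thm:mcdiarmid-whpdifferences} with $\delta/2$ in place of $\delta$ immediately produces the claimed right-hand side $2q + 2\exp\!\bigl(-2\max(0, \delta/2 - q\sum_i c_i)^2 / \sum_i c_i^2\bigr)$. There is no real technical obstacle here; the only subtle point is to write the expectation gap in the symmetric form $q(\mathbb{E}(f \mid X \notin \mathcal{Y}) - \mathbb{E}(f \mid X \in \mathcal{Y}))$ rather than through the asymmetric $(1-q)^{-1}$ expansion, since the symmetric form directly delivers the tight $q\max(f)$ bound that matches the stated hypothesis $q \le \delta/(2\max(f))$ without losing constants.
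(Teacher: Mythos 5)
Your proof is correct and follows the same two-step strategy as the paper's: first bound $|\mathbb{E}(f(X)) - \mathbb{E}(f(X)\mid X\in\mathcal{Y})| \le \delta/2$, then split the deviation event via the triangle inequality and invoke \Cref{thm:mcdiarmid-whpdifferences} with $\delta/2$ in place of $\delta$. The one place you diverge is in how the expectation gap is bounded, and your version is actually cleaner: the paper's proof derives the upper and lower bounds on $\mathbb{E}(f(X)\mid X\in\mathcal{Y})$ separately, using $q \le \frac{\delta}{4\mathbb{E}(f)}$ together with $q \le 1/2$ for the upper bound (via $1/(1-q) \le 1+2q$) and $q \le \frac{\delta}{2\max(f)}$ for the lower bound, whereas your symmetric identity $\mathbb{E}(f(X)) - \mathbb{E}(f(X)\mid X\in\mathcal{Y}) = q\bigl(\mathbb{E}(f\mid X\notin\mathcal{Y}) - \mathbb{E}(f\mid X\in\mathcal{Y})\bigr)$ gives a two-sided bound of $q\max(f) \le \delta/2$ in one stroke, using only the hypothesis $q \le \frac{\delta}{2\max(f)}$. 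In effect your argument shows the corollary holds under a strictly weaker assumption on $q$ (the $\frac{\delta}{4\mathbb{E}(f)}$ constraint is unnecessary, and $q\le 1/2$ is only needed so the conditional expectation is defined), which is a small but genuine improvement.
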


\begin{proof}
    For simplicity of notation, let us denote the vector of random variables $(X_1, X_2, \dots, X_t)$ as $\Vec{X}$.

    We first prove that when $q \leq \min\{\frac{\delta}{2 \max(f)}, \frac{\delta}{4 \mathbb{E}(f(X_1, X_2, \dots, X_t))} , \frac{1}{2}\}$, we have:
    \begin{equation}\label{eq:conditional-vs-unconditional-expectation}
       \left| \mathbb{E}\left( f(\Vec{X} | \Vec{X} \in \mathcal{Y})\right) - \mathbb{E}\left( f(\Vec{X})\right)\right| \leq \delta/2. 
    \end{equation}
    To prove this, we begin by expanding the expectation of $f$ into conditional expectations:
    $$\mathbb{E}\left( f(\Vec{X})\right) = \mathbb{E}\left( f(\Vec{X}) | \Vec{X} \in \mathcal{Y} \right) \cdot \Pr\left( \Vec{X} \in \mathcal{Y}\right) + \mathbb{E}\left( f(\Vec{X}) | \Vec{X} \not \in \mathcal{Y} \right) \cdot \Pr\left( \Vec{X} \not \in \mathcal{Y}\right).$$
    Therefore, we have:
    $$\mathbb{E}\left( f(\Vec{X}) | \Vec{X} \in \mathcal{Y} \right) = \frac{\mathbb{E}\left( f(\Vec{X})\right) - \mathbb{E}\left( f(\Vec{X}) | \Vec{X} \not \in \mathcal{Y} \right) \cdot \Pr\left( \Vec{X} \not \in \mathcal{Y}\right)}{\Pr\left( \Vec{X} \in \mathcal{Y}\right)} = \frac{\mathbb{E}\left( f(\Vec{X})\right) - \mathbb{E}\left( f(\Vec{X}) | \Vec{X} \not \in \mathcal{Y} \right) \cdot q}{1-q}.$$

    We first upper-bound $\mathbb{E}\left( f(\Vec{X}) | \Vec{X} \in \mathcal{Y} \right)$. From the expression above, this is upper-bounded by $\frac{\mathbb{E}\left( f(\Vec{X})\right)}{1-q}$. Since $q \leq 1/2$, this is at most $(1 + 2q)\mathbb{E}(f(\Vec{X})).$
    Next, since $q \leq \frac{\delta}{4 \mathbb{E}(f(\Vec{X}))}$, we have:
    $$\mathbb{E}\left( f(\Vec{X}) | \Vec{X} \in \mathcal{Y} \right) \leq (1 + 2q)\mathbb{E}(f(\Vec{X})) \leq \mathbb{E}(f(\Vec{X})) + \frac{\delta}{2}.$$

    We now lower-bound $\mathbb{E}\left( f(\Vec{X}) | \Vec{X} \in \mathcal{Y} \right)$:
    $$\mathbb{E}\left( f(\Vec{X}) | \Vec{X} \in \mathcal{Y} \right) \geq \mathbb{E}\left( f(\Vec{X})\right) - \mathbb{E}\left( f(\Vec{X}) | \Vec{X} \not \in \mathcal{Y} \right) \cdot q \geq \mathbb{E}\left( f(\Vec{X})\right) - \max(f) \cdot q.$$
    Since $q \leq \frac{\delta}{2 \max(f)}$, this expression is at least $\mathbb{E}\left( f(\Vec{X})\right) - \frac{\delta}{2}$.

    Therefore, we have shown that \Cref{eq:conditional-vs-unconditional-expectation} holds.

    Next, observe that $\left| f(\Vec{X}) -\mathbb{E}\left(f(\Vec{X})\right)\right| \geq \delta$ implies that either $$\left| f(\Vec{X}) -\mathbb{E}\left(f(\Vec{X})| \Vec{X} \in \mathcal{Y}\right)\right| \geq \delta/2 \text{ or } \left| \mathbb{E}\left(f(\Vec{X})| \Vec{X} \in \mathcal{Y}\right) -\mathbb{E}\left(f(\Vec{X})\right)\right| \geq \delta / 2.$$
    Since we know that the second of these inequalities does not hold, $\left| f(\Vec{X}) -\mathbb{E}\left(f(\Vec{X})\right)\right| \geq \delta$ implies $\left| f(\Vec{X}) -\mathbb{E}\left(f(\Vec{X})| \Vec{X} \in \mathcal{Y}\right)\right| \geq \delta/2$. Therefore, $$\mathbb{P}\left( \left| f(\Vec{X}) -\mathbb{E}\left(f(\Vec{X})\right)\right| \geq \delta \right) \leq \mathbb{P}\left(\left| f(\Vec{X}) -\mathbb{E}\left(f(\Vec{X})| \Vec{X} \in \mathcal{Y}\right)\right| \geq \delta/2\right).$$
    Applying \Cref{thm:mcdiarmid-whpdifferences} to the probability on the right-hand side yields the corollary.
\end{proof}

\section{Quality control of k-clique count}\label{sec:query-complexity-cliques}

In this section, we prove \Cref{thm:intro-cliques}, restated below. Let $C_k(G)$ be the number of cliques in a graph $G$. Let $\rho_k := C_k(G)/\mathbb{E}_{G' \sim \gnp}\left[ C_k(G')\right]$.

\begin{theorem}[\Cref{thm:intro-cliques}]\label{thm:k-cliques}  
    There exist constants $c_1, c_2, c_3$ such that for every constant $k$ and $p \geq \omega\left( n^{-2/(k-1)}\right)$, the quality control problem $(\gnp, \rho_k)$ corresponding to estimating the number of $k$-cliques in a graph $G$ supposedly drawn according to $\gnp$ satisfies the following:
    \begin{enumerate}
        \item The problem is solvable in $O(p^{-c_1 k})$ adjacency matrix queries to $G$.
        \item When $p$ is additionally $\Omega(\log(n)/n)^{1/(4k-1)}$, the problem is solvable in $O(p^{-c_3 k})$ time. 
        \item Furthermore, no algorithm solves this problem with $o(p^{-c_2 k})$ queries.
    \end{enumerate}
\end{theorem}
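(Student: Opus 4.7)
\textbf{Proof plan for \Cref{thm:k-cliques}.}

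The plan is to prove the three parts separately, with the upper bound on query complexity being the technical core. For part (1), I would analyze the following algorithm: sample a multiset $S = (X_1, \ldots, X_s)$ of $s = \Theta(p^{-ck})$ i.i.d.\ uniform vertices from $[n]$, query all $\binom{s}{2}$ pairs to read $G_S$, compute $C_\ell(S)$ for every $\ell \in \{2,\ldots,k\}$, and accept iff each $C_\ell(S)$ lies within a factor $(1\pm \varepsilon_\ell)$ of $\mu_\ell(s) := \binom{s}{\ell} \ell! \, p^{\binom{\ell}{2}}$ (the expectation under $\gnp$). The query count is $O(s^2) = p^{-O(k)}$. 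Correctness is proved inductively over $\ell$ using the notion of exponentially robust quasirandomness (\Cref{def:exp-qr-intro}): one shows that, conditional on the graph being $(\varepsilon_{\ell-1}, \alpha_{\ell-1}, s)$-exponentially robustly quasirandom for $(\ell-1)$-cliques, the $\ell$-clique count in a fresh random multiset concentrates tightly around $C_\ell(G) \cdot s^\ell/n^\ell$, which simultaneously (i) lets the $\ell$-th check detect whether $\rho_\ell(G)\approx 1$ and (ii) upgrades the hypothesis to $\ell$-cliques.

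The concentration lemma is the main obstacle and I would prove it via \Cref{cor:mcdiarmid-unconditional-expectation}. Let $f(X_1,\ldots,X_s) = C_\ell(S)$, let $\mathcal{Y}$ be the set of multisets in which every vertex $v \in S$ has the ``correct'' local $(\ell-1)$-clique count $C_{\ell-1}(S\setminus\{v\}) \approx \mu_{\ell-1}(s)$ within the sample (using the exponentially robust QR hypothesis on $(\ell-1)$-cliques at scale $s-1$, together with a union bound over the $s$ one-out samples, so $q \le s\exp(-\alpha_{\ell-1}(s-1)) \ll 1$). On $\mathcal{Y}$, swapping one coordinate changes $f$ by at most $c = O(\mu_{\ell-1}(s))$ cliques, because a single vertex participates in at most that many $\ell$-cliques of $S$. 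Choosing $\delta = (\varepsilon_\ell/C)\mu_\ell(s)$ and plugging into \Cref{cor:mcdiarmid-unconditional-expectation} gives failure probability
\[
q + \exp\!\Bigl(-\Omega\bigl(\mu_\ell(s)^2/(s\,\mu_{\ell-1}(s)^2)\bigr)\Bigr) \le \exp(-\alpha_\ell s),
\]
with $\alpha_\ell = \Theta(\mu_\ell(s)^2/(s^2 \mu_{\ell-1}(s)^2)) = \Theta(p^{2(\ell-1)}/s)$ after substituting $\mu_\ell(s) = \Theta(s^\ell p^{\binom{\ell}{2}})$; iterating the induction and choosing $s = \Theta(p^{-ck})$ large enough that $\alpha_k s \ge \log(k/\delta)$ makes the total error across all $k$ levels a small constant. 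The base case $\ell = 2$ is immediate because the $(\ell-1)=1$-clique count is deterministically $s$. Completeness follows from $\gnp$ itself being exponentially robustly quasirandom at every level with high probability, which in turn reduces (via Markov's inequality on a second random sample) to the concentration of $\ell$-clique counts on $\gnp$ given by \Cref{lem:Gnp-clique-concentration}.

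For part (2), I would compose the above with a jumbledness filter, as outlined in the techniques section: after sampling $S$, first test whether $G_S$ satisfies the degree/codegree conditions of \Cref{prop:jumbled-from-deg-codeg} (time $O(s^3) = p^{-O(k)}$), rejecting if not. By \Cref{cor:jumbled-subgraphs}, $G \sim \gnp$ passes this test w.h.p.\ once $p \ge (\log n/n)^{1/(4k-1)}$; by soundness of the filter, any graph not rejected gives a $(p, \beta)$-jumbled $G_S$ with $\beta \le p^k s/C$. We can then compute each $C_\ell(S)$ to additive error $\varepsilon \ell^\ell p^{\binom{\ell}{2}}\binom{s}{\ell}$ in time $s^{O(1)} = p^{-O(k)}$ via \Cref{prop:jumbled-efficient-approximate-subgraph-counting} (with $\sigma(K_\ell) = \ell \le k$), which is precisely the precision we need to check $\rho_\ell \approx 1$. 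Composability of quality control (filter $+$ main test) ensures no loss in either direction.

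For part (3), the lower bound, I would use Yao's principle with two distributions. The YES distribution is $\gnp$ itself. For the NO distribution I would plant a denser structure into $\gnp$: pick a random subset $T$ of $\ell = \Theta(p^{-c_2 k/k}) = \Theta(p^{-c_2'})$ vertices and replace the induced subgraph on $T$ by a clique (or by $\mathcal{G}_{\ell, 1/2}$ as in the general motif lower bound), so that the number of $k$-cliques increases by $\gg \varepsilon \mu_k$ while the distribution is indistinguishable from $\gnp$ on any query pair disjoint from $T$. Any deterministic $q$-query algorithm has probability at most $O(q^2 \ell^2/n^2)$ of touching $T$ (by a union bound, since the vertex set of $T$ is uniformly random and independent of the algorithm's queries up to the first hit). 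Setting $q = o(p^{-c_2 k})$ with $c_2$ matching the exponent in $\ell$ makes this probability $o(1)$, so the two distributions are statistically indistinguishable to the algorithm, yielding the lower bound. The main subtlety is choosing $\ell$ so that the planted region simultaneously (a) creates enough extra $k$-cliques to move $\rho_k$ out of $1 \pm \varepsilon$ and (b) is small enough to evade $o(p^{-c_2 k})$ random probes; this balance is what fixes the constant $c_2$.
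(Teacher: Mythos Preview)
Your plan for parts (1) and (2) is essentially the paper's proof. One simplification worth noting in the concentration step: you take $\mathcal{Y}$ to be the multisets in which every one-out sample $S\setminus\{v\}$ has the right $(\ell-1)$-clique count, and then union-bound over the $s$ choices of $v$. The paper instead takes $\mathcal{Y}=\{S:C_{\ell-1}(S)\in(1\pm\gamma)\mu_{\ell-1}(s)\}$ directly. This already suffices for the bounded-difference bound, because a single vertex $u\in S$ participates in at most $\ell\cdot C_{\ell-1}(S)$ labeled $\ell$-cliques of $S$ (each such clique yields an $(\ell-1)$-clique among its other positions, which is an $(\ell-1)$-clique of $S$); so $c=O(\mu_{\ell-1}(s))$ on this simpler $\mathcal{Y}$, and $q$ is bounded by the exponential-robust-QR hypothesis at scale $s$ with no union bound. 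There is also an arithmetic slip in your $\alpha_\ell$: the quantity $\mu_\ell(s)^2/(s^2\mu_{\ell-1}(s)^2)$ equals $\Theta(p^{2(\ell-1)})$, not $\Theta(p^{2(\ell-1)}/s)$, which is what the paper uses.

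Your part (3) has the right architecture but two concrete errors in the parameters. First, the planted set must scale with $n$: for the dense region to contribute $\gg\varepsilon\mu_k$ extra $k$-cliques you need (roughly) $\ell^k\gtrsim n^kp^{\binom{k}{2}}$, i.e.\ $\ell=\Omega(np^{(k-1)/2})$; writing $\ell=\Theta(p^{-c'_2})$ independent of $n$ cannot work. The paper takes $\ell=\Theta(np^{\Theta(k)})$ (planting $\mathcal{G}_{\ell,1/2}$ and counting extensions of existing $(k-1)$-clique copies by a vertex in the planted block). Second, the indistinguishability bound is $O(q\ell/n)$, not $O(q^2\ell^2/n^2)$: a deterministic $q$-query algorithm touches at most $2q$ vertices, and if \emph{none} of them lie in the random set $T$ then its entire transcript is identical under $D_Y$ and $D_N$; the probability that some queried vertex hits $T$ is at most $2q\ell/(n-\ell)$. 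A query with one endpoint in $T$ already sees edge probability $1/2$ rather than $p$, so requiring both endpoints in $T$ (which is what $q\ell^2/n^2$ would correspond to) is too weak. With the correct $\ell$ and the linear bound you recover $q=\Omega(n/\ell)=p^{-\Omega(k)}$.
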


We typically assume $\varepsilon$ is a constant, and therefore omit the explicit dependence on $\varepsilon$ in the theorem statements. However, the upper bound on the query complexity with the dependence on $\varepsilon$ is $O(p^{-c_1 k} \cdot \varepsilon^{-4})$, and the runtime is $O(p^{-c_3 k}) \cdot \exp(\text{poly}(k/\varepsilon))$.

For $k$-cliques, we are interested in the asymptotic exponent of $1/p$ in the query complexity and runtime. However, for triangles, we furthermore achieve a tight bound on the exponent, obtaining a  $\Omega(1/p)$ lower bound and $\widetilde{O}(1/p)$ upper bound on the query complexity and runtime (with adjacency list, adjacency matrix, and degree queries). See \Cref{thm:triangles-usingliterature} in \Cref{sec:triangles}. It remains an interesting open question to identify the precise exponent in the case of $k$-cliques, for growing $k$.

\subsection{Query complexity upper bound}\label{sec:qc-upperbound}

In this section, we prove \Cref{thm:k-cliques} Part (1).

For $\ell \in [k]$, define $s_{\ell}$ as:
\begin{equation}\label{def:s-ell-size} 
  s_{\ell} =\frac{\ell^2 \cdot 8192}{p^{2\ell - 2} \cdot \varepsilon^2} \cdot \left( \frac{3}{2}\right)^{2(k -  \ell)} \cdot  \ln\left(\frac{(6\ell)^{\ell}}{p^{\binom{\ell+2}{2}} \cdot \varepsilon} \right).
\end{equation}
Define sample size $s_{*}$ as:
\begin{equation}\label{def:s-size} 
  s_{*} = s_k.
\end{equation}
Additionally, for $s \in \mathbb{N}$ and $\ell \in [k]$, let 
\begin{equation}\label{eq:adjustment-global-to-local}
    A_{s, \ell} = \binom{s}{\ell} \cdot \ell! \cdot \frac{1}{n^{\ell}}
\end{equation}
and 
\begin{equation}\label{eq:expected-ell-clique-copy-amount}
  E_{s, \ell} = \binom{n}{\ell}\cdot \ell! \cdot p^{\binom{\ell}{2}} \cdot A_{s, \ell}.  
\end{equation}
$E_{s, \ell}$ is the expected number of labeled $\ell$-cliques in a multiset of vertices (where each element is drawn uniformly from the vertex set) when the graph is drawn from $\gnp$ (see Claim \ref{claim:expectation-set-ell-cliques}). Particularly, since we are considering the number of $\ell$-cliques in a \textit{multiset}, if a vertex appears $c$ times in the multiset, the $\ell$-cliques it is a part of are counted $c$ times.

We consider the following algorithm for quality control of the $k$-clique count compared to $\gnp$.

\paragraph{Algorithm \textsc{Clique-Quality}:} On inputs $G$, $n, p, \varepsilon$, $\gnp$, and $k$

\begin{quote}
Sample $s_{*}$ vertices, each uniformly at random (with repetition). Let $S$ be the corresponding multiset of vertices. For $\ell$ from $2$ to $k$, count the number of $\ell$-cliques from $S$ in $G$ (i.e., $C_{\ell}(S)$ from \Cref{def:C-ell-S}). If, for any $\ell$, $C_{\ell}(S)$ is not in $(1 \pm \varepsilon/2) E_{s_{*}, \ell}$, reject. Otherwise, accept. 
\end{quote}

We now define \textit{exponential quasirandomness}, a property of graphs that implies concentration of the number of $\ell$-cliques around the expected amount in $G_{s, p}$, in most (multi-sets) $S$ of $\approx s$ vertices. We first set up some notation. 

In what follows, let $C_{\ell}(G)$ be the number of labeled $\ell$-cliques in $G$. For a multiset $S$ of size $s$, define $C_{\ell}(S)$ as follows.

\begin{definition}\label{def:C-ell-S}
    Let $S$ be a multiset of size $s$. $C_{\ell}(S)$ is defined to be the number of labeled $\ell$-cliques in $S$. Since $S$ is a multiset, $\ell$-cliques are counted as follows: if a vertex appears $c$ times in the multiset, the $\ell$-cliques it is a part of are counted $c$ times.
\end{definition}

\begin{definition}[Exponentially robustly quasirandom (see \Cref{def:exp-qr-motif} for general definition)]\label{def:exp-qr}
    We say that a graph $G=([n],E)$ is $(\varepsilon,\alpha, s_0)$-exponentially robustly quasirandom with respect to the count of $\ell$-cliques if for every $s \geq s_0$ we have that 
    $$\Pr_{X_1,\ldots,X_s \sim_{\mathrm{i.i.d.}} [n]}\left[ C_{\ell}(\{X_1,\ldots,X_s\}) \not\in (1\pm \varepsilon) \binom{n}{\ell}p^{\binom{\ell}{2}} \cdot \binom{s}{\ell} \frac{\ell!}{n^{\ell}}\right] \leq 4^{\ell + 1} \cdot \exp(-\alpha s).$$
\end{definition}

In what follows, we will specify an $\varepsilon, \alpha$ pair for each $\ell$, setting $\alpha_\ell = O(p^{2(\ell - 1)})$. 

\begin{definition}\label{def:alpha-ell}
    For $2 \leq \ell \leq k-1$, define
    $$\varepsilon_{\ell} = \varepsilon \cdot \left(\frac{2}{3} \right)^{k - \ell - 1}.$$
    Define $\alpha_2 = \varepsilon^2 p^2 \cdot \left(\frac{2}{3} \right)^{2(k-3)} / 128$, and for $\ell \in \{3, 4, \dots, k\}$ define
    $$\alpha_{\ell} = \frac{p^{2 \ell - 2} \cdot \varepsilon_{\ell-1}^2}{4096 \cdot \ell^2}.$$
\end{definition}
Observe that, for $s_B \geq s_A$, $(\varepsilon,\alpha, s_A)$-exponentially robust quasirandomness with respect to the count of $\ell$-cliques implies $(\varepsilon,\alpha, s_B)$-exponentially robust quasirandomness.

The key ingredient in the analysis of the algorithm is the following inductive lemma (given qualitatively in \Cref{lem:inductive-intro}), which states that a graph $G$ is $(\varepsilon_{\ell - 1}, \alpha_{\ell-1}, s_{\ell})$-exponentially robustly quasirandom with respect to $(\ell-1)$-clique counts, then it can be tested for being $(\varepsilon_{\ell}, \alpha_{\ell}, s_{\ell})$-exponentially robustly quasirandom with respect to $\ell$-clique counts in $\widetilde{O}\left( p^{-4\ell + 4}\right)$ queries to the adjacency matrix.

\begin{lemma}[Inductive Lemma]\label{lemma:inductive-cliques}
    Suppose that the graph $G$ is $(\varepsilon_{\ell - 1}, \alpha_{\ell-1}, s_{\ell})$-exponentially robustly quasirandom with respect to the count of $(\ell-1)$-cliques.
    
    Define the tester $T_{\ell}$ that samples $s_{\ell}$ (as specified in \Cref{def:s-ell-size}) random vertices of $G$ (with replacement) and accepts if and only if the $\ell$-clique count in the induced subgraph on the sampled vertices is within $(1 \pm \varepsilon_{\ell-1}) \binom{n}{\ell} \ell! \cdot p^{\binom{\ell}{2}} \cdot \binom{s_{\ell}}{\ell} \frac{\ell!}{n^{\ell}}$. This algorithm uses $\widetilde{O}\left( p^{-4\ell + 4} \varepsilon^{-4}\right)$ adjacency matrix queries.
    
    With probability at least $1 - 4^{\ell + 1}\exp(-\alpha_{\ell} s_{\ell})$, the algorithm accepts graphs that are $(\varepsilon_{\ell-1}, \alpha_{\ell}, s_{\ell})$-exponentially robustly quasirandom with respect to the $\ell$-clique count and rejects graphs that are not 
    $(\varepsilon_{\ell}, \alpha_{\ell}, s_{\ell})$-exponentially robustly quasirandom with respect to the $\ell$-clique count.
\end{lemma}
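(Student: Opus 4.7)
The plan is to deduce the inductive lemma from the forthcoming Concentration Lemma (Lemma~3.3), which will supply the key probabilistic estimate. Invoking the Concentration Lemma under the hypothesis that $G$ is $(\varepsilon_{\ell-1}, \alpha_{\ell-1}, s_\ell)$-exponentially robustly quasirandom with respect to $(\ell-1)$-cliques yields that $C_\ell(S)$, for $S$ a uniform multiset of size $s_\ell$, concentrates within a tight multiplicative window around $\frac{\binom{s_\ell}{\ell}\ell!}{n^\ell}\cdot C_\ell(G) = \rho\cdot E_{s_\ell,\ell}$, where $\rho := C_\ell(G)/\mu_\ell$, with failure probability at most $\exp(-\alpha_\ell s_\ell)$. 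Thus the inductive lemma reduces to comparing this shifted concentration window with the tester's acceptance window $(1\pm\varepsilon_{\ell-1})E_{s_\ell,\ell}$.

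For the acceptance clause, suppose $G$ is $(\varepsilon_{\ell-1},\alpha_\ell,s_\ell)$-ERQ with respect to $\ell$-cliques. The defining condition is exactly $\Pr[C_\ell(S) \in (1\pm\varepsilon_{\ell-1})E_{s_\ell,\ell}] \geq 1 - 4^{\ell+1}\exp(-\alpha_\ell s_\ell)$, which is precisely the event in which $T_\ell$ accepts. No further argument is needed.

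For the rejection clause, suppose $G$ is not $(\varepsilon_\ell,\alpha_\ell,s_\ell)$-ERQ with respect to $\ell$-cliques. I would argue by contrapositive on $|\rho - 1|$: if $\rho$ were close enough to $1$ that the shifted concentration window $(1\pm\text{width})\rho\, E_{s_\ell,\ell}$ were contained in $(1\pm\varepsilon_\ell)E_{s_\ell,\ell}$, then the Concentration Lemma alone would certify $G$ as $(\varepsilon_\ell,\alpha_\ell,s_\ell)$-ERQ (its failure probability $\exp(-\alpha_\ell s_\ell)$ is well below the $4^{\ell+1}\exp(-\alpha_\ell s_\ell)$ threshold), contradicting the hypothesis. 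The gap built into the definition $\varepsilon_\ell = (3/2)\varepsilon_{\ell-1}$ is exactly the slack that forces $|\rho - 1|$ to exceed a constant fraction of $\varepsilon_{\ell-1}$. A second application of the Concentration Lemma then places $C_\ell(S)$ in a window disjoint from $(1\pm\varepsilon_{\ell-1})E_{s_\ell,\ell}$ with probability at least $1 - 4^{\ell+1}\exp(-\alpha_\ell s_\ell)$, so $T_\ell$ rejects with the claimed probability.

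The query complexity is immediate: $T_\ell$ reads only the $\binom{s_\ell}{2}$ adjacency-matrix entries inside the sampled multiset, and substituting the choice $s_\ell = \widetilde\Theta(p^{-(2\ell-2)}\varepsilon^{-2})$ from \Cref{def:s-ell-size} gives $\widetilde O(p^{-4\ell+4}\varepsilon^{-4})$ queries. The main obstacle is the Concentration Lemma itself (deferred to later in \Cref{sec:qc-upperbound}), whose proof will deploy the high-probability bounded-differences inequality (\Cref{cor:mcdiarmid-unconditional-expectation}) with the ``good set'' chosen via the $(\ell-1)$-clique ERQ hypothesis and the per-coordinate Lipschitz constant bounded by the typical $(\ell-1)$-clique count through a sampled vertex. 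Once that is in hand, the reduction above is essentially bookkeeping, modulo verifying that the constants in \Cref{def:alpha-ell} and the geometric ratio $\varepsilon_\ell/\varepsilon_{\ell-1}=3/2$ close the two threshold comparisons cleanly.
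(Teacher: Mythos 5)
Your proposal is correct and follows essentially the same route as the paper: both derive the inductive lemma as bookkeeping on top of the Concentration Lemma, with completeness coming directly from the ERQ definition and soundness from the observation that the tester's acceptance window forces $C_\ell(G)$ (equivalently $\rho$) into a range that, fed back through the Concentration Lemma, certifies $(\varepsilon_\ell,\alpha_\ell,s_\ell)$-ERQ. The paper states the soundness direction as ``accept $\Rightarrow$ ERQ'' while you phrase it as the contrapositive ``not ERQ $\Rightarrow$ $|\rho-1|$ large $\Rightarrow$ reject''; these are logically identical, and the only small imprecision in your write-up is calling the Concentration Lemma's failure probability ``$\exp(-\alpha_\ell s_\ell)$'' when the instantiated version (\Cref{cor:concentration-lemma-with-parameters}) actually gives $4^{\ell+1}\exp(-\alpha_\ell s_\ell)$, which happens to match the ERQ threshold exactly rather than sit well below it.
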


The induction lemma is (``easily'') proven via the following concentration lemma (given qualitatively in \Cref{lem:concentration-intro}):

\begin{lemma}[Concentration Lemma (see \Cref{lem:concentration-general-P-and-D} for general statement)]\label{lem:concentration}
 Suppose $G$ is $(\gamma, \beta, s)$-exponentially robustly quasirandom with respect to the count of $(\ell-1)$-cliques. Suppose $s$ satisfies $\exp(-\beta s) \leq p^{\binom{\ell+2}{2}}$.

 Then:
 $$\Pr_S\left( \left| C_{\ell}(S) - C_{\ell}(G) \cdot A_{s, \ell}\right| \geq \frac{\gamma}{4}\binom{n}{\ell}p^{\binom{\ell}{2}} \cdot \binom{s}{\ell} \frac{\ell!}{n^{\ell}} \right)  $$ $$\leq 2 \cdot 4^{\ell}\exp(-\beta s) + 2\exp\left( -\gamma^2 s \cdot \frac{p^{2\ell - 2}}{2048}\right).$$
\end{lemma}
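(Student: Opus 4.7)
The plan is to invoke Corollary~\ref{cor:mcdiarmid-unconditional-expectation} with the random function $f(X_1,\ldots,X_s) := C_\ell(\{X_1,\ldots,X_s\})$, where $X_1,\ldots,X_s$ are i.i.d.\ uniform on $[n]$. Let $\mathcal{Y} \subseteq [n]^s$ be the set of tuples whose underlying multiset has $(\ell-1)$-clique count in the range $(1 \pm \gamma)\, E_{s, \ell-1}$. The hypothesis that $G$ is $(\gamma, \beta, s)$-exponentially robustly quasirandom for $(\ell-1)$-cliques gives $q := \Pr[(X_1,\ldots,X_s) \notin \mathcal{Y}] \leq 4^{\ell} \exp(-\beta s)$. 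A direct calculation (along the lines of Claim~\ref{claim:expectation-set-ell-cliques}, using that each ordered $\ell$-tuple of positions realizes a fixed labeled clique with probability $1/n^\ell$) yields $\mathbb{E}[f] = C_\ell(G) \cdot A_{s,\ell}$, so the event in the lemma is exactly $\{|f - \mathbb{E}(f)| \geq \delta\}$ with $\delta := \frac{\gamma}{4} \binom{n}{\ell} p^{\binom{\ell}{2}} \binom{s}{\ell}\ell!/n^\ell$.

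The central step is the global bounded-differences property on $\mathcal{Y}$: for every pair $\vec x, \vec y \in \mathcal{Y}$ and $I := \{i : x_i \neq y_i\}$, I aim to show $|f(\vec x) - f(\vec y)| \leq c \cdot |I|$ with $c := 2(1+\gamma)\, E_{s, \ell-1}$. The idea is that every $\ell$-clique of $\vec x$ not present in $\vec y$ must use at least one position from $I$, and for any single position $i$, the number of $\ell$-cliques in $\vec x$ containing position $i$ is at most $C_{\ell-1}(\vec x \setminus \{i\}) \leq C_{\ell-1}(\vec x)$, which is bounded by $(1+\gamma) E_{s,\ell-1}$ because $\vec x \in \mathcal{Y}$; the symmetric bound holds for $\vec y$. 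Summing over $I$ gives the stated constant $c$. This argument deliberately avoids coordinate-by-coordinate telescoping, since intermediate multisets need not lie in $\mathcal{Y}$, and handling this cleanly is the main technical subtlety.

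With the bounded-differences constant in hand, the two preconditions of Corollary~\ref{cor:mcdiarmid-unconditional-expectation} ($q \le \delta/(2 \max f)$ and $q \le \delta/(4\mathbb{E}[f])$) follow from the hypothesis $\exp(-\beta s) \leq p^{\binom{\ell+2}{2}}$: both $\max f$ and $\mathbb{E}[f]$ are at most roughly $\binom{s}{\ell}\ell!$, while $\delta$ carries an extra factor of $p^{\binom{\ell}{2}}$, so the comfortable gap between $\binom{\ell+2}{2}$ and $\binom{\ell}{2}$ provides the slack. The same slack ensures $q \sum_i c_i \leq \delta/4$, so that $\delta/2 - q \sum_i c_i \geq \delta/4$. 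The corollary then outputs
\[
2q \;+\; 2\exp\!\left(-\frac{2 (\delta/4)^2}{s \cdot c^2}\right),
\]
and substituting the ratio $E_{s,\ell}/E_{s,\ell-1} \asymp (s/\ell)\, p^{\ell-1}$ turns the exponent into one of order $\gamma^2 s \, p^{2\ell-2}/\ell^2$. Tracking numerical constants and absorbing $2q$ into the $2 \cdot 4^{\ell} \exp(-\beta s)$ term yields the claimed inequality exactly.
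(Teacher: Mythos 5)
Your proposal follows the paper's route exactly: apply Corollary~\ref{cor:mcdiarmid-unconditional-expectation} to $f = C_\ell$ with the good set $\mathcal{Y}$ consisting of multisets whose $(\ell-1)$-clique count lies in $(1 \pm \gamma) E_{s,\ell-1}$, bound $q \leq 4^\ell \exp(-\beta s)$ via exponential robust quasirandomness, establish a bounded-differences constant valid uniformly on $\mathcal{Y}$ (which, as you correctly emphasize, sidesteps the usual coordinate-by-coordinate telescoping), and convert the resulting Gaussian tail into the stated exponent via the ratio $E_{s,\ell}/E_{s,\ell-1}$. This is the same decomposition the paper uses.

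However, your bounded-differences constant is too small by a factor of $\ell$. Because $C_\ell$ counts \emph{labeled} (ordered) cliques, the number of labeled $\ell$-cliques through a fixed position $i$ is at most $\ell \cdot C_{\ell-1}(\vec x)$, not $C_{\ell-1}(\vec x)$: the vertex $x_i$ can sit in any of the $\ell$ labeled slots of the clique, and removing it leaves a labeled $(\ell-1)$-clique. The paper's Claim~\ref{claim:bounded-differences-whp-ell-cliques} includes this factor, giving $c_i = \ell(1+\gamma) E_{s,\ell-1}$. Your conclusion still lands on the right exponent $\gamma^2 s\, p^{2\ell-2}/\ell^2$, but only because you also misstate the ratio of expectations: from \Cref{eq:expected-ell-clique-copy-amount}, $E_{s,\ell}/E_{s,\ell-1} = \frac{(n-\ell+1)(s-\ell+1)}{n}\,p^{\ell-1} \approx s\,p^{\ell-1}$, with no $1/\ell$. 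The two missing $\ell$'s cancel and fortuitously reproduce the target exponent, but a correct accounting attributes the $\ell^2$ in the denominator to the bounded-differences constant, not to the ratio. The rest of your proposal (the identification of $\mathbb{E}[f]$, the precondition checks exploiting the slack between $p^{\binom{\ell}{2}}$ and $p^{\binom{\ell+2}{2}}$, and the absorption of $2q$ into the $2\cdot 4^\ell\exp(-\beta s)$ term) matches the paper.
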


For our application to checking $k$-clique counts, we will repeatedly apply this concentration lemma with the parameters $\varepsilon_{\ell}$ and $\alpha_{\ell}$ from \Cref{def:alpha-ell}. Therefore, for ease of use, we provide the concentration lemma with the specific parameters as a corollary.

\begin{lemma}[Concentration Lemma, instantiated] 
\label{cor:concentration-lemma-with-parameters}
    For $2 \leq \ell \leq k - 1$, consider the definitions of $\varepsilon_{\ell}$ and $\alpha_{\ell}$ from \Cref{def:alpha-ell} and the definition of $s_{\ell}$ from \Cref{def:s-ell-size}.

    Suppose $G$ is $(\varepsilon_{\ell - 1}, \alpha_{\ell - 1}, s_{\ell})$-exponentially robustly quasirandom with respect to $(\ell - 1)$-clique counts. Then:
    $$\Pr_S\left( \left| C_{\ell}(S) - C_{\ell}(G) \cdot A_{s_{\ell}, \ell}\right| \geq \frac{\varepsilon_{\ell - 1}}{4}\binom{n}{\ell}p^{\binom{\ell}{2}} \cdot \binom{s_{\ell}}{\ell} \frac{\ell!}{n^{\ell}} \right) \leq 4^{\ell + 1} \exp(-\alpha_{\ell} s_{\ell}).$$
\end{lemma}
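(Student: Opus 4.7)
The plan is to derive this instantiated corollary as a direct consequence of the general Concentration Lemma \Cref{lem:concentration}, by plugging in $\gamma = \varepsilon_{\ell-1}$, $\beta = \alpha_{\ell-1}$, and $s = s_{\ell}$, and then checking that (a) the technical hypothesis $\exp(-\beta s) \leq p^{\binom{\ell+2}{2}}$ is met by the choice of $s_{\ell}$, and (b) the resulting two-term tail bound collapses to the clean form $4^{\ell+1}\exp(-\alpha_{\ell} s_{\ell})$.

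First, I would match hypotheses: the assumption of the instantiated corollary is exactly what \Cref{lem:concentration} needs for $(\ell-1)$-clique exponential quasirandomness with parameters $(\varepsilon_{\ell-1}, \alpha_{\ell-1}, s_{\ell})$. Second, I would verify $\exp(-\alpha_{\ell-1} s_{\ell}) \leq p^{\binom{\ell+2}{2}}$. Using $\alpha_{\ell-1} = p^{2\ell-4}\,\varepsilon_{\ell-2}^2 / (4096(\ell-1)^2)$ for $\ell \geq 3$ (and the analogous base definition of $\alpha_2$), and the explicit logarithmic factor $\ln\!\left((6\ell)^{\ell}/(p^{\binom{\ell+2}{2}}\varepsilon)\right)$ built into $s_{\ell}$, a direct calculation shows $\alpha_{\ell-1} s_{\ell} \geq \binom{\ell+2}{2} \ln(1/p)$, giving the required precondition; here the $(3/2)^{2(k-\ell)}$ slack factor in $s_{\ell}$ comfortably absorbs the $\varepsilon_{\ell-2}^2$ dependence.

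Third, I would simplify the two tail terms delivered by \Cref{lem:concentration}. For the first, since $\alpha_{\ell-1}/\alpha_{\ell}$ has an extra factor of order $p^{-2}$ (compensated by the ratio $(\varepsilon_{\ell-2}/\varepsilon_{\ell-1})^2 \cdot ((\ell-1)/\ell)^{-2} = (2/3)^2 \cdot ((\ell-1)/\ell)^{-2}$), one checks $\alpha_{\ell-1} \geq \alpha_{\ell}$ for $p \leq 1$, so the first term $2 \cdot 4^{\ell}\exp(-\alpha_{\ell-1} s_{\ell})$ is at most $2 \cdot 4^{\ell}\exp(-\alpha_{\ell} s_{\ell})$. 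For the second term, the definition of $\alpha_{\ell}$ was chosen precisely so that $\varepsilon_{\ell-1}^2 p^{2\ell-2}/2048 = 2\ell^2 \alpha_{\ell}$, hence $2\exp(-\varepsilon_{\ell-1}^2 s_{\ell} p^{2\ell-2}/2048) = 2\exp(-2\ell^2 \alpha_{\ell} s_{\ell}) \leq 2\exp(-\alpha_{\ell} s_{\ell})$. Adding the two, $(2\cdot 4^{\ell} + 2)\exp(-\alpha_{\ell} s_{\ell}) \leq 4^{\ell+1}\exp(-\alpha_{\ell} s_{\ell})$.

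The main obstacle will be the bookkeeping in verifying $\exp(-\alpha_{\ell-1}s_{\ell}) \leq p^{\binom{\ell+2}{2}}$ uniformly across the relevant range of $\ell$, and confirming that the nested constants $(\varepsilon_{\ell})$, $(\alpha_{\ell})$, and $(s_{\ell})$ were chosen consistently so that passing $\gamma = \varepsilon_{\ell-1}$ into \Cref{lem:concentration} yields a tail rate governed by $\alpha_{\ell}$ (not $\alpha_{\ell-1}$). This degradation of the exponent from $\alpha_{\ell-1}$ to $\alpha_{\ell}$ (a factor of $p^2$) is intrinsic to the step and is exactly why the $(3/2)^{2(k-\ell)}$ inflation in $s_{\ell}$ is needed: once the corollary is used inductively in \Cref{lemma:inductive-cliques}, each successive layer of quality control loses a factor of $3/2$ in the error tolerance but must retain a usable exponent, and one must check that $s_{*} = s_k$ is large enough to beat all the accumulated losses simultaneously.
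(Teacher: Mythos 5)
Your proposal is correct and follows essentially the same route as the paper: instantiate \Cref{lem:concentration} with $\gamma = \varepsilon_{\ell-1}$, $\beta = \alpha_{\ell-1}$, $s = s_{\ell}$, check that the choice of $s_{\ell}$ satisfies the hypothesis $\exp(-\alpha_{\ell-1}s_{\ell}) \leq p^{\binom{\ell+2}{2}}$, and then absorb both tail terms into $4^{\ell+1}\exp(-\alpha_{\ell}s_{\ell})$ using $\alpha_{\ell-1} \geq \alpha_{\ell}$ (Observation~\ref{obs:alpha-decreases}) and the fact that $\alpha_{\ell}$ is defined precisely to match (up to a multiplicative slack of $2\ell^2$ or $2$, depending on which form of the exponent in \Cref{lem:concentration} one reads off) the second tail exponent. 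The only difference is cosmetic: you track the $\ell^2$ factor slightly differently than the paper's chain does, but both land on the same bound, and your verification of the precondition via the telescoping of the $(3/2)^{2(k-\ell)}$ factor against $\varepsilon_{\ell-2}^2$ matches what the paper leaves implicit.
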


To see how to prove \Cref{lemma:inductive-cliques} from \Cref{lem:concentration}, observe that the concentration lemma tells us that if a graph is exponentially robustly quasirandom with respect to $(\ell-1)$-cliques, then with high probability a sampled multiset $S$ of size at least $s_{\ell}$ will have $C_{\ell}(S)$ that is proportional to $C_{\ell}(G)$. We can therefore use the number of $\ell$-cliques in $S$ to determine if $C_{\ell}(G)$ corresponds to the $\gnp$ expected amount or not. If it does, then the concentration lemma exponential quasirandomness for $\ell$-cliques. If it does not, we can efficiently reject the input graph.

\subsubsection{Proof of Concentration Lemma}

In this section, we prove \Cref{lem:concentration}.

We first prove the following two claims. Let $C_{\ell}(G)$ be the number of $\ell$-cliques in the graph $G$. For a multiset $S$, define $C_{\ell}(S)$ to be the number of labeled $\ell$-cliques in $S$, with repetitions for repeated vertices, as defined in \Cref{def:C-ell-S}.

\begin{claim}\label{claim:expectation-set-ell-cliques}
    Let $S$ be a multiset generated by sampling $s$ vertices from the graph, with replacement. Then  $\mathbb{E}_S\left[ C_{\ell}(S)\right] = C_{\ell}(G) \cdot A_{s, \ell}.$
\end{claim}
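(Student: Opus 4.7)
\textbf{Proof plan for Claim \ref{claim:expectation-set-ell-cliques}.} The plan is to expand $C_\ell(S)$ as a sum of indicators indexed by ordered $\ell$-tuples of distinct \emph{positions} in the sample (not distinct vertex values), and then apply linearity of expectation. Concretely, view the multiset $S$ as the sequence $(X_1,\ldots,X_s)$ where the $X_i$ are i.i.d.\ uniform on $[n]$. The counting convention in \Cref{def:C-ell-S} (a vertex repeated $c$ times contributes $c$ times to every clique containing it) is precisely what one gets from
\[
C_\ell(S) \;=\; \sum_{(i_1,\ldots,i_\ell)\in [s]^{\underline{\ell}}} \mathbf{1}\bigl[(X_{i_1},\ldots,X_{i_\ell})\text{ is a labeled $\ell$-clique of }G\bigr],
\]
where $[s]^{\underline{\ell}}$ denotes the set of ordered $\ell$-tuples of \emph{distinct} indices from $[s]$. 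I would first verify this identity matches the definition (each labeled $\ell$-clique on vertices $v_1,\ldots,v_\ell$ of $G$ is counted once for every ordered choice of $\ell$ distinct positions $i_1,\ldots,i_\ell$ with $X_{i_j}=v_j$, exactly reproducing the ``counted $c$ times'' rule when some $v_j$ appears $c$ times among the $X_i$).

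Once the identity is established, the rest is a direct calculation. By linearity,
\[
\mathbb{E}_S[C_\ell(S)] \;=\; \sum_{(i_1,\ldots,i_\ell)\in [s]^{\underline{\ell}}} \Pr\bigl[(X_{i_1},\ldots,X_{i_\ell})\text{ is a labeled $\ell$-clique of }G\bigr].
\]
For any fixed tuple of distinct indices, the random variables $X_{i_1},\ldots,X_{i_\ell}$ are i.i.d.\ uniform on $[n]$, so the probability that they form any specific labeled $\ell$-clique of $G$ is $1/n^\ell$, and summing over the $C_\ell(G)$ labeled $\ell$-cliques gives probability $C_\ell(G)/n^\ell$. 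Since $|[s]^{\underline{\ell}}| = s(s-1)\cdots(s-\ell+1) = \binom{s}{\ell}\ell!$, the expectation becomes
\[
\mathbb{E}_S[C_\ell(S)] \;=\; \binom{s}{\ell}\ell!\cdot \frac{C_\ell(G)}{n^\ell} \;=\; C_\ell(G)\cdot A_{s,\ell},
\]
using the definition of $A_{s,\ell}$ in \Cref{eq:adjustment-global-to-local}.

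The only real ``step'' with any content is the first one — matching the combinatorial convention of \Cref{def:C-ell-S} to the sum over ordered tuples of distinct \emph{positions}. Everything after that is standard linearity plus the fact that i.i.d.\ uniform samples land on any fixed ordered tuple of vertices with probability $n^{-\ell}$. No concentration or pseudorandomness hypotheses on $G$ are needed here; the claim is a pure identity in expectation, valid for every graph $G$.
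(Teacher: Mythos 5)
Your proof is correct and follows essentially the same approach as the paper: decompose $C_\ell(S)$ as a sum of indicators over ordered $\ell$-tuples, apply linearity of expectation, and use that i.i.d.\ uniform samples hit any fixed ordered tuple of vertices with probability $n^{-\ell}$. The one place you are a bit more careful than the paper is in explicitly indexing the sum by ordered tuples of distinct \emph{positions} in the sample (rather than tuples of multiset elements), which cleanly matches the counting convention of \Cref{def:C-ell-S} and makes the factor $\binom{s}{\ell}\ell!$ manifest.
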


\begin{proof}
    Let us first write the number of $\ell$-cliques in $S$ (denoted as $C_{\ell}(S)$) as a sum of indicator random variables. For ordered set $T$ and ordered $\ell$-clique $C$, let $T \equiv C$ mean that the ordered set of vertices of $T$ is exactly that of $C$.

    $$C_{\ell}(S) = \sum_{C \in C_{\ell}(G)}\sum_{T \in S^{\ell}} 1(T \equiv C).$$
    
    Fix an ordered tuple of $\ell$ vertices in $S$, and some labeled $\ell$-clique $C \in C_{\ell}(G)$. The probability that $T \equiv C$ is $1/n^{\ell}$, since $T$ is created by sampling $\ell$ vertices, each time uniformly at random (with repetitions).
    
    By linearity of expectation, 
    $\mathbb{E}\left( C_{\ell}(S) \right)= C_{\ell}(G) \cdot \binom{s}{\ell} \ell! / n^{\ell}.$
\end{proof}

In the following claim, let $ E_{s, \ell - 1}$ be the expected number of $(\ell-1)$-cliques in a size-$s$ multiset of vertices in a graph drawn from $\gnp$, as defined in \Cref{eq:expected-ell-clique-copy-amount}.

\begin{claim}\label{claim:bounded-differences-whp-ell-cliques}
    Consider any two sets $S, S'$ such that the number of $(\ell-1)$-cliques in each of $S$ and $S'$ is in the range $C_{\ell - 1}(S), C_{\ell - 1}(S') \in (1 \pm \varepsilon) E_{s, \ell - 1}$. Let $f_{\ell}(S)$ be the number of labeled $\ell$-cliques in $S$ (and similarly for $f_{\ell}(S')$). Then, 
    $$\left|f_{\ell}(S) - f_{\ell}(S')\right| \leq  |S \triangle S'| \cdot \ell \cdot (1 + \varepsilon) E_{s, \ell - 1}.$$ 
\end{claim}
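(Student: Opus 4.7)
The plan is to view $S$ and $S'$ as ordered position-indexed tuples $(v_1,\ldots,v_s)$ and $(v'_1,\ldots,v'_s)$, so that $|S\triangle S'|$ denotes the Hamming distance $|D|$ where $D := \{i \in [s] : v_i \neq v'_i\}$. This is the natural reading in light of how the claim will be used in the bounded-difference inequality \Cref{cor:mcdiarmid-unconditional-expectation}. Recall (as in \Cref{claim:expectation-set-ell-cliques}) that $f_\ell(\cdot) = C_\ell(\cdot)$ counts ordered $\ell$-tuples of distinct positions whose corresponding vertices form an $\ell$-clique in $G$. First I would decompose $f_\ell(S) = A + B_S$ and $f_\ell(S') = A + B_{S'}$, where $A$ collects the $\ell$-cliques whose positions lie entirely in $[s]\setminus D$ (identical in the two tuples since the vertices at those positions agree) and $B_S,\,B_{S'}$ collect those that use at least one position from $D$. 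Then $|f_\ell(S) - f_\ell(S')| = |B_S - B_{S'}| \le \max(B_S, B_{S'})$ since both quantities are nonnegative.

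The heart of the argument is to bound $B_S$ and $B_{S'}$ separately. For $B_S$, by a union bound $B_S \le \sum_{i\in D} N_i(S)$, where $N_i(S)$ is the number of labeled $\ell$-cliques in $S$ that include position $i$. To bound $N_i(S)$, for each such $\ell$-clique I can choose in $\ell$ ways which slot of the ordered $\ell$-tuple is occupied by position $i$; the remaining $\ell-1$ slots then form an ordered $(\ell-1)$-tuple of distinct positions in $[s]\setminus\{i\}$ whose vertices form an $(\ell-1)$-clique all adjacent to $v_i$. Dropping the adjacency-to-$v_i$ restriction and the ``position $\neq i$'' restriction only increases the count, so $N_i(S) \le \ell\cdot C_{\ell-1}(S)$. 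Applying the hypothesis $C_{\ell-1}(S) \le (1+\varepsilon)E_{s,\ell-1}$ gives $B_S \le |D|\cdot \ell\cdot (1+\varepsilon)E_{s,\ell-1}$; the same bound holds for $B_{S'}$. Combined with the previous paragraph this yields the claim.

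There is no truly hard step here; the main thing to be careful about is the bookkeeping. First, one must use $\max(B_S,B_{S'})$ rather than $B_S + B_{S'}$ to avoid a spurious factor of $2$. Second, one must interpret $|S\triangle S'|$ as Hamming distance on the position-indexed tuples---the notion of ``difference'' consumed by \Cref{thm:mcdiarmid-whpdifferences} and \Cref{cor:mcdiarmid-unconditional-expectation}---and not as, say, multiset symmetric difference. Third, one should remember that in $C_\ell(\cdot)$ each unordered clique is counted with multiplicity $\ell!$ (over orderings of its positions), which is exactly what produces the ``$\ell$ choices of slot'' factor in the bound on $N_i(S)$.
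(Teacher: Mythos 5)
Your proof is correct and follows the same high-level strategy as the paper's (isolate the cliques that can possibly change, bound the contribution of each differing slot by $\ell \cdot C_{\ell-1}$, then use the $(1+\varepsilon)E_{s,\ell-1}$ hypothesis). The cleaner bookkeeping is worth flagging, though. The paper's proof partitions by multiset membership, writing $|f_\ell(S) - f_\ell(S')| \leq \sum_{u\in S\setminus S'} C_\ell(u,S) + \sum_{v\in S'\setminus S} C_\ell(v,S')$ and then replacing $|S\setminus S'| + |S'\setminus S|$ by $|S\triangle S'|$. Under the multiset-symmetric-difference reading this is an equality, but the quantity that \Cref{cor:mcdiarmid-unconditional-expectation} actually consumes is $\sum_{i:x_i\neq y_i} c_i$, i.e.\ Hamming distance $|D|$; and in general $|S\setminus S'| + |S'\setminus S|$ can be as large as $2|D|$, so the paper's chain of inequalities silently loses a factor of $2$ relative to what it claims. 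Your position-indexed decomposition $f_\ell(S)=A+B_S$, $f_\ell(S')=A+B_{S'}$ together with $|B_S - B_{S'}|\leq\max(B_S,B_{S'})$ (rather than $B_S+B_{S'}$) delivers exactly the Hamming-distance bound without that slack, and is the form actually needed downstream. Your accounting of the factor $\ell$ (choice of slot for the distinguished position, the remaining $\ell-1$ slots giving an $(\ell-1)$-clique) also correctly tracks the labeled-count convention $C_\ell(\cdot)$.
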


\begin{proof}
    Let $C_{\ell}(v, S)$ be the number of $\ell$-cliques in $S$ that involve vertex $v \in S$ (and define such a quantity for $S'$ similarly). The only $\ell$-cliques that contribute to the difference $\left|f_{\ell}(S) - f_{\ell}(S')\right| $ are those involving vertices $u \in S \setminus S'$ or vertices $v \in S' \setminus S$. For any vertex $u \in S$ (respectively $S'$), the number of $\ell$-cliques it is involved in is upper-bounded by $\ell \cdot f_{\ell - 1}(S)$, which is the number of $(\ell - 1)$-cliques in $S$, times $\ell$ (since we are counting labeled copies). This yields the following:
    $$\left|f_{\ell}(S) - f_{\ell}(S')\right| \leq \sum_{u \in S \setminus S'} C_{\ell}(u, S) + \sum_{v \in S' \setminus S} C_{\ell}(v, S') $$ $$\leq |S \setminus S'|\cdot  \ell \cdot f_{\ell -1}(S) + |S' \setminus S|\cdot \ell \cdot f_{\ell -1}(S') \leq |S \triangle S'| \cdot \ell \cdot (1 + \varepsilon) E_{s, \ell - 1}.\eqno\qedhere$$
\end{proof}

Before proving \Cref{lem:concentration}, we make a brief observation about the values of $\alpha_{\ell}$ for $1 \leq \ell \leq k$.

\begin{observation}\label{obs:alpha-decreases}
    For $p \leq 1/2$, for all $\ell \in \mathbb{N}$, $\alpha_{\ell + 1} \leq \alpha_{\ell}$.
\end{observation}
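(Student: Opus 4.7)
The observation is a direct computation from Definition \ref{def:alpha-ell}. The plan is to split into two cases based on whether the ratio $\alpha_{\ell+1}/\alpha_\ell$ crosses the boundary between the special definition of $\alpha_2$ and the recursive definition used for $\ell \geq 3$.

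First, for the transition from $\ell = 2$ to $\ell = 3$, I would plug in $\varepsilon_2 = \varepsilon \cdot (2/3)^{k-3}$ to get
\[
\alpha_3 \;=\; \frac{p^{4} \varepsilon^2 \, (2/3)^{2(k-3)}}{4096 \cdot 9},
\qquad
\alpha_2 \;=\; \frac{\varepsilon^2 p^2 \, (2/3)^{2(k-3)}}{128},
\]
so that $\alpha_3/\alpha_2 = p^2/288$. Since $p \leq 1/2$, this ratio is at most $1/(4 \cdot 288) < 1$, giving $\alpha_3 \leq \alpha_2$.

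Second, for $\ell \geq 3$, I would compute the ratio using the uniform recursive formula:
\[
\frac{\alpha_{\ell+1}}{\alpha_\ell}
\;=\;
\frac{p^{2\ell} \varepsilon_\ell^{\,2} / (4096 (\ell+1)^2)}{p^{2\ell-2} \varepsilon_{\ell-1}^{\,2} / (4096 \ell^2)}
\;=\;
p^{2} \cdot \frac{\varepsilon_\ell^{\,2}}{\varepsilon_{\ell-1}^{\,2}} \cdot \frac{\ell^2}{(\ell+1)^2}.
\]
By Definition \ref{def:alpha-ell}, $\varepsilon_\ell / \varepsilon_{\ell-1} = (2/3)^{k-\ell-1}/(2/3)^{k-\ell} = 3/2$, so $\varepsilon_\ell^{\,2}/\varepsilon_{\ell-1}^{\,2} = 9/4$. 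Using $p \leq 1/2$ and $\ell^2/(\ell+1)^2 < 1$, we obtain
\[
\frac{\alpha_{\ell+1}}{\alpha_\ell} \;\leq\; \frac{1}{4} \cdot \frac{9}{4} \cdot 1 \;=\; \frac{9}{16} \;<\; 1,
\]
which gives $\alpha_{\ell+1} \leq \alpha_\ell$.

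There is no real obstacle here --- the content is purely arithmetic manipulation of the explicit formulas in Definition \ref{def:alpha-ell}. The only mild care needed is to handle the boundary case $\ell = 2 \to 3$ separately, because $\alpha_2$ is defined by a special formula rather than the general recursion. Combining the two cases covers all $\ell \in \{2,\ldots,k-1\}$, which is the range over which $\alpha_\ell$ and $\alpha_{\ell+1}$ are both defined.
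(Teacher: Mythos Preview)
Your argument is correct: the case split between $\ell=2$ and $\ell\geq 3$ is exactly what is needed, and your arithmetic checks out. The paper states this observation without proof, so your computation simply fills in the routine verification the authors omitted.
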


\begin{proof}[Proof of \Cref{lem:concentration}]
    Suppose that $G$ is $(\gamma, \beta, s)$-exponentially robustly quasirandom with respect to $(\ell-1)$-clique counts. Let $X_1, \dots, X_s$ be the random variables corresponding to which vertices are selected to be in $S$ (i.e., $X_i$ is the $i$-th vertex added to $S$, with repetitions). Define $f_{\ell-1} : [n]^s \to \mathbb{R}$ to be the number of $(\ell-1)$-cliques in the multiset. Let $E_{s, \ell - 1}$ and $E_{s, \ell}$ be the expected number of $(\ell - 1)$-cliques and $\ell$-cliques, respectively, in a size-$s$ multiset for a graph drawn from $\gnp$, as defined in \Cref{eq:expected-ell-clique-copy-amount}. By \Cref{def:exp-qr} of exponential quasirandomness, for every $s$ we have that $$\Pr\left[ f_{\ell - 1} (X_1, X_2, \dots, X_s) \not\in (1\pm \gamma) E_{s, \ell - 1}\right] \leq 4^{\ell} \exp(-\beta s).$$ 

    We will apply McDiarmid's Inequality when differences are bounded with high probability (\Cref{cor:mcdiarmid-unconditional-expectation}) to the function $f_{\ell} : [n]^s \to \mathbb{R}$, which is the number of $\ell$-cliques in the multiset.
    
    Let us set up the relevant definitions for the inequality. First, let the domain be $[n]^s$, and define $\mathcal{Y} \subseteq [n]^s$ to be the multisets of the domain such that the number of $(\ell-1)$-cliques in $\mathcal{Y}$  is in  $(1\pm \gamma) E_{s, \ell - 1}$. Then, observe that $q := 1 - \mathbb{P}\left( (X_1, X_2, \dots, X_s) \in \mathcal{Y}\right) \leq 4^{\ell} \exp(-\beta s)$.

    We argue about bounded differences of $f_{\ell}$ on $\mathcal{Y}$. For each $i \in [s]$, define $c_i = \ell \cdot (1 + \gamma) E_{s, \ell - 1}$. By Claim \ref{claim:bounded-differences-whp-ell-cliques}, for all $S = (x_1, x_2, \dots, x_s) \in \mathcal{Y}$ and $S' = (y_1, y_2, \dots, y_s) \in \mathcal{Y}$, $$\left|f_{\ell}(S) - f_{\ell}(S')\right| \leq  |S \triangle S'| \cdot \ell \cdot (1 + \gamma) E_{s, \ell - 1} = \sum_{i : x_i \neq y_i} c_i,$$
    so the bounded difference property holds with this setting of $c_i$'s.

    We want to apply \Cref{cor:mcdiarmid-unconditional-expectation} with error bound $\delta = \gamma E_{s, \ell}/4$. To do so, we need to verify that $$q \leq \min\left\{\frac{\delta}{2 \max(f)}, \frac{\delta}{4 \mathbb{E}(f(X_1, X_2, \dots, X_t))} , \frac{1}{2}\right\}.$$ Using $\delta = \gamma E_{s, \ell}/4$, and since $\max(f) \leq \binom{s}{\ell} \ell!$ and $\mathbb{E}(f(X_1, X_2, \dots, X_t)) \leq \binom{s}{\ell} \ell! $, it suffices to show that:
    \begin{equation}\label{eq:q-bound}
       q \leq \min\left\{ \frac{\gamma E_{s, \ell}}{8 \binom{s}{\ell} \ell!}, \frac{\gamma E_{s, \ell}}{16 \binom{s}{\ell} \ell!}, \frac{1}{2}\right\} = \min\left\{ \frac{\gamma}{16} \cdot \binom{n}{\ell} p^{\binom{\ell}{2}} \cdot \frac{1}{n^{\ell}}, \frac{1}{2} \right\}. 
    \end{equation}
    For our setting of $q$, we now show this holds. First, observe that, from Observation \ref{obs:alpha-decreases}, $q \leq 4^{\ell} \exp\left(- \beta s\right) \leq 4^{\ell} \exp\left(- \alpha_{k - 1} s\right)$.
    Next, since
    $$s \geq \frac{1}{\alpha_{k-1}} \cdot \ln\left(\frac{(3k)^k}{p^{\binom{k+2}{2}} \cdot \varepsilon \cdot 2^k} \right),$$
    we have that $\exp\left(- \alpha_{k - 1} s\right) < p^{\binom{k+2}{2}} \cdot \varepsilon \cdot 2^k / (3k)^k$ and \Cref{eq:q-bound} holds.

    We are ready to apply \Cref{cor:mcdiarmid-unconditional-expectation}, which implies that:
    $$\mathbb{P}\left( \left|f_{\ell}(X_1, X_2, \dots, X_s) -  \mathbb{E}\left( f_{\ell}(X_1, X_2, \dots, X_s) \right)\right| \geq \gamma E_{s, \ell} /4 \right)$$
    \begin{equation}\label{eq:mcdiarmid-application}
       \leq 2q + 2 \exp\left( -\frac{2\max\left( 0, \gamma E_{s, \ell}/4 - q \sum_{i = 1}^s c_i\right)^2}{\sum_{i = 1}^s c_i^2}\right). 
    \end{equation}
    Let us analyze the second term. First, $q \sum_{i = 1}^s c_i \leq \exp(-\alpha_{\ell-1} s) \cdot s \cdot (1 + \gamma) E_{s, \ell - 1}$. This is significantly smaller than $\gamma E_{s, \ell}/8$, and so $2\max\left( 0, \gamma E_{s, \ell}/4 - q \sum_{i = 1}^s c_i\right)^2 \geq 2 \gamma^2 E_{s, \ell}^2 / 64 = \gamma^2 E_{s, \ell}^2 / 32$.

    Therefore, (also rounding $(1 + \gamma) E_{s, \ell}$ up to $2 E_{s, \ell}$ in the denominator of the expression) \Cref{eq:mcdiarmid-application} is upper-bounded by:
    $$
        \leq 2q + 2 \exp\left(- \frac{ \gamma^2 E_{s, \ell}^2}{32 \cdot s \cdot (2\ell)^2 E_{s, \ell - 1}^2}\right).
    $$
    By definition of $E_{s, \ell - 1}$ and $E_{s, \ell}$ (\Cref{eq:expected-ell-clique-copy-amount}), the expression above is at most:
    $$
    \leq 2q + 2  \exp\left(- \frac{ \gamma^2}{128 \cdot \ell^2 \cdot s} \cdot \left( (n - \ell + 1) \cdot p^{\ell - 1} \cdot \frac{s - \ell + 1}{n}\right)^2\right).
    $$
    For large enough $n$, and by definition of $s$ (\Cref{def:s-size}), we can use that $(n - \ell + 1)/n \geq 1/2$ and $(s - \ell + 1) \geq s/2$ to obtain that the expression is bounded above by:
    $$\leq 2q + 2 \exp\left(- \frac{\gamma^2}{128 \cdot \ell^2 \cdot s} \cdot \frac{p^{2 \ell - 2} s^2}{16} \right).$$
    By definition of $q$, this is:
    $$\leq 2 \cdot 4^{\ell} \exp(-\beta s) + 2\exp\left( -\gamma^2 s \cdot \frac{p^{2\ell - 2}}{2048 \cdot \ell^2}\right).\eqno \qedhere$$
    \end{proof}

    \begin{proof}[Proof of \Cref{cor:concentration-lemma-with-parameters} given \Cref{lem:concentration}]
        Consider the definitions of $\varepsilon_{\ell}, \alpha_{\ell}$ from \Cref{def:alpha-ell}, and the definition of $s_{\ell}$ from \Cref{def:s-ell-size}. Suppose $G$ is $(\varepsilon_{\ell - 1}, \alpha_{\ell - 1}, s_{\ell})$-exponentially robustly quasirandom with respect to $(\ell - 1)$-clique counts. Observe that $s_{\ell}$ satisfies $\exp(-\alpha_{\ell - 1} s_{\ell}) \leq p^{\binom{\ell + 2}{2}}$. The above tells us that
        $$\Pr_{S, |S| = s_{\ell}}\left( \left| C_{\ell}(S) - C_{\ell}(G) \cdot A_{s_{\ell}, \ell}\right| \geq \frac{\varepsilon_{\ell - 1}}{4}E_{s_{\ell}, \ell} \right) $$ $$\leq 2 \cdot 4^{\ell}\exp(-\alpha_{\ell - 1} s_{\ell}) + 2\exp\left( -\varepsilon_{\ell - 1}^2 s_{\ell} \cdot \frac{p^{2\ell - 2}}{2048 \cdot \ell^2}\right).$$
        By the definition of $\alpha_{\ell}$ (\Cref{def:alpha-ell}) and Observation \ref{obs:alpha-decreases}, this is:
        $$\leq 2 \cdot 4^{\ell} \exp(-\alpha_{\ell-1} s_{\ell}) + 2 \exp(-2 \alpha_{\ell} s_{\ell}) \leq (2 \cdot 4^{\ell} + 2) \exp(-2 \alpha_{\ell} s_{\ell})  \leq 4^{\ell + 1} \exp(-\alpha_{\ell} s_{\ell}). \eqno \qedhere$$
    \end{proof}

\subsubsection{Proof of Inductive Lemma given Concentration Lemma}\label{section:inductive-given-concentration}

In this section, we prove the inductive lemma (\Cref{lemma:inductive-cliques}) from the concentration lemma (\Cref{lem:concentration}). Towards doing so, we first present some corollaries of \Cref{lem:concentration}.

The following corollary proves that if $G$ is $(\varepsilon_{\ell - 1}, \alpha_{\ell-1}, s_{\ell})$-exponentially robustly quasirandom with respect to $(\ell - 1)$-clique counts, a size-$s_{\ell}$ multiset chosen uniformly at random will, with very high probability, reveal what the global number of $\ell$-cliques in $G$ is and whether $G$ is $(\varepsilon_{\ell}, \alpha_{\ell}, s_{\ell})$-exponentially robustly quasirandom with respect to $\ell$-clique counts.

\begin{corollary}[Corollary of \Cref{cor:concentration-lemma-with-parameters}]\label{cor:cor-of-concentration-clique}
    Consider $2 \leq \ell \leq k-1$. If $G$ is $(\varepsilon_{\ell-1}, \alpha_{\ell-1}, s_{\ell})$-exponentially robustly quasirandom with respect to $(\ell-1)$-clique counts, then the following hold:
    \begin{enumerate}
        \item Suppose that $G$ is additionally $(\varepsilon_{\ell-1}, \alpha_{\ell}, s_{\ell})$-exponentially robustly quasirandom with respect to $\ell$-clique counts. Sample a multiset $S$ by selecting $s_{\ell}$ vertices uniformly at random (with repetition). Then with probability at least $1 - 4^{\ell + 1}\exp(-\alpha_{\ell} s_{\ell})$ (over the choice of $S$), $C_{\ell}(S) \in (1 \pm \varepsilon_{\ell - 1}) E_{s_{\ell}, \ell}$.

        This holds for any choice of parameters $\varepsilon_{\ell-1}, \alpha_{\ell}, s_{\ell}$ and is not specific to their definitions in \Cref{def:alpha-ell} and \Cref{def:s-ell-size}.
        \item Sample a multiset $S$ by selecting $s_{\ell}$ vertices uniformly at random (with repetition). If $C_{\ell}(S) \in (1 \pm \varepsilon_{\ell - 1})E_{s_{\ell}, \ell}$, then with probability at least $1 - 4^{\ell + 1}\exp(-\alpha_{\ell} s_{\ell})$ (over the choice of $S$), $C_{\ell}(G) \in (1 \pm \frac{5}{4} \varepsilon_{\ell - 1})\binom{n}{\ell} \ell! \cdot p^{\binom{\ell}{2}}$ and, furthermore, $G$ is $(\varepsilon_{\ell}, \alpha_{\ell}, s_{\ell})$-exponentially robustly quasirandom with respect to $\ell$-clique counts.
    \end{enumerate}
\end{corollary}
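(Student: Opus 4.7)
The plan is to treat Part 1 as a direct unpacking of the definition of exponentially robust quasirandomness, and to derive Part 2 from two invocations of the concentration lemma glued together by triangle inequalities.

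For Part 1, note that the additional hypothesis that $G$ is $(\varepsilon_{\ell-1}, \alpha_\ell, s_\ell)$-exponentially robustly quasirandom with respect to $\ell$-cliques, specialized to the sample size $s = s_\ell$, is literally the statement $\Pr_S[C_\ell(S) \notin (1\pm\varepsilon_{\ell-1}) E_{s_\ell,\ell}] \leq 4^{\ell+1}\exp(-\alpha_\ell s_\ell)$. No further argument is needed, which is exactly why the corollary remarks that this subpart holds for any choice of parameters and does not rely on the specific definitions in \Cref{def:alpha-ell} and \Cref{def:s-ell-size}.

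For Part 2, I would first apply \Cref{cor:concentration-lemma-with-parameters} under the $(\ell-1)$-clique hypothesis to produce an event $\mathcal{E}$ on $S$ of probability at least $1 - 4^{\ell+1}\exp(-\alpha_\ell s_\ell)$ on which $|C_\ell(S) - C_\ell(G)\cdot A_{s_\ell,\ell}| \leq \tfrac{\varepsilon_{\ell-1}}{4} E_{s_\ell,\ell}$. On $\mathcal{E}$, if in addition $C_\ell(S) \in (1\pm\varepsilon_{\ell-1}) E_{s_\ell,\ell}$, the triangle inequality gives $|C_\ell(G)\cdot A_{s_\ell,\ell} - E_{s_\ell,\ell}| \leq \tfrac{5}{4}\varepsilon_{\ell-1} E_{s_\ell,\ell}$. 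Since $E_{s_\ell,\ell}/A_{s_\ell,\ell} = \binom{n}{\ell}\ell!\, p^{\binom{\ell}{2}}$ by \Cref{eq:expected-ell-clique-copy-amount}, dividing through yields the first claimed bound $C_\ell(G) \in (1\pm\tfrac{5}{4}\varepsilon_{\ell-1})\binom{n}{\ell}\ell!\, p^{\binom{\ell}{2}}$.

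For the second (quasirandomness) conclusion of Part 2, I observe that the bound just derived on $C_\ell(G)$ is now a deterministic property of $G$, and re-apply the more general \Cref{lem:concentration} at each sample size $s' \geq s_\ell$; the $(\varepsilon_{\ell-1}, \alpha_{\ell-1}, s_\ell)$-exponentially robust quasirandomness of $(\ell-1)$-cliques automatically supplies the hypothesis at every such $s'$. For each $s'$ this yields $\Pr_{S'}[|C_\ell(S') - C_\ell(G) A_{s',\ell}| \geq \tfrac{\varepsilon_{\ell-1}}{4} E_{s',\ell}] \leq 4^{\ell+1}\exp(-\alpha_\ell s')$, and combining with $|C_\ell(G) A_{s',\ell} - E_{s',\ell}| \leq \tfrac{5}{4}\varepsilon_{\ell-1} E_{s',\ell}$ via the triangle inequality produces $|C_\ell(S') - E_{s',\ell}| \leq \tfrac{3}{2}\varepsilon_{\ell-1} E_{s',\ell} = \varepsilon_\ell E_{s',\ell}$, where the last equality uses $\varepsilon_\ell = \tfrac{3}{2}\varepsilon_{\ell-1}$ from \Cref{def:alpha-ell}. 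This is exactly the $(\varepsilon_\ell, \alpha_\ell, s_\ell)$-exponentially robust quasirandomness condition for $\ell$-cliques.

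The only points needing attention are bookkeeping: first, the slackness arithmetic -- the $\tfrac14\varepsilon_{\ell-1}$ from the concentration lemma plus the $\varepsilon_{\ell-1}$ from the hypothesis gives $\tfrac54\varepsilon_{\ell-1}$, which together with a second $\tfrac14\varepsilon_{\ell-1}$ slack lands at precisely $\tfrac32\varepsilon_{\ell-1} = \varepsilon_\ell$, matching the geometric ratio baked into the definition of the $\varepsilon_\ell$'s; and second, the probability bound stated in Part 2 counts only the event $\mathcal{E}$ on the first sample $S$, while the subsequent quasirandomness assertion about $G$ is a derived property whose own probability bounds (over the independent $S'$) are furnished by \Cref{lem:concentration} at each $s'$ separately, without costing any extra failure probability in the Part 2 statement.
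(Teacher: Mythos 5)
Your proof is correct and follows essentially the same route as the paper: Part 1 is a direct unpacking of the definition, and Part 2 is two invocations of the concentration lemma glued together by triangle inequalities, first to pass from the observed $C_\ell(S)$ to a deterministic bound on $C_\ell(G)$, then (over a fresh $S'$) to convert that bound into the $(\varepsilon_\ell, \alpha_\ell, s_\ell)$-quasirandomness conclusion, with the $\tfrac14 + 1 + \tfrac14 = \tfrac32$ slack arithmetic matching $\varepsilon_\ell = \tfrac32 \varepsilon_{\ell-1}$. If anything you are slightly more explicit than the paper in pointing out that the second application must range over all $s' \geq s_\ell$ (not just $s_\ell$) to certify the full quasirandomness definition, and that this comes for free because the $(\ell-1)$-clique hypothesis holds at every such $s'$.
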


\begin{proof}[Proof of \Cref{cor:cor-of-concentration-clique}]
    Suppose that $G$ is $(\varepsilon_{\ell - 1}, \alpha_{\ell-1}, s_{\ell})$-exponentially robustly quasirandom with respect to $(\ell-1)$-clique counts. Then, by \Cref{cor:concentration-lemma-with-parameters}, $$\Pr_S\left( \left| C_{\ell}(S) - C_{\ell}(G) \cdot A_{s_{\ell}, \ell}\right| \geq \frac{\varepsilon_{\ell - 1}}{4}E_{s_{\ell}, \ell} \right) \leq 4^{\ell + 1} \exp(-\alpha_{\ell} s_{\ell}).$$

     First, let's address case (1) of the corollary. Suppose $G$ is $(\varepsilon_{\ell}, \alpha_{\ell}, s_{\ell})$-exponentially robustly quasirandom with respect to $\ell$-clique counts. By definition of exponential quasirandomness (\Cref{def:exp-qr}), all but a $4^{\ell + 1}\exp(-\alpha_{\ell} s_{\ell})$ fraction of size-$s$ multisets $S$ satisfy
    $$C_{\ell}(S) \in \left(1 \pm \varepsilon_{\ell-1} \right)E_{s_{\ell}, \ell}.$$

    Let's now move to case (2) of the corollary. \Cref{cor:concentration-lemma-with-parameters} implies that when $G$ is $(\varepsilon_{\ell - 1}, \alpha_{\ell-1}, s_{\ell})$-exponentially robustly quasirandom with respect to $(\ell-1)$-clique counts, with probability at least $1 - 4^{\ell + 1}\exp\left(-\alpha_{\ell} s_{\ell} \right)$, $$C_{\ell} (G) \cdot A_{s_{\ell}, \ell} \in C_{\ell}(S) \pm \frac{\varepsilon_{\ell-1}}{4}E_{s_{\ell}, \ell}.$$
    Equivalently, with probability at least $1 - 4^{\ell + 1}\exp\left(-\alpha_{\ell} s_{\ell} \right)$,
    \begin{equation}\label{eq:ell-clique-G-given-ell-clique-sample}
        C_{\ell} (G) \in C_{\ell}(S) \cdot  \frac{1}{A_{s_{\ell}, \ell}} \pm \frac{\varepsilon_{\ell - 1}}{4}\binom{n}{\ell} \ell! \cdot p^{\binom{\ell}{2}}.
    \end{equation}
    Additionally, by \Cref{cor:concentration-lemma-with-parameters}, under the high-probability event, 
    \begin{equation}\label{eq:ell-clique-sample-given-ell-clique-G}
        C_{\ell}(S) \in C_{\ell}(G) \cdot A_{s_{\ell}, \ell} \pm \frac{\varepsilon_{\ell - 1}}{4}E_{s_{\ell}, \ell}.
    \end{equation}
    
    Suppose that $C_{\ell}(S) \in (1 \pm \varepsilon_{\ell - 1})\binom{n}{\ell} \ell! \cdot p^{\binom{\ell}{2}} \cdot \binom{s_{\ell}}{\ell} \frac{\ell!}{n^{\ell}}$. Therefore, $C_{\ell}(S) \cdot \frac{n^{\ell}}{\binom{s_{\ell}}{\ell}} \in (1 \pm \varepsilon_{\ell - 1})\binom{n}{\ell} \ell! \cdot p^{\binom{\ell}{2}}$. By \Cref{eq:ell-clique-G-given-ell-clique-sample}, with probability at least $1 - 4^{\ell + 1}\exp\left(-\alpha_{\ell} s_{\ell} \right)$, $C_{\ell}(G) \in (1 \pm \frac{5}{4}\varepsilon_{\ell - 1})\binom{n}{\ell} \ell! \cdot p^{\binom{\ell}{2}}$. We now argue that the graph is exponentially robustly quasirandom with high probability. By \Cref{eq:ell-clique-sample-given-ell-clique-G}, all but a $4^{\ell + 1}\exp(-\alpha_{\ell} s_{\ell})$ fraction of size-$s_{\ell}$ multisets $S'$ satisfy
    $$C_{\ell}(S') \in \left(1 \pm \frac{3}{2}\varepsilon_{\ell - 1}\right)\binom{n}{\ell} \ell! \cdot p^{\binom{\ell}{2}} \cdot \binom{s_{\ell}}{\ell} \frac{\ell!}{n^{\ell}}.$$
    Since $\varepsilon_{\ell} = \frac{3}{2}\varepsilon_{\ell - 1}$, therefore $G$ is $(\varepsilon_{\ell}, \alpha_{\ell}, s_{\ell})$-exponentially robustly quasirandom with respect to $\ell$-clique counts.
\end{proof}

\begin{proof}[Proof of \Cref{lemma:inductive-cliques}]

Given \Cref{cor:cor-of-concentration-clique}, \Cref{lemma:inductive-cliques} follows from a simple argument, given below.

 Suppose that $G$ is $(\varepsilon_{\ell - 1}, \alpha_{\ell-1}, s_{\ell})$-exponentially robustly quasirandom with respect to $(\ell-1)$-clique counts. Define the tester $T_{\ell}$ that samples $s_{\ell}$ (as specified in \Cref{def:s-ell-size}) random vertices of $G$ (with replacement) and accepts if and only if the labeled $\ell$-clique count in the induced subgraph on the sampled vertices is within $(1 \pm \varepsilon_{\ell - 1}) E_{s_{\ell}, \ell}$.

 By the corollary (\Cref{cor:cor-of-concentration-clique}) of the concentration lemma (\Cref{lem:concentration}), the tester $T_{\ell}$ satisfies the following properties. 

\textbf{Completeness:} Suppose that $G$ is $(\varepsilon_{\ell-1}, \alpha_{\ell}, s_{\ell})$-exponentially robustly quasirandom with respect to $\ell$-cliques. Then, with probability at least $1 - 4^{\ell + 1}\exp(-\alpha_{\ell} s_{\ell})$, $C_{\ell}(S) \in (1 \pm \varepsilon_{\ell - 1}) E_{s_{\ell}, \ell}$.
 
\textbf{Soundness:} If the clique count on the sampled multiset is in $(1 \pm \varepsilon_{\ell - 1}) \binom{n}{\ell} \ell! \cdot p^{\binom{\ell}{2}} \cdot \binom{s_{\ell}}{\ell} \frac{\ell!}{n^{\ell}}$, then with probability at least $1 - 4^{\ell + 1} \exp(-\alpha_{\ell} s_{\ell})$, $G$ is $(\varepsilon_{\ell}, \alpha_{\ell}, s_{\ell})$-exponentially robustly quasirandom with high probability, so accepted graphs satisfy the quasirandomness condition.
\end{proof}

\subsubsection{Proof of Theorem \ref{thm:k-cliques}}

Finally, we use the reasoning of \Cref{lemma:inductive-cliques} to argue about the correctness of the algorithm and the proof of \Cref{thm:k-cliques}. Let $s = s_k$ as in \Cref{def:s-size}, and observe that $s \geq s_{\ell}$ for all $\ell \in [k]$. 

\begin{observation}\label{obs:exp-qr-for-vertices}
    Any input graph $G$ is $(0, \infty, s)$-exponentially robustly quasirandom with respect to the count of vertices (1-cliques), for any $s \in \mathbb{N}$.
\end{observation}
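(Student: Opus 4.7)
The plan is to unpack the definition of exponentially robust quasirandomness (\Cref{def:exp-qr}) in the case $\ell=1$ and observe that both the sampled count and the target count coincide deterministically, so the failure probability is exactly $0$, which trivially beats $4^{\ell+1}\exp(-\alpha s) = 4^2 \cdot \exp(-\infty \cdot s) = 0$.

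Concretely, I would specialize \Cref{def:exp-qr} to $\ell=1$, $\varepsilon=0$, $\alpha=\infty$. The target range becomes
\[
(1\pm 0)\binom{n}{1}p^{\binom{1}{2}}\cdot \binom{s}{1}\frac{1!}{n^{1}} \;=\; n \cdot \frac{s}{n} \;=\; s.
\]
On the other hand, $C_1(S)$ counts $1$-cliques (single vertices) in the multiset $S=\{X_1,\ldots,X_s\}$, and by the multiset convention in \Cref{def:C-ell-S} (a vertex appearing $c$ times contributes $c$ to $C_1$), we always have $C_1(\{X_1,\ldots,X_s\}) = s$, regardless of which vertices were sampled and regardless of the edge set of $G$.

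Thus for every outcome of $X_1,\ldots,X_s$, the quantity $C_1(\{X_1,\ldots,X_s\})$ lies (exactly) in the required range, so the probability of the ``bad event'' in \Cref{def:exp-qr} is $0$, which is $\leq 4^{\ell+1}\exp(-\alpha s)$ for $\alpha=\infty$ and any $s\geq 1$. Since this holds for every $s \geq s_0$ with $s_0$ arbitrary, $G$ is $(0,\infty,s)$-exponentially robustly quasirandom with respect to $1$-cliques. There is no real obstacle here; the only subtlety is being careful with the multiset convention so that $C_1(S)=s$ (rather than the number of \emph{distinct} vertices), which is exactly what matches the expected value $\binom{n}{1}\binom{s}{1}/n = s$.
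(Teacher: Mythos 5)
Your proof is correct and takes essentially the same approach as the paper's (one-line) argument: with the multiset convention, $C_1(\{X_1,\ldots,X_s\})=s$ deterministically, and the target range collapses to exactly $s$, so the failure probability is $0$. You simply spell out the arithmetic that the paper leaves implicit.
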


\begin{proof}
    Since we are counting vertices in the multiset with repetitions, the count of vertices in the multiset sampled is always $s$.
\end{proof}

We prove another corollary of \Cref{lem:concentration} which, as opposed to \Cref{cor:cor-of-concentration-clique}, separates the case of $C_k(G) \not \in (1 \pm \varepsilon) \binom{n}{k} p^{\binom{k}{2}}$ from the typical $C_k(G)$ range of $G \sim \gnp$ (\Cref{lem:Gnp-clique-concentration}). We remark that \Cref{cor:cor-of-concentration-clique} was suitable for testing quasirandomness for $\ell \leq k - 1$, since the algorithm needed to handle any possible value of $C_{\ell}(S)$ on sample $S$  for $\ell \leq k - 1$, and weaker quasirandomness parameters were necessary for the induction. However, once we are finally ready to test the number of $k$-cliques in the graph, given approximate guarantees about the number of $\ell$-cliques in the graph for $\ell \leq k - 1$, we are dealing with a promise problem where either $G \sim \gnp$ or $C_k(G) \not \in (1 \pm \varepsilon) \binom{n}{k} p^{\binom{k}{2}}$, which is reflected in the following corollary.

\begin{corollary}[Corollary of \Cref{lem:concentration}]\label{cor:another-cor-of-concentration}
    Suppose $G$ is $(\varepsilon, \alpha_{k-1}, s)$-exponentially robustly quasirandom with respect to $(k-1)$-clique counts, and $C_k(G) \in (1 \pm \frac{1}{4} \varepsilon) \binom{n}{k} k! \cdot p^{\binom{k}{2}}$. Sample a multiset $S$ by selecting $s$ vertices uniformly at random (with repetition). Then with probability at least $1 - 4^{k + 1} \exp(-\alpha_k s)$, $C_k(S) \in (1 \pm \frac{1}{2} \varepsilon) E_{s, k}$. If $C_k(G) \not \in (1 \pm \varepsilon) \binom{n}{k} k! \cdot p^{\binom{k}{2}}$, then with probability at least $1 - 4^{k + 1} \exp(-\alpha_k s)$, $C_k(S) \not \in (1 \pm \frac{3}{4} \varepsilon) E_{s, k}$.
\end{corollary}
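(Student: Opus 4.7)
The plan is to obtain this corollary as an essentially direct application of the Concentration Lemma (\Cref{lem:concentration}) to the parameters $\gamma = \varepsilon$, $\beta = \alpha_{k-1}$, $s = s_k$, and $\ell = k$, followed by two triangle-inequality arguments that translate a bound on $|C_k(S) - C_k(G)\cdot A_{s,k}|$ into the two multiplicative conclusions we want. Both parts share the same probabilistic event coming out of \Cref{lem:concentration}, so the only bookkeeping is arithmetic on the allowed slack $\varepsilon E_{s,k}/4$ versus the $\frac{1}{4}\varepsilon$ and $\varepsilon$ gaps in the hypotheses.

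First, I would verify the hypothesis $\exp(-\alpha_{k-1} s) \leq p^{\binom{k+2}{2}}$ needed by \Cref{lem:concentration}; this is immediate from the definition of $s_k$ in \Cref{def:s-ell-size} together with \Cref{obs:alpha-decreases}, exactly as in the derivation of \Cref{cor:concentration-lemma-with-parameters}. Invoking the Concentration Lemma with $(\gamma,\beta,\ell) = (\varepsilon,\alpha_{k-1},k)$ and simplifying the right-hand side as in the proof of \Cref{cor:concentration-lemma-with-parameters} then yields
\begin{equation*}
\Pr_S\!\left(\left|C_k(S) - C_k(G)\cdot A_{s,k}\right| \geq \tfrac{\varepsilon}{4}\, E_{s,k}\right) \leq 4^{k+1}\exp(-\alpha_k s).
\end{equation*}
Call this event $\mathcal{E}$; in what follows I work on the complement $\overline{\mathcal{E}}$, which has probability at least $1 - 4^{k+1}\exp(-\alpha_k s)$.

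For the first conclusion, I would use $E_{s,k} = \binom{n}{k}k!\, p^{\binom{k}{2}} \cdot A_{s,k}$ (from \Cref{eq:expected-ell-clique-copy-amount}) to rewrite the hypothesis $C_k(G) \in (1 \pm \tfrac{1}{4}\varepsilon) \binom{n}{k}k!\, p^{\binom{k}{2}}$ as $C_k(G)\cdot A_{s,k} \in (1 \pm \tfrac{1}{4}\varepsilon) E_{s,k}$. Combined with the concentration event $\overline{\mathcal{E}}$ by the triangle inequality, this gives $C_k(S) \in (1 \pm \tfrac{1}{2}\varepsilon) E_{s,k}$, as desired. For the second conclusion, the hypothesis $C_k(G) \notin (1 \pm \varepsilon) \binom{n}{k}k!\, p^{\binom{k}{2}}$ rewrites as $|C_k(G)\cdot A_{s,k} - E_{s,k}| > \varepsilon E_{s,k}$; combined with the $\tfrac{\varepsilon}{4} E_{s,k}$ bound from $\overline{\mathcal{E}}$, a second triangle inequality forces $|C_k(S) - E_{s,k}| > \tfrac{3}{4}\varepsilon E_{s,k}$, i.e.\ $C_k(S) \notin (1 \pm \tfrac{3}{4}\varepsilon) E_{s,k}$.

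The main ``obstacle'' here is not conceptual but notational: I need to make sure the slack $\tfrac{\varepsilon}{4} E_{s,k}$ supplied by \Cref{lem:concentration} is genuinely smaller than the gaps $\tfrac{1}{4}\varepsilon$ and $\varepsilon$ appearing in the two hypotheses (which is why the statement uses a tighter $\tfrac{1}{4}\varepsilon$ on the positive side and a looser $\varepsilon$ on the negative side), and that the failure probability matches the target $4^{k+1}\exp(-\alpha_k s)$. Both are straightforward once the concentration lemma is invoked with the indicated parameter choices; no new probabilistic ingredients beyond \Cref{lem:concentration} are required.
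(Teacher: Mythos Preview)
Your proposal is correct and matches the paper's approach exactly: the paper simply says the proof ``follows similarly to that of \Cref{cor:cor-of-concentration-clique},'' and your plan—invoke \Cref{lem:concentration} with $(\gamma,\beta,\ell)=(\varepsilon,\alpha_{k-1},k)$, simplify the failure probability to $4^{k+1}\exp(-\alpha_k s)$ as in \Cref{cor:concentration-lemma-with-parameters}, and then run two triangle-inequality arguments with the $\tfrac{\varepsilon}{4}E_{s,k}$ slack—is precisely that. Your write-up in fact spells out more detail than the paper does.
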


\begin{proof}
    The proof follows similarly to that of \Cref{cor:cor-of-concentration-clique}.
\end{proof}

For the completeness argument of \Cref{thm:k-cliques}, we need to argue that \textit{global} concentration of clique counts around the mean allows us to conclude about \textit{local} concentration -- that is, exponential quasirandomness. This is given in the following claim.

\begin{claim}\label{claim:completeness-concentration}
       Suppose that a graph $G$ satisfies the property that for all $\ell \in \{2, 3, \dots, k\}$ the number of labeled $\ell$-cliques in $G$ is in the range $(1 \pm \frac{\varepsilon}{4(k+1)}) \binom{n}{\ell} \ell! \cdot p^{\binom{\ell}{2}}$. Then, for size $s_{\ell}$ multisets, for all $\ell \in [k-1]$, $G$ is $\left((\ell + 1) \cdot \frac{\varepsilon}{4(k+1)} , (\ell + 1)^2 \cdot \frac{\varepsilon^2}{(k+1)^2} \cdot \frac{p^{2 \ell - 2}}{2048}, s_{\ell} \right)$-exponentially robustly quasirandom with respect to $\ell$-clique counts.
    \end{claim}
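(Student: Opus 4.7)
The plan is to prove the claim by induction on $\ell$ from $1$ to $k-1$, with the Concentration Lemma (\Cref{lem:concentration}) providing the transfer from $(\ell-1)$-clique quasirandomness to $\ell$-clique quasirandomness at each step. Writing $\gamma_\ell := (\ell+1)\varepsilon/(4(k+1))$ and $\beta_\ell := (\ell+1)^2\varepsilon^2 p^{2\ell-2}/(2048(k+1)^2)$, the target is that $G$ is $(\gamma_\ell,\beta_\ell,s_\ell)$-exponentially robustly quasirandom w.r.t.\ $\ell$-cliques. The base case $\ell=1$ is immediate from \Cref{obs:exp-qr-for-vertices}: the count of $1$-cliques in any size-$s$ multiset is deterministically $s$, which is strictly stronger than what is required.

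For the inductive step, fix $\ell\geq 2$ and assume $G$ is $(\gamma_{\ell-1},\beta_{\ell-1},s_{\ell-1})$-exponentially robustly quasirandom w.r.t.\ $(\ell-1)$-cliques, where plugging $\ell-1$ into the formulas yields $\gamma_{\ell-1}=\ell\varepsilon/(4(k+1))$ and $\beta_{\ell-1}=\ell^2\varepsilon^2p^{2\ell-4}/(2048(k+1)^2)$. Since $s_\ell\geq s_{\ell-1}$ under \Cref{def:s-ell-size}, the IH quasirandomness also holds at scale $s_\ell$. I would then apply \Cref{lem:concentration} with $(\gamma,\beta,s)=(\gamma_{\ell-1},\beta_{\ell-1},s_\ell)$ to conclude that, except with the stated exponentially small failure probability, $|C_\ell(S)-C_\ell(G)\cdot A_{s_\ell,\ell}|\leq (\gamma_{\ell-1}/4)\cdot\binom{n}{\ell}p^{\binom{\ell}{2}}\cdot\binom{s_\ell}{\ell}\ell!/n^\ell$. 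The hypothesis $C_\ell(G)\in(1\pm\varepsilon/(4(k+1)))\binom{n}{\ell}\ell!\,p^{\binom{\ell}{2}}$ lets me replace $C_\ell(G)\cdot A_{s_\ell,\ell}$ by the intended value $E_{s_\ell,\ell}$ at an additional additive cost of $\varepsilon/(4(k+1))\cdot E_{s_\ell,\ell}$. A triangle inequality then gives total deviation at most $(\ell/4+1)\cdot\varepsilon/(4(k+1))\cdot E_{s_\ell,\ell}\leq\gamma_\ell\cdot E_{s_\ell,\ell}$, exactly matching the quasirandomness condition at level $\ell$. Since the concentration bound only improves for $s\geq s_\ell$, the full quantifier in \Cref{def:exp-qr} is handled by the same argument.

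The main obstacle I expect is precise parameter bookkeeping: showing that the two exponentially small failure terms produced by \Cref{lem:concentration} together fit under the target bound $4^{\ell+1}\exp(-\beta_\ell s_\ell)$. This requires (i) that the IH's $\beta_{\ell-1}$ decays slower than $\beta_\ell$, which reduces to the mild condition $p\leq\ell/(\ell+1)$ and is free in our regime; and (ii) controlling the second exponential $2\exp(-\gamma_{\ell-1}^2 s_\ell p^{2\ell-2}/2048)$ against $\exp(-\beta_\ell s_\ell)$, where the generous $(3/2)^{2(k-\ell)}$ cushion built into $s_\ell$ in \Cref{def:s-ell-size} is designed precisely to absorb this gap. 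One also needs to verify the lemma's precondition $\exp(-\beta_{\ell-1}s_\ell)\leq p^{\binom{\ell+2}{2}}$, which follows from the explicit $\ln\bigl((6\ell)^\ell/(p^{\binom{\ell+2}{2}}\varepsilon)\bigr)$ factor in $s_\ell$. These verifications are routine but tedious, and they are what the tight-looking constants $1/4$, $1/2048$, and $1/(4(k+1))$ in the claim are calibrated to make work.
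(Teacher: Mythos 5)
Your proposal follows the same route as the paper's proof: induction on $\ell$ with base case $\ell = 1$ via \Cref{obs:exp-qr-for-vertices}, inductive step by one application of \Cref{lem:concentration} at scale $s_\ell$ with the IH parameters, and a triangle inequality combining the sample concentration with the global hypothesis $C_\ell(G)\in(1\pm\frac{\varepsilon}{4(k+1)})\binom{n}{\ell}\ell!\,p^{\binom{\ell}{2}}$ to pass from $C_\ell(G)\cdot A_{s_\ell,\ell}$ to $E_{s_\ell,\ell}$. You also correctly fix the IH rate to $\beta_{\ell-1}=\ell^2\varepsilon^2p^{2\ell-4}/(2048(k+1)^2)$, where the paper's inductive-hypothesis line has $p^{2\ell-2}$ by an apparent typo.

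The one place where your account goes astray is item (ii) of your bookkeeping. The definition of exponentially robust quasirandomness (\Cref{def:exp-qr}) quantifies over \emph{all} $s\geq s_\ell$, so the two exponential tails must be compared \emph{rate-to-rate}, not just at $s=s_\ell$. Enlarging $s_\ell$ via the $(3/2)^{2(k-\ell)}$ cushion does not help here: the condition still must hold as $s\to\infty$, where whichever exponential has the smaller rate eventually dominates. The cushion's actual role is elsewhere (ensuring $s_1\leq s_2\leq\cdots\leq s_k$ and absorbing the precondition $\exp(-\beta_{\ell-1}s_\ell)\leq p^{\binom{\ell+2}{2}}$). What the proof actually delivers from \Cref{lem:concentration} is the rate $\gamma_{\ell-1}^2 p^{2\ell-2}/2048 = \ell^2\varepsilon^2 p^{2\ell-2}/(16\cdot 2048(k+1)^2)$, once you observe $\beta_{\ell-1}\geq\gamma_{\ell-1}^2p^{2\ell-2}/2048$ (which holds trivially since $p^{2\ell-4}\geq p^{2\ell-2}$). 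This rate is smaller than the claim's stated $\beta_\ell=(\ell+1)^2\varepsilon^2p^{2\ell-2}/(2048(k+1)^2)$ by a factor of $16(\ell+1)^2/\ell^2$; the paper's own proof ends by implicitly asserting the stated $\beta_\ell$ while having only established the weaker rate. This is a harmless constant-factor slip (only the order of $\beta_\ell$ matters downstream), but it is a rate mismatch, not something the $s_\ell$ slack can absorb, so you should drop that explanation and instead compare decay rates directly as the paper does.
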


\begin{proof}
    We prove this statement by induction on $\ell$, using a repeated application of \Cref{lem:concentration}.

    \textit{Base case}: 
    Consider the case of $\ell = 1$. By Observation \ref{obs:exp-qr-for-vertices}, any graph $G$ is $\left(0, \infty, s_1 \right)$-exponentially robustly quasirandom with respect to the vertex count.

    \textit{Inductive hypothesis}: Suppose that $G$ is $(\gamma, \beta, s_{\ell - 1})$-exponentially robustly quasirandom with respect to $(\ell - 1)$-clique counts, for $\gamma = \ell \cdot \frac{\varepsilon}{4(k+1)}$ and $\beta = \ell^2 \cdot \frac{\varepsilon^2}{(k+1)^2} \cdot \frac{p^{2 \ell - 2}}{2048}$.

    \textit{Inductive argument}: Assume the inductive hypothesis. By \Cref{lem:concentration}, since $G$ is also $(\gamma, \beta, s_{\ell})$-exponentially robustly quasirandom with respect to $(\ell - 1)$-clique counts,
    $$\Pr_S\left( \left| C_{\ell}(S) - \left(1\pm \frac{\varepsilon}{4(k+1)}\right) E_{s_{\ell}, \ell} \right| \geq \frac{\gamma}{4}E_{s_{\ell}, \ell}\right)  $$ $$\leq 2 \cdot 4^{\ell}\exp(-\beta s_{\ell}) + 2\exp\left( -\gamma^2 s_{\ell} \cdot \frac{p^{2\ell - 2}}{2048}\right).$$
    Since $\beta \geq \frac{\gamma^2 p^{2 \ell - 2}}{2048}$, this is
    $$\leq 4^{\ell + 1} \exp\left( -\gamma^2 s_{\ell} \cdot \frac{p^{2\ell - 2}}{2048}\right).$$
    Therefore, also using that $\frac{\varepsilon}{4(k+1)} + \frac{\gamma}{4} \leq (\ell + 1) \cdot \frac{\varepsilon}{4(k+1)}$, we find that $G$ is \\ $\left((\ell + 1) \cdot \frac{\varepsilon}{4(k+1)} , (\ell + 1)^2 \cdot \frac{\varepsilon^2}{(k+1)^2} \cdot \frac{p^{2 \ell - 2}}{2048}, s_{\ell}\right)$-exponentially robustly quasirandom with respect to $\ell$-clique counts.
\end{proof}

\begin{proof}[Proof of \Cref{thm:k-cliques} -- query complexity upper bound]
      
    \textbf{Completeness.} By \Cref{lem:Gnp-clique-concentration}, with high probability, for $G \sim \gnp$ with $p = \omega\left(\varepsilon^{-2} n^{-2/(k-1)} \right)$, for all $\ell \leq k$ the number of labeled $\ell$-cliques in $G$ is in the range $(1 \pm \frac{\varepsilon}{4(k+1)}) \binom{n}{\ell} \ell! \cdot p^{\binom{\ell}{2}}$. In the following, suppose that the high-probability event that the clique counts are all in these typical ranges holds. Then, by Claim \ref{claim:completeness-concentration}, for all $\ell \in [k]$, $G$ is $\left(\delta_{\ell}, \nu_{\ell}, s_{\ell} \right)$-exponentially robustly quasirandom with respect to $\ell$-clique counts, for $$\delta_{\ell} = (\ell + 1) \cdot \frac{\varepsilon}{4(k+1)}, ~~ \nu_{\ell} = (\ell + 1)^2 \cdot \frac{\varepsilon^2}{(k+1)^2} \cdot \frac{p^{2 \ell - 2}}{2048}.$$

    By \Cref{cor:cor-of-concentration-clique}, from this exponential quasirandomness, we can conclude that, with probability at least $1 - k \cdot 4^{k} \exp\left( -\nu_{k} s_{*}\right) = 1 - o(1)$, for all $\ell \in [k]$, the uniformly sampled multiset $S$ satisfies $C_{\ell}(S) \in (1 \pm \frac{\varepsilon}{4}) E_{s, \ell}$. Therefore, the algorithm accepts $G \sim \gnp$ with high probability.
    
    \textbf{Soundness.} 
    We begin by noting that, by Observation \ref{obs:exp-qr-for-vertices}, every input graph $G$ is $(0, \infty, s_1)$-exponentially robustly quasirandom with respect to the count of vertices. Next, suppose that, for some $\ell \in \{1, 2, \dots, k-1\}$, the input graph $G$ is $(\varepsilon_{\ell'}, \alpha_{\ell'}, s_{\ell'})$-exponentially robustly quasirandom with respect to the count of $\ell'$-cliques for all $\ell' < \ell$ and is \textit{not} $(\varepsilon_{\ell}, \alpha_{\ell}, s_{\ell})$-exponentially robustly quasirandom with respect to the count of $\ell$-cliques. Then, by the inductive lemma (\Cref{lemma:inductive-cliques}), with probability at least $1 - 4^{\ell + 1}\exp(-\alpha_{\ell}s_{\ell}) \geq 9/10$, the algorithm rejects on $G$ (on the basis of the count of $\ell$-clique counts in the sampled $S$ being too small or large).

    Suppose that the input graph is $(\varepsilon_{k-1}, \alpha_{k-1}, s_{*})$-exponentially robustly quasirandom with respect to the count of $(k-1)$-cliques. Equivalently, since $\varepsilon_{k-1} = \varepsilon$ (the epsilon as input to our quality control problem), we can assume $G$ is $(\varepsilon, \alpha_{k-1}, s)$-exponentially robustly quasirandom with respect to the count of $(k-1)$-cliques. By \Cref{cor:another-cor-of-concentration}, with probability at least $1 - 4^{k + 1}\exp(-\alpha_k s)$, $C_k(S) \not \in (1 \pm \frac{3}{4} \varepsilon) E_{s, k}$ and the algorithm outputs reject.

    Therefore, the algorithm rejects each $G$ with $C_k(G) \not \in (1 \pm \varepsilon) \binom{n}{k} p^{\binom{k}{2}}$ with high probability.

    \textbf{Query complexity:} The algorithm solely queries all adjacencies between vertices in a $S$ of size $s_{*}$ (as defined in \Cref{def:s-size}). Therefore, the algorithm makes $\binom{s_{*}}{2} = \frac{1}{p^{O(k)}} \cdot \frac{1}{\varepsilon^4}$ adjacency matrix queries.

    \textbf{Runtime:} The algorithm stated counts $\ell$-cliques, for all $\ell \leq k$, for the induced subgraph on a set $S$ of $s_{*}$ vertices. For worst-case input graphs, this can take up to $O(s_{*}^k) = \frac{1}{p^{O(k^2)}} \cdot \frac{1}{\varepsilon^{2k}}$ time.
\end{proof}

\subsection{Query complexity lower bound}\label{sec:cliques-lowerbound}

We now prove \Cref{thm:k-cliques} Part (3), more formally given as follows.

\begin{theorem}[\Cref{thm:k-cliques} Part (3)]\label{thm:qc-lowerbound-cliques} Consider parameters $n, p$ such that $p \geq \omega\left(n^{-2/(k-1)} \right)$. Any $(\gnp, \rho_k)$-quality control algorithm with query access to the adjacency matrix must make at least $\omega\left(p^{-(k-1)/2}\right)$ queries.
\end{theorem}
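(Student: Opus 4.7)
The plan is to apply Yao's minimax principle with two distributions on graphs. The YES distribution is $\gnp$ itself; the NO distribution samples $G_0\sim\gnp$ together with a uniformly random subset $V_c\subseteq[n]$ of size $\ell:=\lceil c\, n p^{(k-1)/2}\rceil$ for a constant $c=c(\varepsilon)$, and outputs the graph obtained from $G_0$ by inserting every edge in $\binom{V_c}{2}$ (so that $V_c$ spans a clique). First I would verify that with probability at least $9/10$ the NO instance satisfies $|\rho_k(G)-1|>\varepsilon$: by \Cref{lem:Gnp-clique-concentration} the contribution from $G_0$ alone lies in $(1\pm\varepsilon/100)\mu_k$ with high probability, while the planted clique by itself contributes at least $\ell(\ell-1)\cdots(\ell-k+1)=\Theta(c^k\,\mu_k)$ labeled $k$-cliques (and partial cross-cliques only add more), which for $c$ large enough in $\varepsilon$ pushes $C_k(G)$ above $(1+\varepsilon)\mu_k$.

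The heart of the proof is an adaptive coupling. On one probability space sample $G_0$ and $V_c$ once: YES receives $G_0$, NO receives the $V_c$-augmented graph. The two graphs agree on every pair outside $\binom{V_c}{2}$, so for any adaptive sequence of adjacency matrix queries $(u_1,v_1),\ldots,(u_q,v_q)$ the two transcripts coincide unless some $(u_i,v_i)\in\binom{V_c}{2}$. Conditional on the transcript so far avoiding $\binom{V_c}{2}$, the posterior on $V_c$ is uniform over the size-$\ell$ subsets of $[n]$ that do not contain any of the previously queried pairs --- this is because the edges of $G_0$ outside $\binom{V_c}{2}$ are iid $\mathrm{Bernoulli}(p)$ and so carry no information about $V_c$. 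A short computation then bounds the probability that the $i$-th adaptively chosen query hits $\binom{V_c}{2}$ by $(1+o(1))\binom{\ell}{2}/\binom{n}{2}$, and summing over $i\le q$ gives
\[
\Pr\bigl[\exists i\le q:(u_i,v_i)\in\tbinom{V_c}{2}\bigr]\ \le\ (1+o(1))\,q\,\ell^2/n^2.
\]

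To finish, any $(\gnp,\rho_k)$-quality control algorithm accepts YES with probability $\ge 2/3$ and rejects every $\varepsilon$-bad input (a set of NO mass $\ge 9/10$) with probability $\ge 2/3$, so the total-variation distance between the two transcript distributions is bounded below by a positive constant. Combined with the previous coupling bound this forces $q\,\ell^2/n^2=\Omega(1)$, yielding $q=\Omega(n^2/\ell^2)=\Omega(p^{-(k-1)})$, which in particular implies the claimed $\Omega(p^{-(k-1)/2})$ bound. The main technical obstacle will be making the adaptive posterior argument rigorous --- formally showing that after observing $i-1$ transcripts outside $\binom{V_c}{2}$ the posterior on $V_c$ remains essentially uniform, so that the per-query hit probability stays at $(1+o(1))\ell^2/n^2$ rather than growing with $i$; this hinges on the fact that $G_0$-edges outside $\binom{V_c}{2}$ are distributed identically in YES and NO and hence reveal no information about $V_c$.
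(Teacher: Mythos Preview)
Your approach is correct and in fact yields a stronger lower bound, $\Omega(p^{-(k-1)})$, than the paper's stated $\Omega(p^{-(k-1)/2})$. The paper obtains the clique bound as the $H=K_k$ case of its general motif lower bound (\Cref{thm:query-lb}): there the NO distribution plants a $\gnhalf$ region $S$ of size $\ell$ and rewires \emph{all} edges incident to $S$; consequently a single adjacency-matrix query touching a planted vertex can distinguish YES from NO, which forces the vertex-based indistinguishability bound $2q\ell/(n-\ell)$ (\Cref{lem:dy-dn-indist}) and hence $q=\Omega(n/\ell)$. Your construction instead plants a \emph{clique}, modifying only edges with both endpoints in $V_c$; under adjacency-matrix access a query can distinguish only when both endpoints land in $V_c$, giving the sharper pair-based bound $O(q\ell^2/n^2)$ and thus $q=\Omega(n^2/\ell^2)$. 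Because the planted clique deterministically contributes $\binom{\ell}{k}k!\approx c^k\mu_k$ labeled $k$-cliques, you also avoid the second-moment concentration analysis for the NO distribution that the paper needs (\Cref{lem:param-sep} via \Cref{thm:sbm-H-concentration}). The paper's $\gnhalf$ construction is what is required for general induced motifs (a planted clique contains no induced copy of $H$ when $H$ has a non-edge), but for $H=K_k$ your specialization is both simpler and quadratically tighter.

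One simplification: the adaptive-posterior argument you flag as ``the main technical obstacle'' is unnecessary. On the coupled space, run the deterministic algorithm on the YES input $G_0$; its sequence of queried pairs is a function of $G_0$ alone and is therefore independent of $V_c$. If none of these $q$ pairs lie in $\binom{V_c}{2}$, then by induction on the query index the NO run produces the identical transcript (each answer equals the $G_0$-answer, so the next query coincides). Hence the event ``some NO-run query hits $\binom{V_c}{2}$'' coincides with the event ``some YES-run query hits $\binom{V_c}{2}$'', and the latter has probability at most $q\cdot\ell(\ell-1)/(n(n-1))$ by a union bound over $q$ pairs (fixed once $G_0$ is fixed) against an independent uniform $V_c$. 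No posterior tracking is needed.
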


This theorem is a direct implication of a result we prove in the more general setting of lower bounds for $(\gnp, \rho_{H})$-quality control for any motif $H$. We defer the proof to \Cref{sec:motifs-lowerbound}.

\subsection{Algorithmic complexity}\label{sec:algorithmic-complexity}

We have proven that a query complexity of $\frac{1}{p^{O(k)}}$ suffices for the quality control problem for comparing the $k$-clique count to \erdosrenyi graphs. In this section, we enhance the algorithm from \Cref{sec:qc-upperbound} by leveraging the notion of \textit{graph jumbledness} \cite{thomason1987pseudo}. This allows us to obtain an algorithm with query complexity \textit{and runtime} of $\frac{1}{p^{O(k)}}$, proving \Cref{thm:k-cliques} Part (2).

\begin{theorem}[Theorem for $k$-cliques]\label{thm:k-cliques-runtime}
    For constant error parameter $\varepsilon$, and \\ $p = \Omega\left(\log (n) / n\right)^{1/(4k-1)}$, there is an algorithm for $(\gnp, \rho_k)$-quality control that makes $\frac{1}{p^{O(k)}}$ adjacency matrix queries to the input graph and has a runtime of $\frac{1}{p^{O(k)}}$.
\end{theorem}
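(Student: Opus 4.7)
The plan is to take the algorithm \textsc{Clique-Quality} from \Cref{sec:qc-upperbound} and insert a jumbledness filter, exploiting the composability of quality control: we can freely reject on any event that occurs with small probability under $\gnp$ without hurting completeness, and rejecting more inputs can only help soundness. Concretely, I would modify the algorithm as follows. Sample the multiset $S$ of size $s = \max\bigl(s_\ast,\, 1200\,C^4/p^{5k}\bigr) = 1/p^{O(k)}$, where $C = C(\varepsilon,k)$ is the constant from \Cref{prop:jumbled-efficient-approximate-subgraph-counting} applied to $H$ being a $k$-clique (so $\sigma(H) = k$). Query all $\binom{s}{2}$ adjacencies inside $S$ to form the induced subgraph $G_S$. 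First check, in time $O(s^3)$, that every vertex of $G_S$ has degree in $(s-1)(p \pm \delta)$ and every pair has $(s-2)(p^2 \pm \delta)$ common neighbors, for $\delta = p^{2k}/(12 C^2)$; if not, reject. Otherwise, use \Cref{prop:jumbled-efficient-approximate-subgraph-counting} to compute, for each $\ell \in \{2,\ldots,k\}$, an approximate count of $\ell$-cliques in $G_S$ in time $C' s^c$ for a universal constant $c$, and apply the original accept/reject rule of \textsc{Clique-Quality} on these counts.

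For completeness, when $G \sim \gnp$ the hypothesis $p = \Omega(\log n / n)^{1/(4k-1)}$ and the choice of $s$ put us exactly in the regime of \Cref{cor:jumbled-subgraphs}, so the random induced subgraph $G_S$ satisfies the degree and codegree conditions (and therefore is $(p,\beta)$-jumbled with $\beta \leq p^k n / C$ via \Cref{prop:jumbled-from-deg-codeg}) with probability at least $4/5$. Combined with the completeness analysis of \textsc{Clique-Quality}, $G \sim \gnp$ is accepted with the required constant probability. For soundness, any $G$ with $|\rho_k(G) - 1| > \varepsilon$ that passes the jumbledness filter has a jumbled $G_S$, so the approximate counts produced by \Cref{prop:jumbled-efficient-approximate-subgraph-counting} are within an additive $\varepsilon' p^{\binom{\ell}{2}} \binom{s}{\ell}$ of the true labeled $\ell$-clique counts in $G_S$; choosing the internal $\varepsilon'$ of that proposition to be a small enough constant fraction of $\varepsilon$ lets us replace the true $\ell$-clique counts in \textsc{Clique-Quality} by these approximations without changing the inductive and concentration lemma arguments. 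Hence the soundness proof from \Cref{sec:qc-upperbound} carries over unchanged, and any $G$ that slips past the filter is rejected with high probability by the clique-count check.

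The query and runtime bounds are then immediate: the only queries are the $\binom{s}{2} = 1/p^{O(k)}$ adjacencies inside $S$, the jumbledness check costs $O(s^3) = 1/p^{O(k)}$ time, and each invocation of \Cref{prop:jumbled-efficient-approximate-subgraph-counting} costs $C(\varepsilon,k)\cdot s^c = 1/p^{O(k)}$ time (the constant $C(\varepsilon,k) = \exp(\mathrm{poly}(k/\varepsilon))$ is absorbed into the $O(\cdot)$ for constant $\varepsilon$), so the total runtime is $1/p^{O(k)}$. The main obstacle is bookkeeping the parameters: we must choose $\delta$, the internal $\varepsilon'$ of \Cref{prop:jumbled-efficient-approximate-subgraph-counting}, and $s$ consistently so that (i) \Cref{prop:jumbled-from-deg-codeg} yields $\beta \leq p^k n / C$ for the $C$ demanded by the counting proposition, (ii) the additive counting error is dominated by the slack already present in \Cref{lemma:inductive-cliques}, and (iii) $s$ is simultaneously large enough for the subgraph of a $\gnp$ graph to satisfy the degree/codegree conditions with high probability. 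All three reduce to $s \geq 1/p^{\Theta(k)}$, which is compatible with the original sample size $s_\ast$, so the only real work is verifying that the error propagation through the inductive and concentration lemmas still goes through with the approximated counts in place of the exact ones.
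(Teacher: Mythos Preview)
Your proposal matches the paper's approach essentially exactly: sample $S$, run a degree/codegree-based jumbledness filter on $G_S$, and replace exact clique counting by the approximate algorithm of \Cref{prop:jumbled-efficient-approximate-subgraph-counting}, relying on \Cref{cor:jumbled-subgraphs} for completeness and on the inductive/concentration machinery for soundness. The paper carries out the ``only real work'' you flag by redefining the parameters $\varepsilon_\ell,\alpha_\ell$ with a slightly looser geometric ratio ($4/7$ in place of $2/3$) to absorb the $\pm\tfrac{1}{8}\zeta_{\ell-1}$ approximation error at each inductive step, and by proving direct analogues of \Cref{cor:cor-of-concentration-clique} and \Cref{cor:another-cor-of-concentration} with approximate counts $\widehat C_\ell(S)$ in place of $C_\ell(S)$ (one minor slip: the jumbledness bound should read $\beta \le p^k s/C$, not $p^k n/C$, since the check is on $G_S$).
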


We remark that the above is stated for constant $\varepsilon$ and $k$. The runtime with the explicit dependence on $\varepsilon, k$ is $\exp(\mathrm{poly}(k/\varepsilon)) \cdot p^{-O(k)}$. The bound on $p$ with the explicit dependence on $\varepsilon, k$ is $p = \Omega\left( \exp(\mathrm{poly}(k/\varepsilon)) \log (n) / n\right)^{1/(4k-1)}$.

We begin by defining parameters $\zeta_{\ell}, \eta_{\ell}$, which play the role that $\varepsilon_{\ell}, \alpha_{\ell}$ did in \Cref{sec:qc-upperbound}. Our efficient algorithm relies on approximate counting of the number of copies of $H$; as we move to approximate counting, some inductive dependencies between the parameters from $\ell$ to $\ell + 1$ change, which is why we need to redefine these.

\begin{definition}\label{def:rho-ell}
 For $2 \leq \ell \leq k-1$, define
    $$\zeta_{\ell} = \varepsilon \cdot \left(\frac{4}{7} \right)^{k - \ell - 1}.$$
    Define $\eta_2 = \varepsilon^2 \cdot \left(\frac{4}{7} \right)^{2(k-3)}/128$, and for $\ell \in \{3, 4, \dots, k\}$ define
    $$\eta_{\ell} = \frac{p^{2 \ell - 2} \cdot \zeta_{\ell-1}^2}{4096}.$$
\end{definition}

We also redefine $s$, which will still be $1/p^{O(k)}$ for a larger hidden constant in the $O(\cdot)$.

\begin{definition}\label{def:s-size-algo}

    Let $C$ be a constant $C = C(\varepsilon, k) = \exp(\mathrm{poly}(k/\varepsilon))$. Define $s_{\ell}$ to be:
    $$s_{\ell} =\frac{1200 C^2 \cdot \ell^2 \cdot 8192}{p^{5\ell} \cdot \varepsilon^2} \cdot \left( \frac{7}{4}\right)^{2(k -  \ell)} \cdot \ln\left(\frac{(6\ell)^{\ell}}{p^{\binom{\ell+2}{2}} \cdot \varepsilon} \right).$$

    Let $s_{*} := s_k$.
\end{definition}

Recall that, as defined in \Cref{eq:expected-ell-clique-copy-amount}, $E_{s, \ell}$ is the expected number of (distinct labeled induced) $\ell$-cliques in a subgraph on a multiset of $s$ vertices sampled with replacement, when the underlying graph is drawn from $\gnp$. 

\paragraph{Algorithm \textsc{Clique-Quality-Efficient}:} On inputs $G, n, p, \gnp, \varepsilon$, and $k$
\begin{enumerate}
     \item Sample $s_{*}$ vertices, each uniformly at random (with repetition). Let $S$ be the corresponding set of vertices, with induced graph $G_S$. 
     \item Perform adjacency matrix queries on all pairs in $S$ to find all edges. 
     \item Test that $G_S$ is 
     $(p, \beta)$-jumbled for $\beta \leq \frac{p^k s_{*}}{C}$ for a constant (with respect to $n$) $C = \exp\left( \text{poly}(k/\varepsilon)\right)$
     by checking the following: 
     Let $\delta = O(p^{2k}/C^2)$. For every $u \in S$, check that the degree of $u$ in $S$ is $(s_{*}-1)(p \pm \delta)$. For every $(u, v) \in S \times S$, check that the number of common neighbors of $(u, v)$ is $(s_{*} - 2)(p^2 \pm \delta)$. If any pair in $S\times S$ does not satisfy this, reject. 
     \item Else, approximate the number of $\ell$-cliques in $S$ to within a $\zeta_{\ell - 1}E_{s_{*}, \ell}/8$ additive factor, for all $\ell \leq k$, in time $O(s_{*}^c)$ for a constant $c$, using the algorithm referenced in \Cref{prop:jumbled-efficient-approximate-subgraph-counting} (applied with error parameter $\varepsilon/k^k$). If, for any $\ell$, the approximation of the number of $\ell$-cliques in $G_S$ is not in $(1 \pm \frac{5}{8}\varepsilon) E_{s_{*}, \ell}$, reject. Otherwise, accept.
\end{enumerate}

Observe that, by \Cref{prop:jumbled-efficient-approximate-subgraph-counting}, the runtime of this algorithm (from Step 4) with the explicit dependence on $\varepsilon$ is $\exp(\mathrm{poly}(k/\varepsilon)) \cdot p^{-O(k)}$. Also note that this algorithm counts the number of $k$-cliques in the \textit{graph on the vertices in $S$}, without repetitions for multiple appearances of the same vertex in $S$. We will argue in the proofs below that, since the probability of a repeated vertex is $o(1)$, this does not impact the error probability.

We now prove \Cref{thm:k-cliques-runtime}. From \Cref{sec:qc-upperbound}, we know that it suffices to reduce the global $k$-clique counting problem to the local problem of counting $k'$-cliques, for $k' \leq k$, over subgraphs with $s_{*}$ (as in \Cref{def:s-size-algo}) vertices. We need to handle the inclusion of the test for jumbledness in the analysis now, as well as the shift from exact to approximate clique counting in the subgraph on $s$ vertices. 

We begin by modifying Corollaries \ref{cor:cor-of-concentration-clique} and \ref{cor:another-cor-of-concentration} to account for the latter of these points: the shift from exact to approximate counting. 

\begin{corollary}[Modification of \Cref{cor:cor-of-concentration-clique}]\label{cor:cor-of-concentration-approximate-counting}
    Consider $2 \leq \ell \leq k-1$. If $G$ is $(\zeta_{\ell - 1}, \eta_{\ell-1}, s_{\ell})$-exponentially robustly quasirandom with respect to $(\ell-1)$-clique counts, then the following hold:
    \begin{enumerate}
        \item Suppose that $G$ is additionally $(\zeta_{\ell - 1}, \eta_{\ell}, s_{\ell})$-exponentially robustly quasirandom with respect to $\ell$-clique counts. Sample a multiset $S$ by selecting $s=s_{\ell}$ vertices uniformly at random (with repetition). Consider a $(1 \pm \frac{1}{8} \zeta_{\ell - 1})$-approximation $\widehat{C}_{\ell}(S)$ of the number of $\ell$-cliques in $S$. Then with probability at least $1 - 4^{\ell + 1} \exp(-\eta_{\ell} s_{\ell})$ (over the choice of $S$), $\widehat{C}_{\ell}(S) \in (1 \pm \frac{9}{8}\zeta_{\ell - 1}) E_{s_{\ell}, \ell}$.
        \item Sample a multiset $S$ by selecting vertices uniformly at random (with repetition). Consider a $(1 \pm \frac{1}{8} \zeta_{\ell - 1})$-approximation $\widehat{C}_{\ell}(S)$ of the number of $\ell$-cliques in $S$. If $\widehat{C}_{\ell}(S) \in (1 \pm \frac{9}{8}\zeta_{\ell - 1})E_{s_{\ell}, \ell}$, then with probability at least $1 - 4^{\ell + 1}\exp(-\eta_{\ell} s_{\ell})$ (over the choice of $S$), $G$ is 
        $(\zeta_{\ell}, \eta_{\ell}, s_{\ell})$-exponentially robustly quasirandom with respect to $\ell$-clique counts.
    \end{enumerate}
\end{corollary}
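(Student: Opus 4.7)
The plan is to follow the structure of the proof of \Cref{cor:cor-of-concentration-clique} step by step, keeping track of the extra additive slack $\tfrac{1}{8}\zeta_{\ell-1}E_{s_\ell,\ell}$ introduced by replacing the exact count $C_\ell(S)$ with the approximation $\widehat{C}_\ell(S)$. The probabilistic input is still the concentration lemma \Cref{lem:concentration}, instantiated analogously to \Cref{cor:concentration-lemma-with-parameters} but with the triple $(\zeta_{\ell-1},\eta_{\ell-1},\eta_\ell)$ from \Cref{def:rho-ell} in place of $(\varepsilon_{\ell-1},\alpha_{\ell-1},\alpha_\ell)$ from \Cref{def:alpha-ell}; the algebra linking $\eta_{\ell-1}$ to $\eta_\ell$ is identical to that for $\alpha$'s, because both families have the same dependence on $p^{2\ell-2}$ and the squared error parameter. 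Under the hypothesis that $G$ is $(\zeta_{\ell-1},\eta_{\ell-1},s_\ell)$-exponentially robustly quasirandom with respect to $(\ell-1)$-cliques, this yields
\[
\Pr_S\!\left[\,\bigl|C_\ell(S) - C_\ell(G)\,A_{s_\ell,\ell}\bigr| \ge \tfrac{\zeta_{\ell-1}}{4}\,E_{s_\ell,\ell}\,\right] \le 4^{\ell+1}\exp(-\eta_\ell\, s_\ell).
\]

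For part~(1), the additional hypothesis that $G$ is $(\zeta_{\ell-1},\eta_\ell,s_\ell)$-exponentially robustly quasirandom for $\ell$-cliques gives, directly from \Cref{def:exp-qr}, that $C_\ell(S)\in(1\pm\zeta_{\ell-1})E_{s_\ell,\ell}$ with probability at least $1-4^{\ell+1}\exp(-\eta_\ell s_\ell)$. Combining with the additive approximation guarantee $|\widehat{C}_\ell(S)-C_\ell(S)|\le \tfrac{1}{8}\zeta_{\ell-1}E_{s_\ell,\ell}$ (as produced by the approximate-counting step of \textsc{Clique-Quality-Efficient}) gives $\widehat{C}_\ell(S)\in(1\pm\tfrac{9}{8}\zeta_{\ell-1})E_{s_\ell,\ell}$ on the same event.

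For part~(2), I would run the same chain in reverse and then invoke concentration once more. From $\widehat{C}_\ell(S)\in(1\pm\tfrac{9}{8}\zeta_{\ell-1})E_{s_\ell,\ell}$ together with the approximation bound I first recover $C_\ell(S)\in(1\pm\tfrac{5}{4}\zeta_{\ell-1})E_{s_\ell,\ell}$, and on the concentration event above this transports to $C_\ell(G)\,A_{s_\ell,\ell}\in(1\pm\tfrac{3}{2}\zeta_{\ell-1})E_{s_\ell,\ell}$. A second use of the same concentration bound, now applied to an arbitrary fresh multiset $S'$ of size $s_\ell$, shows that at least a $1-4^{\ell+1}\exp(-\eta_\ell s_\ell)$ fraction of such $S'$ satisfy
\[
C_\ell(S')\in C_\ell(G)\,A_{s_\ell,\ell}\pm \tfrac{1}{4}\zeta_{\ell-1}\,E_{s_\ell,\ell} \subseteq (1\pm\tfrac{7}{4}\zeta_{\ell-1})E_{s_\ell,\ell} = (1\pm\zeta_\ell)E_{s_\ell,\ell},
\]
where the final equality uses the ratio $\zeta_\ell/\zeta_{\ell-1}=7/4$ read off from \Cref{def:rho-ell}. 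This is exactly $(\zeta_\ell,\eta_\ell,s_\ell)$-exponential robust quasirandomness for $\ell$-cliques, as required.

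The main obstacle is purely constant bookkeeping: one must verify that the per-step slacks $\tfrac{1}{8}$ (one approximation step) and $\tfrac{1}{4}$ (one concentration step, used twice) compound to the advertised bounds $1+\tfrac{1}{8}=\tfrac{9}{8}$ in part~(1) and $\tfrac{9}{8}+\tfrac{1}{8}+\tfrac{1}{4}+\tfrac{1}{4}=\tfrac{7}{4}$ in part~(2), with the latter precisely matching $\zeta_\ell/\zeta_{\ell-1}$. Under the additive-error convention for $\widehat{C}_\ell(S)$ used by \textsc{Clique-Quality-Efficient} this arithmetic closes exactly; aside from this, the argument is structurally identical to the proof of \Cref{cor:cor-of-concentration-clique}.
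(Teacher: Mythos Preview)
Your proposal is correct and follows essentially the same route as the paper: part~(1) is identical, and for part~(2) you spell out precisely the ``same reasoning as in \Cref{cor:cor-of-concentration-clique}'' that the paper invokes, namely undoing the approximation to reach $C_\ell(S)\in(1\pm\tfrac{5}{4}\zeta_{\ell-1})E_{s_\ell,\ell}$, then applying the concentration lemma twice (once $S\to G$, once $G\to S'$) so that the slacks compound to $\tfrac{5}{4}+\tfrac{1}{4}+\tfrac{1}{4}=\tfrac{7}{4}=\zeta_\ell/\zeta_{\ell-1}$. Your observation that the concentration lemma must be re-instantiated with the $(\zeta,\eta)$ parameters, and that this goes through verbatim because $\eta_\ell$ has the same structural dependence on $p^{2\ell-2}$ and $\zeta_{\ell-1}^2$ as $\alpha_\ell$ does on $\varepsilon_{\ell-1}^2$, is exactly what the paper relies on implicitly.
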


\begin{proof}
    For part (1), if $G$ is $(\zeta_{\ell - 1}, \eta_{\ell}, s_{\ell})$-exponentially robustly quasirandom with respect to $\ell$-clique counts, then by definition, with probability at least $1 - 4^{\ell + 1}\exp(-\eta_{\ell} s_{\ell})$ (over the choice of $S$), $C_{\ell}(S) \in (1 \pm \zeta_{\ell - 1}) E_{s_{\ell}, \ell}$. Therefore the $(1 \pm \frac{1}{8} \varepsilon)$-approximation $\widehat{C}_{\ell}(S)$ of $C_{\ell}(S)$ must be in $\widehat{C}_{\ell}(S) \in (1 \pm \frac{9}{8}\zeta_{\ell - 1}) E_{s_{\ell}, \ell}$. 

    For part (2), if $\widehat{C}_{\ell}(S) \in (1 \pm \frac{9}{8}\zeta_{\ell - 1})E_{s_{\ell}, \ell}$, then $C_{\ell}(S) \in (1 \pm \frac{5}{4}\zeta_{\ell - 1})E_{s_{\ell}, \ell}$. Using the same reasoning as in \Cref{cor:cor-of-concentration-clique}, and since $\zeta_{\ell} = \frac{7}{4}\zeta_{\ell-1}$, we find that $G$ is $(\zeta_{\ell}, \eta_{\ell}, s_{\ell})$-exponentially robustly quasirandom with respect to $\ell$-clique counts with probability at least $1 - \exp(-\eta_{\ell} s_{\ell})$.
\end{proof}

\begin{corollary}[Implication of \Cref{cor:another-cor-of-concentration}]\label{cor:another-cor-of-concentration-approximate-counting}
    Let $\gamma > 0$. Suppose $G$ is $(\varepsilon, \eta_{k-1}, s_{*})$-exponentially robustly quasirandom with respect to $(k-1)$-clique counts, and $C_k(G) \in (1 \pm \frac{1}{4} \varepsilon) \binom{n}{k} p^{\binom{k}{2}}$. Sample a multiset $S$ by selecting $s_{*}$ vertices uniformly at random (with repetition). Consider a $(1 \pm \frac{1}{8} \varepsilon)$-approximation $\widehat{C}_{k}(S)$ of the number of $k$-cliques in $S$. Then with probability at least $1 - \exp(-\alpha_k s_{*})$, $\widehat{C}_k(S) \in (1 \pm \frac{5}{8} \varepsilon) E_{s, k}$. If $C_k(G) \not \in (1 \pm \varepsilon) \binom{n}{k} p^{\binom{k}{2}}$, then with probability at least $1 - \exp(-\alpha_k s_{*})$, $\widehat{C}_k(S) \not \in (1 \pm \frac{5}{8} \varepsilon) E_{s_{*}, k}$.
\end{corollary}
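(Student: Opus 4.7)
The plan is to derive this corollary directly from \Cref{cor:another-cor-of-concentration} by composing its high-probability concentration bound on $C_k(S)$ with the $\tfrac{1}{8}\varepsilon \cdot E_{s_*,k}$ approximation guarantee on $\widehat{C}_k(S)$ produced by the jumbled-counting subroutine. I will interpret the $(1 \pm \tfrac{1}{8}\varepsilon)$-approximation additively, so that $|\widehat{C}_k(S) - C_k(S)| \le \tfrac{1}{8}\varepsilon \cdot E_{s_*, k}$, which matches the way the routine is invoked inside \textsc{Clique-Quality-Efficient} with the error setting $\zeta_{k-1} E_{s_*, k}/8 = \tfrac{\varepsilon}{8} E_{s_*, k}$ (using $\zeta_{k-1} = \varepsilon$ from \Cref{def:rho-ell}). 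With this reading, both directions of the corollary reduce to a one-line triangle-inequality argument.

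For the first direction, under the hypothesis $C_k(G) \in (1 \pm \tfrac{1}{4}\varepsilon)\binom{n}{k}k!\,p^{\binom{k}{2}}$, I will apply the first half of \Cref{cor:another-cor-of-concentration} with $s = s_*$ to obtain $|C_k(S) - E_{s_*,k}| \le \tfrac{1}{2}\varepsilon \cdot E_{s_*,k}$ with the stated failure probability. Combining with the additive approximation bound via the triangle inequality gives
\[
    |\widehat{C}_k(S) - E_{s_*,k}| \le |\widehat{C}_k(S) - C_k(S)| + |C_k(S) - E_{s_*,k}| \le \left(\tfrac{1}{8} + \tfrac{1}{2}\right)\varepsilon \cdot E_{s_*,k} = \tfrac{5}{8}\varepsilon \cdot E_{s_*,k},
\]
which is exactly the first claim of the corollary.

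For the second direction, under $C_k(G) \notin (1 \pm \varepsilon)\binom{n}{k}k!\,p^{\binom{k}{2}}$, the second half of \Cref{cor:another-cor-of-concentration} gives $|C_k(S) - E_{s_*,k}| > \tfrac{3}{4}\varepsilon \cdot E_{s_*,k}$ with the stated failure probability. The reverse triangle inequality then yields
\[
    |\widehat{C}_k(S) - E_{s_*,k}| \ge |C_k(S) - E_{s_*,k}| - |\widehat{C}_k(S) - C_k(S)| > \left(\tfrac{3}{4} - \tfrac{1}{8}\right)\varepsilon \cdot E_{s_*,k} = \tfrac{5}{8}\varepsilon \cdot E_{s_*,k},
\]
so $\widehat{C}_k(S) \notin (1 \pm \tfrac{5}{8}\varepsilon) E_{s_*,k}$.

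I anticipate no conceptual obstacle; the only mild bookkeeping step is reconciling the $4^{k+1}\exp(-\alpha_k s_*)$ failure probability delivered by \Cref{cor:another-cor-of-concentration} with the cleaner $\exp(-\alpha_k s_*)$ stated here, which holds by absorbing the $4^{k+1}$ prefactor into the slightly enlarged $s_*$ of \Cref{def:s-size-algo} (the $\ln(\cdot)$ factor is precisely chosen to swallow such polynomial-in-$k$ losses). Everything else is a direct propagation of an additive $\tfrac{1}{8}\varepsilon \cdot E_{s_*,k}$ error through the already-established concentration bound, and the $\tfrac{5}{8}$ vs.\ $\tfrac{3}{4}$ gap in the parameters of \Cref{cor:another-cor-of-concentration} is set up precisely so that the triangle inequality closes with room to spare.
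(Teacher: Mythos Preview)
Your proposal is correct and matches the paper's approach: the paper's proof simply states that the corollary follows from \Cref{cor:another-cor-of-concentration} ``once we incorporate the error terms of the approximate count,'' and you have spelled out exactly that triangle-inequality computation. Your reading of the approximation as additive with error $\tfrac{1}{8}\varepsilon\,E_{s_*,k}$ is consistent with Step~4 of \textsc{Clique-Quality-Efficient}, and your remark about absorbing the $4^{k+1}$ prefactor into the enlarged $s_*$ of \Cref{def:s-size-algo} is the intended bookkeeping.
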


\begin{proof}
    This follows from \Cref{cor:another-cor-of-concentration} in exactly the same way that \Cref{cor:cor-of-concentration-approximate-counting} followed from \Cref{cor:another-cor-of-concentration}. That is, once we incorporate the error terms of the approximate count of the number of $\ell$-cliques, the implication follows immediately.
\end{proof}

We are now ready to argue about the completeness and soundness of Algorithm \textsc{Clique-Quality-Efficient}. The analysis will incorporate the new test for jumbledness of subgraphs.

\begin{proof}[Proof of \Cref{thm:k-cliques-runtime}] 
   \textbf{Completeness.} We first argue that, with high probability, $G \sim \gnp$ is accepted by Algorithm \textsc{Clique-Quality-Efficient}. Consider $G \sim \gnp$, and choose a random subgraph of $G$ on $s_{*}$ vertices. By Corollary \ref{cor:jumbled-subgraphs}, for $p \geq \left( \frac{1200 C^4\log n}{n }\right)^{1/(4k - 1)}$ and $s_{*} \geq 1200 C^4/p^{5k}$, with probability $1 - o(1)$, the subgraph on these $s_{*}$ vertices satisfies the conditions checked for in Step 3 with high probability, as proven in \Cref{prop:Gnp-jumbled-conditions}. Therefore, Step 3 does not reject $G \sim \gnp$ with high probability.

   Next, we argue that, with probability $1 - o(1)$ the approximation of the number of $\ell$-cliques, for each $\ell \in [k]$, returned by the algorithm is a $(1 \pm \frac{1}{8}\zeta_{\ell - 1})$-approximation of $C_{\ell}(S)$. The difference is that $C_{\ell}(S)$ counts $\ell$-cliques multiple times if the vertices appear in $S$ multiple times, but the algorithm does not count multiples. We claim that with probability at least $1 -  o(1)$, $S$ consists of $s_{*}$ distinct vertices. To see this, the probability that there exists a pair of vertices that \textit{are} equivalent, by a union bound, is at most $\binom{s}{2}\frac{1}{n}$, which is $o(1)$ by the lower bound on $p$ in terms of $n$ and the setting of $s_{*}$. Therefore, the probability that $S$ consists of $s_{*}$ distinct vertices is $1 - o(1)$. Therefore, for each $\ell \in [k]$, the number of $\ell$-cliques in $S$, without repetitions for repeated vertices, equals the number of $\ell$-cliques with repetitions (i.e., $C_{\ell}(S)$). In this case, for each $\ell \in [k]$, the approximation output by the algorithm of the number of $\ell$-cliques in $S$, without repetitions, is a $(1 \pm \frac{1}{8}\zeta_{\ell - 1})$-approximation of $C_{\ell}(S)$, and we can use the assumptions of the corollaries proven above.

   Next, we argued in \Cref{sec:qc-upperbound} that the global problem of verifying $k$-clique counts can be reduced to a local problem. Applying the new parameters $\zeta_{\ell}, \eta_{\ell}$ of this section to the concentration lemma (\ref{lem:concentration}), a parameter-specific version of the concentration lemma analogous to \Cref{cor:concentration-lemma-with-parameters} holds for our new parameters, and likewise the inductive lemma (\Cref{lemma:inductive-cliques}) holds for our new parameters. Therefore, we can use the reasoning of the proof of \Cref{thm:k-cliques} from \Cref{sec:qc-upperbound} to claim that $G \sim \gnp$ is $(\zeta_{\ell - 1}, \eta_{\ell}, s_{\ell})$-exponentially robustly quasirandom for all $\ell$. \Cref{lemma:inductive-cliques} (for the new parameters, and since $s_{\ell}$ is larger than in \Cref{sec:qc-upperbound}) therefore implies that the algorithm accepts $G \sim \gnp$ with high probability. (More precisely, the error probability of Algorithm \textsc{Clique-Quality-Efficient} is controlled by a union bound over the probability that the algorithm does not return a good approximation of $C_{\ell}(S)$ and the error probability (from when $C_{\ell}(S)$ is not in a certain range) from the algorithm in \Cref{lemma:inductive-cliques}).

   \textbf{Soundness.} Suppose that $C_k(G) \not \in (1 \pm \varepsilon) \binom{n}{k} p^{\binom{k}{2}}$. First, additionally suppose that the input graph is not $(p, \beta)$-jumbled. Since \Cref{prop:jumbled-from-deg-codeg} proves that Step 3 will reject graphs that are not $(p, \beta)$-jumbled with high probability, such graphs will be rejected.
   
   On the other hand, suppose that the input graph is not rejected by Step 3.  As in the completeness case, with probability $1 - o(1)$, for all $\ell \in [k]$, the approximate number of $\ell$-cliques returned by the algorithm is a $(1 \pm \frac{1}{8}\zeta_{\ell - 1})$-approximation of $C_{\ell}(S)$. In this case, as in the proof of \Cref{thm:k-cliques}, since every input graph $G$ is $(0, \infty, s_1)$-exponentially robustly quasirandom with respect to the vertex count, the inductive lemma allows us to reject graphs that are not $(\zeta_{\ell}, \eta_{\ell}, s_{\ell})$-quasirandom with respect to $\ell$-clique counts for any $\ell \in [k-1]$ with high probability, by approximately counting $\ell$-cliques.

   Thus, if the algorithm hasn't rejected when it starts counting $k$-cliques, we can assume it is $(\zeta_{k-1} = \varepsilon, \eta_{k-1}, s_k = s_{*})$-exponentially robustly quasirandom. By \Cref{cor:another-cor-of-concentration-approximate-counting}, with probability at least $1 - 4^{k + 1}\exp(-\eta_k s)$, the estimate $\widehat{C}_k(S)$ of the number of $k$-cliques in $S$ satisfies $\widehat{C}_k(S) \not \in (1 \pm \frac{5}{8} \varepsilon) E_{s_{*}, k}$ and the input is rejected by the algorithm. Therefore, the algorithm rejects each $G$ with $C_k(G) \not \in (1 \pm \varepsilon) \binom{n}{k} p^{\binom{k}{2}}$ with high probability.\\
   
   \textbf{Query complexity:} The algorithm queries all pairs of vertices in a set $S$ of size $s_{*}$ (\Cref{def:s-size-algo}), therefore using $1/p^{O(k)} \cdot \varepsilon^{-4}$ adjacency matrix queries.\\
   
   \textbf{Runtime:} Step 1 of the algorithm takes time $O(s_{*})$. Step 2 takes time $O(s_{*}^2)$. Step 3 takes time $O(s_{*}^3)$. Finally, Step 4 takes time $C s_{*}^c$ for a constant $C = \exp(\text{poly}(|H|/\varepsilon))$ and a universal constant $c$ (independent of the value of $k$). Therefore, since $s _{*}= 1/p^{O(k)}$, the runtime of Algorithm \textsc{Clique-Quality-Efficient} is $1/p^{O(k)} \cdot \exp(\text{poly}(k/\varepsilon))$.
\end{proof}

\section{Quality control of general motif counts}\label{sec:query-complexity-motifs}

Let $C_H(G)$ be the number of labeled, induced copies of a motif $H$ in the graph $G$. Let $\rho_H(G) = C_H(G)/ \mathbb{E}_{G' \sim \gnp} \left[ C_{H'}(G') \right]$. Let $m_H := \max_{F \subseteq H} e(F)/v(F)$. Let $m_H^{\star}$ be the maximum value of $m_{H'}$ over all $H'$ with $|H'| = |H|$ and edges that are a superset of the edges of $H$. Recall the definition of $\sigma(H)$ from \Cref{eq:s-H}.

In this section, we prove the \Cref{thm:intro-motifs}, restated below.

\begin{theorem}[\Cref{thm:intro-motifs}]\label{thm:qc-motif-in-section}
    There exist constants $c_1, c_2, c_3$ such that for every graph $H$ and \\ $p = \omega\left( n^{-1/m_{H}^{\star}}\right)$, the quality control problem $(\gnp, \rho_H)$ corresponding to estimating the number of copies of $H$ in a graph $G$ supposedly drawn according to $\gnp$ satisfies the following:
    \begin{enumerate}
        \item The problem is solvable in $O(p^{-c_1 \Delta(H)})$ adjacency matrix queries to $G$.
        \item  When $p$ is additionally $\Omega\left(\log(n)/n \right)^{1/(4 \sigma(H) - 1)}$, the problem is also solvable in $O(p^{-c_3 \Delta(H)})$ time and queries.
        \item  Furthermore no algorithm solves this problem with $o(p^{-c_2 \Delta(H)})$ queries.
    \end{enumerate}
\end{theorem}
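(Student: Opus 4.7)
The plan is to extend the three-part program developed for $k$-cliques (Theorem 3.1) to general motifs $H$, with the constant $k$ replaced by $|H|$ in the inductive structure and the exponent $k$ replaced by $\Delta(H)$ in the quantitative bounds. The proof will proceed in three stages matching the three parts of the statement.

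For Part (1), I would generalize Definition 3.2 to a notion of exponentially robust quasirandomness with respect to the labeled induced count $C_{H'}(S)$ of any sub-motif $H' \subseteq H$ (as previewed in the technical overview). The algorithm samples a multiset $S$ of size $s_\ast = p^{-O(\Delta(H))}$, queries all adjacencies in $S$, and for each $\ell \leq |H|$ and each induced sub-motif $H'$ of $H$ on $\ell$ vertices, checks that $C_{H'}(S)$ is consistent with the $\gnp$ expectation. Soundness reduces to analogs of the concentration and inductive lemmas (Lemmas 3.4 and 3.3), proved by the same McDiarmid template (Corollary 2.18). The key quantitative change is in the bounded-difference constant: for a vertex of $S$, the number of copies of $H'$ through it is controlled by the number of copies of $H' \setminus \{v\}$ for some $v \in V(H')$; since $p \leq 1/2$, the tightest choice is a vertex of maximum degree, so the bound scales with $p^{e(H') - \Delta(H')}$. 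Plugging into McDiarmid yields $\exp(-p^{2\Delta(H)} s)$ on the right-hand side rather than $\exp(-p^{2\ell} s)$, forcing $s_\ast \approx p^{-O(\Delta(H))}$. Completeness uses Proposition 2.4 to get concentration of $C_{H'}(G)$ around $\mu_{H'}$ for every induced sub-motif simultaneously (this is why the condition involves $m_H^\star$ rather than $m_H$).

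For Part (2), I would insert the jumbledness filter of Section 3.3: after sampling $S$, verify via Proposition 2.7 that $G_S$ is $(p, \beta)$-jumbled with $\beta \leq p^{\sigma(H)} s_\ast / C$, rejecting otherwise, and then invoke Proposition 2.10 to approximately count copies of $H'$ in $G_S$ in time $\exp(\mathrm{poly}(|H|/\varepsilon)) \cdot s_\ast^c$. The exponent $\sigma(H)$ (rather than $|H|$) appears naturally because that is exactly the jumbledness threshold Proposition 2.10 requires to count $H$. The stronger lower bound $p = \Omega(\log n / n)^{1/(4\sigma(H)-1)}$ is what the generalization of Corollary 2.15 demands to guarantee that a random size-$s_\ast$ induced subgraph of $G \sim \gnp$ is jumbled at the required threshold with high probability, preserving completeness. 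The soundness argument is the same as in Section 3.3, with Corollaries 3.9 and 3.10 restated for motifs using the approximation guarantees of Proposition 2.10.

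For Part (3), I would construct a NO distribution as a two-block stochastic block model that plants a denser community of size $\ell = \Theta(\varepsilon n p^{\Delta(H)})$ with intra-community edge probability $1/2$ and inter-community edge probability $1/2$ into an otherwise $\gnp$ graph (deferring to Section 5 for details). The extra induced copies of $H$ come from placing a maximum-degree vertex of $H$ inside the planted community and its neighbors in the sparse community; counting these gives a $\Theta(\varepsilon)$ multiplicative deviation of $\rho_H$, so YES and NO are valid instances. Indistinguishability follows from a coupling argument: a single adjacency query incident to the planted community has probability $O(\ell/n) = O(\varepsilon p^{\Delta(H)})$, and conditioned on not hitting that community the two distributions are identical, so $\Omega(p^{-\Delta(H)})$ queries are needed. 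The principal obstacle I anticipate is Part (1): unlike the clique case, the induction now branches over all sub-motifs of $H$ rather than a linear chain, and one must choose sub-motif dependencies so that quasirandomness on size-$(\ell{-}1)$ sub-motifs (which the inductive hypothesis supplies) suffices to apply the bounded-differences argument for every size-$\ell$ sub-motif, all while keeping the per-step error budget and the sample size $s_\ast$ uniformly $p^{-O(\Delta(H))}$.
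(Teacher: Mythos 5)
Your proposal matches the paper's architecture quite closely for all three parts, but there are two concrete gaps worth flagging.

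For Part (2), you move directly from the jumbledness filter to approximate motif counting, but Proposition 2.10 counts \emph{labeled (non-induced)} copies, whereas $\rho_H$ is defined via \emph{induced} copies. The paper handles this by first building \textsc{NonInduced-Motif-Quality-Efficient} and then wrapping it in an inclusion-exclusion layer (\textsc{Induced-Motif-Quality-Efficient}): since $C_H(G) = \sum_{i\geq 0}(-1)^i \sum_{H'\supseteq H, e(H')=e(H)+i}\widetilde{C}_{H'}(G)$, a $\Theta(\varepsilon)$ deviation in the induced count forces a $\Theta(\varepsilon/2^{\binom{k}{2}})$ deviation in some non-induced count, which the non-induced tester catches. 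Without this step the soundness claim does not follow from the tool you cite.

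For Part (3), your choice $\ell = \Theta(\varepsilon n p^{\Delta(H)})$ is too small for the No-distribution analysis to close. To conclude $\Pr_{G'\sim D_N}[\rho_H(G') \geq 1.9]\geq 1 - o(1)$ the paper invokes a second-moment concentration bound for the two-community SBM (Theorem A.3), whose covariance estimates need $\ell \gtrsim 2^{2\binom{k}{2}+1}\, n\, p^{\Delta(H)/2}$. The obstruction is the term $\binom{\ell}{j}^{-1}2^{\binom{j}{2}}$ arising from pairs of $H$-copies that overlap on $j$ vertices inside the planted community; for it to be $o(1)$ one needs $\ell$ at least polynomial in $p^{-\Delta/2}$, not $p^{-\Delta}$. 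Consequently the hitting probability per query is $\ell/n \gtrsim p^{\Delta/2}$, and the argument yields $\Omega(p^{-\Delta(H)/2})$ queries rather than $\Omega(p^{-\Delta(H)})$. This still establishes the theorem (with $c_2 = 1/2$), but you should not claim the stronger exponent without a new concentration argument for the planted model, and you would at minimum need to check that the size of the planted community and the strength of the concentration are mutually compatible. The rest of your Part (1) plan---the sub-motif exponentially-robust-quasirandomness, the McDiarmid bounded-differences argument with $c_i$ scaling as $p^{e(H')-\Delta(H')}$, the use of $m_H^\star$ in the completeness condition, and the branching induction over sub-motifs---is the same structure the paper uses, routed through the density-increment quantity $r_\ell(H)$.
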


To illustrate this bound, we give some implications of this result for specific motifs $H$:
\begin{corollary}\label{cor:specific-motifs}
    The following are implied by \Cref{thm:qc-motif-in-section}.
    \begin{enumerate}
        \item Consider any cycle $H$ on $k$ vertices. $(\gnp, \rho_H)$-quality control takes $1/p^{\Theta(1)}$ adjacency matrix queries and time. Observe that this is independent of the size of the cycle and $n$. In comparison, approximating the number of cycles on $k$ vertices in worst-case graphs, even with access to random edge samples, takes $1/p^{O(k)}$ queries \cite{assadi2018simple}.
        \item Consider a $k$-clique $H$. $(\gnp, \rho_H)$-quality control takes $1/p^{\Theta(k)}$ adjacency matrix queries and time. In comparison, approximating the number of $k$-cliques in graphs whose $k$-clique count is approximately the expected amount in $\gnp$ takes $1/p^{\Theta(k^2)}$ queries to the adjacency lists, adjacency matrix, and degrees \cite{eden2018faster}.
        \item Consider the star graph $H$ on $k$ vertices. $(\gnp, \rho_H)$-quality control takes $1/p^{\Theta(k)}$ adjacency matrix queries and time. \cite{DBLP:journals/corr/AliakbarpourBGP16} implies that approximating the number of $k$-stars for worst-case graphs with approximately the same amount of $k$-stars as expected in $\gnp$ requires $\widetilde{O}(n/p)$ queries, which is incomparable.
    \end{enumerate}
\end{corollary}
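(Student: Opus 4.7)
The plan is to derive each part of Corollary \ref{cor:specific-motifs} as a direct specialization of Theorem \ref{thm:qc-motif-in-section}. Since the query and time complexity bounds in that theorem are governed entirely by $\Delta(H)$, the proof of each item reduces to three routine tasks: (i) computing $\Delta(H)$ for the motif in question, (ii) checking that the hypotheses on $n$ and $p$ (namely $p = \Omega(n^{-1/m_H^{\star}})$ for the query bound, and additionally $p = \Omega(\log(n)/n)^{1/(4\sigma(H) - 1)}$ for the runtime bound) are compatible with the sparse regime in which the corollary is stated, and (iii) quoting the cited approximation results for comparison.

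For the three specific motifs, the degree computations are immediate. The cycle $C_k$ on $k$ vertices is $2$-regular, so $\Delta(C_k) = 2$ and Theorem \ref{thm:qc-motif-in-section} yields query and time complexity $p^{-\Theta(1)}$, independent of both $k$ and $n$. The $k$-clique is $(k-1)$-regular, so $\Delta(K_k) = k-1$ and we obtain $p^{-\Theta(k)}$. The $k$-star has one vertex of degree $k-1$ and $k-1$ leaves of degree $1$, so $\Delta(S_k) = k-1$ and we again get $p^{-\Theta(k)}$. To license the runtime statement I would also check that $\sigma(H)$ is small enough for the regime $p = \Omega(\log(n)/n)^{1/(4\sigma(H)-1)}$ to be meaningful: for cycles $L(C_k)$ is itself a cycle, giving $\sigma(C_k) = O(1)$; for cliques, $\sigma(K_k) = k$ as noted just after \eqref{eq:s-H}; and for stars $L(S_k)$ is the clique $K_{k-1}$, giving $\sigma(S_k) = O(k)$, in particular finite and sufficient for the theorem to apply.

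The comparison statements are then a matter of citing known approximation results: $p^{-O(k)}$ queries for cycle counting from \cite{assadi2018simple}, $p^{-\Theta(k^2)}$ for $k$-cliques from \cite{eden2018faster}, and $\widetilde{O}(n^{1-1/k}) = \widetilde{O}(n/p)$ for $k$-stars via \cite{DBLP:journals/corr/AliakbarpourBGP16} (substituting $d = np$ as in the surrounding discussion).

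There is no substantive technical obstacle; the proof is essentially a lookup into Theorem \ref{thm:qc-motif-in-section} combined with three one-line degree computations. The only care needed is in the bookkeeping around the hypotheses on $p$: for each motif family one must verify that the condition $p = \Omega(n^{-1/m_H^{\star}})$ holds in the regime being highlighted (it does, since $m_H^{\star}$ is bounded by a constant depending on $|H|$ and we are free to take $n$ sufficiently large in terms of $p$ and $k$), so that the $\Theta(\cdot)$ bounds hold unconditionally in the natural sparse random graph regime stated by the corollary.
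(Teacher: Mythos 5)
Your proof takes essentially the same approach as the paper, which presents the corollary as a direct specialization of Theorem~\ref{thm:qc-motif-in-section}: compute $\Delta(H)$ for each motif ($\Delta=2$ for cycles, $\Delta=k-1$ for cliques and stars), verify the regime hypotheses on $p$ and $n$, and cite the comparison results. The degree computations, the $\sigma(H)$ checks, and the overall structure are correct; the only soft spot is the parenthetical identification $\widetilde{O}(n^{1-1/k}) = \widetilde{O}(n/p)$ for $k$-stars, which is not a literal equality and should be phrased more carefully as a regime-dependent translation of the bound from \cite{DBLP:journals/corr/AliakbarpourBGP16}, but this affects only the illustrative comparison, not the quality-control bound itself.
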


\subsection{Query complexity upper bound}\label{sec:motifs-upperbound}

In this section, we prove \Cref{thm:qc-motif-in-section} Part (1), which states the query complexity of quality control for the count of a motif $H$ compared to $\gnp$ is $p^{-O(\Delta(H))} \cdot \mathrm{poly}(1/\varepsilon)$, where $\Delta(H)$ is the maximum vertex degree of $H$. 

We prove quality control results for comparing \textit{labeled induced motif} counts to the expected amount from $\gnp$. By an induced motif, we mean the number of \textit{induced subgraphs} in the graph that form a copy of the motif. Recall that an induced subgraph is a graph formed from a subset of the vertices of the graph and \textit{all} of the edges between these vertices.

We will show how our results are implied from a deferred result (given in \Cref{sec:any-random-graph-family}) that holds for $(D, \rho_H)$-quality control for any distribution $D$ over graphs that has good concentration of motifs $H' \subseteq H$. This broader result relies on the same principles as the proof for quality control of the $k$-clique count compared to $\gnp$. That is, a notion of \textit{exponential quasirandomness} (see \Cref{def:exp-qr-motif}) is given for motif counts, and the proof iteratively proves (or, at some point, refutes) that exponential quasirandomness holds for the input graph. A concentration lemma (\Cref{lem:concentration-general-P-and-D}) and inductive lemma (\Cref{lemma:inductive-general-P-and-D}) are developed for this context as well.

We now turn to setting up the notation and framework needed to apply \Cref{sec:any-random-graph-family}.

For a motif $H$ on $\ell$ vertices, the expected number of induced motif counts in $G \sim \gnp$ is 
$\binom{n}{\ell} \ell! \cdot p^{e(H)}(1-p)^{\binom{\ell}{2} - e(H)}$.

\paragraph{Notation.} Recall the following notation from \Cref{sec:technical-lemmas}.

For a motif $M$ on $\ell$ vertices, as in \Cref{eq:expected-P-number-general}, let $F_{\mathcal{D}}(M)$ be defined as:
$$F_{\gnp}(M) = \mathbb{E}_{G \sim \gnp}\left[ \# \text{ labeled induced copies of } M \text{ in } G\right] \cdot \frac{1}{\binom{n}{\ell} \ell!}.$$
$F_{\gnp}(H)$ captures the \textit{density}/frequency of $H$ in $\gnp$. We abbreviate $F_{\gnp}(H)$ as $F(H)$ in this section, when it is clear from context that the input graph is being compared to the distribution $\gnp$ over graphs.

Suppose we sample $s$ vertices uniformly at random with repetition. Let
\begin{equation} \label{eq:s-D-H-motif-gnp}
    S_{\gnp}(H, s) = E_{\gnp}(H) \cdot \binom{s}{\ell} \frac{1}{n^{\ell}}
\end{equation}
be the expected number (over $\gnp$ and the choice of random vertices) of labeled induced copies of $H$ in the subgraph induced by a multiset of size $s$ consisting of vertices chosen uniformly at random (with replacement). It is proven in Claim \ref{claim:expectation-set-P-copies} that this is the correct expression for the expectation. We abbreviate $S_{\gnp}(H, s)$ as $S(H, s)$ in this section, when the underlying distribution over input graphs is clear from context.

For a motif $H$ on $k$ vertices, let $\mathcal{H}_{k - 1}$ be the set of all (labeled induced) subgraphs of $H$ on $k - 1$ vertices. Define $r_{\ell}(H)$ as in \Cref{eq:r-j-H}:
$$r_{\ell}(H) = \min_{H_1 \in \mathcal{H}_{\ell}}\frac{F(H_1)^2}{\max_{\substack{H_2 \subseteq H_1 \\ |H_2| = \ell-1}}  F(H_2)^2}.$$
The quantity $r_{\ell}(H)$ captures the ratio of the density of subgraphs $H_1$ on $\ell$ vertices with their densest subgraphs $H_2 \subseteq H_1$ on $\ell - 1$ vertices.

\paragraph{Proving the upper bound.} To apply \Cref{thm:general-graph-motif-count-qc} (quality control over motif counts for a general family of distributions over graphs), we need to understand the values of $r_{\ell}$ for $\ell \in \{2, 3, \dots, k\}$. We prove that, for $G \sim \gnp$, the maximum value of $r_{\ell}(H)$ scales with $np^{\Delta(H)}$, where $\Delta(H)$ is the maximum vertex degree in $H$.

Let us first understand $F_{\gnp}(M)$ for any motif $M$ in the graph. Suppose $M$ is a motif on $\ell$ vertices. Then 
    $$F_{\gnp}(M) = p^{e(M)} (1-p)^{\binom{\ell}{2} - e(M)}.$$
    Observe that, for $p \leq 1/2$, $F(M) \leq F(M')$ if $e(M) \geq e(M')$.

Let us first understand $r_{\ell}(H)$ for $|H| = k, \ell \in [k]$ when $G \sim \gnp$.

\begin{lemma}\label{lem:r-ell-motif-for-gnp}
    Suppose $G \sim \gnp$, $p \leq 1/2$, and consider a motif $H$ on $k$ vertices. For all $\ell \in [k]$, the following holds:
    $$r_{\ell}(H) = (1-p)^{2 \ell - 2} \min_{H_1 \in \mathcal{H}_{\ell}} \left( \frac{p}{1-p}\right)^{2\Delta(H_1)}.$$
\end{lemma}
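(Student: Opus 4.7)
The plan is to expand the definition of $r_{\ell}(H)$ directly using the explicit formula for $F_{\gnp}$, and then optimize over the choice of vertex to remove. Specifically, for any motif $M$ on $\ell$ vertices, we have $F_{\gnp}(M) = p^{e(M)} (1-p)^{\binom{\ell}{2} - e(M)}$, which follows from the formula for the expected number of labeled induced copies of $M$ in $G \sim \gnp$ divided by $\binom{n}{\ell} \ell!$.

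Next, I would fix $H_1 \in \mathcal{H}_{\ell}$ and compute the ratio $F_{\gnp}(H_1) / F_{\gnp}(H_2)$, where $H_2 \subseteq H_1$ is obtained by deleting a vertex $v$ from $H_1$. If $v$ has degree $d_v$ in $H_1$, then $e(H_2) = e(H_1) - d_v$ and $\binom{\ell-1}{2} - e(H_2) = \binom{\ell}{2} - e(H_1) - (\ell - 1) + d_v$. Plugging these into the formulas and simplifying yields
\begin{equation*}
    \frac{F_{\gnp}(H_1)}{F_{\gnp}(H_2)} = p^{d_v} (1-p)^{\ell - 1 - d_v} = (1-p)^{\ell - 1} \left( \frac{p}{1-p}\right)^{d_v}.
\end{equation*}

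To compute the inner maximum in the definition of $r_{\ell}(H)$ --- which is equivalent to minimizing the ratio $F_{\gnp}(H_1)/F_{\gnp}(H_2)$ over $H_2 \subseteq H_1$ with $|H_2| = \ell - 1$ --- I would observe that since $p \leq 1/2$ (as assumed throughout this regime), $p/(1-p) \leq 1$, and so $(p/(1-p))^{d_v}$ is minimized by taking $d_v$ to be as large as possible, i.e., by deleting a vertex of maximum degree in $H_1$. Therefore the minimum equals $(1-p)^{\ell - 1} (p/(1-p))^{\Delta(H_1)}$. Squaring (to pass from $F/F$ to $F^2/F^2$) and then taking $\min_{H_1 \in \mathcal{H}_{\ell}}$ yields the claimed expression for $r_{\ell}(H)$.

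There is no real obstacle here: the proof is a direct algebraic manipulation together with the observation that, since $p \leq 1/2$, removing a highest-degree vertex makes $H_2$ as ``dense'' (in the $\gnp$ sense) as possible. The only thing to keep track of carefully is the correspondence between maximizing $F_{\gnp}(H_2)$ and maximizing the degree $d_v$ of the deleted vertex, which relies crucially on the assumption $p \leq 1/2$; this assumption is why the final bound in \Cref{thm:qc-motif-in-section} depends on $\Delta(H)$ rather than $\delta(H)$ (the minimum degree).
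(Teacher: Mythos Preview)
Your proposal is correct and follows essentially the same approach as the paper: expand $F_{\gnp}$ explicitly, compute the ratio when a vertex of degree $d_v$ is deleted, and use $p \leq 1/2$ to argue that the inner maximum is attained by removing a maximum-degree vertex. The only cosmetic difference is that you compute the ratio $F(H_1)/F(H_2)$ directly per deleted vertex, whereas the paper first expands numerator and denominator separately and then simplifies; both arrive at the same expression.
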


\begin{proof}
    Let's expand the expression for $r_{\ell}(H)$ for $G \sim \gnp$:
    $$r_{\ell}(H) = \min_{H_1 \in \mathcal{H}_{\ell}}\frac{F(H_1)^2}{\max_{\substack{H_2 \subseteq H_1 \\ |H_2| = \ell-1}}  F(H_2)^2} =  \min_{H_1 \in \mathcal{H}_{\ell}}\frac{p^{2e(H_1)} (1-p)^{2\binom{\ell}{2} - 2 e(H_1)}}{\max_{\substack{H_2 \subseteq H_1 \\ |H_2| = \ell-1}}  p^{2e(H_2)} (1-p)^{2\binom{\ell-1}{2} - 2 e(H_2)}}$$
    $$= (1-p)^{2 \ell - 2} \min_{H_1 \in \mathcal{H}_{\ell}}\frac{p^{2e(H_1)} (1-p)^{- 2 e(H_1)}}{\max_{\substack{H_2 \subseteq H_1 \\ |H_2| = \ell-1}}  p^{2e(H_2)} (1-p)^{- 2 e(H_2)}}.$$
    Since $p \leq 1/2$, in the denominator, the maximum over $H_2 \subset H_1$ of the expression $p^{2e(H_2)} (1-p)^{- 2 e(H_2)}$ is achieved when $H_2$ has as few edges as possible. That is, $H_2$ is $H_1$ with the highest-degree vertex removed; i.e. the maximum is achieved for $H_2$ satisfying $e(H_2) = e(H_1) - \Delta(H_1)$.

    Therefore, the expression for $r_{\ell}(H)$ becomes:
    $$r_{\ell}(H) = (1-p)^{2 \ell - 2} \min_{H_1 \in \mathcal{H}_{\ell}}\frac{p^{2e(H_1)} (1-p)^{- 2 e(H_1)}}{p^{2e(H_1) - 2 \Delta(H_1)} (1-p)^{- 2 e(H_1) + 2 \Delta(H_1)}} $$ $$ = (1-p)^{2 \ell - 2} \min_{H_1 \in \mathcal{H}_{\ell}} \left( \frac{p}{1-p}\right)^{2\Delta(H_1)},$$
    as stated in the lemma.
\end{proof}

We verify that the values of $r_{\ell}$ decrease as $\ell$ increases. This corresponds to the statement that the density of subgraphs of $H$ on $\ell$ vertices in the subsets of $\ell$ vertices is less than the density of their respective subgraphs on $\ell - 1$ vertices in the subsets of $\ell - 1$ vertices. This was a necessary condition for applying \Cref{thm:general-graph-motif-count-qc}.

\begin{lemma}
    For all $\ell \in \{2, 3, \dots, k\}$, $r_{\ell - 1}(H) \geq r_{\ell} (H)$.
\end{lemma}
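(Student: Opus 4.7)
The plan is to use the explicit formula from the previous lemma to reduce the claim to an entirely combinatorial statement about how the maximum vertex degree of induced subgraphs of $H$ behaves as we go from $\ell-1$ to $\ell$ vertices. By the previous lemma,
\[
r_{\ell}(H) = (1-p)^{2\ell - 2} \min_{H_1 \in \mathcal{H}_{\ell}} \left(\frac{p}{1-p}\right)^{2\Delta(H_1)},
\]
and since $p \leq 1/2$ we have $p/(1-p) \leq 1$, so the inner minimum is achieved by the induced subgraph $H_1 \in \mathcal{H}_\ell$ that maximizes $\Delta(H_1)$. Let $\Delta^*_\ell := \max_{H_1 \in \mathcal{H}_\ell}\Delta(H_1)$. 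Then $r_\ell(H) = (1-p)^{2\ell-2}\cdot (p/(1-p))^{2\Delta^*_\ell}$, and analogously for $r_{\ell-1}(H)$.

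Next I would establish the two-sided bound $\Delta^*_\ell - 1 \leq \Delta^*_{\ell-1} \leq \Delta^*_\ell$. The upper bound follows because every $H' \in \mathcal{H}_{\ell-1}$ can be extended by adding any vertex $v \in V(H) \setminus V(H')$ to yield an induced subgraph $H'' \in \mathcal{H}_\ell$ with $\Delta(H') \leq \Delta(H'')$. The lower bound follows by taking $H_1^* \in \mathcal{H}_\ell$ with $\Delta(H_1^*) = \Delta^*_\ell$ and removing any vertex from it: this yields an element of $\mathcal{H}_{\ell-1}$ whose maximum degree is at least $\Delta^*_\ell - 1$, since deleting a vertex can decrease the degree of each remaining vertex by at most $1$.

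Finally I would split into the two cases. If $\Delta^*_{\ell-1} = \Delta^*_\ell$, then
\[
\frac{r_{\ell-1}(H)}{r_\ell(H)} = \frac{1}{(1-p)^2} \geq 1.
\]
If $\Delta^*_{\ell-1} = \Delta^*_\ell - 1$, then
\[
\frac{r_{\ell-1}(H)}{r_\ell(H)} = \frac{1}{(1-p)^2}\cdot\left(\frac{1-p}{p}\right)^{2} = \frac{1}{p^2} \geq 1,
\]
since $p \leq 1$. Either way $r_{\ell-1}(H) \geq r_\ell(H)$, completing the proof.

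There is no real obstacle here; the only subtlety is verifying that the family $\mathcal{H}_{\ell-1}$ is nonempty (automatic as long as $\ell \geq 2$ and $|H| \geq \ell$) and correctly identifying that the ``min over $H_1$'' in the formula is governed by the maximum-degree induced subgraph when $p \leq 1/2$. This is precisely why $\Delta(H)$ ends up as the governing parameter throughout the section.
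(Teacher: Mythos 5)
Your proof is correct and follows the same overall strategy as the paper: apply the closed-form expression from the preceding lemma and exploit the monotonicity of both the $(1-p)^{2\ell-2}$ prefactor and of $c \mapsto (p/(1-p))^{c}$. Two small remarks, though. First, the lower bound $\Delta^*_\ell - 1 \leq \Delta^*_{\ell-1}$ and the resulting two-case analysis are unnecessary: once one has $\Delta^*_{\ell-1} \leq \Delta^*_\ell$, the ratio
\[
\frac{r_{\ell-1}(H)}{r_\ell(H)} = \frac{1}{(1-p)^2}\cdot\left(\frac{1-p}{p}\right)^{2\left(\Delta^*_\ell - \Delta^*_{\ell-1}\right)}
\]
is a product of two factors each at least $1$ (since $0 < p \leq 1/2$), which already gives the claim for any value of the nonnegative gap $\Delta^*_\ell - \Delta^*_{\ell-1}$; this is essentially the route the paper takes. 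Second, the phrase \emph{``removing any vertex from $H_1^*$''} in your lower-bound argument is imprecise: deleting the max-degree vertex itself can drop the maximum degree by more than one (e.g.\ removing the center of a star), so the argument should say ``removing a vertex other than a max-degree one,'' which always exists since $\ell \geq 2$. Neither issue affects the validity of the conclusion.
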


\begin{proof}
    First, observe that $(1-p)^{2(\ell -1) - 2} \geq (1-p)^{2\ell - 2}$. Next, observe that
    \begin{equation}\label{eq:motifs-r-ell-ratio}
        \min_{H_2 \in \mathcal{H}_{\ell-1}} \left( \frac{p}{1-p}\right)^{2\Delta(H_2)} \geq \min_{H_1 \in \mathcal{H}_{\ell}} \left( \frac{p}{1-p}\right)^{2\Delta(H_1)}.
    \end{equation}
    This holds because $\left(p/(1-p)\right)^c$ decreases as $c$ increases, for $c > 0$. As we consider $\Delta(H_1)$ over subsets $H_1$ on more vertices, the maximum maximum-degree $\Delta(H_1)$ can only increase, giving us \Cref{eq:motifs-r-ell-ratio}.

    Combining the two observed inequalities gives us the lemma.
\end{proof}

Next, as noted in \Cref{sec:any-random-graph-family}, the query complexity will depend on $r_{k}(H)^{-1}$ (the ratio of densities of subgraphs of $H$). We observe the following.
\begin{observation}
    $r_{k}(H) = p^{2 \Delta(H)} (1-p)^{2k - 2 - 2 \Delta(H)}$.
\end{observation}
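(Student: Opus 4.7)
The plan is to specialize the previous lemma (the formula for $r_\ell(H)$ with $G \sim \gnp$) at $\ell = k$ and simplify. Recall from Lemma~4.5 that
\[
r_{\ell}(H) = (1-p)^{2 \ell - 2} \min_{H_1 \in \mathcal{H}_{\ell}} \left( \frac{p}{1-p}\right)^{2\Delta(H_1)}.
\]
First I would observe that when $\ell = k$, the set $\mathcal{H}_k$ of labeled induced subgraphs of $H$ on $k = |H|$ vertices is the singleton $\{H\}$, so the minimum collapses and $\Delta(H_1) = \Delta(H)$.

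Plugging in, $r_k(H) = (1-p)^{2k-2} \cdot \left(p/(1-p)\right)^{2\Delta(H)}$, and distributing the $(1-p)^{2k-2}$ factor gives the desired $p^{2\Delta(H)}(1-p)^{2k - 2 - 2\Delta(H)}$. The only step requiring any care is confirming that $2\Delta(H) \leq 2k - 2$ so that the exponent $2k - 2 - 2\Delta(H)$ is nonnegative (which follows from $\Delta(H) \leq k - 1$ for any graph on $k$ vertices), but this is just a sanity check on the form of the expression; the identity itself is purely algebraic. There is no real obstacle here — the observation is a one-line corollary of Lemma~4.5 applied at $\ell = k$.
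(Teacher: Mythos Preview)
Your proposal is correct and matches the paper's own proof essentially verbatim: the paper also simply observes that $\mathcal{H}_k = \{H\}$ when $|H| = k$, so the observation follows directly from the lemma giving $r_\ell(H) = (1-p)^{2\ell-2}\min_{H_1 \in \mathcal{H}_\ell}(p/(1-p))^{2\Delta(H_1)}$. Your additional sanity check on the sign of the exponent is fine but not needed for the identity.
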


\begin{proof}
    This follows directly from \Cref{lem:r-ell-motif-for-gnp}, observing that, by definition, $\mathcal{H}_k = \{ H\}$ when $H$ is a motif on $k$ vertices.
\end{proof}

\begin{observation}\label{obs:s-value-motifs}
    Define $s_{*}$ as in \Cref{sec:any-random-graph-family}. Then $s_{*} = \widetilde{O}\left(r_k(H)^{-1} \varepsilon^{-2}\right) = \widetilde{O}\left(p^{-2 \Delta(H)} \varepsilon^{-2}\right).$
\end{observation}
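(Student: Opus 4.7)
The plan is to establish the two equalities in the observation sequentially. For the first equality, I would appeal directly to the definition of $s$ given in \Cref{sec:any-random-graph-family}. By analogy with the clique-counting case (where $s_k$ from \Cref{def:s-size} takes the form $O(p^{-2(k-1)} \varepsilon^{-2})$ up to polylogarithmic factors and factors depending only on $k$, and where $p^{-2(k-1)}$ is precisely the reciprocal of the clique analogue of $r_k$), the definition in the motif setting should yield $s = \tilde{O}(r_k(H)^{-1} \varepsilon^{-2})$ with the hidden factors depending only on $|H|$ and logarithms of the relevant parameters. So the first equality reduces to unpacking the definition.

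For the second equality, I would substitute the expression for $r_k(H)$ just computed in the preceding (unnumbered) observation, namely $r_k(H) = p^{2\Delta(H)} (1-p)^{2k-2-2\Delta(H)}$. Taking the reciprocal gives
\[
r_k(H)^{-1} = p^{-2\Delta(H)} \cdot (1-p)^{-(2k-2-2\Delta(H))}.
\]
The only thing left is to argue that the factor $(1-p)^{-(2k-2-2\Delta(H))}$ can be absorbed into $\widetilde{O}(\cdot)$.

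To see this, note that $2k-2-2\Delta(H) \geq 0$ since $\Delta(H) \leq k-1$ for any motif $H$ on $k$ vertices. Since we work in the regime $p \leq 1/2$ (as explicitly assumed in \Cref{lem:r-ell-motif-for-gnp} and throughout the sparse analysis), we have $(1-p)^{-1} \leq 2$, so $(1-p)^{-(2k-2-2\Delta(H))} \leq 2^{2k-2} = O_k(1)$ is a constant depending only on $k$. Since $k = |H|$ is treated as constant (and in any case factors of $2^{O(k)}$ are absorbed into the $\widetilde{O}(\cdot)$ notation along with the $\mathrm{polylog}$ factors that are already present in the definition of $s$), we conclude $s = \widetilde{O}(p^{-2\Delta(H)} \varepsilon^{-2})$.

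I do not expect any obstacle here: the observation is essentially a bookkeeping lemma that rewrites the sample size in terms of the parameter $\Delta(H)$ that the authors have identified as governing the complexity of motif quality control. The only mild subtlety is verifying that $(1-p)^{-(2k-2-2\Delta(H))}$ is genuinely bounded (hence absorbable), which follows immediately from $p \leq 1/2$ and the nonnegativity of the exponent.
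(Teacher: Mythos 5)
Your proof is correct and is essentially the calculation the paper leaves implicit (the observation is stated without proof as simple bookkeeping): unpack $s_*$ from \Cref{def:s-general-distribution}, note the $\ln\bigl((6k)^k / (F_{\mathcal{D}_n}(H)\varepsilon)\bigr)$ factor is polylogarithmic in $1/p$ since $F_{\gnp}(H) \geq p^{\binom{k}{2}}$, then substitute $r_k(H) = p^{2\Delta(H)}(1-p)^{2k-2-2\Delta(H)}$ and bound the $(1-p)$ factor by a $k$-dependent constant using $p \le 1/2$ and $\Delta(H) \le k-1$. One minor nit: the factor $2^{2k-2}$ is absorbed because $k=|H|$ is a constant, i.e.\ into the $O(\cdot)$, not because it is polylogarithmic; the tilde in $\widetilde{O}$ is needed only for the $\ln(1/F_{\mathcal{D}_n}(H))$ term.
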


We are now ready to prove the  upper bound in \Cref{thm:qc-motif-in-section} as an implication of \Cref{thm:general-graph-motif-count-qc}.

\begin{proof}
    Consider a motif $H$ on $k$ vertices. By \Cref{lem:Gnp-motif-concentration}, for $p = \omega\left( n^{-1/m_H}\right)$ $\gnp$ satisfies the properties that all $H'\subseteq H$ the number of labeled induced copies of $H'$ is in the range $(1 \pm \frac{\varepsilon}{100}) \binom{n}{i} C_{H'}$ for some $C_{H'}$ with high probability, and that $r_{\ell - 1}(H) \geq r_{\ell}(H)$ for all $\ell \in \{2, 3, \dots, k\}$. Therefore, \Cref{thm:general-graph-motif-count-qc} implies that quality control for the count of $H$ in an input graph $G$ compared to $\gnp$ can be accomplished in $O\left( r_k(H)^{-2} \varepsilon^{-4} \right) =  p^{-O(\Delta(H))} \cdot \mathrm{poly}(1/\varepsilon)$ queries.
\end{proof}

We conclude with the following observation about quality control for unlabeled or non-induced motifs. First, recall the following facts regarding the relationship between induced/non-induced and labeled/unlabeled motifs in a graph. The number of non-induced copies of a motif $H$ is the sum of the count of all induced motifs $H'$ of size $k$ such that the edges of $H$ are a subset of the edges of $H'$. For an unlabeled motif $H$, let $\text{Aut}(H)$ be the set of automorphisms of $H$; that is, the set of all permutations of the vertices of $H$ that preserve the structure/adjacencies of $H$. Let $|\text{Aut}(H)|$ be the number of automorphisms of $H$. Recall that the number of labeled copies of a motif $H$ in a graph $G$ is the number of unlabeled copies times $|\text{Aut}(H)|$.

\begin{observation}\label{obs:induced-to-noninduced-motif-counts}
    Quality control for non-induced motifs $H$ (of size $k$) compared to $\gnp$ can be accomplished via quality control algorithms for labeled counts of motifs $H'$ of size $k$ such that the edges of $H$ are a subset of the edges of $H'$. Quality control for labeled motif counts can be accomplished directly from quality control for unlabeled motif counts, and vice versa.
\end{observation}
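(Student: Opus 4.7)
The plan is to exploit the decomposition of the non-induced count into a sum of labeled induced counts of supergraphs on the same vertex set, and then to invoke the composability principle of QC algorithms emphasized throughout the paper. Let $\mathcal{S}(H) := \{H' : V(H') = V(H),\ E(H') \supseteq E(H)\}$, so that $M := |\mathcal{S}(H)| = 2^{\binom{k}{2} - e(H)}$, a constant for constant $k$. By the standard inclusion identity stated just before the observation, the number of labeled non-induced copies of $H$ in $G$ equals $\sum_{H' \in \mathcal{S}(H)} C_{H'}(G)$, and by linearity of expectation $\mu_H^{\text{non-ind}} = \sum_{H' \in \mathcal{S}(H)} \mu_{H'}$, where $\mu_{H'}$ is the $\gnp$-expectation of $C_{H'}$.

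Given this decomposition, the QC algorithm I propose for the non-induced count of $H$ is to run a boosted copy of the labeled induced QC algorithm for each $H' \in \mathcal{S}(H)$ with parameter $\varepsilon' := \varepsilon / M$, and to accept iff all of them accept. For completeness, when $G \sim \gnp$ each individual algorithm accepts with probability at least $2/3$; boosting each to success probability at least $1 - 1/(10 M)$ (at the cost of an $O(\log M)$ overhead per component) and union bounding over the $M$ components, the composed algorithm accepts with probability $\geq 9/10$. For soundness, suppose $|C_H^{\text{non-ind}}(G) - \mu_H^{\text{non-ind}}| > \varepsilon\, \mu_H^{\text{non-ind}}$; then by the triangle inequality $\sum_{H'} |C_{H'}(G) - \mu_{H'}| \geq |\sum_{H'} (C_{H'}(G) - \mu_{H'})| > \varepsilon \sum_{H'} \mu_{H'}$, so by pigeonhole some $H^* \in \mathcal{S}(H)$ satisfies $|C_{H^*}(G) - \mu_{H^*}| > (\varepsilon/M)\,\mu_{H^*}$, whereupon the QC algorithm for $H^*$ (run with parameter $\varepsilon'=\varepsilon/M$) rejects with high probability.

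The second claim -- that QC for labeled counts is interchangeable with QC for unlabeled counts -- is immediate: labeled and unlabeled counts differ by the fixed multiplicative factor $|\text{Aut}(H)|$, both for $C_H(G)$ and for $\mu_H$, so the quality parameter $\rho_H = C_H(G)/\mu_H$ takes the identical value in the two formulations. Hence any QC algorithm for one is verbatim a QC algorithm for the other, with no change in query complexity or runtime.

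The only delicate step in the proposal is the pigeonhole argument in the soundness direction of the first claim; everything else is a direct application of the composability principle. Since $M$ is a constant for constant $k$, scaling $\varepsilon$ down by $1/M$ and boosting each component's success probability incur only constant-factor overheads in query complexity and runtime, so the $p^{-O(\Delta(H))}$ bounds of \Cref{thm:qc-motif-in-section} translate directly to QC for labeled non-induced counts, and by the second claim to the unlabeled setting as well.
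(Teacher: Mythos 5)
Your proof is correct and follows essentially the same template the paper uses for the reverse direction (non-induced to induced) in the proof of \Cref{thm:motifs-runtime} via Algorithm \textsc{Induced-Motif-Quality-Efficient}: decompose, apply the triangle inequality, run one QC subroutine per term with the accuracy parameter scaled down by a constant, and union-bound. The paper does not write out a proof for this observation, but your argument is the obvious one it has in mind.

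One small thing worth noting: your pigeonhole step is in fact \emph{cleaner} than the corresponding step in the paper's reverse-direction proof. Because the induced-to-non-induced decomposition $\widetilde{C}_H = \sum_{H' \supseteq H} C_{H'}$ has all positive coefficients, the expectations also decompose as $\widetilde{\mu}_H = \sum_{H'} \mu_{H'}$, so the triangle inequality directly yields $\sum_{H'} |C_{H'}(G) - \mu_{H'}| > \varepsilon \sum_{H'} \mu_{H'}$ and pigeonhole gives $|\rho_{H^*}(G)-1| > \varepsilon/M$ for some $H^*$ in one step. In the paper's proof for the alternating-sign inclusion-exclusion (going the other way), the same extraction needs the auxiliary bounds $\mu_{H'} \leq \mu_H$ and $\widetilde{\mu}_{H'} \leq 2^{\binom{k}{2}}\mu_{H'}$ to convert an additive deviation relative to $\mu_H$ into a multiplicative one relative to $\widetilde{\mu}_{H^*}$; your version avoids that chain because you never lose track of which expectation each deviation is measured against. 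Your treatment of the labeled/unlabeled equivalence via $|\text{Aut}(H)|$ is also exactly what the paper has in mind (it states the $|\text{Aut}(H)|$ scaling fact immediately before the observation), and since the scaling cancels in $\rho_H = C_H(G)/\mu_H$ there is nothing further to prove.
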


\subsection{Query complexity lower bound}\label{sec:motifs-lowerbound}

For graphs $H$ and $G$, let $C_{H}(G)$ count the number of induced labeled copies of $H$ in $G$ (or the number of homomorphisms from $H$ to $G$). For a parameter $p\in [0,1]$ and integer $n$, let $\mu_H = \mu_H^{n,p}$ denote the quantity $\mu_H := \E_{G \sim \gnp}[C_{H}(G)]$. Finally,
let $\rho_H(G) = \rho_H^{n,p}(G)$ equal $C_{H}(G)/\mu_H$. 
For a graph $H$, let $\Delta(H)$ denote the maximum vertex degree in $H$.

In this section, we prove \Cref{thm:qc-motif-in-section} Part (3), which states that the  $(\gnp,\rho_H)$-quality control problem, i.e., the quality control problem for $\gnp$ with respect to $\rho_H$, requires
$p^{-\Omega(\Delta(H))}$ queries for every graph $H$. Specifically, we have the following theorem. 

\begin{theorem}\label{thm:query-lb}
   For every graph $H$, there exists a constant $c > 0$ and polynomial $n_0$ such that for every $p \in (0, 1/2]$ and every $n \geq n_0(1/p)$ , every algorithm that solves the $(\gnp,\rho_H)$-quality control problem has query complexity at least $c \cdot p^{-\Delta(H)/2}$.
\end{theorem}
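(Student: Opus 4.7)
The plan is to apply Yao's minimax principle with two input distributions matching the paper's overview sketch. Fix a parameter $\ell = C_H \cdot n \cdot p^{\Delta(H)/2}$ for a sufficiently large constant $C_H$ depending on $H$ and $\varepsilon$, and define $\mathcal{D}_{\text{YES}} := \gnp$ and $\mathcal{D}_{\text{NO}}$ as follows: to sample from $\mathcal{D}_{\text{NO}}$, first draw a uniformly random subset $U \subseteq [n]$ with $|U| = \ell$; then, independently for each unordered pair $\{u, v\}$, include it as an edge with probability $p$ if $\{u, v\} \subseteq [n] \setminus U$ and with probability $1/2$ otherwise. Yao's principle reduces the theorem to showing (i) $|\rho_H(G) - 1| > \varepsilon$ holds with probability at least $9/10$ under $\mathcal{D}_{\text{NO}}$, and (ii) no deterministic $q$-query algorithm with $q \le c \cdot p^{-\Delta(H)/2}$ can simultaneously accept most $G \sim \mathcal{D}_{\text{YES}}$ and reject most $G \sim \mathcal{D}_{\text{NO}}$.

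For step (i), I would identify a vertex $v_H \in V(H)$ of maximum degree $\Delta(H)$ and lower bound the expected number of labeled induced copies $\psi: V(H) \to V(G)$ of $H$ with $\psi(v_H) \in U$ and the remaining $k-1$ images in $[n] \setminus U$. There are roughly $\ell \cdot n^{k-1}$ such placements; each succeeds with probability $(1/2)^{k-1} p^{e(H) - \Delta(H)} (1-p)^{\binom{k-1}{2} - e(H) + \Delta(H)}$ because every edge or non-edge incident to $\psi(v_H)$ is determined by a between-community $\mathrm{Bernoulli}(1/2)$ slot, while the induced subgraph on the remaining vertices is pure $\gnp$. Dividing by $\mu_H \approx n^k p^{e(H)}(1-p)^{\binom{k}{2} - e(H)}$ yields an expected excess of order $(\ell/n) \cdot (2p)^{-\Delta(H)}$, which with $\ell = C_H n p^{\Delta(H)/2}$ is on the order of $p^{-\Delta(H)/2}$ and can be made $\geq 2\varepsilon$ by choosing $C_H$ large. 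The concentration of $C_H(G)$ around this expectation under $\mathcal{D}_{\text{NO}}$ would follow from the same type of second-moment calculation used to prove \Cref{lem:Gnp-motif-concentration}, after conditioning on $U$; the random choice of $U$ affects the variance by only a constant factor because $\ell \ll n$.

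For step (ii), I would fix an adaptive deterministic $q$-query algorithm and couple $G_{\text{YES}} \sim \mathcal{D}_{\text{YES}}$ and $G_{\text{NO}} \sim \mathcal{D}_{\text{NO}}$ so that the two graphs agree on all pairs inside $[n] \setminus U$. Executing the algorithm on $G_{\text{YES}}$ touches at most $2q$ vertices; conditioned on none of these vertices lying in $U$, the algorithm's executions on $G_{\text{YES}}$ and $G_{\text{NO}}$ are identical, so their outputs agree. Since $U$ is a uniformly random $\ell$-subset independent of the algorithm's internal randomness, a union bound gives $\Pr[V \cap U \ne \emptyset] \le 2q\ell/n$ where $V$ is the touched set. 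Hence the total variation distance between the output distributions is at most $2q\ell/n$; choosing $c$ so that $2q\ell/n < 1/6$ precludes the algorithm from simultaneously achieving completeness $\geq 2/3$ and soundness $\geq 2/3$.

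The main obstacle will be the concentration argument for $C_H(G)$ under $\mathcal{D}_{\text{NO}}$: the edge variables are no longer identically distributed, and the random set $U$ introduces correlations among motif-indicator variables across different copies, so the second-moment bound must decompose contributions from copies using zero, one, or more vertices of $U$ and verify that the ``one vertex in $U$'' term dominates both mean and fluctuations. Everything else --- the reduction via Yao, the expected-excess calculation, and the adaptive-coupling step --- is standard.
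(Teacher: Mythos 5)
Your proposal matches the paper's proof essentially step for step: the paper also takes $D_Y = \gnp$, takes $D_N$ to be $\gnp$ with all edges incident to a random $\ell$-subset ($\ell \asymp n p^{\Delta(H)/2}$) replaced by $\mathrm{Bernoulli}(1/2)$, proves indistinguishability via the identical ``touched set avoids the planted set'' coupling (\Cref{lem:dy-dn-indist}), and separates $\rho_H$ by counting copies that place a maximum-degree vertex in the dense part (\Cref{lem:param-sep}). The concentration of $C_H$ under $D_N$ that you flag as the main obstacle is exactly the paper's \Cref{thm:sbm-H-concentration}, proved by the second-moment decomposition over how many motif vertices fall in each block that you sketch.
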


Our results follow the standard paradigm for lower bounds in graph property testing. Specifically, we consider two distributions on graphs: a ``Yes'' distribution\footnote{In our case, since we are performing quality control for $\gnp$, we are forced to set $D_Y = \gnp$.} $D_Y= D_Y^n := \gnp$ which is mostly supported on graphs $G$ that have $\rho(G) \approx 1$, and a family of ``No'' distributions $D_N = D_N^{n,\ell}$ that aims to be mostly supported on graphs with $|\rho(G)-1| \gg \epsilon$. Our (easy) ``indistinguishability'' lemma (which is independent of $H$) shows that no deterministic $o(n/\ell)$ query algorithm gets much advantage in distinguishing instances drawn from $D_Y^n$ from instances drawn from $D_N^{n, \ell}$. The (slightly more) involved ``parameter-separation'' lemma shows that for an appropriate choice of $\ell$, depending on $H$, $D_N$ does place most of its mass on instances with $|\rho(G)-1| \gg \epsilon$. Putting the two together yields a proof of \Cref{thm:query-lb}.

\Cref{def:no-distribution} below defines our distributions $D_N^{n,\ell}$. It is followed by \Cref{lem:dy-dn-indist}, which proves the indistinguishability of $D_Y^n$ and $D_N^{n, \ell}$, and then by \Cref{lem:param-sep}. which proves that $D_N$ is mostly supported on instances with $\rho$-value far from $1$. We conclude the section with a proof of \Cref{thm:query-lb}.

\begin{definition}\label{def:no-distribution}
     We describe $D_N$ by showing how to sample a graph $G'$ according to this distribution: $G' \sim D_N^{n,\ell}$ is sampled as follows: We first sample $G \sim \gnp$. Then we sample $T \sim \gnhalf$. Finally we sample $S \subseteq [n]$ with $|S| = \ell$ uniformly among all such sets. Our output graph $G' = G'(G,S,T)$ is set to be equal to $G$ on $\overline{S} \times \overline{S}$ and equal to $T$ otherwise. In other words, $(u,v) \in E(G')$ if $\{u,v\} \cap S = \emptyset$ and $(u,v)\in E(G)$ or if $\{u,v\} \cap S \ne \emptyset$ and $(u,v) \in E(T)$.
\end{definition}

\begin{lemma}\label{lem:dy-dn-indist}
    Let $A$ be a deterministic algorithm making $q$ queries to the adjacency matrix of an $n$-vertex graph and outputting $0/1$. Then 
    $$\left|\Pr_{G \in D_Y^n}[A(G) = 1] - \Pr_{G'\in D_N^{n,\ell}}[A(G')=1] \right| \leq \frac{2q\ell}{n-\ell}.$$
\end{lemma}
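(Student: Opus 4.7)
The plan is to use the standard coupling/indistinguishability argument that underlies most lower bounds of this kind. First I would couple $D_Y^n$ and $D_N^{n,\ell}$ on a common probability space: sample $G \sim \gnp$, $T \sim \gnhalf$, and $S \subseteq [n]$ of size $\ell$ uniformly at random, all mutually independent, and set $G' := G'(G, S, T)$ as in \Cref{def:no-distribution}. Then the marginal of $G$ is $D_Y^n$, the marginal of $G'$ is $D_N^{n,\ell}$, and by construction $G$ and $G'$ agree on every pair $\{u,v\}$ with $\{u,v\} \cap S = \emptyset$.

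Next I would show that on the event $E$ that none of $A$'s queries (when run on $G$) touches a vertex of $S$, the executions of $A$ on $G$ and on $G'$ are identical, so $A(G) = A(G')$. This follows by a one-line induction on the query index: by determinism, if the first $i-1$ queries and answers agree on $G$ and $G'$, then $A$'s $i$-th query is the same on both; the event $E$ guarantees that this query lies off $S$, so the answer is the same as well. Consequently
$$\left| \Pr[A(G) = 1] - \Pr[A(G') = 1] \right| \;\le\; \Pr[\bar{E}].$$

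Finally I would bound $\Pr[\bar{E}]$ by a union bound. Because $A$ is deterministic, its queries in the execution on $G$ are a (possibly adaptive) function of $G$ alone, and are therefore independent of $S$. For any fixed pair $\{u,v\}$ the probability that the uniformly random size-$\ell$ set $S$ intersects $\{u,v\}$ is at most $2\ell/n \le 2\ell/(n-\ell)$, so summing over the $q$ queries and taking expectation over $G$ gives $\Pr[\bar{E}] \le 2q\ell/(n-\ell)$, as required. There is no real obstacle; the one subtle point worth flagging in the write-up is that the union bound must be applied to the queries in the $G$-execution (where they are independent of $S$) rather than the $G'$-execution (where they are not), which is why the inductive coupling step is phrased as it is.
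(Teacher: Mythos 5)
Your proof is correct and follows essentially the same route as the paper's: both arguments fix the queries of $A$ on $G$ (which, by determinism, are independent of $S$), observe that $A(G) = A(G')$ whenever the queried vertices avoid $S$, and bound the probability of intersection by a union bound over at most $2q$ vertices. The paper phrases this as the pointwise statement ``for every $G$ and $T$, $\Pr_S[A(G'(G,S,T)) \ne A(G)] \le 2q\ell/(n-\ell)$'' and then averages over $G$ and $T$, whereas you phrase it as an explicit coupling; these are the same argument in different clothing.
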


\begin{proof}
    We prove something even stronger, namely that for every graph $G$, with high probability over the choice of $S \subseteq [n]$ with $|S|=\ell$ and for every graph $T$ we have 
        $$\Pr_S\left[A(G'(G,S,T))\ne A(G)\right] \leq \frac{2q\ell}{n-\ell},$$
    where the notation $G'(G,S,T)$ is as in \Cref{def:no-distribution}. The lemma follows by sampling $G \sim \gnp$ and $T \sim \gnhalf$. 

    Note every query of $A$ to its input graph is of the form ``Is $(u,v) \in E$?''. Let $(u_i,v_i)$, for $i = 1$ to $q$ denote the $q$ queries of $A$ on input $G$. (Since $A$ is deterministic and the input graph is fixed, these queries are completely determined.) Let $Q = \{u_i ~|~ i\in [q]\} \cup \{v_i ~|~ i \in [q]\}$. We observe that if $Q \cap S = \emptyset$ then $A(G) = A(G'(G,S,T))$ since every query gets the same answer in $G'$ as in $G$ and $A$'s queries and output are deterministic functions of previous queries and their answers. The lemma follows from the simple claim that for every set $Q \subseteq [n]$ with $|Q|\leq 2q$ the probability that it intersects a random set $S \subseteq [n]$ of size $\ell$ is at most $2q\ell/(n-\ell)$.     
\end{proof}

\begin{lemma}\label{lem:param-sep}
    For every graph $H$ of maximum degree $\Delta$,  there exists a polynomially growing function $f$ such that for every $p \in (0,1/2]$ and $n \geq 20 (2k)^{2k + 1} p^{-\Delta} /  \varepsilon^2$, and every $\ell \geq f(1/p)$ and $n/2 \geq \ell \geq 2^{2\binom{k}2+1} n p^{\Delta/2}$, the following hold:
    $$\Pr_{G \sim D_Y^n}[| \rho_H(G) - 1| \leq 0.1] \geq 1 - 1/12, $$
    $$\mbox{ and } \Pr_{G' \sim D_N^{n,\ell}}[ \rho_H(G') \geq 1.9] \geq 1 - 1/12.$$  
\end{lemma}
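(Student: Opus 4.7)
The YES and NO claims are proved separately; the YES direction is a direct application of $\gnp$-motif concentration, and the NO direction is handled by the single observation that induced copies of $H$ lying \emph{entirely inside} the planted set $S$ already suffice.

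\textbf{YES.} Apply \Cref{lem:Gnp-motif-concentration} to $G\sim\gnp$. The hypothesis $n\ge 20(2k)^{2k+1}p^{-\Delta}/\varepsilon^2$ implies $p=\Omega(n^{-1/m_H}\varepsilon^{-2})$ because the handshake bound $e(F)/v(F)\le\Delta(F)/2\le\Delta/2$ (for every $F\subseteq H$) gives $m_H\le\Delta/2$; the lemma then yields $|\rho_H(G)-1|\le 0.1$ with probability $\ge 9/10$. The boost from $9/10$ to $1-1/12$ is routine: take $f(1/p)$ to be a slightly larger polynomial, i.e.\ rerun the Janson-type tail bound underlying \Cref{lem:Gnp-motif-concentration} with a slightly larger $n$.

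\textbf{NO.} By construction of $D_N^{n,\ell}$, every edge slot of $G'$ with an endpoint in $S$ is read from $T\sim\gnhalf$. Hence for any labeled injection $\psi:V(H)\hookrightarrow S$, whether $\psi$ is an induced $H$-copy in $G'$ depends only on $T|_{S\times S}$, and since $S$ is chosen independently of $T$, the random graph $T|_{S\times S}$ is distributed as $\mathcal{G}_{\ell,1/2}$. Thus
\[
C_H(G')\;\ge\;C_H(T|_{S\times S}),\qquad T|_{S\times S}\sim\mathcal{G}_{\ell,1/2},
\]
and it suffices to prove $C_H(H^\star)\ge 1.9\mu_H$ with probability $\ge 1-1/12$ for $H^\star\sim\mathcal{G}_{\ell,1/2}$. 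Computing
\[
\mathbb{E}[C_H(H^\star)]=\frac{\ell!}{(\ell-k)!}\,2^{-\binom{k}{2}},\qquad \mu_H=\frac{n!}{(n-k)!}\,p^{e(H)}(1-p)^{\binom{k}{2}-e(H)},
\]
and using $(1-p)^{\binom{k}{2}-e(H)}\le 1$ (since $p\le 1/2$) together with the handshake bound $e(H)\le k\Delta/2$, the ratio satisfies
\[
\frac{\mathbb{E}[C_H(H^\star)]}{\mu_H}\;\ge\;\bigl((\ell/n)\,p^{-\Delta/2}\bigr)^k\cdot 2^{-\binom{k}{2}-k}\;\ge\;2^{(2k-1)\binom{k}{2}}\;\ge\;8,
\]
where the second inequality uses $\ell\ge 2^{2\binom{k}{2}+1}np^{\Delta/2}$. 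A second application of \Cref{lem:Gnp-motif-concentration} with $(n,p)\leftarrow(\ell,1/2)$ and constant error $0.1$ requires only $\ell\ge c_H$ for a constant depending on $H$ (absorbed into $f(1/p)$), and yields $C_H(H^\star)\ge 0.9\cdot 8\mu_H\ge 1.9\mu_H$ with probability $\ge 1-1/12$.

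\textbf{Main obstacle.} Essentially none: the only genuinely new idea is that copies living entirely in $S$ already dominate $\mu_H$ by a large constant factor, which bypasses any second-moment analysis across subsets $J\subseteq V(H)$ of varying sizes. The rest is arithmetic plus two invocations of \Cref{lem:Gnp-motif-concentration} (at densities $p$ and $1/2$), with the mild probability boost from $9/10$ to $1-1/12$ absorbed into the polynomial growth of $f$.
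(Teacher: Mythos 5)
Your YES direction is fine, and the structure of your NO argument (restrict to a sub-event whose distribution is a clean random graph, compare expectations, then invoke concentration) is reasonable, but the key inequality in the NO direction is reversed and the approach fails for non-regular $H$.

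\paragraph{The gap.} You lower bound
\[
\frac{\mathbb{E}[C_H(H^\star)]}{\mu_H}
\]
by dropping the $(1-p)^{\binom{k}{2}-e(H)}\le 1$ factor and then ``using the handshake bound $e(H)\le k\Delta/2$'' to replace $p^{-e(H)}$ with $p^{-k\Delta/2}$. But since $0<p<1$, the inequality $e(H)\le k\Delta/2$ gives $p^{e(H)}\ge p^{k\Delta/2}$, hence $p^{-e(H)}\le p^{-k\Delta/2}$ --- that is an \emph{upper} bound, not a lower bound. Carrying through the correct direction, the best you can say is
\[
\frac{\mathbb{E}[C_H(H^\star)]}{\mu_H}\;\ge\;2^{(2k-1)\binom{k}{2}}\cdot p^{\,k\Delta/2-e(H)}\cdot(1-p)^{-(\binom{k}{2}-e(H))},
\]
and $k\Delta/2-e(H)\ge 0$ with equality \emph{iff} $H$ is $\Delta$-regular. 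So whenever $H$ is not $\Delta$-regular (e.g.\ a star $K_{1,k-1}$, where $k\Delta/2-e(H)=\binom{k}{2}-(k-1)>0$), the factor $p^{k\Delta/2-e(H)}\to 0$ as $p\to 0$, and your ratio drops below $1$ --- the copies of $H$ entirely inside $S$ are then \emph{fewer} than $\mu_H$, not $8$ times more. Concretely, copies of $H$ entirely in $S$ number $\approx \ell^k 2^{-\binom{k}{2}}\approx n^k p^{k\Delta/2}$ (up to constants), whereas $\mu_H\approx n^k p^{e(H)}$; since $p^{k\Delta/2}\le p^{e(H)}$ for $p\le 1$, the planted block simply does not dominate unless $H$ is regular.

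\paragraph{Why the paper's argument avoids this.} The paper does not demand all $k$ vertices of $H$ land in $S$. It counts copies of $H'$ (= $H$ with a max-degree vertex deleted) anywhere in $G'$, then extends each copy by one vertex of $S$, paying only a single factor of $\ell/n$ (not $(\ell/n)^k$). Since $\mu_H\approx \mu_{H'}\cdot n p^{\Delta}(1-p)^{k-1-\Delta}$, the expected count of extensions beats $\mu_H$ by a factor proportional to $\ell/(np^{\Delta})$, which is $\ge$ a large constant whenever $\ell\gtrsim 2^{\binom{k}{2}}np^{\Delta}$ --- a condition implied by the lemma's hypothesis for every $H$, regular or not. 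This is exactly the copies your restriction misses: for a star, nearly all ``new'' copies have only the center in $S$ and the leaves outside it. Concentration is then taken over the whole stochastic block model (via \Cref{thm:sbm-H-concentration}), not over the $\mathcal{G}_{\ell,1/2}$ marginal alone. To repair your proof you would need to include copies meeting $S$ in exactly one vertex (the one of degree $\Delta$), which is in substance the paper's argument.
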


\begin{proof} 
    We prove the lemma for $f(x) = x^{\Delta}$. 
    
    On the one hand we have that the expected number of induced labeled copies of $H$ in $G \sim \gnp$ is $\mu_H = \binom{n}k k! \cdot p^{e(H)} (1-p)^{\binom{k}{2} - e(H)} = n^k p^{e(H)} (1-p)^{\binom{k}{2} - e(H)} (1\pm o_n(1))$, and this is concentrated provided $n \geq p^{-m_H}$ by \Cref{lem:Gnp-motif-concentration}, where $m_H = \max_{H' \subseteq H} \frac{e(H')}{v(H')}$. This holds for $n \geq 20 (2k)^{2k + 1} p^{-\Delta(H)} /  \varepsilon^2 $. 
    
    As a consequence, we have
    $$\Pr_{G \sim D_Y^n}[| C_{H}(G)-\mu_H| \leq 0.1 \cdot \mu_H ] = 1 - o_n(1) \geq 1 - 1/12. $$

    We now turn to analyzing $\rho_H(G')$ for $G'\sim D_N$. Let the vertices of $H$ be denoted as $h_1,\ldots,h_k$, with the degree of $h_k$ being $\Delta$. Let $e(H)$ denote the number of edges of $H$. Let $H'$ be the induced subgraph on vertices $h_1,\ldots,h_{k-1}$. Note that the number of edges of $H'$ is exactly $e(H') = e(H) -\Delta$. We lower bound the expected number of induced copies of $H$ in $G'$ by first bounding $N_1$, the number of induced labeled copies of $H'$ in $G'$, and then lower bounding the expected number of extensions of a given copy of $H'$. We claim that $N_1 \geq \mu_{H'} \cdot 2^{-\binom{k-1}2}$ with high probability since every edge is present in $G'$ with probability at least $p$ and absent with probability at least $1/2$. Fix such a copy $A$ of $H'$ in $G'$, say on vertices $v_1,\ldots,v_{k-1}$. Now consider picking a random vertex $v_k \in [n] \setminus \{v_1,\ldots,v_{k-1}\}$. We claim the probability that this extends $A$ into a copy of $H$ is at least $\frac{\ell-k}{n-k} 2^{-(k-1)}$, where the probability is over the random choice of $S$ and $T$. In particular $v_k$ lands in $S$ with probability at least $(\ell-k)/(n-k)$ and conditioned on this event, each edge from $v_k$ to $v_i$ is present or absent in $T$ with probability exactly $1/2$. Adding up over the $n-k$ choices of $v_k$ we conclude that each copy of $H'$ in $G'$ extends to $(\ell-k)2^{-(k-1)}$ copies of $H$ in expectation. Thus $\E[C_{H}(G')] \geq N_1 \cdot (\ell-k) 2^{-(k-1)} \geq \mu_{H'} (\ell-k) 2^{-\binom{k}2}$. 
    To compare this with $\mu_H$, note that $\mu_H = \mu_{H'} \cdot n \cdot p^\Delta (1-p)^{k-1-\Delta} (1\pm o_n(1))$. Thus  
    \begin{align*}
        \E[C_{H}(G')] & \geq \mu_{H'} (\ell-k) 2^{-\binom{k}2} \\
                    & \geq \mu_H n^{-1} p^{-\Delta} (1-p)^{-(k-1-\Delta)} (1\pm o_n(1)) (\ell-k) 2^{-\binom{k}2} \\
                    & \geq \mu_H \frac{\ell}{2^{\binom{k}2+2}n p^{\Delta}}
    \end{align*}
    where the final inequality uses $(1-p)^{-(k-1-\Delta)}\geq 1$, $\ell \geq 2k$ and that $n$ is large enough so that the $o_n(1)$ term is $\leq 1/2$.
    Thus if $\ell \geq 2^{\binom{k}2+3} p^{\Delta} n$ we get that $\E_{G'}[C_{H}(G')] \geq 2\mu_H$. By applying \Cref{thm:sbm-H-concentration}, we get that  
    $$\Pr_{G' \sim D_N^{n,\ell}}[ C_{H}(G')  \geq 1.9\mu_H] \geq 1 - 1/12,$$
    provided $n \geq 20 (2k)^{2k + 1} p^{-\Delta(H)} /  \varepsilon^2$, $\ell \geq (1/p)^{\Delta}$, and $n/2 \geq \ell \geq 2^{2\binom{k}2+1} n p^{\Delta/2}$.
\end{proof}

\begin{proof}[Proof of \Cref{thm:query-lb}.]  
    We prove the theorem for $n_0(x) = f(x)\cdot x^{\Delta}$ and $c = 2^{2 \binom{k}{2} + 1}/12$.  Let $n \geq n_0(1/p)$. Let $\ell = 2^{2\binom{k}2+1} n p^{\Delta}$.
    In particular our choice of $n_0(x)$ will be set to ensure that the $o_\ell(1)$ and $o_n(1)$ terms in \Cref{lem:param-sep} are at most $1/12$. 
    
    Let $A$ be an algorithm making $q$ queries that accepts $G \sim \gnp$ with probability at least $2/3$ while accepting $G$ with $\rho_H(G) > 1.5$ with probability at most $1/3$. Now note that by our choice of $\ell$ we have, by \Cref{lem:param-sep}, that $\Pr_{G' \sim D_N^{n, \ell}}[\rho_H(G')\leq 1.9] = o_\ell(1)\leq 1/12$ (by our choice of $n$).
    So we get $\Pr_{G\sim \gnp,R}[A(G) = 1] - \Pr_{G'\sim D_N^{n, \ell},R}[A(G')=1] \geq 1/6$, where the randomness $R$ denotes the random coin tosses of the algorithm $A$. In particular there exists $R_0$ such that $\E_{G\sim \gnp}[A'(G) = 1] - \E_{G'\sim D_N^{n, \ell}}[A'(G')=1] \geq 1/6$ where $A'$ denotes the deterministic version of $A$ obtained by fixing its randomness to $R_0$. Applying \Cref{lem:dy-dn-indist}, we get that $2q\ell/(n-\ell) \geq 1/6$, which in turn implies $q \geq (n-\ell)/(12\ell)$.
\end{proof}
\subsection{Algorithmic complexity}

We now move to proving \Cref{thm:qc-motif-in-section} Part (2) regarding the algorithmic complexity of quality control of motifs compared to $\gnp$. We re-state the result below. Recall the definition of $\sigma(H)$ from \Cref{eq:s-H}. Let $m_H^{\star}$ be the maximum value of $m_{H'}$ over all $H' \supseteq H$ with $|H'| = |H|$ and edges that are a superset of the edges of $H$.

\begin{theorem}[\Cref{thm:qc-motif-in-section} Part (2)]\label{thm:motifs-runtime}
    For constant error parameter $\varepsilon$ and \\ $p$ in both $\Omega\left(\left(\log(n)/n \right)^{1/(4 \sigma(H) - 1)}\right)$ and $\omega\left(n^{-1/m_{H}^{\star}} \right)$, and induced motif $H$ on $k$ vertices, there is an algorithm for $(\gnp, \rho_H)$-quality control that makes $\frac{1}{p^{O(\Delta(H))}}$ adjacency matrix queries and has a runtime of $\frac{1}{p^{O(\Delta(H))}}$.
\end{theorem}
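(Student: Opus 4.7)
The plan is to adapt Algorithm \textsc{Clique-Quality-Efficient} and its analysis (Theorem \ref{thm:k-cliques-runtime}) to general motifs, now using the exponentially-robust-quasirandomness framework for motifs developed in Section \ref{sec:query-complexity-motifs}. Concretely, the algorithm samples a multiset $S$ of $s_{*} = p^{-O(\Delta(H))} \cdot \mathrm{polylog}(1/p)$ vertices (the size dictated by Observation \ref{obs:s-value-motifs}), queries all $\binom{s_{*}}{2}$ adjacencies inside $G_S$, and then (i) verifies the degree/codegree conditions of Proposition \ref{prop:jumbled-from-deg-codeg} to certify that $G_S$ is $(p,\beta)$-jumbled for some $\beta \le p^{\sigma(H)} s_{*}/C$, rejecting otherwise, and (ii) if $G_S$ passes the jumbledness check, invokes the algorithm of Proposition \ref{prop:jumbled-efficient-approximate-subgraph-counting} once per induced subgraph $H'\subseteq H$ on $\ell \le k$ vertices to approximately count labeled induced copies of $H'$ in $G_S$. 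It accepts iff each approximate count lies inside the appropriate exponentially-robust-quasirandomness window (analogous to Corollary \ref{cor:cor-of-concentration-approximate-counting}).

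For completeness, we need $G_S$ to be jumbled with the right parameters when $G \sim \gnp$. Since the marginal of $\gnp$ on the vertices of $S$ is distributed as $\mathcal{G}_{s_{*},p}$, this follows from Corollary \ref{cor:jumbled-subgraphs} applied with $k$ replaced by $\sigma(H)$, using the lower bound $p = \Omega((\log n/n)^{1/(4\sigma(H)-1)})$ in the theorem and the fact that $s_{*}$ is a sufficiently high power of $1/p$. The rest of completeness and all of soundness then follow from the concentration/inductive framework of Section \ref{sec:any-random-graph-family} (specialized to $\gnp$ as in Section \ref{sec:motifs-upperbound}) applied exactly as in the proof of Theorem \ref{thm:k-cliques-runtime}: one redefines the induction tolerances $\zeta_\ell,\eta_\ell$ (analogously to Definition \ref{def:rho-ell}) to swallow the additive approximation error of Proposition \ref{prop:jumbled-efficient-approximate-subgraph-counting}, and checks that the inductive lemma carries through from $\ell-1$ to $\ell$ whenever $G_S$ passes the jumbledness test. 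The query complexity remains $\binom{s_{*}}{2} = p^{-O(\Delta(H))}$ because the algorithm queries only edges inside $S$.

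For the runtime, each of the finitely many stages is polynomial in $s_{*}$: the jumbledness check runs in $O(s_{*}^3)$ time, and each invocation of the subgraph-counting algorithm of Proposition \ref{prop:jumbled-efficient-approximate-subgraph-counting} runs in $\exp(\mathrm{poly}(k/\varepsilon)) \cdot s_{*}^{c}$ time for an absolute constant $c$ (independent of $k$), because $\sigma(H) \le \Delta(H)+2$ so the required jumbledness threshold $\beta \le p^{\sigma(H)}s_{*}/C$ is compatible with the one we verified. Summing over the $\le 2^{k}$ subgraphs $H' \subseteq H$ we examine yields total running time $p^{-O(\Delta(H))} \cdot \exp(\mathrm{poly}(k/\varepsilon))$, matching the claim.

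The main technical obstacle, and essentially the only non-mechanical step, is to match the additive error of Proposition \ref{prop:jumbled-efficient-approximate-subgraph-counting}, which is of the form $\varepsilon' (v(H'))^{v(H')} p^{e(H')} \binom{s_{*}}{v(H')}$, against the multiplicative tolerance $\tfrac{1}{8}\zeta_{\ell-1} \cdot S_{\gnp}(H',s_{*})$ needed by the induction; since $S_{\gnp}(H',s_{*}) \propto p^{e(H')}(1-p)^{\binom{\ell}{2}-e(H')}\binom{s_{*}}{\ell}\ell!$, this forces the choice $\varepsilon' = \Theta(\varepsilon \cdot (1-p)^{\binom{\ell}{2}})$, which inflates the hidden constant $C = \exp(\mathrm{poly}(k/\varepsilon))$ in the jumbledness threshold but does not change the exponent on $1/p$ in either the query or time complexity, leaving the bound $p^{-O(\Delta(H))}$ intact.
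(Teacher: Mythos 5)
Your plan correctly mirrors the structure of Algorithm \textsc{Clique-Quality-Efficient}: sample $s_* = p^{-O(\Delta(H))}$ vertices, certify jumbledness of $G_S$ via \Cref{prop:jumbled-from-deg-codeg}, and then count small subgraphs in the jumbled graph via \Cref{prop:jumbled-efficient-approximate-subgraph-counting}. But there is a genuine gap in step (ii): \Cref{prop:jumbled-efficient-approximate-subgraph-counting} returns an estimate of the number of \emph{non-induced} labeled copies of $H'$ in $G_S$ (its additive error $\varepsilon' \ell^{\ell} p^{e(H')}\binom{s_*}{\ell}$ is keyed to the non-induced expectation $p^{e(H')}\binom{s_*}{\ell}\ell!$, with no $(1-p)$ factor), whereas the exponentially-robust-quasirandomness framework from \Cref{sec:technical-lemmas} tests \emph{induced} counts $C_{H'}(S)$ against $S_{\gnp}(H',s_*) \propto p^{e(H')}(1-p)^{\binom{\ell}{2}-e(H')}\binom{s_*}{\ell}\ell!$. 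You cannot feed the former into the latter: an arbitrarily good additive estimate of $\widetilde{C}_{H'}(G_S)$ gives you no control on $C_{H'}(G_S)$, because $\widetilde{C}_{H'}(G_S) = \sum_{H'' \supseteq H',\,|H''|=\ell} C_{H''}(G_S)$ and the individual summands can vary wildly while their sum stays fixed. Your calibration $\varepsilon' = \Theta(\varepsilon(1-p)^{\binom{\ell}{2}})$ fixes the normalization mismatch but does not fix the mismatch of what is being estimated.

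The missing ingredient is the inclusion-exclusion step. The paper first proves the theorem for \emph{non-induced} motifs exactly as you describe (this is Algorithm \textsc{NonInduced-Motif-Quality-Efficient}, \Cref{sec:noninduced-motifs-algo}, where the quasirandomness windows are also redefined in terms of non-induced counts), and then separately handles induced motifs (Algorithm \textsc{Induced-Motif-Quality-Efficient}, \Cref{sec:induced-motifs-algo}) by running the non-induced algorithm once for every $H'' \supseteq H$ with $|H''| = k$, using the identity $C_H(G) = \sum_i (-1)^i \sum_{H'' \supseteq H,\, e(H'') = e(H)+i} \widetilde{C}_{H''}(G)$. The soundness proof there is not mechanical: it needs a triangle-inequality argument to locate a single superset $H^\circ$ whose non-induced count is far from its mean, together with a tightened error tolerance $\varepsilon/2^{k(k-1)}$ passed into the non-induced subroutine. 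You need to add both the outer loop over supersets of $H$ and this soundness argument, or, equivalently, do the inclusion-exclusion over supersets inside the jumbled-graph counting step before comparing against $S_{\gnp}(H',s_*)$.

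Everything else in your proposal --- the jumbledness certification with exponent $\sigma(H) \le \Delta(H)+2$ in place of $k$, the use of \Cref{cor:jumbled-subgraphs} for completeness, the $O(s_*^3)$ jumbledness check, the $C s_*^c$ time for the Conlon--Fox--Zhao subgraph counter with $C = \exp(\mathrm{poly}(k/\varepsilon))$, and the final $p^{-O(\Delta(H))}$ query and time bounds --- matches the paper's argument for the non-induced case.
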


While the above is stated for constant $\varepsilon$, the bounds with the explicit dependence on $\varepsilon$ are: $p^{-O(k)} \cdot \mathrm{poly}(1/\varepsilon)$ for the query complexity, and $p^{-O(k)} \cdot \exp(\mathrm{poly}(|H|/\varepsilon))$ for the runtime.

\subsubsection{Non-induced motifs}\label{sec:noninduced-motifs-algo}
We begin by studying non-induced motifs, moving to induced motifs in \Cref{sec:induced-motifs-algo}. Our query complexity upper bounds so far hold for the setting of non-induced motifs, as noted in Observation \ref{obs:induced-to-noninduced-motif-counts}. We will extend \Cref{sec:algorithmic-complexity}, using similar ideas of testing graph jumbledness and then applying an algorithm from \Cref{prop:jumbled-efficient-approximate-subgraph-counting}. We utilize the fact that the algorithm from \Cref{prop:jumbled-efficient-approximate-subgraph-counting} already holds for approximately counting \textit{motifs} in jumbled graphs.

We begin by proving \Cref{thm:motifs-runtime} for the quality control of the counts of \textit{non-induced} motifs, and then use an inclusion-exclusion principle argument to reason about induced motifs in . Considering non-induced motifs allows us to more directly apply the efficient algorithm from \Cref{prop:jumbled-efficient-approximate-subgraph-counting} for approximate counting of motifs in sufficiently jumbled graphs. 

For motif $H$, let $j = \lceil \sigma(H) \rceil$ and $|H| = k$.

In this section we are considering the case where $p \geq \left(\frac{100 C^4 \log n}{n} \right)^{1/(4j-1)}$, where $C$ is a constant $ C = C(\varepsilon, k) = \exp\left(\text{poly}(k/\varepsilon) \right)$. In this setting of $p$, we can use the jumbledness of small induced subgraphs of $\gnp$.

For a motif $H$, recall the definition of $\sigma(H)$ from \Cref{eq:s-H}. Suppose that $j = \lceil \sigma(H) \rceil$. Let $S(H, s)$ be the expected number of copies of $H$ in a multiset on $s$ vertices (with repetition), as in \Cref{eq:s-D-H-motif-gnp}.

Consider the following parameters.
\begin{definition}\label{def:rho-ell-motif}
 For $2 \leq \ell \leq k-1$, define
    $$\zeta_{\ell} = \varepsilon \cdot \left(\frac{4}{7} \right)^{k - \ell - 1}.$$
    Define $\eta_2 = \varepsilon^2 r_{2}(H) \cdot \left(\frac{4}{7} \right)^{2(k-3)}/128$, and for $\ell \in \{3, 4, \dots, k\}$ define
    $$\eta_{\ell} = \frac{r_{\ell}(H) \cdot \zeta_{\ell-1}^2}{1024 \cdot \ell^2}.$$
\end{definition}

We also redefine $s$ as follows.
\begin{definition}\label{def:s-size-algo-motif}
    Let $C$ be a constant $C = C(\varepsilon, k) = \exp(\mathrm{poly}(k/\varepsilon))$. Define $s_{*}$ to be:
    $$s_{*} =\frac{1200 C^2 \cdot k^2 \cdot 1024}{p^{5 \sigma(H)} \cdot \varepsilon^2} \cdot \left( \frac{7}{4}\right)^{2(k -  \ell)}\cdot 8 \cdot \ln\left(\frac{(6k)^k}{p^{\binom{k+2}{2}} \cdot \varepsilon} \right).$$
\end{definition}

We apply the following algorithm, which is an extension of the algorithm in \Cref{sec:algorithmic-complexity}. 

\paragraph{Algorithm \textsc{NonInduced-Motif-Quality-Efficient}:} On inputs $G, n, p, \gnp, \varepsilon$, and non-induced motif $H$
\begin{enumerate}
     \item Sample $s_{*}$ vertices, each uniformly at random (with repetition). Let $S$ be the corresponding set of vertices, with induced graph $G_S$. 
     \item Perform adjacency matrix queries on all pairs in $S$ to find all edges. 
     \item Test that $G_S$ is 
     $(p, \beta)$-jumbled for $\beta \leq \frac{p^j s_{*}}{C}$ for a constant $C = \exp\left( \text{poly}(k/\varepsilon)\right)$
     by checking the following:
     Let $\delta = O(p^{2j}/C^2)$. For every $u \in S$, check that the degree of $u$ in $S$ is $(s_{*}-1)(p \pm \delta)$. For every $(u, v) \in S \times S$, check that the number of common neighbors of $(u, v)$ is $(s_{*} - 2)(p^2 \pm \delta)$. If any pair in $S\times S$ does not satisfy this, reject. 
     \item Else, for each $\ell \leq k$ and each $H' \subseteq H$ with $|H'| = \ell$, approximate the number of labeled non-induced copies of $H'$ in $S$ to within a $\zeta_{\ell - 1}S(H', s_{*})/8$ additive factor, in time $O(s_{*}^c)$ for a constant $c$, using the algorithm referenced in \Cref{prop:jumbled-efficient-approximate-subgraph-counting} (applied with error parameter $\varepsilon/k^k$). If, for any $H'$, the approximation of the number of $H'$ copies in $G_S$ is not in $(1 \pm \frac{5}{8}\varepsilon) S(H', s_{*})$, reject. Otherwise, accept.
\end{enumerate}

\begin{proof}[Proof of \Cref{thm:motifs-runtime}]
    This proof follows the same steps as the proof of \Cref{thm:k-cliques-runtime} from \Cref{sec:algorithmic-complexity} (using our new setting of parameters $\gamma_{\ell}, \eta_{\ell}$) for this section). We therefore omit the proof.

    The two changes are the following. For completeness, now apply \Cref{cor:jumbled-subgraphs} for $\beta \leq p^{\sigma(H)} s_{*} / C$ (that is, replace $k$ by $\sigma(H)$), and let $C = \exp(\text{poly}(k/\varepsilon))$ as before (where $|H| = k$). We can apply \Cref{cor:jumbled-subgraphs} because $s_{*}$ is set so that $s_{*} \geq 1200 C^4/p^{5 \sigma(H)}$.
    
    Additionally, completeness, which relies on concentration of $H$ counts in $\gnp$, requires $p = \omega(n^{-1/m_H^{\star}})$, as defined before \Cref{thm:motifs-runtime}. Additionally, instead of relying on the inductive and concentration lemmas from \Cref{sec:qc-upperbound} for $k$-cliques (\Cref{lemma:inductive-cliques} and \Cref{lem:concentration}, respectively), rely on the inductive and concentration lemmas from \Cref{sec:technical-lemmas} (\Cref{lemma:inductive-general-P-and-D} and \Cref{lem:concentration-general-P-and-D}, respectively), which hold for general motifs and a range of distributions over graphs.

   We now argue about the query complexity and runtime. Since the algorithm queries all pairs of vertices in a set $S$ of size $s_{*}$ (\Cref{def:s-size-algo-motif}), the algorithm uses $\widetilde{O}\left(p^{-10 \lceil \sigma(H)\rceil} \varepsilon^{-4}\right)$ adjacency matrix queries. As noted in \cite{conlon2014extremal}, $\sigma(H) \leq \Delta(H) + 2$, and so the algorithm uses $1/(p^{O(\Delta(H))} \varepsilon^{4})$ adjacency matrix queries. The runtime is $C s_{*}^c$ for some constants $C = $ \\$ \exp(\text{poly}(|H|/\varepsilon))$ and $c \geq 3$, which is $1/(p^{O(\Delta(H))}) \cdot \exp(\text{poly}(|H|/\varepsilon))$.
\end{proof}

\subsubsection{Moving to induced motifs}\label{sec:induced-motifs-algo}

We can transfer the results/analysis for non-induced motifs to induced motifs as follows. Observe that we can write the number of induced motifs $H$ in the graph in terms of a sum over (possibly negated) counts of all of the non-induced counts of motifs $H' \subseteq H$. Therefore, using the inclusion-exclusion principle allows us to move to induced motifs.

We apply the following algorithm to transfer the results for non-induced motifs to induced motifs. Suppose $H$ is supported on $k$ vertices. Below, let $H \subseteq H', |H'| = k$ mean that $H'$ is a motif on $k$ vertices whose edges are a superset of the edges of $H$.

\paragraph{Algorithm \textsc{Induced-Motif-Quality-Efficient}:} On inputs $G, n, p, \gnp, \varepsilon$, and induced $H$
\begin{enumerate}
    \item For every $H'$ satisfying $H \subseteq H'$ and $|H'| = k$:
    \begin{enumerate}
        \item Run Algorithm \textsc{NonInduced-Motif-Quality-Efficient} with inputs $G, n, p, \gnp, \frac{\varepsilon}{2^{k(k-1)}}$, and non-induced $H'$. If Algorithm \textsc{NonInduced-Motif-Quality-Efficient} rejects, then reject. Else, proceed.
    \end{enumerate}
    \item Accept.
\end{enumerate}

The proof of correctness of this algorithm will rely on the following inclusion-exclusion formula relating induced counts to non-induced motif counts. For a motif $P$, let $C_{P}(G)$ be the number of labeled induced copies of $P$ in $G$, and let $\widetilde{C}_{P}(G)$ be the number of labeled non-induced copies of $P$ in $G$.

\begin{observation}
    Consider any motif $H$ on $k$ vertices, and any graph $G$. Then
    $$C_H(G) = \sum_{i = 0}^{\binom{k}{2} - e(H)} (-1)^{i} \cdot \sum_{\substack{H' \supseteq H \\ |H'| = k \\ e(H') = e(H) + i}} \widetilde{C}_{H'}(G).$$
\end{observation}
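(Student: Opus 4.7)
The identity is standard M{\"o}bius inversion on the Boolean lattice of edge sets, applied to the relationship between induced and non-induced labeled copies. I will prove it by first rewriting each non-induced count as a sum of induced counts, substituting into the right-hand side, and verifying that the coefficient of every $C_{H''}(G)$ reduces to $\mathbbm{1}[H'' = H]$.

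\textbf{Step 1 (non-induced as a sum of induced).} For any motif $H'$ on $k$ vertices, a labeled non-induced copy of $H'$ in $G$ is an injective map $\psi: V(H') \to V(G)$ with $\psi(E(H')) \subseteq E(G)$, with no constraint on non-edges of $H'$. Such a map is in bijection with a labeled induced copy of the unique graph $H''$ on $k$ vertices whose edge set is $\psi^{-1}(E(G|_{\psi(V(H'))}))$; this $H''$ automatically satisfies $H' \subseteq H''$. Conversely, every induced copy of any $H'' \supseteq H'$ on the same vertex set yields a non-induced copy of $H'$. Hence
\[
\widetilde{C}_{H'}(G) \;=\; \sum_{\substack{H'' \supseteq H' \\ |H''| = k}} C_{H''}(G).
\]

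\textbf{Step 2 (substitute and collect coefficients).} Plugging Step~1 into the right-hand side of the claim and swapping summations, the coefficient of $C_{H''}(G)$ for fixed $H'' \supseteq H$ with $|H''| = k$ is
\[
\sum_{i=0}^{\binom{k}{2}-e(H)} (-1)^{i} \cdot \#\!\left\{ H' : H \subseteq H' \subseteq H'',\ e(H') = e(H) + i \right\}.
\]
Writing $m := e(H'') - e(H)$, such an $H'$ is obtained by choosing $i$ of the $m$ ``extra'' edges of $H''$ not in $H$, so the inner count is $\binom{m}{i}$. The coefficient of $C_{H''}(G)$ is therefore
\[
\sum_{i=0}^{m} (-1)^{i} \binom{m}{i} \;=\; (1-1)^{m} \;=\; \mathbbm{1}[m = 0] \;=\; \mathbbm{1}[H'' = H].
\]
(Note that $H'' \supseteq H$ together with $e(H'') = e(H)$ forces $H'' = H$.)

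\textbf{Step 3 (conclude).} Only the term $H'' = H$ survives, contributing $C_H(G)$, which matches the left-hand side. There is no real obstacle here beyond the bookkeeping in Step~2; the key observation is simply that the lattice of graphs on $k$ vertices containing $H$, ordered by edge inclusion, is isomorphic to the Boolean lattice on the $\binom{k}{2}-e(H)$ non-edges of $H$, for which the M{\"o}bius function is $(-1)^{|\cdot|}$.
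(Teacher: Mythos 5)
Your proof is correct and is exactly the inclusion--exclusion argument the paper alludes to (the paper states this as an observation without giving a proof, only noting that it "follows from the inclusion-exclusion principle"). Your Step~1 identity $\widetilde{C}_{H'}(G) = \sum_{H'' \supseteq H', |H''|=k} C_{H''}(G)$, the coefficient collection in Step~2, and the binomial cancellation $\sum_i (-1)^i\binom{m}{i} = \mathbbm{1}[m=0]$ are all the standard M\"obius inversion on the Boolean lattice of non-edges of $H$, which is precisely the intended argument.
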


We also observe that \Cref{lem:Gnp-motif-concentration} can imply that, for $p = \omega\left( \max_{H' \supseteq H, |H'| = k} n^{-1/m_{H'}} \cdot \varepsilon^{-2}\right)$, with probability at least $1 - o(1)$, for all $H' \supseteq H, |H'| = k$, the number of copies of $H'$ in $G \sim \gnp$ is in the range $(1 \pm \frac{\varepsilon}{100 k^k}) \binom{n}{k} p^{e(H')} (1 - p)^{\binom{k}{2} - e(H)}$. This is because, as noted in \Cref{sec:preliminaries} after \Cref{lem:Gnp-motif-concentration}, the choice of $\varepsilon/100$ was not specific and can be changed with the hidden constants in the lower bound on $p$. 

We now prove \Cref{thm:motifs-runtime} for induced motifs.

\begin{proof}[Proof of \Cref{thm:motifs-runtime}]
\textbf{Completeness:} Suppose $G \sim \gnp$. Then, by the observation made above, with probability $1 - o(1)$, for all $H' \supseteq H, |H'| = k$, the number of copies of $H'$ in $G$ is in the range $(1 \pm \frac{\varepsilon}{100 k^k}) \binom{n}{k} p^{e(H')} (1 - p)^{\binom{k}{2} - e(H)}$. 
Therefore, by the correctness of \textsc{NonInduced-Motif-Quality-Efficient}, with probability $1 - o(1)$, all calls to Algorithm \textsc{NonInduced-Motif-Quality-Efficient} in Step 1 of Algorithm \textsc{Induced-Motif-Quality-Efficient} accept. Therefore, Algorithm \textsc{Induced-Motif-Quality-Efficient} accepts $G \sim \gnp$ with high probability.

\textbf{Soundness:} Let $\mu_H = \mathbb{E}_{G' \sim \gnp}\left[ C_H(G')\right] = \binom{n}{k} p^{e(H)} (1 - p)^{\binom{k}{2} - e(H)}$ (induced count) and $\widetilde{\mu}_H = \mathbb{E}_{G' \sim \gnp}\left[ \widetilde{C}_H(G')\right] = \binom{n}{k} p^{e(H)}$ (noninduced count). Suppose that $G$ satisfies $C_H(G) \not \in (1 \pm \varepsilon) \mu_H$. 
By the triangle inequality, we find that:
$$\varepsilon \mu_H \leq \left|C_H(G) - \mu_H \right| = \left|\sum_{i = 0}^{\binom{k}{2} - e(H)} (-1)^{i} \cdot \sum_{\substack{H' \supseteq H \\ |H'| = k \\ e(H') = e(H) + i}} \widetilde{C}_{H'}(G) - \widetilde{\mu}_{H'}\right|$$
$$\leq \sum_{i = 0}^{\binom{k}{2} - e(H)} \sum_{\substack{H' \supseteq H \\ |H'| = k \\ e(H') = e(H) + i}} \left|\widetilde{C}_{H'}(G) - \widetilde{\mu}_{H'}\right|.$$

Therefore, there exists an $H' \supset H, |H'| = k$ such that $\left|\widetilde{C}_{H'}(G) - \widetilde{\mu}_{H'}\right| \geq \varepsilon \mu_H/2^{\binom{k}{2}}$. We refer to this $H'$ as $H^{\circ}$.

Next observe that, for $H' \supseteq H, |H'| = k$, $\mu_{H'} \leq \mu_H$. Applying this and $\widetilde{\mu}_{H'} \leq \mu_{H'} \cdot 2^{\binom{k}{2}}$, we find that:
$$\left|\widetilde{C}_{H^{\circ}}(G) - \widetilde{\mu}_{H^{\circ}}\right| \geq \varepsilon \mu_{H^{\circ}}/2^{\binom{k}{2}} \geq \varepsilon \widetilde{\mu}_{H^{\circ}}/2^{k(k-1)}.$$
Therefore, by the soundness of Algorithm \textsc{NonInduced-Motif-Quality-Efficient}, when Algorithm \textsc{NonInduced-Motif-Quality-Efficient} is called on input $H^{\circ}$ in Step 1 of Algorithm \textsc{Induced-Motif-Quality-Efficient}, with high constant probability the algorithm will reject. 

Therefore, Algorithm \textsc{Induced-Motif-Quality-Efficient} rejects $G$ with $C_H(G) \not \in (1 \pm \varepsilon) \mu_H$ with high probability.

\textbf{Query complexity and runtime:} Since $k$ is constant, and so the number of $H'$ considered in Step 1 is constant, the query complexity and runtime are asymptotically the same as those of Algorithm \textsc{NonInduced-Motif-Quality-Efficient}.
\end{proof}

\section{Quality control for other random graph models}\label{sec:any-random-graph-family}

In this section, we generalize our results to the setting of quality control of any motif $H$ as compared to any distribution $\mathcal{D}_n$ over graphs with a good concentration of motif counts. Particularly, we will consider distributions $\mathcal{D}_n$ over graphs such that, for $G \sim \mathcal{D}_n$, for all $P \subseteq H$ the number of labeled induced copies of $P$ is in the range $(1 \pm \frac{\varepsilon}{100}) E_{\mathcal{D}_n}(P)$ for some $E_{\mathcal{D}_n}(P)$ with high probability.

Before stating the main theorem we prove in this section (\Cref{thm:general-graph-motif-count-qc}), we set up some notation and terminology.

\paragraph{Notation and terminology}
In what follows, for distribution $\mathcal{D}_n$ over graphs and motif $P$ on $\ell$ vertices, define $E_{\mathcal{D}_n}(P), F_{\mathcal{D}_n}(P)$ as follows:
\begin{equation}\label{eq:expected-P-number-general}
    E_{\mathcal{D}_n}(P) = \mathbb{E}_{G \sim \mathcal{D}_n}\left[ \# \text{ labeled induced copies of } P \text{ in } G\right] = \binom{n}{\ell} \ell! \cdot F_{\mathcal{D}_n}(P).
\end{equation}
We will refer to $F_{\mathcal{D}_n}(P)$ as the \textit{density} of $P$ in $G \sim \mathcal{D}_n$, since it reflects the probability that a randomly chosen set of size $\ell$ is equivalent to a labeled induced copy of $P$.

Sample $s \in \mathbb{N}$ vertices uniformly at random with repetition. Let
\begin{equation}\label{eq:adjustment-general}
    A_{s, \ell} = \binom{s}{\ell} \cdot \ell! \cdot \frac{1}{n^{\ell}}.
\end{equation}
Then, 
\begin{equation} \label{eq:s-D-P}
    S_{\mathcal{D}_n}(P, s) = E_{\mathcal{D}_n}(P) \cdot A_{s, \ell}
\end{equation}
is the expected number of labeled induced copies of $P$ in the subgraph induced by the multiset of vertices (as proven below in Claim \ref{claim:expectation-set-P-copies}). \\\\
We begin by defining the notion of the \textit{density increments} of a motif, which will appear in the query complexity upper bound achieved for $(\mathcal{D}_n, \rho_H)$-quality control.

We begin by defining the \textit{density increments} $r_{\ell}(H)$ of a motif $H$. 

\begin{definition}\label{def:r-j-H}
Consider a motif $H$ on $k$ vertices. For $1 \leq \ell \leq k$, let $\mathcal{H}_{\ell}$ consist of all (labeled, induced) subgraphs of $H$ on $\ell$ vertices. Let 
\begin{equation}\label{eq:r-j-H}
    r_{\ell}(H) = \min_{H_1 \in \mathcal{H}_{\ell}}\frac{F_{\mathcal{D}_n}(H_1)^2}{\max_{\substack{H_2 \subseteq H_1 \\ |H_2| = \ell-1}}  F_{\mathcal{D}_n}(H_2)^2}.
\end{equation}
This is the minimum ratio between the density of any subgraph $H_1$ of $H$ on $\ell$ vertices and the densest subgraph $H_2 \subset H_1$ of $H_1$ on $\ell - 1$ vertices.
\end{definition}

To understand the definition of the density increment, let us return briefly to the case of $\gnp$ (with $p \leq 1/2$). For any $P$ ($|P| = j$), the maximum $\max_{\substack{P' \subseteq P \\ |P| = j-1}}  F_{\mathcal{D}_n}(P')^2$ will be achieved by removing the highest-degree vertex in $P$. This means, for example, that $r_k(H)$ (for $H$ with $|H| = k$) satisfies $r_k(H) \geq p^{\Theta\left(\Delta(H)\right)} $, where $\Delta(H)$ is the maximum vertex degree in $H$.

Let $C_H(G)$ be the number of labeled induced copies of $H$ in the graph $G$. Let $\rho_H$ be defined as $\rho_H(G) := C_H(G)/\mathbb{E}_{G' \sim \mathcal{D}_n}\left[ C_H(G')\right]$.

\begin{theorem}\label{thm:general-graph-motif-count-qc}
    Consider a motif $H$ on $k$ vertices. Consider any distribution $\mathcal{D}_n$ over graphs satisfying the following properties:
    \begin{enumerate}
        \item For $G \sim \mathcal{D}_n$, for all $\ell \in [k]$ and all $P\subseteq H$ on $\ell$ vertices, the number of labeled induced copies of $P$ is in the range $(1 \pm \frac{\varepsilon}{100}) E_{\mathcal{D}_n}(P)$ with probability $1 - o(1)$.
        \item For all $\ell \in \{2, 3, \dots, k\}$, $r_{\ell}(H) \leq (4/9) r_{\ell - 1}(H)$.
    \end{enumerate}
    Then, there exists an algorithm for $(\mathcal{D}_n, \rho_H)$-quality control problem that uses $\widetilde{O}\left( r_k(H)^{-2} \right)$ adjacency matrix queries to the input graph.
\end{theorem}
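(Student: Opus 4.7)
The plan is to extend the proof of \Cref{thm:k-cliques} Part (1) from $k$-cliques over $\gnp$ to an arbitrary motif $H$ on $k$ vertices and an arbitrary distribution $\mathcal{D}_n$ satisfying the two hypotheses. I would first generalize \Cref{def:exp-qr} to obtain exponentially robust quasirandomness with respect to \emph{each} labeled induced subgraph $P \subseteq H$: a graph $G$ is $(\varepsilon, \alpha, s_0)$-exponentially robustly quasirandom with respect to $P$ if for all $s \geq s_0$, the count $C_P(S)$ in a uniform size-$s$ multiset $S$ lies in $(1 \pm \varepsilon) \cdot S_{\mathcal{D}_n}(P, s)$ except with probability at most $c^{|P|+1} \exp(-\alpha s)$. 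The algorithm samples one multiset $S$ of size $s = \widetilde{\Theta}(r_k(H)^{-1})$, queries all $\binom{s}{2}$ pairs, and for every $\ell \in \{2,\ldots,k\}$ and every $P \subseteq H$ with $|P| = \ell$ checks that $C_P(S)$ lies in a $(1 \pm \varepsilon_\ell)$-window around $S_{\mathcal{D}_n}(P, s)$; it accepts iff every check passes, giving $\binom{s}{2} = \widetilde{O}(r_k(H)^{-2})$ adjacency matrix queries.

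The technical core is a motif-version of \Cref{lem:concentration}: if $G$ is $(\varepsilon_{\ell - 1}, \alpha_{\ell - 1}, s)$-exponentially robustly quasirandom with respect to every $P' \subseteq P$ of size $\ell - 1$, then with probability at least $1 - c^\ell \exp(-\alpha_\ell s)$ over $S$, $|C_P(S) - C_P(G) \cdot A_{s, \ell}| \leq (\varepsilon_{\ell-1}/4) \cdot S_{\mathcal{D}_n}(P, s)$. I would apply \Cref{cor:mcdiarmid-unconditional-expectation} to the function $S \mapsto C_P(S)$ with ``good'' set $\mathcal{Y}$ equal to those multisets on which every $P' \subset P$ of size $\ell-1$ has the expected count. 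A direct generalization of \Cref{claim:bounded-differences-whp-ell-cliques} gives the bounded-difference constant $c_i \leq \ell \cdot (1 + \varepsilon_{\ell-1}) \cdot \max_{P' \subset P,\, |P'|=\ell-1} S_{\mathcal{D}_n}(P', s)$, since changing one vertex of $S$ alters $C_P(S)$ by at most $\ell$ times the maximum induced $(\ell-1)$-count incident to that vertex. The McDiarmid exponent then scales as $s \cdot F_{\mathcal{D}_n}(P)^2 / \max_{P' \subset P} F_{\mathcal{D}_n}(P')^2$, so setting $\alpha_\ell \propto r_\ell(H)$ closes the loop. The inductive lemma (generalizing \Cref{lemma:inductive-cliques}) follows from this concentration lemma exactly as in \Cref{section:inductive-given-concentration}.

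With these two lemmas, the analysis mirrors \Cref{sec:qc-upperbound}. \textbf{Completeness:} Hypothesis (1) supplies global concentration of $C_P(G)$ for every $P \subseteq H$ when $G \sim \mathcal{D}_n$, and a motif-version of \Cref{claim:completeness-concentration} bootstraps this into $(\varepsilon_\ell, \alpha_\ell, s)$-exponential robust quasirandomness at every level $\ell$, so every check passes with constant probability. \textbf{Soundness:} If the input fails robust quasirandomness at some intermediate level $\ell < k$, the inductive lemma rejects it; otherwise, a level-$k$ application of the concentration lemma with $P = H$ guarantees that any $G$ with $|\rho_H(G) - 1| > \varepsilon$ produces an $S$ whose $C_H(S)$ lies outside the accepted window. \textbf{Main obstacle:} Calibrating the recursive parameters $(\varepsilon_\ell, \alpha_\ell)$ across all $k$ levels with a single sample size $s = \widetilde{\Theta}(r_k(H)^{-1})$. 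Hypothesis (2) — the geometric decay $r_\ell(H) \leq (4/9)\, r_{\ell-1}(H)$ — is used precisely here: it allows $\varepsilon_\ell$ to grow by a factor of $3/2$ per level (as in \Cref{def:alpha-ell}) while keeping $\alpha_\ell s \propto r_\ell(H) \cdot s$ above the threshold $\log(1/\mathrm{poly}(\varepsilon))$ simultaneously for every $\ell \leq k$, with $r_k(H)$ the binding constraint.
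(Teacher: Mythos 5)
Your proposal follows essentially the same route as the paper's own proof: define exponentially robust quasirandomness per subgraph $P \subseteq H$ (\Cref{def:exp-qr-motif}), prove a motif concentration lemma via \Cref{cor:mcdiarmid-unconditional-expectation} with the ``good'' set given by $(\ell-1)$-subgraph counts (\Cref{lem:concentration-general-P-and-D}), derive the inductive lemma (\Cref{lemma:inductive-general-P-and-D}), bootstrap completeness from global concentration (\Cref{claim:completeness-concentration-general}), and use hypothesis~(2) to keep $\alpha_\ell \propto r_\ell(H)\varepsilon_{\ell-1}^2$ monotone with a single sample size. The one small slip is that the bounded-difference constant should carry an $\ell^2$ rather than $\ell$ (one factor for which $(\ell-1)$-subgraph $P'$ the removed vertex yields, one for the label position of that vertex; see \Cref{claim:bounded-differences-whp-motifs}), but since $\ell \le k$ is constant this does not affect the stated $\widetilde{O}(r_k(H)^{-2})$ query bound.
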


The algorithm has a runtime of $\widetilde{O}\left(r_k(H)^{-2k} \right)$. Additionally, the bounds with the dependence on $\varepsilon$ are $\widetilde{O}\left( r_k(H)^{-1} \varepsilon^{-4} \right)$ for the query complexity and $\widetilde{O}\left( r_k(H)^{-2k} \varepsilon^{-4k}\right)$ for the runtime.

\paragraph{Other random graph models:} We now apply \Cref{thm:general-graph-motif-count-qc} to other random graph models. For illustrative purposes, we focus on the case of $k$-cliques. The condition we need out of the random graph models we can study is the concentration of the counts of all $k'$-cliques within the model, for $k' \leq k$. The bound on the query complexity also requires an understanding of a lower-bound on $\mathbb{E}\left[ C_k(G)\right]/\mathbb{E}\left[ C_{k-1}(G)\right]$, where the expectations are taken with respect to the other models/distributions considered.

\Cref{thm:general-graph-motif-count-qc} implies the following results, when we specify the distribution $\mathcal{D}_n$ that we apply the result to.

\begin{corollary}[Implication of \Cref{thm:general-graph-motif-count-qc}]\label{cor:specific-other-models-in-section}
Consider $(\mathcal{D}_n, \rho_k)$-quality control, for $\rho_k(G)  = C_k(G) / \mathbb{E}_{G' \sim \mathcal{D}_n}\left[C_k(G') \right]$. Define parameters $n, p$ where $p = \omega\left(1/n^{2/(k-1)} \right)$ and $p \leq 1/2$. For the following settings of $\mathcal{D}_n$, the query complexity of $(\mathcal{D}_n, \rho_k)$-quality control is $1/p^{O(k)}$:

\begin{enumerate}
    \item $\mathcal{D}_n = \gnp$.
    \item $\mathcal{D}_n$ is a Stochastic Block Model (as defined, e.g. in \cite{holland1983stochastic}) with a constant number of communities and minimum in-community/between-community probability at least $p$.
    \item $\mathcal{D}_n$ is the uniform distribution over $d$-regular graphs on $n$ vertices, where $d = n p$.
\end{enumerate}
\end{corollary}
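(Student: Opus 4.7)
The plan is to derive each of the three cases as an instantiation of \Cref{thm:general-graph-motif-count-qc} with $H = K_k$. For this, I need to verify the two hypotheses of the theorem (concentration of the counts of $K_\ell$ for all $\ell \le k$, and the density-decay inequality $r_\ell(K_k) \le (4/9) r_{\ell-1}(K_k)$), and then plug in the resulting $r_k(K_k)$ into the query-complexity bound $\widetilde O(r_k(K_k)^{-2})$.

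In all three models, $F_{\mathcal{D}_n}(K_\ell)$ is essentially the probability that $\ell$ uniformly random (distinct) vertices span a clique. For $\gnp$ this equals $p^{\binom{\ell}{2}}$. For the SBM with minimum within/between community probability at least $p$, the probability a given pair forms an edge is at least $p$, so $F_{\mathcal{D}_n}(K_\ell) \ge p^{\binom{\ell}{2}}$; one can also upper-bound this density using the \emph{maximum} edge probability $p_{\max}$, which is a constant (at most $1$), yielding $F_{\mathcal{D}_n}(K_\ell) \le p_{\max}^{\binom{\ell}{2}}$. For the uniform distribution on $d$-regular graphs with $d = np$, a standard computation (e.g., via switching or the configuration model) gives $F_{\mathcal{D}_n}(K_\ell) = p^{\binom{\ell}{2}}(1 \pm o(1))$. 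In each case, $r_\ell(K_k) = F_{\mathcal{D}_n}(K_\ell)^2 / F_{\mathcal{D}_n}(K_{\ell-1})^2 = \Theta(p^{2(\ell-1)})$, so $r_k(K_k) = \Theta(p^{2(k-1)})$, and the ratio $r_\ell / r_{\ell-1} = \Theta(p^2)$, which is at most $4/9$ for $p$ below an absolute constant (the regime of interest).

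For the concentration hypothesis, I invoke \Cref{lem:Gnp-clique-concentration} for $\gnp$ directly. For the SBM, I note that $C_{K_\ell}(G)$ is again a sum of Bernoulli-valued indicators with bounded dependencies, and the standard second-moment/martingale arguments (analogous to those leading to \Cref{lem:Gnp-motif-concentration}, and carried out in \Cref{sec:appendix} for a closely related planted model) give concentration of $C_{K_\ell}(G)$ around its mean to within a $(1\pm \varepsilon/100)$-factor whenever $np \geq \Omega(\varepsilon^{-2})^{1/m_{K_\ell}}$, which is implied by the hypothesis $n \geq p^{-c_{10}k}$ for a suitable constant $c_{10}$. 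For the uniform distribution over $d$-regular graphs, the concentration of subgraph counts for $d = np$ in the regime $n \geq p^{-c_{10} k}$ is a classical result (going back to McKay--Wormald) that gives $C_{K_\ell}(G) \in (1 \pm \varepsilon/100) \mathbb{E}[C_{K_\ell}(G)]$ with high constant probability, establishing hypothesis (1).

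Once both hypotheses are verified, \Cref{thm:general-graph-motif-count-qc} immediately yields a $(\mathcal{D}_n, \rho_k)$-quality control algorithm using $\widetilde{O}(r_k(K_k)^{-2}) = \widetilde{O}(p^{-4(k-1)}) = 1/p^{O(k)}$ adjacency matrix queries. The main obstacle I anticipate is the concentration step for the uniform distribution over $d$-regular graphs: unlike $\gnp$ and the SBM, the edges of a uniformly random $d$-regular graph are \emph{not} independent, so straightforward Chernoff/McDiarmid-style arguments do not apply directly, and one must instead appeal to contiguity with the configuration model plus known moment computations for $C_{K_\ell}$. The SBM case is comparatively routine, and the $\gnp$ case is already handled by \Cref{lem:Gnp-clique-concentration}. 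Combining the three cases yields the corollary.
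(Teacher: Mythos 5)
Your Parts (1) and (3) are essentially correct and track the paper's reasoning (the paper cites Kim--Sudakov--Vu for the $d$-regular case rather than McKay--Wormald, but both are legitimate). The genuine gap is in Part (2). You lower-bound $F_{\mathcal{D}_n}(K_\ell)$ by $p^{\binom{\ell}{2}}$ and upper-bound it by $p_{\max}^{\binom{\ell}{2}}$, and conclude $r_\ell(K_k) = \Theta(p^{2(\ell-1)})$, but this does not follow. Sandwiching numerator and denominator independently only gives
$$r_\ell = \frac{F(K_\ell)^2}{F(K_{\ell-1})^2} \;\geq\; \frac{p^{2\binom{\ell}{2}}}{p_{\max}^{2\binom{\ell-1}{2}}},$$
and since $p_{\max}$ is merely a constant (it could be $1$), the best you extract is $r_k = \Omega(p^{k(k-1)})$, giving a query bound of $1/p^{O(k^2)}$, not $1/p^{O(k)}$. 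Indeed, in an SBM where some community is dense (edge probability $\Theta(1)$) while the global minimum is $p$, the quantity $F(K_\ell)/F(K_{\ell-1})$ need not be $\Theta(p^{\ell-1})$, so the equality $r_\ell = \Theta(p^{2(\ell-1)})$ you assert is false in general.

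The paper instead lower-bounds the \emph{ratio} $\mathbb{E}_{G\sim\mathcal{D}_n}[C_\ell(G)] / \mathbb{E}_{G\sim\mathcal{D}_n}[C_{\ell-1}(G)]$ directly: writing $C_\ell(G)$ as a double sum over a $(\ell-1)$-set $S$ and an extra vertex $i$, the indicator that $S$ spans a clique depends only on edges within $S$, while the indicator that $i$ connects to all of $S$ depends only on edges between $i$ and $S$. These are disjoint edge sets, hence the indicators are independent in any SBM (edges are independent Bernoullis), and the second one has probability at least $p^{\ell-1}$. This gives $\mathbb{E}[C_\ell] \geq p^{\ell-1}(n-\ell+1)\mathbb{E}[C_{\ell-1}]$, which after normalizing by $\binom{n}{\ell}\ell!$ yields $F(K_\ell)/F(K_{\ell-1}) \geq p^{\ell-1}$ and hence $r_\ell \geq p^{2(\ell-1)}$. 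This is the step your proposal is missing; a pure sandwiching argument on the $F$-values cannot recover it.
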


\begin{proof}
    Part (1) is already proven in \Cref{sec:query-complexity-cliques}. It follows because concentration of $k'$-clique counts for $k' \leq k$ holds for $p = \omega\left(1/n^{2/(k-1)} \right)$. Additionally, $r_k(H) = p^{2(k-1)}$, so the query complexity is $1/p^{O(k)}$.

    For Part (2), concentration of $k'$-clique counts for $k' \leq k$ can be proven similarly to the proof for a specific stochastic block model in \Cref{sec:appendix-sbm}, when $p = \omega\left(1/n^{2/(k-1)} \right)$. We can show that $\mathbb{E}_{G \sim \mathcal{D}_n}\left[C_k(G) \right] \geq p^{k-1} (n-k+1) \mathbb{E}_{G \sim \mathcal{D}_n}\left[C_{k-1}(G) \right]$:
    $$\mathbb{E}_{G \sim \mathcal{D}_n}\left[C_k(G) \right] = \mathbb{E}\left[\sum_{S : |S| = k - 1} \sum_{i \in [n], i \not \in S} \mathbbm{1}(S \equiv (k-1)\text{-clique}) \cdot \mathbbm{1}((i, j) \in e(G)  ~ \forall j \in S)\right].$$
    Since the multiplied indicator random variables in each term of the sum are independent, this is at least:
    $$\geq \mathbb{E}_{G \sim \mathcal{D}_n}\left[C_{k-1}(G) \right] \cdot p^{k-1} (n-k+1).$$
    Therefore, $r_k(H) \geq p^{O(k)}$, and so the query complexity is $1/p^{O(k)}$.

    In Part (3), the concentration and computation of the expected value come from \cite{DBLP:journals/dm/KimSV07}. Particularly, the expected value is around $\binom{n}{k} k! \cdot p^{\binom{k}{2}}$, and so $r_k(H)$ is $p^{O(k)}$. Therefore, the query complexity is $1/p^{O(k)}$.
\end{proof}

We remark that there are other well-studied random graph models for which our results may apply, such as random geometric graphs and inhomogeneous random graphs (i.e., Chung-Lu random graphs). See \cite{bachmann2018concentration, devroye2011high} for references about the concentration and expectation of $k$-clique counts in random geometric graphs, and \cite{DBLP:journals/siamdm/00010KS24} for inhomogeneous random graphs.

\paragraph{The algorithm:} We consider the following algorithm, which extends the algorithm for $k$-clique counts compared to $\gnp$. Let the sample size be $s_{*} = \widetilde{O}(r_k(H)^{-1} \varepsilon^{-2})$.

\paragraph{Algorithm \textsc{Motif-Quality}:} 
On inputs $G$, $n, \varepsilon$, $\mathcal{D}_n$, and $H$

\begin{quote}
Sample $s_{*}$ vertices, each uniformly at random (with repetition). Let $S$ be the corresponding multiset of vertices. and each $H' \subseteq H$ on $\ell$ vertices, count the number of labeled induced copies of $H'$ from $S$ in $G$ (i.e., $C_{H'}(S)$ from \Cref{def:C-P-S}). If, for any $\ell$, there exists an $H'$, $|H'| = \ell$, such that $C_{H'}(S)$ is not in $(1 \pm \varepsilon/2) S_{\mathcal{D}_n}(H', s_{*})$, reject. Otherwise, accept.
\end{quote}

\paragraph{Parameters.} We will analyze concentration and inductive lemmas over general parameters, then later set the parameters to the following (which are extensions of $\alpha_{\ell}, \varepsilon_{\ell}, s_{\ell}$ from previous sections).
\begin{definition}\label{def:general-parameters}
    For $2 \leq \ell \leq k - 1$, define 
    $$\varepsilon_{\ell} = \varepsilon \cdot \left(\frac{2}{3} \right)^{k - \ell - 1}.$$

    Define $\alpha_2 = \varepsilon^2 r_2(H) \cdot \left( \frac{2}{3}\right)^{2(k-3)}/128$, and for $2 \leq \ell \leq k$ define
    $$\alpha_{\ell} =  \frac{r_{\ell}(H) \cdot \varepsilon_{\ell - 1}^2}{4096 \cdot \ell^4}.$$
\end{definition}

We will specifically consider the following setting of $s_{\ell}$. We use the same notation as the sample multiset size from $k$-cliques because the $s_{\ell}$ in this setting is a generalization of the $s_{\ell}$ from the $k$-cliques case.

\begin{definition}\label{def:s-ell-general}
    Suppose that $H$ is a motif on $k$ vertices. For $\ell \in [k]$, define
    $$s_{\ell} := \frac{\ell^4 \cdot 4096}{r_{\ell}(H) \cdot \varepsilon^2} \cdot \left( \frac{3}{2}\right)^{2(k -  \ell)} \cdot 8 \cdot \ln\left(\frac{(6\ell)^{\ell}}{F_{\mathcal{D}_n}(H) \cdot \varepsilon} \right).$$
\end{definition}

\begin{definition}[Sample size ($s_{*}$)]\label{def:s-general-distribution} Suppose that $H$ is a motif on $k$ vertices. Then define $s_{*} := s_k$, i.e.:
    $$s_{*} := \frac{k^4 \cdot 4096}{r_{k}(H) \cdot \varepsilon^2} \cdot 8 \cdot \ln\left(\frac{(6k)^k}{F_{\mathcal{D}_n}(H) \cdot \varepsilon} \right).$$
\end{definition}

\begin{observation}\label{obs:s-general}
    Observe that $s_{*}$ from \Cref{def:s-general-distribution} satisfies:
    $$s_{*} \geq \frac{1}{\alpha_{k}} \cdot 8 \cdot \ln\left(\frac{(6k)^k}{F_{\mathcal{D}_n}(H) \cdot \varepsilon} \right).$$
    Therefore, $4^{k + 1}\exp(-\alpha_{k}s_{*}) \leq  \gamma \cdot F_{\mathcal{D}_n}(P) /(16 \cdot 2^{k})$.
\end{observation}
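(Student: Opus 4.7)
The plan is to verify this observation by direct algebraic substitution, since it is essentially a bookkeeping check that the chosen sample size $s_*$ is large enough for a tail bound used downstream.

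First, I would unpack the parameters. By Definition \ref{def:general-parameters}, $\varepsilon_{k-1} = \varepsilon \cdot (2/3)^{k - (k-1) - 1} = \varepsilon$, so
\[
\alpha_k = \frac{r_k(H) \cdot \varepsilon_{k-1}^2}{4096 \cdot k^4} = \frac{r_k(H) \cdot \varepsilon^2}{4096 \cdot k^4}, \qquad \frac{1}{\alpha_k} = \frac{4096 \cdot k^4}{r_k(H) \cdot \varepsilon^2}.
\]
Comparing this with Definition \ref{def:s-general-distribution}, one sees immediately that
\[
s_{*} = \frac{k^4 \cdot 4096}{r_{k}(H) \cdot \varepsilon^2} \cdot 8 \cdot \ln\!\left(\tfrac{(6k)^k}{F_{\mathcal{D}_n}(H)\cdot\varepsilon}\right) = \frac{1}{\alpha_k}\cdot 8\cdot\ln\!\left(\tfrac{(6k)^k}{F_{\mathcal{D}_n}(H)\cdot\varepsilon}\right),
\]
which is the first inequality (in fact an equality).

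Next, I would substitute this expression into the exponential. Using $\alpha_k s_* = 8\ln\!\left(\tfrac{(6k)^k}{F_{\mathcal{D}_n}(H)\cdot\varepsilon}\right)$,
\[
4^{k+1}\exp(-\alpha_k s_*) = 4^{k+1}\cdot \left(\frac{F_{\mathcal{D}_n}(H)\cdot\varepsilon}{(6k)^k}\right)^{8}.
\]
To obtain the claimed bound $\leq \gamma \cdot F_{\mathcal{D}_n}(H)/(16\cdot 2^k)$, I would factor out a single copy of $F_{\mathcal{D}_n}(H)$ on the right and then show that the remaining expression $4^{k+1}\cdot \varepsilon^{8}\cdot F_{\mathcal{D}_n}(H)^{7}/(6k)^{8k}$ is at most $\gamma/(16\cdot 2^k)$. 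Since $F_{\mathcal{D}_n}(H)\leq 1$, it suffices to verify $4^{k+1}\cdot 16\cdot 2^k\cdot \varepsilon^{8}\leq \gamma\cdot (6k)^{8k}$, which holds comfortably because $(6k)^{8k}$ grows faster than $2^{3k}\cdot \text{const}$ for all relevant $k$ and $\gamma,\varepsilon$ of interest.

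There is no real obstacle here; the only subtlety is making sure to identify $\varepsilon_{k-1}$ with $\varepsilon$ correctly and to track the factor of $8$ in $s_*$ so that the exponent of the resulting power of $F_{\mathcal{D}_n}(H)\cdot\varepsilon/(6k)^k$ is large enough to absorb the $4^{k+1}$ and $16\cdot 2^k$ factors. (I would also flag the apparent typo $F_{\mathcal{D}_n}(P)$ in the statement, which should read $F_{\mathcal{D}_n}(H)$ given the notation used elsewhere.)
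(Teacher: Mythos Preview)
Your proposal is correct and matches the paper's treatment: the paper states this observation without proof, leaving the direct substitution you carry out as implicit. Your identification of $\varepsilon_{k-1}=\varepsilon$, the exact match of $s_*$ with $\tfrac{1}{\alpha_k}\cdot 8\ln(\cdots)$, and your note on the $P$ versus $H$ typo are all on point.
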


\begin{definition}\label{def:C-P-S}
    For a multiset $S$ of size $s$, and a motif $P$, let $C_{P}(S)$ be the number of induced labeled copies of $P$ in $S$. Since $S$ is a multiset, copies of $P$ are counted as follows: if a vertex appears $c$ times in the multiset, the induced labeled copies of $P$ it is a part of are counted $c$ times.
\end{definition}

\subsection{Technical Lemmas}\label{sec:technical-lemmas}

In this section, we prove general versions of the inductive lemma (\Cref{lem:inductive-intro}) and  concentration lemma (\Cref{lem:concentration-intro}) for general motif counts and general models of random graphs. 

We begin by defining \textit{exponentially robust quasirandomness} for general $\mathcal{D}_n$ and $P$. Exponentially robust quasirandomness captures the property that, among multisets of size $s$, the number of copies of $P$ is very well concentrated around its expected value.

Recall the definition of $C_P(S)$ from \Cref{def:C-P-S}, for motif $P$ and multiset $S$.

\begin{definition}[Exponentially robustly quasirandom]\label{def:exp-qr-motif}
    We say that a graph $G=([n],E)$ is $(\varepsilon,\alpha, s_0)$-exponentially robustly quasirandom with respect to distribution $\mathcal{D}_n$ and motif $P$ on $\ell$ vertices if for every $s \geq s_0$ we have that 
    $$\Pr_{X_1,\ldots,X_s \sim_{\mathrm{i.i.d.}} [n]}\left[ C_P(\{X_1,\ldots,X_s\}) \not\in (1\pm \varepsilon) S_{\mathcal{D}_n}(P, s)\right] \leq 4^{\ell+1} \cdot \exp(-\alpha s).$$
\end{definition}

We prove the following lemmas. Below, let $\alpha_{\ell}, \varepsilon_{\ell}, s_{\ell}$ be parameters such from \Cref{def:general-parameters} and \Cref{def:s-ell-general}.

\begin{lemma}[Inductive Lemma]\label{lemma:inductive-general-P-and-D}
    Consider a distribution $\mathcal{D}_n$ over graphs, and a motif $P$ on $\ell$ vertices. Suppose that for all $j \in [\ell-1]$, $r_{j+1}(P) \leq (4/9) r_{j}(P)$. Let $\mathcal{P}_{\ell - 1}$ be the set of (labeled, induced) subgraphs of $P$ on $\ell - 1$ vertices.
    
    Suppose that the graph $G$ is $(\varepsilon_{\ell - 1}, \alpha_{\ell-1}, s_{\ell})$-exponentially robustly quasirandom with respect to counts of each $P' \in \mathcal{P}_{\ell-1}$ compared to $\mathcal{D}_n$.
    
    Define the tester $T_{\ell}$ that samples $s_{\ell}$ (as specified in \Cref{def:s-general-distribution}) random vertices of $G$ (with replacement) and accepts if and only if the count of $P$ in the induced subgraph on the sampled vertices is in $(1 \pm \varepsilon_{\ell-1}) S_{\mathcal{D}_n}(P, s_{\ell})$. (Recall that, as in \Cref{eq:s-D-P}, $S_{\mathcal{D}_n}(P, s_{\ell})$ is the expected number of copies of $P$ in a multiset of $s_{\ell}$ vertices in graphs chosen according to $\mathcal{D}_n$.)

    $T_{\ell}$ uses $\widetilde{O}\left( r_{\ell}(H)^{-2} \varepsilon^{-4} \right)$ adjacency matrix queries. With probability at least $1 - 4^{\ell + 1}\exp(-\alpha_{\ell} s_{\ell})$, $T_{\ell}$ accepts graphs that are $(\varepsilon_{\ell-1}, \alpha_{\ell}, s_{\ell})$-exponentially robustly quasirandom with respect to counts of $P$ compared to $\mathcal{D}_n$ and rejects graphs that are not 
    $(\varepsilon_{\ell}, \alpha_{\ell}, s_{\ell})$-exponentially robustly quasirandom with respect to counts of $P$ compared to $\mathcal{D}_n$.
\end{lemma}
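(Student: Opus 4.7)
}

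The plan is to mirror the derivation of Lemma \ref{lemma:inductive-cliques} from Corollary \ref{cor:cor-of-concentration-clique}, replacing clique-specific inputs by their motif-and-distribution generalizations. The backbone will be the general Concentration Lemma \ref{lem:concentration-general-P-and-D} instantiated at parameters $(\varepsilon_{\ell-1},\alpha_{\ell-1},s_\ell)$: under the hypothesis that $G$ is $(\varepsilon_{\ell-1},\alpha_{\ell-1},s_\ell)$-exponentially robustly quasirandom with respect to each $P'\in\mathcal{P}_{\ell-1}$, it should give
\[
\Pr_S\Big[\, \bigl|C_P(S) - C_P(G)\cdot A_{s_\ell,\ell}\bigr| \geq \tfrac{\varepsilon_{\ell-1}}{4}\,S_{\mathcal{D}_n}(P,s_\ell)\,\Big]\;\leq\; 4^{\ell+1}\exp(-\alpha_\ell s_\ell),
\]
where the ratio $r_\ell(H)$ appears through the fact that the ``bounded difference'' when swapping one vertex of $S$ is controlled by $\max_{P'\in\mathcal{P}_{\ell-1}} S_{\mathcal{D}_n}(P',s_\ell)$, and the hypothesis on $r_{j+1}(P)\leq (4/9)r_j(P)$ forces $\alpha_j$ to decrease in $j$ so that the low-probability ``bad'' event from the induction hypothesis is dominated by $\exp(-\alpha_\ell s_\ell)$.

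Given this concentration statement, completeness is essentially immediate: if $G$ is additionally $(\varepsilon_{\ell-1},\alpha_\ell,s_\ell)$-exponentially robustly quasirandom with respect to $P$, then by Definition \ref{def:exp-qr-motif}, except with probability $4^{\ell+1}\exp(-\alpha_\ell s_\ell)$ we have $C_P(S)\in (1\pm\varepsilon_{\ell-1})S_{\mathcal{D}_n}(P,s_\ell)$, so $T_\ell$ accepts.

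For soundness, suppose $T_\ell$ accepts, i.e.\ $C_P(S)\in (1\pm\varepsilon_{\ell-1})S_{\mathcal{D}_n}(P,s_\ell)$. Using the concentration bound above, with probability at least $1-4^{\ell+1}\exp(-\alpha_\ell s_\ell)$ we have
\[
C_P(G)\cdot A_{s_\ell,\ell} \;\in\; C_P(S)\pm \tfrac{\varepsilon_{\ell-1}}{4}\,S_{\mathcal{D}_n}(P,s_\ell)\;\subseteq\; \bigl(1\pm\tfrac{5}{4}\varepsilon_{\ell-1}\bigr)S_{\mathcal{D}_n}(P,s_\ell).
\]
Applying the Concentration Lemma a second time (to a fresh multiset $S'$) and invoking the triangle inequality with the above bound on $C_P(G)\cdot A_{s_\ell,\ell}$, we obtain $C_P(S')\in (1\pm\tfrac{3}{2}\varepsilon_{\ell-1})S_{\mathcal{D}_n}(P,s_\ell)$ except with probability $4^{\ell+1}\exp(-\alpha_\ell s_\ell)$. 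Since Definition \ref{def:general-parameters} sets $\varepsilon_\ell = \tfrac{3}{2}\varepsilon_{\ell-1}$, this gives the claimed $(\varepsilon_\ell,\alpha_\ell,s_\ell)$-exponentially robust quasirandomness of $G$ for $P$, so any graph lacking this property is rejected.

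The query complexity follows because $T_\ell$ only reads all $\binom{s_\ell}{2}$ adjacencies within the sampled multiset, and from Definition \ref{def:s-ell-general} we have $s_\ell=\widetilde{O}(r_\ell(H)^{-1}\varepsilon^{-2})$, giving $\widetilde{O}(r_\ell(H)^{-2}\varepsilon^{-4})$ queries. The main obstacle I foresee is not in this deduction but in the (separately proven) Concentration Lemma itself: carefully verifying that on the ``good'' event where each $(\ell-1)$-subgraph count is near its expectation, the bounded-differences constant $c$ in Corollary \ref{cor:mcdiarmid-unconditional-expectation} is bounded by roughly $\ell\cdot\max_{P'\in\mathcal{P}_{\ell-1}}S_{\mathcal{D}_n}(P',s_\ell)$, and that the resulting exponent $\delta^2/(s_\ell c^2)$ comes out proportional to $r_\ell(H)\cdot\varepsilon_{\ell-1}^2 s_\ell$, matching $\alpha_\ell$. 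Once the constants match, the inductive step above is then a bookkeeping argument.
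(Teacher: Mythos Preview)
Your proposal is correct and follows essentially the same approach as the paper: the paper likewise deduces Lemma~\ref{lemma:inductive-general-P-and-D} from the Concentration Lemma~\ref{lem:concentration-general-P-and-D} (via Corollaries~\ref{cor:concentration-lemma-with-parameters-general} and~\ref{cor:cor-of-concentration-general}) by exactly the completeness/soundness split you describe, using $\varepsilon_\ell = \tfrac{3}{2}\varepsilon_{\ell-1}$ and the observation that $r_{j+1}(P)\leq (4/9)r_j(P)$ forces $\alpha_{\ell-1}\geq\alpha_\ell$. Your assessment that the real work lies in the Concentration Lemma and that this step is ``bookkeeping'' matches the paper's treatment.
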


The following lemma provides a concentration inequality regarding the probability that the count of $P$ in a random multiset $S$ is far from its expectation in terms of the \textit{assumed quasirandomness} on smaller sub-motifs and \textit{ratio between motif densities}.

\begin{lemma}[Concentration Lemma]\label{lem:concentration-general-P-and-D}

Let $P$ be a motif on $\ell$ vertices and let $s \in \mathbb{N}$. Suppose that graph $G$ is $(\gamma, \beta, s)$-exponentially robustly quasirandom with respect to distribution $\mathcal{D}_n$ and all motifs $P' \subseteq P$ on $\ell - 1$ vertices. Suppose the following three conditions hold:
\begin{enumerate}
    \item \textbf{Size of s:} $s$ satisfies $\exp(-\beta s) \leq \gamma \cdot F_{\mathcal{D}_n}(P) /(16 \cdot 9^{\ell})$.
    \item \textbf{Maximum density:} $\max_{\substack{P' \subseteq P \\ |P'| = \ell-1}}  F_{\mathcal{D}_n}(P') \leq \gamma / 32$.
    \item \textbf{Lower bound on density increment:} $r_{\ell}(P) \cdot \frac{\gamma^2}{4096 \cdot \ell^4} \geq \kappa$ for some $\kappa > 0$.
\end{enumerate}
 Then:
 $$\Pr_S\left( \left| C_{P}(S) - C_{P}(G) \cdot A_{s, \ell}\right| \geq \frac{\gamma}{4} S_{\mathcal{D}_n}(P, s) \right) \leq 2 \cdot 4^{\ell} \exp(-\beta s) + 2 \exp\left(- \kappa s\right).$$
\end{lemma}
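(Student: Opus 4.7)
\textbf{Proof plan for \Cref{lem:concentration-general-P-and-D}.} The plan is to apply the high-probability bounded differences inequality (\Cref{cor:mcdiarmid-unconditional-expectation}) to the function $f(S) := C_P(S)$ viewed as a function of the $s$ i.i.d.\ random vertices $X_1,\ldots,X_s$ that comprise $S$. This generalizes the proof of \Cref{lem:concentration} for cliques; the three hypotheses of the lemma (size of $s$, maximum density, and lower bound on density increment) are precisely what is needed to make each of the inputs to \Cref{cor:mcdiarmid-unconditional-expectation} go through.

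\textbf{Setup.} First I would compute $\mathbb{E}[f] = C_P(G) \cdot A_{s,\ell}$ by a direct linearity-of-expectation argument analogous to \Cref{claim:expectation-set-ell-cliques}: each ordered tuple of $\ell$ positions in $S$ realizes any given labeled induced copy of $P$ with probability $1/n^\ell$, and there are $\binom{s}{\ell}\ell!$ such ordered tuples. Next I would take the ``good set'' to be
\[
\mathcal{Y} := \Bigl\{ S \in [n]^s : \text{for every } P' \in \mathcal{P}_{\ell-1},\ C_{P'}(S) \in (1\pm\gamma)\,S_{\mathcal{D}_n}(P', s)\Bigr\}.
\]
By the exponentially robust quasirandomness hypothesis and a union bound over the (at most $\ell$) distinct subgraphs $P' \subseteq P$ on $\ell-1$ vertices, $q := \Pr[S \notin \mathcal{Y}] \leq \ell \cdot 4^{\ell}\exp(-\beta s)$.

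\textbf{Bounded differences on $\mathcal{Y}$.} Fix any two multisets $S,S' \in \mathcal{Y}$. The key observation is that a labeled induced copy of $P$ in $S$ that contains a given vertex $v \in S$ in position $i$ is in one-to-one correspondence with a labeled induced copy in $S$ of the sub-motif obtained from $P$ by deleting position $i$; each such sub-motif lies in $\mathcal{P}_{\ell-1}$. Consequently, the number of labeled induced copies of $P$ in $S$ that involve $v$ is at most $\ell \cdot \max_{P' \in \mathcal{P}_{\ell-1}} C_{P'}(S)$. Summing over the symmetric difference and using the $\mathcal{Y}$-membership bound on each $C_{P'}(S)$, I get the bounded differences
\[
\bigl|f(S)-f(S')\bigr| \leq |S \triangle S'| \cdot \ell(1+\gamma) \max_{P' \in \mathcal{P}_{\ell-1}} S_{\mathcal{D}_n}(P', s),
\]
so we may take $c_i = \ell(1+\gamma)\max_{P'} S_{\mathcal{D}_n}(P', s)$ in \Cref{cor:mcdiarmid-unconditional-expectation}.

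\textbf{Applying the inequality and optimizing.} I would apply the corollary with $\delta = \gamma\,S_{\mathcal{D}_n}(P,s)/4$. The hypotheses of the corollary, namely $q \leq \min\{\delta/(2\max f),\, \delta/(4\mathbb{E}[f]),\, 1/2\}$, follow from: (i) the size-of-$s$ condition $\exp(-\beta s)\leq \gamma F_{\mathcal{D}_n}(P)/(16\cdot 9^\ell)$, which makes $q$ exponentially small relative to $\gamma F_{\mathcal{D}_n}(P)$, and (ii) the maximum-density condition $\max_{P'} F_{\mathcal{D}_n}(P')\leq \gamma/32$, which gives the crude bound $\mathbb{E}[f] = O(\max F_{\mathcal{D}_n}(P') \cdot \binom{s}{\ell}\ell!)$ needed to control $\delta/(4\mathbb{E}[f])$ (together with $\max f \leq \binom{s}{\ell}\ell!$). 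The same size-of-$s$ bound makes $q \sum_i c_i$ negligible compared to $\delta/2$, so $\max(0,\delta/2 - q\sum c_i) \geq \delta/4$. Plugging in and simplifying the ratio
\[
\frac{\delta^2}{8 s\, c_i^2} \;\gtrsim\; \frac{\gamma^2}{128\,\ell^2}\cdot\frac{(s-\ell+1)^2}{s}\cdot\frac{F_{\mathcal{D}_n}(P)^2}{\max_{P' \in \mathcal{P}_{\ell-1}} F_{\mathcal{D}_n}(P')^2} \;\geq\; \frac{\gamma^2\, r_\ell(P)\, s}{O(\ell^4)} \;\geq\; \kappa s
\]
by the density-increment hypothesis (where I used $(s-\ell+1)^2/s \geq s/4$ for $s \geq 2\ell$ and the definition of $r_\ell(P)$). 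Combining the two terms from \Cref{cor:mcdiarmid-unconditional-expectation} yields exactly the claimed $2\cdot 4^\ell\exp(-\beta s) + 2\exp(-\kappa s)$.

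\textbf{Main obstacle.} The delicate step is the bounded differences argument: unlike in the clique case, $P$ is not vertex-transitive, so the sub-motif obtained by deleting different positions can differ, and we must take the $\max$ over \emph{all} $P' \in \mathcal{P}_{\ell-1}$ rather than a single canonical subgraph. Once this max-over-$P'$ formulation is in place, the remaining work is the bookkeeping to make the size-of-$s$, maximum-density, and density-increment hypotheses feed cleanly into the three components of \Cref{cor:mcdiarmid-unconditional-expectation}.
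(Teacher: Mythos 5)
Your plan is the same as the paper's: take $\mathcal{Y}$ to be the set of multisets whose $(\ell-1)$-submotif counts are all in $(1\pm\gamma)S_{\mathcal{D}_n}(P',s)$, bound $q$ via the union bound over the at most $\ell$ submotifs, establish a bounded-differences constant proportional to $\max_{P'}S_{\mathcal{D}_n}(P',s)$, and invoke \Cref{cor:mcdiarmid-unconditional-expectation} with $\delta = \gamma S_{\mathcal{D}_n}(P,s)/4$. Your $c_i = \ell(1+\gamma)\max_{P'}S_{\mathcal{D}_n}(P',s)$ is a tighter constant than the paper's $\ell^2(1+\gamma)\max_{P'}S_{\mathcal{D}_n}(P',s)$ (the direct position-by-position correspondence you describe avoids the paper's unlabeled-to-labeled conversion), but since you only need a lower bound on the exponent compared against the $\ell^4$ margin built into $\kappa$, both work.

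However, your accounting of where the three hypotheses enter is scrambled, and one intermediate claim you make is not justifiable. You attribute the \textbf{maximum-density} condition $\max_{P'}F_{\mathcal{D}_n}(P')\leq\gamma/32$ to controlling $q \leq \delta/(4\mathbb{E}[f])$, citing the bound $\mathbb{E}[f]=O(\max_{P'} F_{\mathcal{D}_n}(P')\cdot\binom{s}{\ell}\ell!)$. But $\mathbb{E}[f]=C_P(G)A_{s,\ell}$, and there is no relationship between the copy count $C_P(G)$ in the (adversarial) input graph $G$ and the density $F_{\mathcal{D}_n}(P')$ of a submotif in $\mathcal{D}_n$ — the bound you wrote is false in general. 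The correct move (and what the paper does) is to use the trivial bound $\mathbb{E}[f]\leq\binom{s}{\ell}\ell!$, which together with the \textbf{size-of-$s$} condition verifies the threshold on $q$. Conversely, you claim the size-of-$s$ condition alone makes $q\sum_ic_i$ negligible compared to $\delta/2$; this is where the maximum-density condition is actually needed. Without it, $\max_{P'}F_{\mathcal{D}_n}(P')$ could be as large as $1$, and a quick computation shows that the ratio $q\sum c_i / (\delta/2)$ is then on the order of $\ell^3(4/9)^\ell(1+\gamma)$, which already exceeds $1$ at $\ell=2$. So the maximum-density hypothesis is doing real work there, not in the $\delta/(4\mathbb{E}[f])$ check. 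Once this misattribution is fixed and the false bound on $\mathbb{E}[f]$ removed, your outline matches the paper's proof.
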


We show that this lemma implies the following corollary, which is applicable for a specific setting of parameters (given by the choice of $P$ and $\mathcal{D}_n$), as given in \Cref{def:general-parameters}.

\begin{corollary}\label{cor:concentration-lemma-with-parameters-general}
    Let $P$ be a motif on $\ell$ vertices and $\mathcal{D}_n$ be a distribution over graphs. Suppose that Conditions 2 and 3 of the \Cref{lem:concentration-general-P-and-D} hold. Given $P$ and $\mathcal{D}_n$, define $s_{\ell}$ as \Cref{def:s-ell-general}. For $2 \leq \ell \leq k - 1$, consider the definitions of $\varepsilon_{\ell}$ and $\alpha_{\ell}$ from \Cref{def:general-parameters}.

    Suppose that graph $G$ is $(\varepsilon_{\ell - 1}, \alpha_{\ell - 1}, s_{\ell})$-exponentially robustly quasirandom with respect to distribution $\mathcal{D}_n$ and all motifs $P' \subseteq P$ on $\ell - 1$ vertices. Suppose also that for all $j \in \{2, 3, \dots, \ell\}$, $r_{j}(P) \leq (4/9) r_{j - 1}(P)$.
    
    Then:
    $$\Pr_S\left( \left| C_{P}(S) - C_{P}(G) \cdot A_{s_{\ell}, \ell}\right| \geq \frac{\varepsilon_{\ell - 1}}{4}S_{\mathcal{D}_n}(P, s_{\ell}) \right) \leq 4^{\ell + 1} \exp(-\alpha_{\ell} s_{\ell}).$$
\end{corollary}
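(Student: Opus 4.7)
[Proof proposal for Corollary \ref{cor:concentration-lemma-with-parameters-general}]
The plan is to invoke Lemma \ref{lem:concentration-general-P-and-D} directly with parameters $\gamma = \varepsilon_{\ell-1}$, $\beta = \alpha_{\ell-1}$, and $s = s_{\ell}$, and then collapse the two resulting error terms into a single expression of the form $4^{\ell+1}\exp(-\alpha_\ell s_\ell)$. This mirrors the derivation of Corollary \ref{cor:concentration-lemma-with-parameters} from Lemma \ref{lem:concentration}, modulo the extra care needed because $r_\ell(P)$ (rather than $p^{2\ell-2}$) controls the density increment.

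First I would check the hypotheses of Lemma \ref{lem:concentration-general-P-and-D}. Condition 2 (maximum density) and Condition 3 (lower bound on density increment with $\kappa := r_\ell(P)\varepsilon_{\ell-1}^2/(4096\ell^4) = \alpha_\ell$) are assumed in the corollary statement. The substantive verification is Condition 1: one must show
\[
\exp(-\alpha_{\ell-1} s_{\ell}) \;\leq\; \frac{\varepsilon_{\ell-1} \cdot F_{\mathcal{D}_n}(P)}{16 \cdot 9^\ell}.
\]
This follows from the choice of $s_{\ell}$ in Definition \ref{def:s-ell-general}: taking logarithms, it suffices that $\alpha_{\ell-1} s_{\ell} \geq \ln\!\big(16\cdot 9^\ell/(\varepsilon_{\ell-1} F_{\mathcal D_n}(P))\big)$, which is implied by the explicit $8\ln(\,(6\ell)^\ell/(F_{\mathcal D_n}(P)\varepsilon)\,)$ factor in $s_\ell$, together with $\alpha_{\ell-1} \geq \alpha_\ell = r_\ell(P)\varepsilon_{\ell-1}^2/(4096\ell^4)$ (the leading factor of $1/r_\ell(P)$ in $s_\ell$ is exactly what makes $\alpha_{\ell-1} s_\ell$ large enough). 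I would record this as a short computation.

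The main technical step, which is also the one place where the assumption $r_j(P) \leq (4/9)r_{j-1}(P)$ is essential, is the monotonicity $\alpha_\ell \leq \alpha_{\ell-1}$. Using $\varepsilon_{\ell-1} = (3/2)\varepsilon_{\ell-2}$ and the definition of $\alpha_\ell$,
\[
\frac{\alpha_\ell}{\alpha_{\ell-1}} \;=\; \frac{r_\ell(P)}{r_{\ell-1}(P)} \cdot \frac{\varepsilon_{\ell-1}^2}{\varepsilon_{\ell-2}^2}\cdot\frac{(\ell-1)^4}{\ell^4} \;\leq\; \frac{4}{9}\cdot\frac{9}{4}\cdot\frac{(\ell-1)^4}{\ell^4} \;\leq\; 1.
\]
(An analogous base case $\alpha_3 \leq \alpha_2$ follows from the way $\alpha_2$ is defined in \Cref{def:general-parameters}.) Hence $\exp(-\alpha_{\ell-1}s_\ell) \leq \exp(-\alpha_\ell s_\ell)$, and moreover $\exp(-\kappa s_\ell) = \exp(-\alpha_\ell s_\ell)$.

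Plugging into the conclusion of Lemma \ref{lem:concentration-general-P-and-D} gives
\[
\Pr_S\!\left( \left| C_P(S) - C_P(G)\cdot A_{s_\ell,\ell}\right| \geq \tfrac{\varepsilon_{\ell-1}}{4} S_{\mathcal D_n}(P,s_\ell)\right) \;\leq\; 2\cdot 4^\ell \exp(-\alpha_{\ell-1}s_\ell) + 2\exp(-\alpha_\ell s_\ell),
\]
and the monotonicity step bounds both summands by $\exp(-\alpha_\ell s_\ell)$, so the right-hand side is at most $(2\cdot 4^\ell + 2)\exp(-\alpha_\ell s_\ell) \leq 4^{\ell+1}\exp(-\alpha_\ell s_\ell)$, as required. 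The only step that is not a routine unwinding of definitions is the monotonicity ratio $\alpha_\ell/\alpha_{\ell-1}$: this is where the precise constant $4/9$ in the hypothesis $r_\ell(P) \leq (4/9)r_{\ell-1}(P)$ is forced, since it must exactly cancel the factor $(3/2)^2 = 9/4$ arising from the inductive growth of $\varepsilon_\ell$.
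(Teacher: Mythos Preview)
Your proposal is correct and follows essentially the same approach as the paper: apply Lemma~\ref{lem:concentration-general-P-and-D} with $\gamma=\varepsilon_{\ell-1}$, $\beta=\alpha_{\ell-1}$, $s=s_\ell$, $\kappa=\alpha_\ell$, use the assumption $r_\ell(P)\le (4/9)r_{\ell-1}(P)$ to deduce $\alpha_\ell\le\alpha_{\ell-1}$, and then collapse $2\cdot 4^\ell\exp(-\alpha_{\ell-1}s_\ell)+2\exp(-\alpha_\ell s_\ell)\le 4^{\ell+1}\exp(-\alpha_\ell s_\ell)$. You are in fact more careful than the paper in that you explicitly verify Condition~1 of Lemma~\ref{lem:concentration-general-P-and-D} and spell out the ratio computation $\alpha_\ell/\alpha_{\ell-1}\le (4/9)(9/4)(\ell-1)^4/\ell^4\le 1$, both of which the paper only asserts.
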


\subsubsection{Concentration Lemma}

In this section, we prove the general concentration lemma (\Cref{lem:concentration-general-P-and-D}).

\paragraph{Proof of \Cref{lem:concentration-general-P-and-D}}

We first prove the following two claims. Let $C_{P}(G)$ be the number of induced labeled copies of $P$ in the graph $G$. For a subgraph $S$ of $G$, define $C_P(S)$ similarly.

\begin{claim}\label{claim:expectation-set-P-copies}
    Let $S$ be a multiset generated by sampling $s$ vertices from the graph, with replacement. Then the expected number of labeled induced copies of $P$ in $S$ is $C_{P}(G) \cdot A_{s, \ell}$.
\end{claim}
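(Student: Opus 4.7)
The plan is to mirror the short indicator-sum calculation used in Claim~\ref{claim:expectation-set-ell-cliques} (the analogous statement for cliques), replacing ``is an $\ell$-clique'' by ``is a labeled induced copy of $P$.'' Concretely, let $X_1,\dots,X_s$ denote the i.i.d.\ uniform samples from $[n]$ whose multiset is $S$. Write
\[
 C_P(S) \;=\; \sum_{C \in \mathrm{Copies}(P,G)} \;\sum_{T = (i_1,\dots,i_\ell)} \mathbbm{1}\!\left[(X_{i_1},\dots,X_{i_\ell}) \equiv C\right],
\]
where the outer sum is over the $C_P(G)$ ordered $\ell$-tuples of vertices of $G$ that form a labeled induced copy of $P$, and the inner sum is over ordered $\ell$-tuples $T$ of pairwise distinct positions from $\{1,\dots,s\}$. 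The reason this agrees with Definition~\ref{def:C-P-S} (which counts a copy $c$ times when one of its vertices is sampled $c$ times) is exactly that we sum over ordered tuples of \emph{positions} rather than distinct vertices: a vertex sampled $c$ times contributes $c$ to the number of positions pointing at it, so the multiset-with-multiplicity bookkeeping is automatic.

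The key observation is that for any fixed ordered tuple $T$ of $\ell$ distinct positions and any fixed labeled copy $C$, the event $(X_{i_1},\dots,X_{i_\ell}) \equiv C$ depends only on $\ell$ independent uniform draws from $[n]$, and so has probability exactly $1/n^{\ell}$. Using $|\{T : \text{ordered $\ell$-tuples of distinct indices in } [s]\}| = \binom{s}{\ell}\ell!$ and linearity of expectation gives
\[
 \mathbb{E}\left[C_P(S)\right] \;=\; C_P(G)\cdot \binom{s}{\ell}\ell!\cdot \frac{1}{n^{\ell}} \;=\; C_P(G)\cdot A_{s,\ell},
\]
which is the desired identity. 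There is no real obstacle here: the only subtlety worth flagging is the multiset convention from Definition~\ref{def:C-P-S}, and (as above) indexing over positions rather than over distinct vertex subsets handles it transparently. The argument is identical to Claim~\ref{claim:expectation-set-ell-cliques} because ``labeled induced copy of $P$'' is, just like ``$\ell$-clique,'' a property of an ordered $\ell$-tuple of vertices of $G$ that is either deterministically present or absent once the vertices are fixed.
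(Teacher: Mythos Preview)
Your proposal is correct and follows essentially the same approach as the paper's own proof: write $C_P(S)$ as a double sum of indicators over labeled copies of $P$ in $G$ and over ordered $\ell$-tuples of sample positions, observe each indicator has probability $1/n^\ell$, and conclude by linearity of expectation. Your handling of the multiset convention via indexing over positions is, if anything, slightly more explicit than the paper's shorthand $T \in S^{\ell}$.
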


\begin{proof}
    Suppose that $P$ is an $\ell$-vertex motif. Let us first write the number of labeled induced copies of $P$ in $S$ (denoted as $C_{P}(S)$, defined in \Cref{def:C-P-S}) as a sum of indicator random variables. For ordered set $T$ and ordered $P$, let $T \equiv P$ mean that the ordered set of vertices of $T$ is exactly that of $P$. Observe that:
    $$C_{P}(S) = \sum_{P \in C_{P}(G)}\sum_{T \in S^{\ell}} 1(T \equiv P).$$
    Fix a size $\ell$ subset of $S$ and some $P \in C_{P}(G)$. The probability that $T \equiv P$ is $1/n^{\ell}$, since $T$ is created by sampling $\ell$ vertices, each time uniformly at random (with repetitions).
    
    By linearity of expectation, 
    $\mathbb{E}\left( C_{P}(S) \right)= C_{P}(G) \cdot \binom{s}{\ell} \ell! / n^{\ell} = C_{P}(G) \cdot A_{s, \ell}.$
\end{proof}

\begin{claim}\label{claim:bounded-differences-whp-motifs}
    Suppose that $P$ is a motif on $\ell$ vertices. Consider any two sets $S, S'$ such that, for all $P' \subseteq P$ on $\ell - 1$ vertices, the number of $P'$ copies in each of $S$ and $S'$ is in the range $C_{P'}(S), C_{P'}(S') \in (1 \pm \varepsilon) S_{\mathcal{D}_n}(P', s)$. Then, 
    $$\left|C_{P}(S) - C_{P}(S')\right| \leq  |S \triangle S'| \cdot (1 + \varepsilon)\ell^2 \cdot \max_{\substack{P' \subseteq P \\ |P'| = \ell-1}}  S_{\mathcal{D}_n}(P', s).$$ 
\end{claim}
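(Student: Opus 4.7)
The argument is the direct analogue of \Cref{claim:bounded-differences-whp-ell-cliques}, with the clique count replaced by the induced $P$-count and the single ``ambient'' $(\ell-1)$-clique count replaced by a maximum over induced copies of $P'\subseteq P$ on $\ell-1$ vertices. I first observe that any labeled induced copy of $P$ in $S$ that is not in $S'$ must use at least one vertex of the (multi-)symmetric difference $S\triangle S'$, and symmetrically for copies in $S'$ not in $S$. Hence
$$\bigl|C_P(S) - C_P(S')\bigr| \;\leq\; \sum_{u\in S\setminus S'} C_P(u,S) \;+\; \sum_{u\in S'\setminus S} C_P(u,S'),$$
where $C_P(u,T)$ denotes the number of labeled induced copies of $P$ in the multiset $T$ using the vertex $u$ (counted with the appropriate multiplicity coming from $T$ being a multiset).

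Next I would bound $C_P(u,T)$ for any fixed vertex $u$ and either $T=S$ or $T=S'$. Writing $V(P)=\{v_1,\ldots,v_\ell\}$, every labeled induced copy of $P$ that uses $u$ places $u$ at some position $i\in[\ell]$; deleting that vertex yields a labeled induced copy of the $(\ell-1)$-vertex motif $P_i := P - v_i$ in $T$. This gives
$$C_P(u,T) \;\leq\; \sum_{i=1}^{\ell} C_{P_i}(T) \;\leq\; \ell \cdot \max_{\substack{P'\subseteq P\\ |P'|=\ell-1}} C_{P'}(T).$$
I treat any extra factor coming from labeled-vs-unlabeled position accounting (and from the multiset convention, where a vertex appearing with multiplicity $c$ contributes $c$ copies in $C_P(u,T)$) as absorbed into the overall factor $\ell^2$ stated in the claim, which is where the second $\ell$ enters.

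Finally I would invoke the hypothesis. Since both $S$ and $S'$ satisfy $C_{P'}(S), C_{P'}(S') \in (1\pm\varepsilon) S_{\mathcal{D}_n}(P',s)$ for every $P'\subseteq P$ with $|P'|=\ell-1$, substituting gives
$$C_P(u,T) \;\leq\; \ell \cdot (1+\varepsilon) \cdot \max_{\substack{P'\subseteq P\\ |P'|=\ell-1}} S_{\mathcal{D}_n}(P',s)$$
for $T\in\{S,S'\}$. Summing over $u\in (S\setminus S')\cup(S'\setminus S)$ and collapsing the two sums into $|S\triangle S'|$ yields the claimed bound (with the second factor of $\ell$ from the step above giving the $\ell^2$). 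The only mildly delicate point — and the place I would spend most care — is the labeled/induced bookkeeping with multisets; in particular, checking that $C_P(u,T)$ interacts correctly with repeated vertices so that the overall counting cleanly telescopes to $|S\triangle S'|$ times the per-vertex bound. Everything else is purely mechanical.
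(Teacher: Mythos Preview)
Your proposal is correct and follows essentially the same route as the paper: bound $|C_P(S)-C_P(S')|$ by the copies of $P$ through vertices in the symmetric difference, then bound the per-vertex contribution $C_P(u,T)$ by summing over the $\ell$ possible positions of $u$ in $P$ to get $\leq \ell\cdot\max_{P'} C_{P'}(T)$, and finally apply the hypothesis. The paper's argument is nearly identical, phrasing the per-vertex bound as ``unlabeled copies $\leq \ell\cdot\max_{P'} C_{P'}(T)$, then an extra $\ell$ for labeling,'' which is where their $\ell^2$ comes from; your direct labeled argument is slightly cleaner and in fact only needs a single factor of $\ell$, but you correctly note this is absorbed into the stated $\ell^2$.
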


\begin{proof} 
    Let $C_{P}(v, S)$ be the number of copies of $P$ in $S$ that involve vertex $v \in S$ (and define such a quantity for $S'$ similarly). The only copies of $P$ that contribute to the difference $\left|C_{P}(S) - C_{P}(S')\right| $ are those involving vertices $u \in S \setminus S'$ or vertices $v \in S' \setminus S$. 
    
    For any vertex $u \in S$ (respectively $S'$), the number of unlabeled $P$ copies it contributes to in the graph is upper-bounded by the number of copies of all $P' \subseteq P$ on at most $\ell - 1$ vertices in $S$ (respectively $S'$), which itself is upper-bounded by $\ell$ times the maximum count of any such $P'$: $\ell \cdot \max_{\substack{P' \subseteq P \\ |P'| = \ell-1}}  C_{P'}(S)$. Therefore, the number of labeled copies any $u \in S$ contributes to is upper bounded by $\ell^2 \cdot \max_{\substack{P' \subseteq P \\ |P'| = \ell-1}}  C_{P'}(S)$ (and similarly for $S'$).
    
    This yields the following:
    $$\left|C_{P}(S) - C_{P}(S')\right| \leq \ell^2 \sum_{u \in S \setminus S'} \max_{\substack{P' \subseteq P \\ |P'| = \ell-1}}  C_{P'}(u, S) + \ell^2\sum_{v \in S' \setminus S} \max_{\substack{P' \subseteq P \\ |P'| = \ell-1}}  C_{P'}(v, S') $$ $$\leq |S \setminus S'|\cdot \ell^2 \cdot \max_{\substack{P' \subseteq P \\ |P'| = \ell-1}}  C_{P'}(S) + |S' \setminus S|\cdot  \ell^2 \cdot \max_{\substack{P' \subseteq P \\ |P'| = \ell-1}}  C_{P'}(S') $$ $$\leq |S \triangle S'| \cdot (1 + \varepsilon) \ell^2 \max_{\substack{P' \subseteq P \\ |P'| = \ell-1}}  S_{\mathcal{D}_n}(P', s). \eqno \qedhere$$
\end{proof}

\begin{proof}[Proof of \Cref{lem:concentration-general-P-and-D}]
    Consider a motif $P$ on $\ell$ vertices with maximum degree $\Delta(P)$. Suppose that $G$ is $(\gamma,\beta, s)$-exponentially robustly quasirandom with respect to distribution $\mathcal{D}_n$ and all motifs $P' \subseteq P$ on $\ell - 1$ vertices. Let $X_1, \dots, X_s$ be the random variables corresponding to which vertices are selected to be in $S$ (i.e., $X_i$ is the $i$-th vertex added to $S$, with repetitions).  
    
    By \Cref{def:exp-qr-motif} of exponentially robust quasirandomness for $\mathcal{D}_n$ and all motifs $P' \subseteq P$, $|P'| = \ell - 1$, for every $s$ and $P' \subseteq P$ with $|P'| = \ell - 1$, we have that $$\Pr\left[ C_{P'} (X_1, X_2, \dots, X_s) \not\in (1\pm \gamma) S_{\mathcal{D}_n}(P, s)\right] \leq 4^{\ell} \exp(-\beta s).$$ 

    We will apply McDiarmid's Inequality when differences are bounded with high probability (\Cref{cor:mcdiarmid-unconditional-expectation}) to the function $C_{P} : [n]^s \to \mathbb{R}$, which is the number of $P$ copies in the multiset.
    
    Let us set up the relevant definitions for the inequality. First, let the domain be $[n]^s$, and define $\mathcal{Y} \subseteq [n]^s$ to be the multisets of the domain such that, for each $P' \subseteq P$ on $\ell - 1$ vertices, the number of copies of $P'$ in $\mathcal{Y}$  is in  $(1\pm \gamma) S_{\mathcal{D}_n}(P', s)$. Then, observe that exponential-quasirandomness guarantees that $q := 1 - \mathbb{P}\left( (X_1, X_2, \dots, X_s) \in \mathcal{Y}\right) \leq \ell \cdot 4^{\ell} \exp(-\beta s)$.

    We argue about bounded differences of $C_{P}$ on $\mathcal{Y}$. For each $i \in [s]$, define $$c_i = (1 + \gamma) \ell^2 \cdot \max_{\substack{P' \subseteq P \\ |P'| = \ell-1}}  S_{\mathcal{D}_n}(P', s).$$ By Claim \ref{claim:bounded-differences-whp-motifs}, for all $S = (x_1, x_2, \dots, x_s) \in \mathcal{Y}$ and $S' = (y_1, y_2, \dots, y_s) \in \mathcal{Y}$, $$\left|C_{P}(S) - C_{P}(S')\right| \leq  |S \triangle S'| \cdot (1 + \gamma)\ell^2 \cdot \max_{\substack{P' \subseteq P \\ |P'| = \ell-1}}  S_{\mathcal{D}_n}(P', s) = \sum_{i : x_i \neq y_i} c_i,$$
    so the bounded difference property holds with this setting of $c_i$'s.

    We want to apply \Cref{cor:mcdiarmid-unconditional-expectation} with error bound $\delta = \gamma S_{\mathcal{D}_n}(P, s)/4$. To do so, we need to verify that $$q \leq \min\left\{\frac{\delta}{2 \max_S(C_P(S))}, \frac{\delta}{4 \mathbb{E}(C_P(X_1, X_2, \dots, X_s))} , \frac{1}{2}\right\}.$$ 
    Using $\delta = \gamma S_{\mathcal{D}_n}(P, s)/4$ and $S_{\mathcal{D}_n}(P, s) = F_{\mathcal{D}_n}(P) \cdot \binom{n}{\ell} \binom{s}{\ell} \cdot \frac{(\ell!)^2}{n^{\ell}}$ from \Cref{eq:expected-P-number-general} and \Cref{eq:s-D-P}, and since $\max_S(C_P(S)) \leq \binom{s}{\ell} \ell!$ and $\mathbb{E}(C_P(X_1, X_2, \dots, X_s)) \leq \binom{s}{\ell} \ell!$, it suffices to show that:
    $$q \leq \min\left\{ \frac{\gamma S_{\mathcal{D}_n}(P, s)}{8 \binom{s}{\ell} \ell!}, \frac{\gamma S_{\mathcal{D}_n}(P, s)}{16 \binom{s}{\ell} \ell!}, \frac{1}{2}\right\} = \min\left\{ \frac{\gamma}{16} \cdot F_{\mathcal{D}_n}(P) \cdot \binom{n}{\ell} \ell! \cdot \frac{1}{n^{\ell}}, \frac{1}{2} \right\}.$$
    Since $q \leq \ell \cdot 4^{\ell} \cdot \exp\left(- \beta s\right) \leq \gamma \cdot F_{\mathcal{D}_n}(P) /(16 \cdot 2^{\ell})$ by assumption, this inequality holds.

    We are ready to apply \Cref{cor:mcdiarmid-unconditional-expectation}, which implies that:
    $$\mathbb{P}\left( \left|f_{P}(X_1, X_2, \dots, X_s) -  \mathbb{E}\left( f_{P}(X_1, X_2, \dots, X_s) \right)\right| \geq \gamma S_{\mathcal{D}_n}(P, s) /4 \right)$$
    \begin{equation}\label{eq:mcdiarmid-application-motif}
       \leq 2q + 2 \exp\left( -\frac{2\max\left( 0, \gamma S_{\mathcal{D}_n}(P, s)/4 - q \sum_{i = 1}^s c_i\right)^2}{\sum_{i = 1}^s c_i^2}\right). 
    \end{equation}
    Let us analyze the second term. First, $$q \sum_{i = 1}^s c_i \leq \ell \cdot 4^{\ell} \cdot \exp(-\beta s) \cdot s \cdot (1 + \gamma) \ell^2 \cdot \max_{\substack{P' \subseteq P \\ |P'| = \ell-1}}  S_{\mathcal{D}_n}(P', s).$$ 

    We prove that, given assumptions (2), this is less than $\gamma S_{\mathcal{D}_n}(P, s)/8$. Towards showing this, first, since $\max_{\substack{P' \subseteq P \\ |P'| = \ell-1}}  F_{\mathcal{D}_n}(P') \leq \gamma / 32$, we have:
    $$\max_{\substack{P' \subseteq P \\ |P'| = \ell-1}}  F_{\mathcal{D}_n}(P') \leq \frac{\gamma}{8} \cdot \frac{1}{4} \leq \frac{\gamma}{8} \frac{\binom{n}{\ell}}{\binom{n}{\ell - 1}} \cdot \frac{\binom{s}{\ell}}{\binom{s}{\ell - 1}} \cdot \frac{(\ell!)^2}{((\ell - 1)!)^2} \cdot \frac{n^{\ell - 1}}{n^{\ell}} \cdot \frac{1}{s \cdot \ell^2}.$$
    Equivalently, 
    $$F_{\mathcal{D}_n}(P) \cdot s \cdot \ell^2 \cdot \binom{n}{\ell - 1} \binom{s}{\ell - 1} \cdot \frac{((\ell - 1)!)^2}{n^{\ell - 1}} \max_{\substack{P' \subseteq P \\ |P'| = \ell-1}}  F_{\mathcal{D}_n}(P') \leq \frac{\gamma}{8} \cdot \binom{n}{\ell} \binom{s}{\ell} \cdot \frac{(\ell!)^2}{n^{\ell}} F_{\mathcal{D}_n}(P).$$
    Since $\ell \cdot 4^{\ell} \cdot \exp(-\beta s) \leq \gamma F_{\mathcal{D}_n}(P)/(16 \cdot 9^{\ell})$ and using the definition of $S_{\mathcal{D}_n}(P, s)$ and $S_{\mathcal{D}_n}(P')$, this gives us that:
    $$\ell \cdot 4^{\ell} \cdot \exp(-\beta s) \cdot \frac{1}{\gamma} \cdot (16 \cdot 9^{\ell}) \cdot s \cdot \ell^2 \cdot \max_{\substack{P' \subseteq P \\ |P'| = \ell-1}}  S_{\mathcal{D}_n}(P', s) \leq \gamma S_{\mathcal{D}_n}(P, s)/8.$$
    By definition of $q$ and $c_i$ for $i \in [s]$, this means that
    $$q \sum_{i = 1}^s c_i \leq \gamma^2 S_{\mathcal{D}_n}(P, s)/(64 \cdot 9^{\ell}) \leq \gamma S_{\mathcal{D}_n}(P, s)/8.$$
    Therefore, $2\max\left( 0, \gamma S_{\mathcal{D}_n}(P, s)/4 - q \sum_{i = 1}^s c_i\right)^2 \geq 2 \gamma^2 S_{\mathcal{D}_n}(P, s)^2 / 64 = \gamma^2 S_{\mathcal{D}_n}(P, s)^2 / 32$.

    This implies that (also rounding $(1 + \gamma) S_{\mathcal{D}_n}(P, s)$ up to $2 S_{\mathcal{D}_n}(P, s)$ in the denominator of the expression) \Cref{eq:mcdiarmid-application-motif} is upper-bounded by:
    $$
        \leq 2q + 2 \exp\left(- \frac{ \gamma^2 S_{\mathcal{D}_n}(P, s)^2}{32 \cdot s \cdot (2 \ell^2)^2 \cdot \max_{\substack{P' \subseteq P \\ |P'| = \ell-1}}  S_{\mathcal{D}_n}(P', s)^2}\right).
    $$
    By definition of $S_{\mathcal{D}_n}(P', s)$ and $S_{\mathcal{D}_n}(P, s)$ (combining \Cref{eq:expected-P-number-general} and \Cref{eq:s-D-P}), the expression above equals:
    $$
    = 2q + 2  \exp\left(- \frac{ \gamma^2}{128 \cdot \ell^4 \cdot s} \cdot \left( \frac{(n - \ell + 1)(s - \ell + 1)}{n} \cdot \frac{F_{\mathcal{D}_n}(P)}{\max_{\substack{P' \subseteq P \\ |P'| = \ell-1}}  F_{\mathcal{D}_n}(P', s)}\right)^2\right).
    $$
    For large enough $n$, and by definition of $s$ (\Cref{def:s-general-distribution}), we can use that $(n - \ell + 1)/n \geq 1/2$ and $(s - \ell + 1) \geq s/2$ to obtain that the expression is bounded above by:
    $$\leq 2q + 2 \exp\left(- \frac{\gamma^2}{128 \cdot 16 \cdot \ell^4} \cdot s \cdot \frac{F_{\mathcal{D}_n}(P)^2}{\max_{\substack{P' \subseteq P \\ |P'| = \ell-1}}  F_{\mathcal{D}_n}(P', s)^2} \right).$$
    By definition of $q$ and $r_{\ell}(P)$ this is:
    $$\leq 2 \cdot 4^{\ell} \exp(-\beta s) + 2 \exp\left(- \frac{\gamma^2}{2048 \cdot \ell^4} \cdot s \cdot r_{\ell}(P) \right).$$ 
    By assumption (3) of the concentration lemma we are proving, this is $$
    \leq  2 \cdot 4^{\ell} \exp(-\beta s) + 2 \exp\left(- \kappa s\right). \eqno \qedhere$$
    \end{proof}

We now prove \Cref{cor:concentration-lemma-with-parameters-general} given the concentration lemma proven above.

\begin{proof}[Proof of \Cref{cor:concentration-lemma-with-parameters-general}]

We apply \Cref{lem:concentration-general-P-and-D} with $\gamma = \varepsilon_{\ell - 1}$, $\beta = \alpha_{\ell - 1}$, and $s = _{\ell}$.

First, observe that the assumption that $r_{\ell}(P) \leq (4/9) r_{\ell - 1}(P)$ implies that for $\alpha_{\ell}$, $\ell \in [k-1]$ as defined in \Cref{def:general-parameters}, $\alpha_{\ell - 1} \geq \alpha_{\ell}$. Next, set $\kappa = r_{\ell}(P) \cdot \frac{\gamma^2}{4096 \cdot \ell^4}$ in \Cref{lem:concentration-general-P-and-D}. Observe that $\kappa \geq \alpha_{\ell}$.

Therefore, we find that:
$$2 \cdot 4^{\ell} \exp(-\alpha_{\ell - 1} s_{\ell}) + 2 \exp(-\kappa s_{\ell}) \leq 2 \cdot 4^{\ell} \exp(-\alpha_{\ell} s_{\ell}) + 2 \exp(-\alpha_{\ell} s_{\ell}) \leq 4^{\ell + 1}\exp(-\alpha_{\ell}s_{\ell}).$$
Combining this with \Cref{lem:concentration-general-P-and-D} gives the corollary.
\end{proof}

\subsubsection{Inductive Lemma}

We now prove the general inductive lemma (\Cref{lemma:inductive-general-P-and-D}).

\paragraph{Proof of \Cref{lemma:inductive-general-P-and-D} given \Cref{lem:concentration-general-P-and-D}}\label{section:inductive-given-concentration-general}

Towards proving the inductive lemma (\Cref{lemma:inductive-general-P-and-D}) from the concentration lemma (\Cref{lem:concentration-general-P-and-D}), we first present some corollaries of \Cref{lem:concentration-general-P-and-D}.

The following corollary proves that if $G$ is $(\varepsilon, \alpha_{\ell-1}, s_{\ell})$-exponentially robustly quasirandom with respect to counts of every $P' \in \mathcal{P}_{\ell - 1}$ compared to $\mathcal{D}_n$,  then a size-$s_{\ell}$ multiset chosen uniformly at random will, with very high probability, reveal what the global number of $P$ copies in $G$ is and whether $G$ is $(\varepsilon, \alpha_{\ell}, s_{\ell})$-exponentially robustly quasirandom with respect to counts of $P$ counts (compared to $\mathcal{D}_n$).

\begin{corollary}[Corollary of \Cref{cor:concentration-lemma-with-parameters-general}]\label{cor:cor-of-concentration-general}
    Consider $2 \leq \ell \leq k-1$. Consider a distribution $\mathcal{D}_n$ over graphs, and a motif $P$ on $\ell$ vertices, such that for all $j \in [\ell-1]$, $r_{j+1}(P) \leq (4/9) r_{j}(P)$. Let $\mathcal{P}_{\ell - 1}$ be the set of (labeled, induced) subgraphs of $P$ on $\ell - 1$ vertices.
    
    If a graph $G$ is $(\varepsilon_{\ell - 1}, \alpha_{\ell-1}, s_{\ell})$-exponentially robustly quasirandom with respect to counts of each $P' \in \mathcal{P}_{\ell-1}$ compared to $\mathcal{D}_n$, then the following hold:
    \begin{enumerate}
        \item Suppose that $(\varepsilon_{\ell-1}, \alpha_{\ell}, s_{\ell})$-exponentially robustly quasirandom with respect to the count of $P$. Sample a multiset $S$ by selecting $s_{\ell}$ vertices uniformly at random (with repetition). Then with probability at least $1 - 4^{\ell + 1} \exp(-\alpha_{\ell} s_{\ell})$ (over the choice of $S$), $C_{P}(S) \in (1 \pm \varepsilon_{\ell - 1}) S_{\mathcal{D}_n}(P, s_{\ell})$.
        \item Sample a multiset $S$ by selecting vertices uniformly at random (with repetition). If $C_{P}(S) \in (1 \pm \varepsilon_{\ell - 1}) S_{\mathcal{D}_n}(P, s_{\ell})$, then with probability at least $1 - 4^{\ell + 1}\exp(-\alpha_{\ell} s_{\ell})$ (over the choice of $S$), $C_{P}(G) \in (1 \pm \frac{5}{4} \varepsilon_{\ell - 1})E_{\mathcal{D}_n}(P)$ and, furthermore, $G$ is $(\varepsilon_{\ell}, \alpha_{\ell}, s_{\ell})$-exponentially robustly quasirandom with respect to counts of $P$ compared to $\mathcal{D}_n$.
    \end{enumerate}
\end{corollary}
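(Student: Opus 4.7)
The plan is to mirror essentially verbatim the proof of Corollary \ref{cor:cor-of-concentration-clique} (the clique analog), now using Corollary \ref{cor:concentration-lemma-with-parameters-general} as the driving concentration inequality in place of Corollary \ref{cor:concentration-lemma-with-parameters}. The hypothesis that $r_{j+1}(P) \leq (4/9) r_j(P)$ for all $j \leq \ell-1$ is exactly what allows Corollary \ref{cor:concentration-lemma-with-parameters-general} to be applied (it gives $\alpha_{\ell-1} \geq \alpha_\ell$ and ensures the required bound on $\kappa$ inside the concentration lemma), and the assumption that $G$ is $(\varepsilon_{\ell-1}, \alpha_{\ell-1}, s_\ell)$-exponentially robustly quasirandom with respect to every $P' \in \mathcal{P}_{\ell-1}$ matches its input condition. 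So the corollary gives us, as our master inequality:
\[
\Pr_S\left( \left| C_{P}(S) - C_{P}(G) \cdot A_{s_{\ell}, \ell}\right| \geq \tfrac{\varepsilon_{\ell - 1}}{4} S_{\mathcal{D}_n}(P, s_{\ell}) \right) \leq 4^{\ell + 1} \exp(-\alpha_{\ell} s_{\ell}).
\]

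For Part (1), I would argue directly from the definition of exponentially robust quasirandomness (Definition \ref{def:exp-qr-motif}) for $P$: assuming $G$ is $(\varepsilon_{\ell-1}, \alpha_\ell, s_\ell)$-exponentially robustly quasirandom with respect to $\mathcal{D}_n$ and $P$, by definition a $(1 - 4^{\ell+1}\exp(-\alpha_\ell s_\ell))$-fraction of multisets $S$ satisfy $C_P(S) \in (1 \pm \varepsilon_{\ell-1}) S_{\mathcal{D}_n}(P, s_\ell)$. Nothing from the concentration lemma is even needed here.

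For Part (2), the master inequality gives (with probability $\geq 1 - 4^{\ell+1}\exp(-\alpha_\ell s_\ell)$) both of the two-way bounds
\[
C_P(G) \in C_P(S) \cdot \tfrac{1}{A_{s_\ell,\ell}} \pm \tfrac{\varepsilon_{\ell-1}}{4} E_{\mathcal{D}_n}(P), \qquad C_P(S) \in C_P(G)\cdot A_{s_\ell, \ell} \pm \tfrac{\varepsilon_{\ell-1}}{4} S_{\mathcal{D}_n}(P, s_\ell).
\]
Plugging the assumption $C_P(S) \in (1 \pm \varepsilon_{\ell-1}) S_{\mathcal{D}_n}(P, s_\ell)$ into the first bound, and dividing through by $A_{s_\ell, \ell}$ using that $S_{\mathcal{D}_n}(P, s_\ell) = E_{\mathcal{D}_n}(P) \cdot A_{s_\ell, \ell}$, yields $C_P(G) \in (1 \pm \tfrac{5}{4}\varepsilon_{\ell-1}) E_{\mathcal{D}_n}(P)$ as required. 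For the quasirandomness claim: the same assumption implies $C_P(G) \cdot A_{s_\ell,\ell}$ lies within $\tfrac{5}{4}\varepsilon_{\ell-1} S_{\mathcal{D}_n}(P, s_\ell)$ of $S_{\mathcal{D}_n}(P, s_\ell)$, so plugging this into the second bound (applied to an independent sample $S'$) and using the triangle inequality gives $C_P(S') \in (1 \pm \tfrac{3}{2}\varepsilon_{\ell-1}) S_{\mathcal{D}_n}(P, s_\ell)$ with probability at least $1 - 4^{\ell+1}\exp(-\alpha_\ell s_\ell)$. Since $\varepsilon_\ell = \tfrac{3}{2}\varepsilon_{\ell-1}$ by Definition \ref{def:general-parameters}, this is exactly $(\varepsilon_\ell, \alpha_\ell, s_\ell)$-exponentially robust quasirandomness for $P$.

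There is no substantive obstacle; the proof is pure bookkeeping. The only care needed is to track the constants $\tfrac{1}{4}, \tfrac{5}{4}, \tfrac{3}{2}$ correctly and to verify they are consistent with the recursion $\varepsilon_\ell = \tfrac{3}{2}\varepsilon_{\ell-1}$ fixed in Definition \ref{def:general-parameters} (which matches the corresponding recursion $\tfrac{3}{2}$ in the clique case). Because the parameters $\alpha_\ell$ and $s_\ell$ have been defined precisely to make Corollary \ref{cor:concentration-lemma-with-parameters-general} applicable with the right error probability, no additional parameter manipulation is required beyond the single application of that corollary per part.
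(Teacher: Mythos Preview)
Your proposal is correct and follows essentially the same approach as the paper's proof: both invoke Corollary \ref{cor:concentration-lemma-with-parameters-general} to obtain the master concentration inequality, derive Part (1) directly from the definition of exponentially robust quasirandomness, and for Part (2) use the two-way bounds to first deduce $C_P(G) \in (1 \pm \tfrac{5}{4}\varepsilon_{\ell-1})E_{\mathcal{D}_n}(P)$ and then propagate this into the $(\varepsilon_\ell, \alpha_\ell, s_\ell)$-quasirandomness claim via $\varepsilon_\ell = \tfrac{3}{2}\varepsilon_{\ell-1}$.
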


\begin{proof}[Proof of \Cref{cor:cor-of-concentration-general}]
    Suppose that $G$ is $(\varepsilon_{\ell - 1}, \alpha_{\ell-1}, s_{\ell})$-exponentially robustly quasirandom with respect to counts of each $P' \in \mathcal{P}_{\ell-1}$ compared to $\mathcal{D}_n$. Then, by \Cref{cor:concentration-lemma-with-parameters-general} of \Cref{lem:concentration-general-P-and-D}, for each $P' \in \mathcal{P}_{\ell - 1}$, $$\Pr_S\left( \left| C_{P'}(S) - C_{P'}(G) \cdot A_{s_{\ell}, \ell}\right| \geq \frac{\varepsilon_{\ell - 1}}{4}S_{\mathcal{D}_n}(P, s_{\ell}) \right) \leq 4^{\ell + 1} \exp(-\alpha_{\ell} s_{\ell}).$$

     First, let's address case (1). Suppose $G$ is $(\varepsilon_{\ell-1}, \alpha_{\ell}, s_{\ell})$-exponentially robustly quasirandom with respect to the count of $P$ compared to $\mathcal{D}_n$. By definition of exponentially robust quasirandomness (\Cref{def:exp-qr-motif}), all but a $4^{\ell + 1}\exp(-\alpha_{\ell} s_{\ell})$ fraction of size-$s$ multisets $S$ satisfy
    $$C_{P}(S) \in \left(1 \pm \varepsilon_{\ell-1} \right)S_{\mathcal{D}_n}(P, s).$$

    Let's now move to case (2). \Cref{cor:concentration-lemma-with-parameters-general} implies that when $G$ is $(\varepsilon_{\ell - 1}, \alpha_{\ell-1}, s_{\ell})$-exponentially robustly quasirandom with respect to counts of all $P' \in \mathcal{P}_{\ell - 1}$ compared to $\mathcal{D}_n$, with probability at least $1 - 4^{\ell + 1}\exp\left(-\alpha_{\ell} s_{\ell} \right)$, $$C_{P} (G) \cdot A_{s_{\ell}, \ell} \in C_{P}(S) \pm \frac{\varepsilon_{\ell-1}}{4}S_{\mathcal{D}_n}(P, s_{\ell}).$$
    Equivalently, using that $S_{\mathcal{D}_n}(P, s_{\ell}) = E_{\mathcal{D}_n}(P) \cdot \binom{s_{\ell}}{\ell} \frac{1}{n^{\ell}}$ from \Cref{eq:s-D-P}, with probability at least $1 - 4^{\ell + 1}\exp\left(-\alpha_{\ell} s_{\ell} \right)$,
    \begin{equation}\label{eq:ell-clique-G-given-ell-clique-S-sample}
        C_{P} (G) \in C_{P}(S) \cdot \frac{1}{A_{s_{\ell}, \ell}} \pm \frac{\varepsilon_{\ell - 1}}{4}E_{\mathcal{D}_n}(P).
    \end{equation}
    Additionally, by \Cref{lem:concentration-general-P-and-D}, under the high-probability event, 
    \begin{equation}\label{eq:ell-clique-S-sample-given-ell-clique-G}
        C_{P}(S) \in C_{P}(G) \cdot A_{s_{\ell}, \ell} \pm \frac{\varepsilon_{\ell - 1}}{4} S_{\mathcal{D}_n}(P, s_{\ell}).
    \end{equation}
    
    Suppose that $C_{P}(S) \in (1 \pm \varepsilon_{\ell - 1}) S_{\mathcal{D}_n}(P, s_{\ell})$. Therefore, $$C_{P}(S) \cdot \frac{1}{A_{s_{\ell}, \ell}} \in (1 \pm \varepsilon_{\ell - 1})E_{\mathcal{D}_n}(P).$$ By \Cref{eq:ell-clique-G-given-ell-clique-S-sample}, with probability at least $1 - 4^{\ell + 1}\exp\left(-\alpha_{\ell} s_{\ell} \right)$, $C_{P}(G) \in (1 \pm \frac{5}{4}\varepsilon_{\ell - 1})E_{\mathcal{D}_n}(P)$. We now argue that the graph is exponentially robustly quasirandom with high probability. When $C_{P}(G) \in (1 \pm \frac{5}{4}\varepsilon_{\ell - 1})E_{\mathcal{D}_n}(P)$, by \Cref{eq:ell-clique-S-sample-given-ell-clique-G}, all but a $4^{\ell + 1}\exp(-\alpha_{\ell} s_{\ell})$ fraction of size-$s_{\ell}$ multisets $S'$ satisfy
    $$C_{P}(S') \in \left(1 \pm \frac{3}{2}\varepsilon_{\ell - 1}\right)S_{\mathcal{D}_n}(P, s_{\ell}).$$
    Since $\varepsilon_{\ell} = \frac{3}{2}\varepsilon_{\ell - 1}$, therefore $G$ is $(\varepsilon_{\ell}, \alpha_{\ell}, s_{\ell})$-exponentially robustly quasirandom with respect to the count of $P$ compared to $\mathcal{D}_n$.
\end{proof}

\begin{proof}[Proof of \Cref{lemma:inductive-general-P-and-D}]

Given \Cref{cor:cor-of-concentration-general}, \Cref{lemma:inductive-general-P-and-D} follows from a simple argument, given below.

 Suppose that $G$ is $(\varepsilon_{\ell - 1}, \alpha_{\ell-1}, s_{\ell})$-exponentially robustly quasirandom with respect to the counts of all motifs $P' \in \mathcal{P}_{\ell - 1}$. Define the tester $T_{\ell}$ that samples $s_{\ell}$ (as specified in \Cref{def:s-ell-general}) random vertices of $G$ (with replacement) and accepts if and only if the clique count in the induced subgraph on the sampled vertices is within $(1 \pm \varepsilon_{\ell - 1}) S_{\mathcal{D}_n}(P, s_{\ell})$.

 By the corollary \Cref{cor:cor-of-concentration-general} of the concentration lemma (\Cref{lem:concentration-general-P-and-D}), the tester $T_{\ell}$ satisfies the following properties. 

\textbf{Completeness:} Suppose that $G$ is $(\varepsilon_{\ell-1}, \alpha_{\ell}, s_{\ell})$-exponentially robustly quasirandom with respect to counts of all $P' \in \mathcal{P}_{\ell - 1}$ compared to $\mathcal{D}_n$. Then, with probability at least $1 - 4^{\ell + 1}\exp(-\alpha_{\ell} s_{\ell})$, $C_{P}(S) \in (1 \pm \varepsilon_{\ell - 1}) S_{\mathcal{D}_n}(P, s_{\ell})$.
 
\textbf{Soundness:} If the clique count on the sampled multiset is in $(1 \pm \varepsilon_{\ell - 1}) S_{\mathcal{D}_n}(P, s_{\ell})$, then with probability at least $1 - 4^{\ell + 1} \exp(-\alpha_{\ell} s_{\ell})$, $G$ is $(\varepsilon_{\ell}, \alpha_{\ell}, s_{\ell})$-exponentially robustly quasirandom with high probability, so accepted graphs satisfy the quasirandomness condition.
\end{proof}

\subsection{Proof of Theorem \ref{thm:general-graph-motif-count-qc}}

We now prove \Cref{thm:general-graph-motif-count-qc} from the general concentration and inductive lemmas proven above, following the framework of the proof of \Cref{thm:k-cliques} from the concentration and inductive lemmas for $k$-cliques.

We begin with the following corollary that extends \Cref{cor:another-cor-of-concentration}.
\begin{corollary}[Corollary of \Cref{lem:concentration-general-P-and-D}]\label{cor:another-cor-of-concentration-general}
    Consider $2 \leq \ell \leq k-1$. Consider a distribution $\mathcal{D}_n$ over graphs, and a motif $H$ on $k$ vertices. Let $\mathcal{H}_{\ell - 1}$ be the set of (labeled, induced) subgraphs of $H$ on $\ell - 1$ vertices.    
    
    Suppose $G$ is $(\varepsilon, \alpha_{k-1}, s_{*})$-exponentially robustly quasirandom with respect to counts of each $H' \in \mathcal{H}_{k-1}$ compared to $\mathcal{D}_n$, and $C_H(G) \in (1 \pm \frac{1}{4} \varepsilon) E_{\mathcal{D}_n}(H)$. Sample a multiset $S$ by selecting $s_{*}$ vertices uniformly at random (with repetition). Then with probability at least $1 - 4^{k + 1} \exp(-\alpha_k s_{*})$, $C_H(S) \in (1 \pm \frac{1}{2} \varepsilon) S_{\mathcal{D}_n}(H, s_{*})$. If $C_H(G) \not \in (1 \pm \varepsilon) E_{\mathcal{D}_n}(H)$, then with probability at least $1 - 4^{k + 1} \exp(-\alpha_k s_{*})$, $C_H(S) \not \in (1 \pm \frac{3}{4} \varepsilon) S_{\mathcal{D}_n}(H, s_{*})$.
\end{corollary}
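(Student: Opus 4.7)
The plan is to reduce this directly to an application of \Cref{cor:concentration-lemma-with-parameters-general} (i.e., the parameter-instantiated form of \Cref{lem:concentration-general-P-and-D}) with motif $P = H$ and level $\ell = k$. Note that by \Cref{def:general-parameters} we have $\varepsilon_{k-1} = \varepsilon$, so the hypothesis that $G$ is $(\varepsilon, \alpha_{k-1}, s_*)$-exponentially robustly quasirandom with respect to every $H' \in \mathcal{H}_{k-1}$ matches exactly the input required by the corollary (the standing density-increment assumption $r_\ell(H) \leq (4/9)\, r_{\ell-1}(H)$ is carried over from the enveloping \Cref{thm:general-graph-motif-count-qc}). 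Applying it yields, with probability at least $1 - 4^{k+1}\exp(-\alpha_k s_*)$,
\[
\bigl| C_H(S) - C_H(G) \cdot A_{s_*, k}\bigr| \;\le\; \tfrac{\varepsilon}{4}\, S_{\mathcal{D}_n}(H, s_*).
\]

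The next step is a simple rewriting. By \Cref{eq:s-D-P}, $S_{\mathcal{D}_n}(H, s_*) = E_{\mathcal{D}_n}(H)\cdot A_{s_*, k}$, so $C_H(G) \cdot A_{s_*, k} = \frac{C_H(G)}{E_{\mathcal{D}_n}(H)}\, S_{\mathcal{D}_n}(H, s_*)$. The concentration bound above therefore says, up to an additive $\tfrac{\varepsilon}{4} S_{\mathcal{D}_n}(H, s_*)$ slack, that $C_H(S)$ tracks the scaled global count $\frac{C_H(G)}{E_{\mathcal{D}_n}(H)}\, S_{\mathcal{D}_n}(H, s_*)$.

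From here the two conclusions follow by a triangle inequality. For the first (YES) case, the hypothesis $C_H(G) \in (1 \pm \tfrac{\varepsilon}{4}) E_{\mathcal{D}_n}(H)$ gives $C_H(G)\cdot A_{s_*, k} \in (1\pm \tfrac{\varepsilon}{4}) S_{\mathcal{D}_n}(H, s_*)$, which combined with the concentration slack of $\tfrac{\varepsilon}{4}S_{\mathcal{D}_n}(H, s_*)$ places $C_H(S)$ in $(1 \pm \tfrac{\varepsilon}{2}) S_{\mathcal{D}_n}(H, s_*)$. For the second (NO) case, $C_H(G) \notin (1\pm \varepsilon) E_{\mathcal{D}_n}(H)$ means $C_H(G)\cdot A_{s_*, k}$ lies at least $\varepsilon\, S_{\mathcal{D}_n}(H, s_*)$ away from $S_{\mathcal{D}_n}(H, s_*)$, and subtracting the $\tfrac{\varepsilon}{4}S_{\mathcal{D}_n}(H, s_*)$ slack forces $C_H(S) \notin (1\pm \tfrac{3\varepsilon}{4}) S_{\mathcal{D}_n}(H, s_*)$. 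Both conclusions hold with probability at least $1 - 4^{k+1}\exp(-\alpha_k s_*)$.

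There is essentially no hard step here: the real work was done in proving \Cref{lem:concentration-general-P-and-D}. The only thing to double-check is that the parameter bookkeeping lines up, namely that the $\varepsilon_{k-1}/4$ slack in the corollary is exactly $\varepsilon/4$ (it is, by the definition $\varepsilon_{k-1}=\varepsilon$), and that the soundness case indeed only needs a lower bound on the deviation of $C_H(G)$ from $E_{\mathcal{D}_n}(H)$ rather than any exponentially robust quasirandomness assumption about $H$ itself (which would be circular, since this is the very property we are trying to test). The proof is strictly parallel to that of \Cref{cor:another-cor-of-concentration} in the clique case.
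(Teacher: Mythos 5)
Your proof is correct and follows exactly the route the paper has in mind (the paper only says ``follows similarly to \Cref{cor:cor-of-concentration-general}''): apply \Cref{cor:concentration-lemma-with-parameters-general} at level $\ell = k$ with $\varepsilon_{k-1} = \varepsilon$ to get $\lvert C_H(S) - C_H(G)\cdot A_{s_*,k}\rvert \le \tfrac{\varepsilon}{4} S_{\mathcal{D}_n}(H,s_*)$ with the stated probability, rewrite via $C_H(G)\cdot A_{s_*,k} = \frac{C_H(G)}{E_{\mathcal{D}_n}(H)}S_{\mathcal{D}_n}(H,s_*)$, and close with the triangle inequality in the YES case and the reverse triangle inequality in the NO case. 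Your bookkeeping observations (that $\varepsilon_{k-1}=\varepsilon$, that the density-increment hypotheses are inherited from \Cref{thm:general-graph-motif-count-qc}, and that no quasirandomness assumption on $H$ itself is used) are all correct.
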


\begin{proof}
    The proof follows similarly to that of \Cref{cor:cor-of-concentration-general}.
\end{proof}

Next, we show that Claim \ref{claim:completeness-concentration} from in \Cref{sec:query-complexity-cliques} extends rather directly to the setting of motifs and general graph distributions. Interestingly, the fact that the proof highlights that all we needed about $\gnp$ previously was the concentration of $\ell$-clique counts for all $\ell \in [k]$.

\begin{claim}\label{claim:completeness-concentration-general}
       Consider $2 \leq \ell \leq k-1$. 
       Consider a $\mathcal{D}_n$ over graphs on $n$ vertices and a motif $H$ on $k$ vertices, such that for all $\ell \in [k-1]$, $r_{\ell+1}(H) \leq (4/9)r_{\ell}(H)$. Let $\mathcal{H}_{k - 1}$ be the set of (labeled, induced) subgraphs of $H$ on $\ell - 1$ vertices. Define $s_{\ell}$, $\ell \in [k]$, as in \Cref{def:s-ell-general}.  
       
       Suppose that a graph $G$ satisfies the property that for all $\ell \in \{1, 2, \dots, k\}$ the number of induced labeled  copies of each $H' \in \mathcal{H}_{\ell}$ in $G$ is in the range $(1 \pm \frac{\varepsilon}{4(k+1)}) E_{\mathcal{D}_n}(H')$. Suppose also that, for all $\ell \in [k-1]$, $r_{\ell}(H) \geq r_{\ell + 1}(H)$.
       
       Then, for size $s_{\ell}$ multisets, for all $\ell \in [k-1]$, $G$ is $\left((\ell + 1) \cdot \frac{\varepsilon}{4(k+1)} , (\ell + 1)^2 \cdot \frac{\varepsilon^2}{16(k+1)^2} \cdot \frac{r_{\ell}(H)}{2048 \cdot \ell^2}, s_{\ell}\right)$-exponentially robustly quasirandom with respect to the count of all $H' \in \mathcal{H}_{\ell}$.
    \end{claim}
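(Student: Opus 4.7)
The plan is to mirror the clique-case argument in Claim~\ref{claim:completeness-concentration}, inducting on $\ell \in \{1, 2, \ldots, k-1\}$ with Lemma~\ref{lem:concentration-general-P-and-D} as the inductive engine.

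For the base case $\ell = 1$: for any size-$s$ multiset $S$ and any single-vertex motif $H' \in \mathcal{H}_1$, the count $C_{H'}(S)$ equals $s$ deterministically, and $S_{\mathcal{D}_n}(H', s) = s$ as well, so every graph $G$ is trivially $(0, \infty, s_1)$-exponentially robustly quasirandom with respect to every $H' \in \mathcal{H}_1$; this is stronger than the parameters demanded at level $\ell = 1$.

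For the inductive step, assume the conclusion at level $\ell-1$ for every $H'' \in \mathcal{H}_{\ell-1}$, fix any $H' \in \mathcal{H}_\ell$, and let $\mathcal{P}_{\ell-1}(H') \subseteq \mathcal{H}_{\ell-1}$ denote its induced sub-motifs on $\ell-1$ vertices. Since $s_\ell \geq s_{\ell-1}$ (using $r_\ell(H) \leq r_{\ell-1}(H)$), the hypothesis lifts to sample size $s_\ell$: $G$ is $(\gamma_{\ell-1}, \beta_{\ell-1}, s_\ell)$-exponentially robustly quasirandom with respect to every $P' \in \mathcal{P}_{\ell-1}(H')$, where $\gamma_{\ell-1} := \ell \cdot \varepsilon/(4(k+1))$ and $\beta_{\ell-1}$ is the inductive $\beta$-value. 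Apply Lemma~\ref{lem:concentration-general-P-and-D} to $P = H'$ with parameters $\gamma = \gamma_{\ell-1}$, $\beta = \beta_{\ell-1}$, $s = s_\ell$, and $\kappa := r_\ell(H) \gamma_{\ell-1}^2/(4096\, \ell^4)$. After verifying the three preconditions (size of $s$, max density, density increment), the lemma yields
\[
\Pr_S\!\left[\,|C_{H'}(S) - C_{H'}(G)\cdot A_{s_\ell,\, \ell}| \geq \tfrac{\gamma_{\ell-1}}{4}\, S_{\mathcal{D}_n}(H', s_\ell)\,\right] \leq 2\cdot 4^\ell \exp(-\beta_{\ell-1} s_\ell) + 2\exp(-\kappa s_\ell),
\]
which is bounded (up to matching of absolute constants) by $4^{\ell+1}\exp(-\beta_\ell s_\ell)$ for the target $\beta_\ell$. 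Combining this with the global hypothesis $C_{H'}(G) \in (1 \pm \varepsilon/(4(k+1)))\, E_{\mathcal{D}_n}(H')$ and the identity $E_{\mathcal{D}_n}(H')\, A_{s_\ell, \ell} = S_{\mathcal{D}_n}(H', s_\ell)$, a triangle inequality together with the crude bound $\varepsilon/(4(k+1)) + \gamma_{\ell-1}/4 \leq (\ell+1)\,\varepsilon/(4(k+1)) = \gamma_\ell$ gives $C_{H'}(S) \in (1 \pm \gamma_\ell)\, S_{\mathcal{D}_n}(H', s_\ell)$, closing the induction for this $H'$. A union bound over the constantly many $H' \in \mathcal{H}_\ell$ finishes level $\ell$.

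The main obstacle is verifying Condition~(2) of Lemma~\ref{lem:concentration-general-P-and-D}, namely the max-density bound $\max_{P' \subseteq H',\, |P'| = \ell-1} F_{\mathcal{D}_n}(P') \leq \gamma_{\ell-1}/32$: in the clique case over $\gnp$ this is automatic from the lower bound on $p$, but for a general motif and distribution it does not obviously follow from the listed hypotheses. I expect this step to require either an implicit density restriction on $\mathcal{D}_n$, a sharpening of the concentration lemma that folds Condition~(2) into Condition~(1), or an inductive use of the ratio hypothesis $r_\ell(H) \leq (4/9)\, r_{\ell-1}(H)$ in combination with the global-count approximation to certify the density bound. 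The remaining pieces --- matching constants in $\beta_\ell$, confirming the size-of-$s$ condition via the logarithmic factor in $s_\ell$, and the union bound --- are routine bookkeeping along the same lines as the clique case.
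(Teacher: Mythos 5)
Your proposal is correct and follows essentially the same route as the paper: the paper's own proof of this claim is a two-sentence pointer to Claim~\ref{claim:completeness-concentration}, which is the induction on $\ell$ via repeated application of the concentration lemma that you spell out (base case at $\ell = 1$, lift the sample size using $s_{\ell} \geq s_{\ell-1}$, invoke \Cref{lem:concentration-general-P-and-D} for each $H' \in \mathcal{H}_\ell$ with $\gamma_{\ell-1}, \beta_{\ell-1}$, then absorb the global-count deviation and the sampling deviation via the triangle inequality using $\tfrac{\varepsilon}{4(k+1)} + \tfrac{\gamma_{\ell-1}}{4} \leq \gamma_\ell$). The one structural point worth noting: since the quasirandomness parameters in the conclusion are stated per motif $H'$, you do not actually need a union bound over $\mathcal{H}_\ell$ — the concentration lemma is simply applied once to each $H'$, with the inductive hypothesis supplying quasirandomness for all of its $(\ell-1)$-vertex sub-motifs.

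Your flagged concern about Condition~(2) of \Cref{lem:concentration-general-P-and-D} (the density bound $\max_{P' \subseteq H',\, |P'|=\ell-1} F_{\mathcal{D}_n}(P') \leq \gamma/32$) is a genuine observation, but it is a gap shared by the paper rather than one introduced by your argument: \Cref{cor:concentration-lemma-with-parameters-general} explicitly carries Conditions~(2)--(3) as hypotheses, and neither the statement of this claim nor that of \Cref{thm:general-graph-motif-count-qc} lists Condition~(2). In the clique instantiation the paper's \Cref{lem:concentration} avoids this because the lower bound on $p$ makes the analogous inequality automatic, but for general $\mathcal{D}_n$ the condition must be carried as an implicit hypothesis. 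Noting that and proceeding, as you do, is the right call; the remaining bookkeeping (matching $\kappa = r_\ell(H)\gamma_{\ell-1}^2/(4096\,\ell^4)$ against the target $\beta_\ell$ up to absolute constants, and checking the $\log$ factor in $s_\ell$ for Condition~(1)) is exactly as in the clique case.
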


    \begin{proof}
         Since, for all $\ell \in [k-1]$, $(4/9) r_{\ell}(H) \geq r_{\ell + 1}(H)$, then $s_{\ell + 1} \geq s_{\ell}$.
         
         The proof of this is then a direct extension of the proof of Claim \ref{claim:completeness-concentration}, which is a proof by induction on $\ell$, using a repeated application of the concentration lemma.
    \end{proof}

We are ready to prove \Cref{thm:general-graph-motif-count-qc}, along the lines of the proof of \Cref{thm:k-cliques} in \Cref{sec:qc-upperbound}.

\begin{proof}[Proof of \Cref{thm:general-graph-motif-count-qc}]
    \textbf{Completeness.} By assumption, with high probability $G \sim \mathcal{D}_n$ has concentration of motif counts for all $H' \subseteq H$, where $H$ is the motif on $k$ vertices that we are studying the count of. In the high-probability event that $G \sim \mathcal{D}_n$ is concentrated, Claim \ref{claim:completeness-concentration-general} implies that for all $\ell \in [k]$ and $H' \subseteq H$ on $\ell$ vertices, $G$ is $\left(\delta, \nu, s_{\ell} \right)$-exponentially robustly quasirandom with respect to the count of $H'$, for $$\delta_{\ell} = (\ell + 1) \cdot \frac{\varepsilon}{4(k+1)}, ~~ \nu_{\ell} = (\ell + 1)^2 \cdot \frac{\varepsilon^2}{16(k+1)^2} \cdot \frac{r_{\ell}(H)}{2048 \cdot \ell^2}.$$ 
    From this, \Cref{cor:cor-of-concentration-general} implies that with probability at least $1 - k \cdot 4^{k} \exp\left( -\nu_{k} s_{*}\right) = 1 - o(1)$, the uniformly sampled multiset $S$ satisfies $C_{H'}(S) \in (1 \pm \frac{\varepsilon}{4}) S_{\mathcal{D}_n}(H', s_{*})$ for all $H' \subseteq H$. Therefore, the algorithm accepts $G \sim \mathcal{D}_n$ with high probability.
    
    \textbf{Soundness.} 
    If there exists some $\ell \in \{2, 3, \dots, k-1\}$ and $H' \subseteq H$ on $\ell$ vertices such that the input graph $G$ is $(\varepsilon_{\ell'}, \alpha_{\ell'}, s_{\ell'})$-exponentially robustly quasirandom with respect to the count of $H'' \subseteq H, |H''| = \ell'$ for all $\ell' < \ell$ and is \textit{not} $(\varepsilon_{\ell}, \alpha_{\ell}, s_{\ell})$-exponentially robustly quasirandom with respect to the count of $H'$, then \Cref{lemma:inductive-general-P-and-D} implies that with probability at least $1 - 4^{\ell + 1}\exp(-\alpha_{\ell}s_{\ell}) \geq 9/10$, the algorithm rejects on $G$.

    If, instead, the input graph is $(\varepsilon_{k-1} = \varepsilon, \alpha_{k-1}, s_k = s_{*})$-exponentially robustly quasirandom with respect to the count of all $H' \subseteq H, |H'| = k - 1$. In this case, \Cref{cor:another-cor-of-concentration-general} implies that with probability at least $1 - 4^{k + 1}\exp(-\alpha_k s_{*})$, $C_H(S) \not \in (1 \pm \frac{3}{4} \varepsilon) S_{\mathcal{D}_n}(H, s_{*})$, in which case the algorithm rejects.
    
    Thus, the algorithm rejects each $G$ with $C_H(G) \not \in (1 \pm \varepsilon) E_{\mathcal{D}_n}(H)$ with high probability.
    
    \textbf{Query complexity:} The adjacency matrix query complexity is $\binom{s_{*}}{2} = \widetilde{O}\left(r_k(H)^{-2} \varepsilon^{-4}\right)$.

    \textbf{Runtime:} The algorithm counts the number of labeled induced copies of $H'$, for all $H' \subseteq H$. For worst-case input graphs and arbitrary distributions $\mathcal{D}_n$ over graphs, this can take up to time $O(s_{*}^k) = \widetilde{O}\left(r_k(H)^{-2k} \varepsilon^{-4k}\right)$.
\end{proof}

\bibliographystyle{alpha}
\bibliography{bibliography}

\appendix
\section{Quality control of triangle counts}\label{sec:triangles}

We prove that, under the adjacency list query access model (as defined in \Cref{sec:preliminaries}), quality control of triangle counts as compared to $\gnp$ requires $\Theta\left( \frac{1}{p}\right)$ queries, up to $\mathrm{polylog}(n)$ and $\mathrm{poly}(1/\varepsilon)$ multiplicative factors.

Recall that, for a graph $G$, $C_3(G)$ is the number of triangles in the graph.

\begin{theorem}\label{thm:triangles-usingliterature}
    Let $\rho_{\Delta}= C_3(G)/\mathbb{E}_{G' \sim \gnp}\left[C_3(G') \right]$ and consider parameters $n, p, \varepsilon$ such that $p = \omega\left(\varepsilon^{-2} \cdot 1/n\right)$. Solving the $(\gnp, \rho_{\Delta}, \varepsilon)$-quality control problem can be accomplished in $O\left(\frac{1}{p} \right) \cdot \mathrm{poly}\left(\log(n), 1/\varepsilon \right)$ adjacency matrix, adjacency list, and vertex degree queries and $O\left(\frac{1}{p} \right) \cdot \mathrm{poly}\left(\log(n), 1/\varepsilon \right)$ runtime, in expectation (over the random choices of the algorithm). Moreover, $(\gnp, \rho_{\Delta}, \varepsilon)$-quality control requires $\omega\left( \frac{1}{p}\right)$ queries.
\end{theorem}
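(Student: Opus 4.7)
\emph{Upper bound.} The plan is to exploit the composability of quality control by running two cheap filters before invoking a known triangle-count estimator specialized to bounded-arboricity graphs. Concretely, I first run $(\gnp, \rho_e)$-quality control, accepting only when the number of edges is within $(1\pm\varepsilon/c)$ of $p\binom{n}{2}$; this can be done in $\widetilde{O}(1/p)$ queries using degree queries (sample a random vertex, read its degree, repeat). Next, I run $(\gnp, \rho_\alpha)$-quality control for the arboricity: using the sublinear arboricity estimator of \cite{eden2022approximating} together with \Cref{cor:Gnp-arboricity-concentration}, I accept only when $\alpha(G) = O(pn)$, again in $\widetilde{O}(1/p)$ queries. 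On graphs that survive both filters, I invoke the bounded-arboricity triangle estimator of \cite{eden2018faster}, which for a graph on $n$ vertices with $m$ edges, arboricity $\alpha$, and $t$ triangles has query and time complexity $\widetilde{O}\bigl(n/t^{1/3} + m\alpha/t\bigr)$. Plugging in the $\gnp$-like parameters $m = \Theta(n^2 p)$, $\alpha = \Theta(np)$ and the target $t = \Theta(n^3 p^3)$ (forced by the filters), both terms evaluate to $\widetilde{O}(1/p)$. Finally, I accept iff the returned estimate lies in $(1\pm\varepsilon)\mu_\Delta$.

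The completeness of this composition is immediate: each filter and the final estimator individually accept $\gnp$ with high probability, so a union bound over $O(1)$ subroutines preserves constant acceptance probability. For soundness, if $G$ fails either filter, the filter rejects with high probability by its own soundness. If $G$ passes both filters, then $G$ has $\Theta(n^2 p)$ edges and arboricity $O(np)$, and the estimator of \cite{eden2018faster} returns a $(1\pm \varepsilon/2)$-approximation of $C_3(G)$ with high probability; comparing this estimate to $\mu_\Delta$ correctly decides whether $\rho_\Delta(G) \approx 1$. The main technical care is verifying that the parameter regime needed by \cite{eden2018faster}---in particular, a lower bound on the number of triangles---is implied by the filter guarantees, and that the $\widetilde{O}$ factors combine correctly; this is routine bookkeeping rather than new probabilistic analysis.

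\emph{Lower bound.} I follow the standard two-distribution paradigm as in \Cref{sec:motifs-lowerbound}. Let $D_Y = \gnp$, and let $D_N$ be obtained by drawing $G \sim \gnp$, choosing a uniformly random subset $S \subseteq [n]$ of size $\ell = \Theta(\varepsilon^{1/3} pn)$, and overwriting the induced subgraph on $S$ by a clique. A straightforward concentration argument (mirroring \Cref{lem:param-sep}) shows that $D_Y$ is concentrated on graphs with $\rho_\Delta \approx 1$, while with $\ell = \Theta(\varepsilon^{1/3}pn)$ the planted clique contributes $\binom{\ell}{3} = \Theta(\varepsilon)\mu_\Delta$ extra triangles, pushing $D_N$ outside the $(1\pm \varepsilon)$ range with high probability. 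It then suffices to show that any deterministic algorithm making $q$ adjacency-matrix, adjacency-list, or degree queries cannot distinguish $D_Y$ from $D_N$ unless $q = \Omega(1/p)$: because $S$ is a uniformly random $\ell$-set and every query touches at most $O(1)$ vertices (even adjacency-list queries can be simulated so that they reveal information only about the queried vertex and its neighbors in $\gnp$), the probability that any of the $q$ queried vertices lies in $S$ is at most $O(q\ell/n) = O(q \varepsilon^{1/3} p)$. Hence distinguishing requires $q = \Omega(p^{-1})$. Yao's minimax principle then converts this into the randomized lower bound.

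\emph{Main obstacle.} The cleanest challenge is in the upper bound: I must verify that when $G$ passes the edge and arboricity filters, the preconditions of the \cite{eden2018faster} estimator (notably a lower bound on $t$, and possibly on edge density) are met well enough that its $\widetilde{O}(n/t^{1/3} + m\alpha/t)$ complexity genuinely reduces to $\widetilde{O}(1/p)$ and its correctness guarantee survives the conditional setting (filters have fired). The lower bound's subtlety is more modest: the adjacency-list access model technically gives neighbor-indexing information, so I must argue that a simulator can answer every adjacency-list query from $G \sim \gnp$ alone without leaking information about $S$ until a queried vertex lands in $S$; this amounts to observing that outside $S$ the graph $D_N$ agrees with the underlying $\gnp$ sample.
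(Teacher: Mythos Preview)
Your overall architecture---filter on edges and arboricity, then invoke the bounded-arboricity triangle estimator of \cite{eden2018faster}; for the lower bound, plant a clique of size $\Theta(pn)$---is exactly the paper's approach. The lower-bound sketch is fine and essentially matches \Cref{lem:dy-dn-indist-triangle} and \Cref{lem:triangle-sep}.

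There is, however, a genuine gap in the upper bound, and it is precisely the point you flag as ``routine bookkeeping''. The claim that the filters force $t = \Theta(n^3p^3)$ is false: a graph can have $m = \Theta(n^2p)$ edges and arboricity $O(np)$ yet have \emph{zero} triangles (e.g.\ a random bipartite graph with these parameters). The complexity formula you wrote, $\widetilde{O}(n/t^{1/3} + m\alpha/t)$, uses $t = C_3(G)$ as the \emph{actual} triangle count, so when $C_3(G)$ is tiny the estimator's cost blows up and your soundness argument breaks: you cannot afford to run it, and you have no mechanism to reject such graphs. You are conflating the actual count $C_3(G)$ with the target $\mu_\Delta$ you want to test against.

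The paper resolves this not by bookkeeping but by two structural changes. First, it calls the \emph{decision/oracle} subroutines inside \cite{eden2018faster} (their Lemma~7.1 and Corollary~7.2) with the \emph{parameter} $t = (1-\varepsilon)\binom{n}{3}p^3$ supplied externally; these versions have complexity bounds of the form $\min\{n(\alpha')^2/t,\, n/C_3(G)^{1/3}\} + (m\alpha'/t)\cdot(C_3(G)/t)$, so the first term stays $\widetilde{O}(1/p)$ even when $C_3(G)$ is small. Second, it wraps each call in an explicit time budget and \emph{rejects on timeout}, which handles the opposite extreme where $C_3(G)\gg t$ makes the $(C_3(G)/t)$ factor large. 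Without one of these two devices (oracle version with target parameter, or timeout-and-reject), your algorithm is not sound on adversarial inputs that pass the filters but have very few triangles.
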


Studying triangles is interesting because, beyond being one of the most elementary structures possible, the algorithm we construct will directly rely on results from the literature \cite{eden2018faster, eden2022approximating}. Studying triangles will, therefore, set us up for understanding the limitations we would face in applying the approaches from the literature to the context of quality control for larger motifs.

\subsection{Query complexity upper bound}

In this section, we prove the upper bound in \Cref{thm:triangles-usingliterature}, restated below.

\begin{lemma}[Upper bound in \Cref{thm:triangles-usingliterature}]\label{lem:triangles-fromliterature-UB}
     Consider parameters $n, p, \varepsilon$ such that $p = \omega\left(\varepsilon^{-2} \cdot 1/n\right)$.  Solving the $(\gnp, \rho_{\Delta}, \varepsilon)$-quality control problem can be accomplished in $O\left(\frac{1}{p} \right) \cdot \mathrm{poly}\left(\log(n), 1/\varepsilon \right)$ adjacency matrix, adjacency list, and vertex degree queries and $O\left(\frac{1}{p} \right) \cdot \mathrm{poly}\left(\log(n), 1/\varepsilon \right)$ runtime, in expectation (over the random choices of the algorithm).
\end{lemma}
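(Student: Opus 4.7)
The plan is to exploit the composability of quality control to reduce triangle quality control on $\gnp$ to triangle approximation on a graph with controlled arboricity and edge count. Specifically, the algorithm will run three filters in sequence and reject as soon as any of them rejects. Let $\mu_e = \binom{n}{2}p$, $\mu_\alpha = np$, and $\mu_\Delta = \binom{n}{3}p^3$ be the corresponding expectations under $\gnp$.

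First, I would perform $(\gnp,\rho_e)$-quality control: sample $O(\log n/\varepsilon^2)$ random vertices, query their degrees, and use the sample mean to obtain a $(1\pm\varepsilon/c)$-estimate of $e(G)$; reject if the estimate is not within $(1\pm\varepsilon/c)\mu_e$. Completeness follows from a Chernoff/Hoeffding bound combined with concentration of $e(G)$ for $G\sim\gnp$ (a special case of \Cref{lem:Gnp-clique-concentration} with $k=2$). Second, I would perform $(\gnp,\rho_\alpha)$-quality control by invoking the sublinear arboricity estimator of \cite{eden2022approximating} with target accuracy $1\pm\varepsilon/c$; its query complexity is $\widetilde{O}(n/\alpha)\cdot\text{poly}(1/\varepsilon)$, which is $\widetilde{O}(1/p)\cdot\text{poly}(1/\varepsilon)$ at the target arboricity $\alpha\approx np$, and reject if the estimate exceeds a constant multiple of $\mu_\alpha$. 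Completeness is given by \Cref{cor:Gnp-arboricity-concentration}.

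Third, conditioned on surviving the two filters, the input graph has $e(G) = \Theta(n^2 p)$ and $\alpha(G) = O(np)$. On such graphs I would invoke the bounded-arboricity triangle approximation algorithm of \cite{eden2018faster}, whose expected query and runtime complexity for producing a $(1\pm\varepsilon)$-approximation to the triangle count is $\widetilde{O}\bigl(n/t^{1/3} + m\alpha/t\bigr)\cdot\text{poly}(1/\varepsilon)$, where $t$ is the true triangle count, $m = e(G)$, and $\alpha$ is (an upper bound on) the arboricity. Running this with target $t = (1-\varepsilon)\mu_\Delta$ and truncating if the query budget is exceeded, and substituting $m = O(n^2p)$, $\alpha = O(np)$, and $t = \Theta(n^3p^3)$, both summands evaluate to $\widetilde{O}(1/p)$. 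The output is compared to $\mu_\Delta$: accept iff it lies in $(1\pm\varepsilon/2)\mu_\Delta$. Completeness for this stage follows from \Cref{lem:Gnp-clique-concentration} applied to $k=3$, together with the approximation guarantee of the triangle counter; soundness is immediate, since any graph with $|\rho_\Delta(G)-1|>\varepsilon$ passing the first two filters is still correctly handled by the approximation guarantee, so the final comparison rejects.

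The step I expect to require the most care is stitching the third filter's input assumptions to the outputs of the first two filters: the triangle approximation algorithm of \cite{eden2018faster} takes an arboricity upper bound as input, and its stated guarantees (query bound in expectation, success probability) must hold for every graph in the "accepted-so-far" class rather than just for $G\sim\gnp$. I would therefore run it with the explicit upper bound $\hat\alpha$ returned by the arboricity filter (inflated by a constant safety factor), and cap the running time at $C\cdot p^{-1}\cdot\text{poly}(\log n,1/\varepsilon)$; if this cap is reached, reject. Any graph that survives the first two filters has true arboricity within a constant of $\mu_\alpha$, so the expected cost bound of \cite{eden2018faster} gives that the cap triggers with probability at most a constant over the algorithm's randomness, at which point the overall error is controlled by a union bound. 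The final query and time bounds add up to $O(1/p)\cdot\text{poly}(\log n,1/\varepsilon)$, as claimed.
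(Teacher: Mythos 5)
Your proposal follows essentially the same route as the paper: compose filters for edge count and arboricity with a bounded-arboricity triangle counter from \cite{eden2018faster}, compare the output to $\mu_\Delta$, and use a hard query/time cap (rejecting on timeout) so that the complexity bound holds even for adversarial inputs that pass the filters. The one place the paper is more careful is in Step 3: instead of a single black-box call to the approximation algorithm of \cite{eden2018faster}, it invokes the two oracle subroutines separately (Lemma 7.1 to catch $C_3(G) < t$ and Corollary 7.2 to catch $C_3(G) \geq t$), since Lemma 7.1 alone gives no guarantee when $C_3(G) > 4t$; your ``single call with cap'' variant works as long as you cite the standalone approximation theorem of \cite{eden2018faster} rather than the oracle lemma. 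Two small calibration points: the degree filter needs $\Theta(\log n/p)$ degree queries (the paper uses a Bernstein-type bound, Corollary~\ref{cor:avg-deg-test}), not $O(\log n/\varepsilon^2)$, and the arboricity subroutine from \cite{eden2022approximating} only guarantees an $O(\log^2 n)$-factor gap (Claim 3.23) rather than $1\pm\varepsilon/c$ accuracy---though neither affects the final $\tilde{O}(1/p)$ bound since the arboricity estimate is only used as an upper bound fed into the triangle counter.
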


To prove this, we rely on and combine algorithms from the literature \cite{eden2022approximating, eden2018faster} to obtain a quality control algorithm for triangle counts that uses $O(1/p) \cdot \text{poly}(\log n, \frac{1}{\varepsilon})$ queries. At a high level, the algorithm does the following. The algorithm will first approximate the arboricity; if the approximation is less than $2 n p$, proceed, and else reject. Next, approximate the number of triangles using the sublinear approximation algorithm of \cite{eden2018faster} whose query complexity depends on the arboricity of the input graph. Accept if the approximation is in the range $(1 \pm \varepsilon/2) \binom{n}{3} p^3$; else, reject. The completeness (accepting $G \sim \gnp$ with high probability) relies on the concentration of the arboricity and number of triangles for \erdosrenyi graphs (see \Cref{prop:Gnp-arboricity-concentration} and \Cref{lem:Gnp-clique-concentration}). The soundness follows because we reject all graphs with $\rho_{\Delta}\not \in (1 \pm \varepsilon) \mathbb{E}_{G' \sim \gnp}\left[ \rho_{\Delta}(G')\right]$ (and possibly other graphs, due to the nature of the quality control framework).

Towards this general goal, we present subroutines from \cite{eden2022approximating, eden2018faster} that provide ``decision versions'' of the problem of estimating the arboricity and triangle count. That is, the following lemmas state that it is possible to detect when the arboricity and triangle count are in a certain range, versus sufficiently far from the range.

\begin{lemma}[Claim 3.23 of \cite{eden2022approximating}]\label{lem:claim3.23-eden2022}
    Consider an input graph $G$ with arboricity $\mathrm{arb}(G)$, and a parameter $\alpha$. There exists an algorithm (call it \textbf{Algorithm A}) that performs $O(n/\alpha)$ adjacency list and degree queries and has runtime $O(n/\alpha)$ with the following guarantees. If $\mathrm{arb}(G) \leq \alpha$, then with probability at least $1 - 1/n^3$, the algorithm accepts. If $\mathrm{arb}(G) > 100 \log^2(n) \cdot \alpha$, then with probability at least $1 - 20 \log(n)/n^4$, the algorithm rejects.
\end{lemma}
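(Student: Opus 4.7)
The plan is to establish the claim via a sample-and-locally-test strategy that leverages the Nash--Williams characterization of arboricity: $\mathrm{arb}(G) > \beta$ if and only if there exists a subgraph $H \subseteq G$ with $|E(H)| > \beta(|V(H)|-1)$. In the reject regime $\mathrm{arb}(G) > 100\log^2(n)\alpha$, plugging in $\beta = 100\log^2(n)\alpha$ and combining with $|E(H)| \leq \binom{|V(H)|}{2}$ forces $|V(H)| > \Omega(\log^2(n)\alpha)$; a standard averaging argument then shows that at least $\Omega(\log^2(n)\alpha)$ vertices of $H$ have degree (within $H$, hence within $G$) of at least $\Omega(\log^2(n)\alpha)$. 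Consequently, a uniform random sample $S \subseteq V(G)$ of size $\Theta(n/\alpha)$ hits such a ``witness vertex'' with probability $1 - \exp(-\Omega(\log^2 n))$.

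Algorithm A then proceeds in two phases. First, draw $S$ uniformly at random of size $\Theta(n/\alpha)$. Second, for each $v \in S$ evaluate a local arboricity certificate at $v$ using $O(1)$ amortized queries; in the approach of \cite{eden2022approximating} this is an $h$-index-style statistic on the degrees of a small sampled set of neighbors of $v$. The certificate has two crucial properties: via the Chiba--Nishizeki degeneracy ordering it is bounded by $O(\alpha)$ with high probability on bounded-arboricity graphs (the $h$-index of any neighborhood is at most the arboricity), yet it must exceed any $O(\alpha)$ threshold at vertices sitting inside a dense witness. The algorithm accepts iff every $v \in S$ passes the threshold, and otherwise rejects. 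The total query complexity and runtime are thus $O(n/\alpha)$ as required.

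The two error bounds follow by complementary arguments. For completeness, when $\mathrm{arb}(G) \leq \alpha$ the certificate at a fixed sampled vertex fails with probability $\leq 1/n^{O(1)}$ by concentration of the $h$-index on bounded-arboricity graphs; a union bound over the $O(n/\alpha)$ sampled vertices, followed by $O(\log n)$ rounds of independent amplification, drives the overall failure probability down to $1/n^3$. For soundness, the initial witness-hitting probability $1 - \exp(-\Omega(\log^2 n))$ is amplified over $O(\log n)$ rounds, and conditional on hitting any witness vertex the certificate is violated with high probability, giving the stated $20\log(n)/n^4$ bound.

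The main obstacle will be designing the local certificate so it is simultaneously (i) computable in $O(1)$ amortized queries per sampled vertex, (ii) insensitive to isolated high-degree vertices that are compatible with low arboricity (e.g., a star is a single forest), and (iii) guaranteed to be large at every vertex of any sufficiently dense subgraph. The resolution used in \cite{eden2022approximating} is precisely to base the certificate on $h$-index statistics of sampled neighborhoods rather than raw degrees: the $h$-index is globally upper-bounded by the arboricity (killing false positives from stars and other low-arboricity graphs with few high-degree vertices), but is necessarily large at every vertex of a dense witness (ensuring no false negatives in the reject regime). We defer the explicit $h$-index construction and the associated concentration bounds to \cite{eden2022approximating}.
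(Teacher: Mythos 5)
The paper does not prove this lemma; it is cited verbatim as Claim 3.23 of \cite{eden2022approximating} and used as a black box, so there is no internal proof against which to compare your sketch. What follows is an assessment of your reconstruction on its own terms.

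Your high-level architecture --- sample $\Theta(n/\alpha)$ vertices uniformly, evaluate a local $h$-index-style certificate at each, accept iff all pass --- is the right shape, and your Nash--Williams density argument showing that the reject regime forces $\Omega(\log^2(n)\alpha)$ vertices of degree $\Omega(\log^2(n)\alpha)$ is sound. There are two soft spots. First, your completeness amplification is either redundant or in tension with the query budget: if the per-vertex certificate already fails with probability $n^{-O(1)}$ (as you assert), then a single union bound over the $O(n/\alpha)$ samples gives the $1/n^3$ guarantee with no repetition, and the additional $O(\log n)$ rounds you invoke are unnecessary; conversely, if the certificate only fails with constant probability so that $\Theta(\log n)$ independent rounds are genuinely needed, the total query count becomes $\Theta(n\log n/\alpha)$, exceeding the stated $O(n/\alpha)$. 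Second, you explicitly defer the construction of the local certificate and its concentration analysis to \cite{eden2022approximating}. Those two properties --- that the $h$-index statistic is $O(\alpha)$ with probability $1-n^{-\Omega(1)}$ whenever $\mathrm{arb}(G)\leq\alpha$, and that it necessarily exceeds the threshold at every vertex of a dense witness --- are precisely the technical heart of the claim, so your sketch is an outline of the cited work's strategy rather than an independent argument.
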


We will apply this with $\alpha = \Theta\left( np \right) $, so the query complexity of the algorithm is $O(1/p)$.

We also observe that deciding if the average degree is $O(np)$ can be done efficiently, with high probability.

\begin{lemma}[\cite{feige2004sums}, simplified]\label{lem:feigesums}
    There is an algorithm that uses $\widetilde{O}\left(\sqrt{n/d}\right)$ degree queries and time and estimates the average degree of an input graph $G$ up to a multiplicative factor of $3$, with probability $1 - o(1)$, as long as $G$ has average degree at least $d$.
\end{lemma}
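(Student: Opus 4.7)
The plan is to follow Feige's sample-and-truncate strategy. I would sample $s = \Theta(\sqrt{n/d})$ vertices $v_1,\dots,v_s$ independently and uniformly at random, query each of their degrees, and output $c$ times the average of the lower half of the $s$ sampled degrees, for an appropriate compensation constant $c$. The choice of using the lower half (rather than the plain sample average) is what makes the method work in sublinear time: the degree random variable can have variance on the order of $nd$, so naively averaging $s$ samples would give variance $\gtrsim nd/s = \sqrt{nd^3}$, and Chebyshev would not yield any useful concentration at this sample size.

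The analysis I would carry out has two parts. First, I would show an upper bound: the truncated-lower-half estimator is always at most the true average $\bar d = 2e(G)/n$, so after multiplying by $c \leq 3$ we never overestimate by more than the claimed factor (up to a small slack). Second, I would show the estimator is at least $\bar d / c$ with high constant probability. For this, the key step is to partition $V(G)$ at a threshold $\tau \asymp \sqrt{nd}$ into ``light'' vertices (degree $\leq \tau$) and ``heavy'' vertices (degree $> \tau$). The number of heavy vertices is at most $2e(G)/\tau$, so the probability a heavy vertex appears in the sample is small, controlled by $s \cdot 2e(G)/(n\tau)$, and with high probability none (or only a bounded number) of the heavies are sampled. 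Conditioned on this event, the degrees in the sample are all bounded by $\tau$, so each sampled degree is a bounded random variable and Chebyshev on $s = \Theta(\sqrt{n/d})$ samples gives constant-factor concentration of the light edge-mass around its expectation. Since heavy vertices together account for at most half the edge mass by a handshake-counting argument with $\tau$ tuned correctly, losing them to truncation only costs a bounded factor.

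The main obstacle is precisely this variance control, and it is where the sublinear bound $O(\sqrt{n/d})$ is tight. One has to simultaneously balance three parameters: the threshold $\tau$, the truncation fraction (here, $1/2$), and the compensation $c$. If $\tau$ is too small, too many vertices are heavy and the lower-half estimator misses most of the mass; if $\tau$ is too large, Chebyshev on the truncated sample is too weak. Feige's argument shows that $\tau \asymp \sqrt{nd}$ is the sweet spot and that with this choice a factor-$(2+\varepsilon)$ guarantee is achievable with $O(\sqrt{n/d}\cdot\mathrm{poly}(1/\varepsilon))$ queries; the looser constant $3$ in our statement allows for slack in each of these choices and in the compensation step.

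I would therefore write the proof by (i) specifying the algorithm, (ii) stating the light/heavy decomposition and verifying the edge-mass split, (iii) running Chebyshev on the bounded light-degree estimator to get constant-factor concentration at sample size $\Theta(\sqrt{n/d})$, and (iv) invoking a union bound over the few possible heavy hits and the Chebyshev event to conclude. The runtime matches the query complexity since all work is per sample. For the precise constant $3$ and bookkeeping of $\mathrm{poly}(1/\varepsilon)$ factors, I would cite Feige's original analysis rather than reprove the truncation inequality from scratch.
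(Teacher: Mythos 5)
Your high-level picture of the obstruction (a heavy tail in the degree distribution ruins naive Chebyshev, so one must separate light from heavy mass and argue the light mass carries a constant fraction of the edges) is accurate, but the estimator you propose is not Feige's and it does not work. Taking ``$c$ times the average of the lower half of the $s$ sampled degrees'' fails on a simple instance: let $G$ consist of a clique on $\sqrt{n}$ vertices together with $n - \sqrt{n}$ isolated vertices, so that $\bar d \approx 1$ and we may take $d = 1$. With $s = \Theta(\sqrt{n/d}) = \Theta(\sqrt{n})$ uniform samples, the expected number of clique vertices hit is $\Theta(1)$, and with constant probability the number hit is $o(s)$. The lower half of the sorted sample then consists entirely of zeros, and your estimator outputs $c \cdot 0 = 0$, a vacuous underestimate of $\bar d$. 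More generally, any scheme that discards a fixed fraction (or even a small $\varepsilon$-fraction) of the largest sampled degrees can throw away essentially all of the degree mass, because the mass can be carried by vertices that are rare enough to appear only $O(1)$ times in the sample yet are still ``light'' in the sense of the threshold $\tau \asymp \sqrt{n\bar d}$. Your step (iii) proposes to apply Chebyshev to the truncated light-degree estimator, but this is exactly where the argument collapses: there is no valid lower bound on the truncated average.

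Feige's actual algorithm never discards any sample. It draws $t = O(\log(1/\delta))$ independent batches, each of $s = \Theta(\varepsilon^{-O(1)}\sqrt{n/d})$ uniform vertices, computes the plain average degree $a_i$ in each batch, and outputs $\min_i a_i$. The asymmetry you identified is handled by the $\min$, not by truncation: the lower-tail bound (each $a_i \geq (\tfrac12 - \varepsilon)\bar d$ with high probability) uses the light/heavy decomposition and Chebyshev exactly as you describe, but the threshold $\tau$ appears only in the \emph{analysis}, not in the algorithm, so no samples are discarded; and the upper-tail bound (some $a_i \leq (1+\varepsilon)\bar d$ with high probability) follows from Feige's concentration inequality for sums of independent nonnegative random variables (or, with a looser constant, from Markov plus independence of the $t$ batches). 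Both directions use nonnegativity of degrees crucially, and the factor-$(2+\varepsilon)$ loss (relaxed to $3$ in the lemma statement) comes from the light vertices only guaranteeing roughly half the edge mass, not from a compensation constant multiplied onto a truncated average. Since the paper cites this lemma rather than reproving it, the right move is to reproduce Feige's min-over-batches algorithm, not to invent a quantile-truncation variant.
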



We rely on the following corollary, which turns this into a decision version of the problem.

\begin{corollary}\label{cor:avg-deg-test}
    Consider an input graph $G$ with $e(G)$ edges, and a parameter $\beta$. There exists an algorithm (call it \textbf{Algorithm B}) that performs expected $O(\log(n)/p)$ queries and has expected runtime $O(\log(n)/p)$ with the following guarantees. If $e(G) \in \left[\frac{1}{2} n^2 p, 2 n^2 p\right]$, the algorithm accepts with high probability. If $e(G) < \frac{1}{5} n^2 p$ or $e(G) > 10 n^2 p$, the algorithm rejects with high probability.
\end{corollary}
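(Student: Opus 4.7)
The plan is to avoid invoking Feige's algorithm directly (which gives only a factor-$3$ approximation and hence cannot by itself resolve the constant-factor gaps in the corollary), and instead to directly Bernoulli-sample random vertex pairs using adjacency matrix queries. Specifically, the algorithm samples $k = C \log n / p$ independent uniformly random pairs $\{u_i, v_i\}$ of distinct vertices (for a sufficiently large absolute constant $C$ to be fixed below), issues one adjacency matrix query for each, and lets $X$ be the number of these queries that return an edge. The algorithm accepts iff $X$ lies in a fixed range $[c_1 \log n, c_2 \log n]$.

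By construction $X$ is a sum of $k$ i.i.d.\ Bernoulli random variables with parameter $p_0 = e(G)/\binom{n}{2}$, so $\mathbb{E}[X] = k p_0 = (1+o(1)) \cdot 2 k e(G)/n^2$. Plugging in the three regimes of the corollary, $\mathbb{E}[X]$ lies in $[(1-o(1)) C \log n, (4+o(1)) C \log n]$ when $e(G) \in [\tfrac{1}{2} n^2 p,\, 2 n^2 p]$, satisfies $\mathbb{E}[X] < (\tfrac{2}{5}+o(1)) C \log n$ when $e(G) < \tfrac{1}{5} n^2 p$, and satisfies $\mathbb{E}[X] > (20-o(1)) C \log n$ when $e(G) > 10 n^2 p$. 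These three intervals have multiplicative gaps that are bounded away from $1$, so one can place $c_1$ and $c_2$ strictly between them (e.g., $c_1 = \tfrac{3C}{5}$ and $c_2 = 10 C$).

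Correctness then follows from a standard multiplicative Chernoff bound applied to the independent Bernoulli sum $X$: in each of the three cases, the target event corresponds to $X$ deviating from $\mathbb{E}[X]$ by a constant multiplicative factor, and the Chernoff tail is at most $\exp(-\Omega(C \log n)) = n^{-\Omega(C)}$. For $C$ a sufficiently large absolute constant this comfortably beats the required constant failure probability. The query complexity and runtime are both $k = O(\log n / p)$ deterministically, trivially satisfying the expected bound stated in the corollary.

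The only real content in the proof is selecting $c_1$, $c_2$, and $C$ so that the Chernoff bound separates all three regimes simultaneously; this is routine, and I do not anticipate a genuine obstacle, since the three regimes of $e(G)$ are bounded multiplicatively away from each other by constants strictly greater than $1$.
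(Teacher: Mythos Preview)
Your proposal is correct and takes a different, more elementary route than the paper. The paper's proof is a two-stage procedure: first it samples $O(\log n/p)$ random vertex \emph{degrees} and uses Bernstein's inequality to reject graphs with $e(G) < \tfrac{1}{5}n^2p$; then, conditioned on passing this filter, it invokes Feige's $O(\sqrt{n/d})$-query average-degree estimator (\Cref{lem:feigesums}) to handle the remaining cases, relying on the first step to control the \emph{expected} cost of the second when the true average degree is unknown. Your approach bypasses both stages: by sampling uniformly random vertex pairs and issuing one adjacency-matrix query each, you obtain a binomial random variable whose mean is proportional to $e(G)$, and a single multiplicative Chernoff bound (using stochastic monotonicity in $p_0$ to reduce the reject-low case to its boundary $\mu \approx 0.4C\log n$) separates all three regimes at once. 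This is cleaner---one concentration step rather than Bernstein plus a black-box subroutine---uses a deterministic rather than expected query budget, and, as you observe, sidesteps the issue that a bare factor-$3$ guarantee from Feige does not by itself separate $e(G)\le 2n^2p$ from $e(G)>10n^2p$. The paper's version has the minor feature of using only degree/adjacency-list queries, matching the access model of the other cited subroutines, whereas yours relies on adjacency-matrix queries; but the enclosing theorem explicitly permits those, so this is not a defect.
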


\begin{remark}
    Let $\rho_{e}(G) = e(G)/\mathbb{E}_{G' \sim \gnp}\left[ e(G')\right]$. While the focus in this section is on triangles, a modification of this corollary already implies a $(\gnp, \rho_{e})$-quality control algorithm for comparing the number of edges to $\gnp$, with $\widetilde{O}(\sqrt{1/p})\cdot \text{poly}(1/\varepsilon)$ adjacency list queries and runtime. Particularly, \cite{goldreich-ron-2008} provides stronger guarantees than \cite{feige2004sums}, giving a $\widetilde{O}(n^2/e(G))^{1/2} \cdot \text{poly}(1/\varepsilon)$-time algorithm that outputs a $(1 \pm \varepsilon)$-approximation of the average degree in a graph $G$. We can use this to prove a more refined version of \Cref{cor:avg-deg-test}. For quality control, the completeness case follows from the concentration of the number of edges for $G' \sim \gnp$. Soundness follows from rejecting graphs whose edge count is out of $(1 \pm \varepsilon) \binom{n}{2} p$.
\end{remark}

\begin{proof}[Proof of \Cref{cor:avg-deg-test}]
    Consider the algorithm that does the following: first it samples $s = O(\log(n)/p)$ random vertices' degrees and rejects if the sum is less than $\frac{1}{5} n p$. Then, it runs the procedure from \cite{feige2004sums} (\Cref{lem:feigesums}).

    First, we can test if $e(G) < \frac{1}{10} n p$ with the first step of the algorithm. Suppose we sample a multiset of $s = O(\log (n)/p)$ vertices. Let $X_i$ be the degree of the $i$th vertex chosen. The expectation of $X_i$ is $e(G) / n < \frac{1}{10} np$. Observe that, since $\max(X_i) \leq n$, $\mathbb{E}[X_i^2] \leq n \mathbb{E}[X_i] < \frac{1}{10} n^2 p$.

    By Bernstein's inequality, the probability that the sum of the $s$ random vertices' degrees is at most $\frac{1}{5} n p s$ is at most
    $$\exp\left( - \frac{\frac{1}{2} \left(\frac{1}{5} n p s \right)^2}{\frac{1}{10} s n^2 p + \frac{1}{3} \frac{1}{5} n^2 p s}\right) = \exp \left(- C s p\right),$$
    for some constant $C$. For $s = O(\log(n)/p)$, this is at most $1/n^2$. Therefore, if $e(G) < \frac{1}{5} n^2 p$, the algorithm rejects with high probability.

    If the first step does not reject, we run the second step, i.e. the procedure from \cite{feige2004sums}. If the input graph has $e(G) \geq \frac{1}{5} n^2 p$, by \Cref{lem:feigesums} this step takes $O(\sqrt{1/p})$ time. Otherwise, the algorithm can take $O(n^2)$ time; this will happen with probability at most $1/n^2$. The algorithm has expected $O(\sqrt{1/p})$ queries and has expected runtime $O(\sqrt{1/p})$. This procedure has the guarantee that if $e(G) \in \left[\frac{1}{2} n^2 p, 2 n^2 p\right]$, the algorithm accepts with high probability, and if $e(G) > 10 n^2 p$, the algorithm rejects with high probability.
\end{proof}

For a graph $G$, let $C_3(G)$ be the number of triangles in the graph. In \cite{eden2018faster}, the authors construct an algorithm for estimating the number of $k$-cliques in a graph with low arboricity. Their algorithm repeatedly calls an oracle-based algorithm, which takes in an estimate for the number of $k$-cliques and returns different output values depending on the relationship of the number of $k$-cliques in $G$ to this estimate. In the following two lemmas, we state the oracle versions of the results from this paper, which we will need for our context of quality control.

The following two lemmas are from \cite{eden2018faster}, presented in a simplified form focused on the case of triangle counting.

\begin{lemma}[Lemma 7.1, \cite{eden2018faster}]\label{lem:eden-7-1}
    Consider an input graph $G$ (with $n$ vertices and $m$ edges) with $C_3(G)$ triangles, and parameters $\alpha'$, $t$, $\delta$, $\gamma$. There exists an algorithm (call it \textbf{Algorithm C}) with the following guarantees, when $\mathrm{arb}(G) \leq \alpha'$:
    \begin{enumerate}
        \item  If $C_3(G) < t$, then with probability at least $1 - \gamma$, the algorithm outputs a number $\widehat{t}$ that is less than $t$.
        \item If $t \in [C_3(G)/4, C_3(G)]$, then with probability at least $1 - \gamma$, the algorithm outputs a number $\widehat{t}$ that is in the range $(1 \pm \delta) C_3(G)$.
    \end{enumerate}

    The algorithm performs adjacency list, adjacency matrix, and degree queries. The expected query complexity and runtime of the algorithm are:
    $$O\left( \min\left\{ \frac{n \cdot (\alpha')^2}{t}, \frac{n}{C_3(G)^{1/3}}\right\} + \frac{m \cdot \alpha'}{t} \cdot \frac{C_3(G)}{t}\right) \cdot \mathrm{poly}\left( \log(n/\gamma), 1/\delta\right).$$
\end{lemma}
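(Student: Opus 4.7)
The plan is to follow the standard recipe of Eden--Ron--Seshadhri for approximate clique counting in bounded-arboricity graphs, specialized to $k=3$. The core estimator samples edges (with a degree-query based procedure that yields near-uniform edge samples up to a known bias factor of $\deg(v)$ when first sampling a vertex $v$ and then a neighbor) and, for each sampled edge $(u,v)$, counts the triangles through $(u,v)$ via an arboricity-aware orientation of the graph. Concretely, I would orient each edge from the endpoint of lower degree (breaking ties by vertex ID) to the one of higher degree; a classical argument of Chiba--Nishizeki shows that every vertex has out-degree $O(\alpha')$ under such an orientation when $\mathrm{arb}(G)\le \alpha'$. This orientation lets us enumerate the triangles incident to $(u,v)$ by walking out-neighbors, so each triangle is discovered a bounded number of times and the per-edge work is $O(\alpha')$ in expectation.

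Next I would fix the estimator. Given a sample of $s$ edges chosen by the vertex-then-neighbor procedure, with the $1/\deg(u)$ reweighting to correct the sampling bias, the plug-in estimator $\widehat{C}_3$ is an unbiased estimate of $C_3(G)$. The variance analysis uses two facts: (i) arboricity $\le\alpha'$ bounds the per-edge triangle count in a second-moment sense, giving $\E[X^2]\le O(\alpha'\cdot C_3(G)/m)$ for the per-sample contribution $X$; and (ii) the number of edges $m$ relates to $\alpha'$ and $n$ by $m=O(\alpha' n)$. Bernstein's inequality then yields $(1\pm\delta)$-concentration once the number of samples satisfies $s=\Omega\!\left(\tfrac{m\alpha'}{t}\cdot\tfrac{C_3(G)}{t}\right)\cdot\mathrm{poly}(\log(n/\gamma),1/\delta)$, which is the second term of the stated complexity.

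For the first (vertex-sampling) term $n(\alpha')^2/t$, I would use a complementary vertex-sampling procedure for the regime where the edge-sampling estimator is not yet informative: sample vertices uniformly, and for each sampled $v$ enumerate its $O(\alpha')$ out-neighbors in the orientation and then, for each out-neighbor $w$, count common out-neighbors. The number of vertices needed before one expects to hit a triangle scales as $n/(C_3(G)^{1/3})$ in the lower bound, and as $n(\alpha')^2/t$ when the bound $C_3(G)\ge t$ is the working hypothesis; taking the $\min$ of the two handles both regimes. The guess-and-verify structure of the lemma is then implemented by: if the estimator output falls below $t$, we return that value (certifying $C_3(G)<t$ with the desired confidence by Markov's inequality); otherwise the $(1\pm\delta)$-concentration from Bernstein certifies we are in the ``good'' regime $t\in[C_3(G)/4,C_3(G)]$.

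The main obstacle is the variance analysis under arboricity, specifically showing that the second moment of the per-sample contribution is dominated by $\alpha'$ times the first moment. This is where the Chiba--Nishizeki orientation earns its keep: without it one would bound the per-edge triangle count by the maximum degree, which could be as large as $n$ and destroy the sublinearity. With the orientation, each triangle is ``charged'' to a canonical edge and vertex of bounded out-degree, so the $\ell_2$-norm of the per-sample random variable is controlled by $\alpha'$ rather than by any global degree quantity. A secondary subtlety is arranging the sampling and stopping rules so the two terms in the query bound appear cleanly as an expectation rather than a worst-case bound; this is typically handled by a doubling/geometric guessing schedule over possible values of $C_3(G)$, terminating early once the estimator stabilizes.
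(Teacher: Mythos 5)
This is a statement the paper does not prove at all: it is Lemma~7.1 of \cite{eden2018faster}, imported verbatim (in a simplified, triangle-specific form) as a black box. The paper explicitly says ``The following two lemmas are from \cite{eden2018faster}, presented in a simplified form focused on the case of triangle counting,'' and then goes on to use Algorithm~C as a subroutine inside \textsc{Triangle-Quality}. So there is no in-paper proof for your proposal to match against; the correct move would have been to recognize the lemma as a citation rather than attempt a from-scratch reconstruction.

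That said, treating your sketch on its own terms: it is in the right spirit (edge sampling with degree reweighting, a low--out-degree orientation to bound per-edge triangle work, second-moment control, and a guess-and-verify threshold schedule), but it has a concrete error. You claim the Chiba--Nishizeki degree-based orientation (low-degree endpoint to high-degree endpoint) gives out-degree $O(\alpha')$ when $\mathrm{arb}(G)\le\alpha'$. That is not what the Chiba--Nishizeki argument gives: the degree orientation bounds out-degree by $O(\sqrt{m})$, not by the arboricity. Achieving out-degree $O(\alpha')$ requires a degeneracy-type ordering, which cannot be computed locally from degree queries alone; the actual \cite{eden2018faster} algorithm gets around this with a more intricate weighting and sampling scheme and does not simply enumerate out-neighbors in a degree orientation. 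Your variance bound $\E[X^2]\le O(\alpha'\cdot C_3(G)/m)$ therefore does not follow from the stated orientation, and the derivation of the $\frac{m\alpha'}{t}\cdot\frac{C_3(G)}{t}$ sample count inherits that gap. If you want to include a proof sketch, you should either cite the genuine low--out-degree ordering used in \cite{eden2018faster} (and explain why it can be simulated with the allowed queries), or defer entirely to the citation as the paper does.
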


\begin{lemma}[Corollary 7.2, \cite{eden2018faster}]\label{lem:eden-cor}
    Consider an input graph $G$ (with $n$ vertices and $m$ edges) with $C_3(G)$ triangles, and parameters $\alpha'$ and $t$. There exists an algorithm (call it \textbf{Algorithm D}) with the following guarantee.

    When $\mathrm{arb}(G) \leq \alpha'$, if $C_3(G) \geq t$, then with high probability, the algorithm outputs a number $\widehat{t}$ that is in the range $(1 \pm \delta) C_3(G)$.

    The algorithm performs adjacency list, adjacency matrix, and degree queries. The expected query complexity and runtime of the algorithm are:
    $$O\left( \min\left\{ \frac{n \cdot (\alpha')^2}{t}, \frac{n}{C_3(G)^{1/3}}\right\} + \frac{m \cdot \alpha'}{t} \right) \cdot \mathrm{poly}\left( \log(n), 1/\delta\right).$$
\end{lemma}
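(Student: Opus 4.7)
The natural plan is to reduce the approximation guarantee of Algorithm~D to repeated invocations of the threshold algorithm of Lemma~\ref{lem:eden-7-1}, via a geometric search on the estimate. Given the promise $C_3(G) \geq t$, consider the thresholds $t_i := 2^i \cdot t$ for $i = 0, 1, \dots, I$ with $I = \lceil \log_2(n^3 / t) \rceil$, which suffices since $C_3(G) \leq \binom{n}{3}$. For each $i$ we invoke Algorithm~C with threshold $t_i$, arboricity bound $\alpha'$, accuracy $\delta$, and failure probability $\gamma = \Theta(1/I)$, obtaining an output $\hat t_i$. A union bound then makes all $I$ invocations simultaneously succeed with high constant probability.

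Correctness hinges on the index $i^\star := \lfloor \log_2(C_3(G) / t) \rfloor$, which satisfies $t_{i^\star} \in [C_3(G)/2,\, C_3(G)] \subseteq [C_3(G)/4,\, C_3(G)]$. By part~(2) of Lemma~\ref{lem:eden-7-1}, the output $\hat t_{i^\star}$ is a $(1 \pm \delta)$-approximation to $C_3(G)$, and in particular $\hat t_{i^\star} \geq (1 - \delta)\, t_{i^\star}$. By part~(1), every $i > i^\star$ (where $t_i > C_3(G)$) yields $\hat t_i < t_i$. Thus the decision rule ``return $\hat t_{i}$ for the largest index $i$ such that $\hat t_{i} \geq (1 - \delta)\, t_{i}$'' correctly identifies an index whose threshold lies in the good window of Lemma~\ref{lem:eden-7-1}, and therefore outputs a $(1 \pm \delta)$-approximation of $C_3(G)$.

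For the query cost, the $i$-th invocation runs in the time bound of Lemma~\ref{lem:eden-7-1} with $t$ replaced by $t_i$, which decays geometrically as $i$ increases. Summing the geometric series, the total is dominated (up to a constant) by the $i = 0$ term, giving the stated bound, with the $\mathrm{poly}(\log n, 1/\delta)$ factor absorbing both the extra $\log n$ from boosting $\gamma$ and the sum over iterations. The main technical obstacle is the ambiguous behavior of Algorithm~C on thresholds $t_i$ far below $C_3(G)/4$, which are covered by neither case of Lemma~\ref{lem:eden-7-1}: one has to verify that the decision rule above cannot be ``fooled'' by such intermediate values, either by direct inspection of the construction of Algorithm~C in~\cite{eden2018faster} (showing that its output on too-small thresholds is still at most a mild multiple of $t_i$ so that the test $\hat t_i \geq (1-\delta)t_i$ is only triggered once $t_i$ truly enters the good window), or by strengthening part~(1) of Lemma~\ref{lem:eden-7-1} so that $\hat t_i$ in the ``too small'' case is cleanly separated from $(1 - \delta)\, t_i$.
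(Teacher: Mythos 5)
This lemma is imported as a black box from \cite{eden2018faster} (stated as ``Corollary 7.2'' there); the paper you are working from supplies no proof of it, so there is no internal argument to compare against. Your reconstruction via geometric search over thresholds $t_i = 2^i t$ is a plausible reading of how the corollary follows from Lemma~\ref{lem:eden-7-1} in the original source, but as written it contains two genuine gaps.

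First, the decision rule is not justified by what Lemma~\ref{lem:eden-7-1} gives you. You select the largest $i$ with $\hat t_i \geq (1-\delta) t_i$, and argue that for $i > i^\star$ (so $t_i > C_3(G)$) part~(1) rules such $i$ out. But part~(1) only guarantees $\hat t_i < t_i$, not $\hat t_i < (1-\delta) t_i$: an index with $t_i$ just above $C_3(G)$ can return $\hat t_i \in [(1-\delta)t_i,\, t_i)$, be selected by your rule, and yield an overestimate of up to roughly $t_{i^\star+1} \approx 2 C_3(G)$. You correctly sense that \emph{some} additional property of Algorithm~C is needed, but you attribute the obstruction to thresholds \emph{below} $C_3(G)/4$, where ``largest $i$'' already protects you; the actual hole in your rule is at thresholds \emph{above} $C_3(G)$. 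The standard fix is to search downward from a safe overestimate (say $t_I \approx n^3$), halving and stopping at the first $i$ with $\hat t_i \geq t_i$, and to verify that with $\delta$ a small constant this stopping index has $t_i \in [C_3(G)/4,\, C_3(G)]$, where part~(2) applies.

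Second, the cost accounting does not yield the claimed bound. The second term of Lemma~\ref{lem:eden-7-1}'s cost is $\frac{m\alpha'}{t_i}\cdot\frac{C_3(G)}{t_i}$, not $\frac{m\alpha'}{t_i}$; at $i=0$ this is $\frac{m\alpha' C_3(G)}{t^2}$, which exceeds the target $\frac{m\alpha'}{t}$ by the (possibly large) factor $C_3(G)/t$. Saying ``the sum is dominated by the $i=0$ term'' therefore overshoots the stated bound. The reason the corollary's bound is achievable is precisely that a downward search never reaches small thresholds: every invocation uses $t_i \gtrsim C_3(G)$, so the extra factor $C_3(G)/t_i$ is $O(1)$ throughout, and the geometric sum of $\frac{m\alpha'}{t_i}$ from $t_i \approx C_3(G)$ upward is $O(m\alpha'/C_3(G)) \leq O(m\alpha'/t)$. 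Your write-up needs to make the stopping behavior of the search bear directly on the cost bound, rather than summing over all $i$ down to $0$.
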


We describe the algorithm below.

\paragraph{Algorithm \textsc{Triangle-Quality}:} On inputs $G$, $n, p, \varepsilon$, and $\gnp$
\begin{enumerate}
    \item Run Algorithm A (from \Cref{lem:triangles-fromliterature-UB}) with (arboricity bound) parameter $\alpha = 2 n p$. If Algorithm A rejects, then reject. 
    \item Run Algorithm B (from \Cref{cor:avg-deg-test}). If Algorithm B rejects, then reject.
    \item Run Algorithm C (from \Cref{lem:eden-7-1}) with parameters $\alpha' = 200 \log^2(n) \cdot n p$ (arboricity bound), $\beta = 10 np$ (edge bound), $t = (1 - \varepsilon) \binom{n}{3} p^3$, and estimation accuracy $\delta = \varepsilon/4$. If the output $\widehat{t}$ is $< t$, then reject. If the algorithm has not terminated after $O\left(\frac{1}{p} \right) \cdot \mathrm{poly}\left(\log(n), 1/\varepsilon \right)$ steps, then reject.
    \item Run Algorithm D (from \Cref{lem:eden-cor})  with parameters $\alpha' = 200 \log^2(n) \cdot n p$ (arboricity bound), $\beta = 10 np$ (edge bound), and $t = (1 - \varepsilon) \binom{n}{3} p^3$, and estimation accuracy $\delta = \varepsilon/4$. If the output $\widehat{t}$ is $> (1 + \frac{\varepsilon}{2}) \binom{n}{3} p^3$, then reject. If the algorithm has not terminated after $O\left(\frac{1}{p} \right) \cdot \mathrm{poly}\left(\log(n), 1/\varepsilon \right)$ steps, then reject.
    \item Otherwise, accept the graph.
\end{enumerate}

We now prove that this algorithm implies \Cref{lem:triangles-fromliterature-UB}. 

\begin{proof}
    \textbf{Completeness:} We first reason about completeness, i.e., that Algorithm 1 accepts $G \sim \gnp$ with high probability (both over the distribution over graphs and the algorithm's randomness). By \Cref{cor:Gnp-arboricity-concentration} and \Cref{lem:Gnp-clique-concentration}, with high probability, since $p = \omega\left(\varepsilon^{-2} \cdot 1/n\right)$, $G \sim \gnp$ satisfies $\mathrm{arb}(G) \leq 2 n p$, $m \in (1 \pm \frac{\varepsilon}{100}) \binom{n}{2} p$, and $C_3(G) \in (1 \pm \frac{\varepsilon}{100}) \binom{n}{3} p^{3}$. 

    We first argue that, when $G \sim \gnp$, all algorithms called as subroutines are able to terminate successfully (and therefore we can utilize their high-probability guarantees). To see this, first note that, given the high-probability bounds on the edge count, triangle count, and arboricity of $G \sim \gnp$, all algorithms called as subroutines run in $O\left(\frac{1}{p} \right) \cdot \mathrm{poly}\left(\log(n), 1/\varepsilon \right)$ expected time. By Markov's inequality, with probability at least $1 - \frac{1}{100}$ the runtime of the algorithm is at most $100$ times its expected amount, and so with probability at least $1 - \frac{1}{100}$, the runtime of the algorithm for $G$ with $\mathrm{arb}(G) \leq 2 n p$, $m \in (1 \pm \frac{\varepsilon}{100}) \binom{n}{2} p$, and $C_3(G) \in (1 \pm \frac{\varepsilon}{100}) \binom{n}{3} p^{3}$ is $O\left(\frac{1}{p} \right) \cdot \mathrm{poly}\left(\log(n), 1/\varepsilon \right)$. By analyzing conditional probabilities, we see that the probability of each step outputting its guarantees given the runtime is at most this amount is the original probability of success, minus $\frac{1}{100}$.

    Thus, we can assume that the algorithmic subroutines called all terminate for $G \sim \gnp$. With this in mind, we proceed to the correctness of the steps of the algorithm. When the conditions (edge count, triangle count, arboricity) are satisfied, first, for Steps 1 and 2, by the correctness of Algorithms A and B, with high probability Algorithms A and B both accept. Next, in Step 3, since $\frac{1}{4} C_3(G) \leq (1 - \varepsilon) \binom{n}{3} p^3 \leq C_3(G)$ for $\varepsilon \leq 3/(4 - 1/100)$, Algorithm C outputs $$\widehat{t} \in \left(1 \pm \delta\right) C_3(G) \in \left(1 \pm \frac{\varepsilon}{4}\right) \left(1 \pm \frac{\varepsilon}{100}\right) \binom{n}{3} p^{3} \in (1 \pm \varepsilon) \binom{n}{3} p^{3}.$$ So, the output of Algorithm C is greater than $t$. Next, in Step 4, Algorithm D outputs a $\widehat{t}$ satisfying:
    $$\widehat{t} \leq \left( 1 + \frac{\varepsilon}{4}\right) \left( 1 + \frac{\varepsilon}{100}\right) \binom{n}{3} p^3.$$
    Since the algorithm does not reject in any of Steps 1 through 3, it accepts the graph in Step 4, with high probability over $G \sim \gnp$.
    \\\\
    \textbf{Soundness:} We now argue soundness, i.e., that Algorithm 1 rejects graphs $G$ such that $C_3(G) \not \in (1 \pm \varepsilon) \binom{n}{3} p^{3}$. First, if the graph additionally satisfies $\mathrm{arb}(G) > 200 \log^2 (n) \cdot n p$, Algorithm A rejects with high probability. If the graph satisfies $e(G) < \frac{1}{5} n^2p$ or $e(G) > 20 n^2 p$, Algorithm B rejects with high probability. Otherwise, if $C_3(G) < (1 - \varepsilon) \binom{n}{3} p^{3}$, Algorithm C outputs a $\widehat{t} < t$ in expected runtime and query complexity $O\left(\frac{1}{p} \right) \cdot \mathrm{poly}\left(\log(n), 1/\varepsilon \right)$ and the algorithm rejects. If $C_3(G) > (1 + \varepsilon) \binom{n}{3} p^{3}$, then Algorithm D outputs $$\widehat{t} > \left(1 - \frac{\varepsilon}{4}\right) \left(1 + \varepsilon\right) \binom{n}{3} p^{3} > \left(1 - \frac{\varepsilon}{2}\right),$$ 
    for $\varepsilon < 1/2$, in expected runtime and query complexity $O\left(\frac{1}{p} \right) \cdot \mathrm{poly}\left(\log(n), 1/\varepsilon \right)$ and the algorithm rejects. Therefore, the algorithm rejects all $G$ such that $C_3(G) \not \in (1 \pm \varepsilon) \binom{n}{3} p^{3}$, with high probability over the randomness of the algorithm.
    \\\\
    \textbf{Query complexity and runtime:} The query complexity and runtime of Step 1 are $O(1/p)$, from \Cref{lem:claim3.23-eden2022}. Next, by definition of the algorithm, Steps 2 and 3 terminate after $O\left(\frac{1}{p} \right) \cdot \mathrm{poly}\left(\log(n), 1/\varepsilon \right)$ each. Therefore, we obtain the stated query complexity and runtime.
\end{proof}

\begin{remark}
    The algorithm above only used the following properties of $\gnp$: with high probability, the arboricity is bounded, and the number of edges and triangles are each concentrated around their expected values. Therefore, the algorithm and analysis extend to any distribution over graphs with these properties. Indeed, this is a recurring theme in our results: we utilize minimal structural properties about $\gnp$ and the results extend nicely to other distributions over graphs that satisfy these properties.
\end{remark}

\subsection{Query complexity lower bound} \label{sec:triangle-lowerbound}

Recall that $\rho_\Delta(G)$ counts the normalized fraction of triangles in $\gnp$, specifically given by $\rho_\Delta(G) := C_{K_3}(G)/(n(n-1)(n-2)p^3/6)$.
In this section we prove that the $(\gnp, \rho_\Delta)$-quality control problem, i.e., the quality control problem corresponding to counting the number of triangles in $\gnp$, requires $\Omega(1/p)$ queries when given access to (sorted) adjacency lists of vertices.

\begin{theorem}[Lower bound in \Cref{thm:triangles-usingliterature}]\label{thm:triangle-lb}
    There exists a constant $c>0$ and a polynomially growing function $n_0$ such that for every $p \in (0,1/2]$ and every $n \geq n_0(1/p)$ if an 
    an algorithm solves the $(\gnp, \rho_\Delta)$-quality control problem with $q(n,p)$ queries given (sorted) adjacency list access to the input graph, then $q(n,p) \geq c/p$.
\end{theorem}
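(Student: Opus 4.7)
The plan is to apply Yao's principle with a two-distribution construction. The Yes distribution $D_Y = \gnp$ is forced by the definition of quality control. For the No distribution $D_N$, I would sample $G \sim \gnp$ and an independent uniform $S \subseteq [n]$ with $|S| = \ell := \lceil Cnp \rceil$ for a constant $C = C(\varepsilon)$, and plant a clique on $S$, outputting $G' := G \cup \binom{S}{2}$. This is the sparsest planted-clique perturbation that still shifts $\rho_\Delta$ by more than $\varepsilon$.

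The first step is a parameter-separation lemma. The Yes case $\Pr_{G \sim D_Y}[|\rho_\Delta(G) - 1| \leq \varepsilon/2] \geq 1 - o(1)$ follows directly from \Cref{lem:Gnp-clique-concentration} with $k = 3$. For $G' \sim D_N$, partitioning the triangles of $G'$ by how many of their vertices lie in $S$ yields
$$\E[C_{K_3}(G')] - \binom{n}{3}p^3 \;=\; \binom{\ell}{3}(1-p^3) + \binom{\ell}{2}(n-\ell)p^2(1-p),$$
and the $\binom{\ell}{3}$ term dominates for $\ell = Cnp$ and $p = o(1)$, contributing a relative excess of $\Theta(C^3)$. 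Choosing $C$ so that $C^3$ is a sufficiently large multiple of $\varepsilon$, together with a Chebyshev bound on the triangle counts that do not involve $S$ (analogous to the SBM concentration used in \Cref{lem:param-sep}), gives $\Pr_{G' \sim D_N}[\rho_\Delta(G') \geq 1 + \varepsilon] \geq 1 - o(1)$.

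The second step is the indistinguishability lemma. I would couple the distributions by sharing $(G, S)$ and setting $G_Y := G$, $G_N := G \cup \binom{S}{2}$, so that the two graphs differ only on edges inside $S$. For any deterministic algorithm $A$, call a query ``hitting $S$'' if its \emph{subject} (the queried vertex for adjacency list and degree queries; either endpoint for adjacency matrix queries) lies in $S$, and let $T$ be the index of the first such query. The key observation is that any query whose subject is outside $S$ returns a response that is a measurable function of $G$ alone and so coincides under $G_Y$ and $G_N$: for $u \notin S$, the neighborhood $N_{G_N}(u)$, the degree $\deg_{G_N}(u)$, and any adjacency $\mathbbm{1}[\{u,v\} \in E(G_N)]$ all equal their $G_Y$-counterparts, since planting only adds edges inside $S$. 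Even if an adjacency-list response happens to return a vertex in $S$, the returned label is identical under both graphs; only a subsequent query whose subject is that vertex could expose the planting. Thus under the coupling the transcripts of $A$ on $G_Y$ and $G_N$ agree whenever $T > q$.

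The third step is to bound $\Pr[T \leq q]$. Letting $Q_t$ denote the set of subjects of the first $t$ queries ($|Q_t| \leq 2t$), conditioning on $T > t$ and the first-$t$ transcript, the responses so far are functions of $G$ only, and since $S \perp G$, the posterior on $S$ is uniform over size-$\ell$ subsets of $[n] \setminus Q_t$. The subject(s) of query $t+1$ are deterministic given the transcript, so $\Pr[T = t+1 \mid T > t] \leq 2\ell/(n - 2t)$; summing yields $\Pr[T \leq q] \leq 4q\ell/n = O(Cqp)$. Combined with Step 1 via Yao's principle, a randomized algorithm with $q \leq c/p$ queries (for $c = c(\varepsilon)$ small enough) has distinguishing advantage at most $1/12$, contradicting the $\geq 1/6$ advantage required of any $(\gnp, \rho_\Delta)$-quality control algorithm. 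The main technical obstacle is the inductive proof that the posterior on $S$ remains uniform on $[n] \setminus Q_t$ under adaptive queries; this is essentially the extension of \Cref{lem:dy-dn-indist} to adjacency list and degree queries, and hinges on the ``subjects outside $S$ $\Rightarrow$ responses are functions of $G$'' observation above.
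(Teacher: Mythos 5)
Your construction, parameter separation, and indistinguishability argument are all essentially the same as the paper's: the paper also takes $D_N$ to be $\gnp$ with a planted clique on a uniformly random $S$ with $|S| = 2np$, proves a separation lemma showing $\rho_\Delta(G') \geq 1.9$ w.h.p.\ over $D_N$, and bounds the distinguishing advantage by the probability that the algorithm's query set touches $S$, yielding $q \geq (n-\ell)/(12\ell) = \Omega(1/p)$. The one substantive difference is in how the indistinguishability bookkeeping is done: you track only the \emph{subjects} of queries in $Q_t$ (with a stopping-time/posterior argument), whereas the paper fixes $G$, defines $Q$ to include \emph{both} the queried vertices $u_j$ \emph{and} the returned neighbors $v_j$, and does a single union bound $\Pr_S[S \cap Q \neq \emptyset] \leq 2q\ell/(n-\ell)$. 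This is not cosmetic: the paper's proof deliberately works in a stronger query model where a single adjacency-list probe $(u_j, i_j)$ also reveals $\deg(v_j)$ for the returned neighbor $v_j$; under that model, a probe with subject $u_j \notin S$ returning some $v_j \in S$ already leaks information, so returned vertices must be charged to $Q$. Your subject-only accounting is sound for the plain sorted-adjacency-list model of \Cref{sec:preliminary-graph-sublinear} (a follow-up degree query on $v_j$ would itself be a hitting query), and so it does prove the theorem as stated; just be aware that it does not directly extend to the paper's more generous model without the same fix.
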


A comparable result when given access to the adjacency matrix was given in \Cref{sec:motifs-lowerbound}; here, the focus is instead on adjacency list access, to complement the upper bound setting for triangle counts. We remark that the sorted adjacency list access model is a very powerful model for access to a graph. In particular, queries to the adjacency matrix can be answered using $O(\log n)$ queries to the sorted adjacency list. We further remark that our proof even allows ``degree'' queries for unit cost.

To prove \Cref{thm:triangle-lb} we use two distributions --- a YES distribution, which is forced to be $\gnp$ by the quality control problem having $\gnp$ as the target distribution, and a NO distribution which is simply $\gnp$ with a planted clique of size $\ell = O(pn)$.   The technical results show that (1) the two distributions are indistinguishable to algorithms making $o(n/\ell) = o(1/p)$ queries and (2) the distributions differ significantly in triangle counts. Together these results imply the query lower bound on quality control for triangle counting with respect to $\gnp$. 

For a graph $G$ on vertex set $[n]$ and a set $S \subseteq [n]$, we define $G'(G,S)$ to be the graph on vertex set $[n]$ whose edges are the union of edges of $G$ and of the complete graph on vertex set $S$. In what follows we set $D_Y = D_Y^n = \gnp$ and $D_N = D_N^{n,\ell}$ to be the distribution on graphs $G'(G,S)$ where $G \sim \gnp$ and $S \subseteq [n]$ with $|S|=\ell$ is chosen uniformly (independent of $G$). 

\begin{lemma}\label{lem:dy-dn-indist-triangle}
    Let $A$ be a deterministic algorithm making $q$ queries to the (sorted) adjacency list of an $n$ vertex graph and outputting $0/1$. Then 
    $$\left|\Pr_{G \in D_Y^n}[A(G) = 1] - \Pr_{G'\in D_N^{n,\ell}}[A(G')=1] \right| \leq \frac{2q\ell}{n-\ell}.$$
\end{lemma}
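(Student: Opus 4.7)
The plan is to adapt the proof of \Cref{lem:dy-dn-indist} to the (sorted) adjacency list and degree query access model. The key structural observation specific to this NO distribution is that $G'(G,S)$ differs from $G$ only by the addition of edges \emph{internal} to $S$; no edges incident to any vertex in $[n]\setminus S$ are altered. Consequently, for every vertex $u \notin S$, the neighbor set of $u$ is identical in $G$ and $G'(G,S)$, so the sorted adjacency list at $u$ and the degree of $u$ coincide in the two graphs. This is the only graph-theoretic fact about the construction that the argument will use.

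Following the template of \Cref{lem:dy-dn-indist}, I would prove the stronger pointwise statement: for every $G$ there is a set $Q = Q(G) \subseteq [n]$ of size at most $2q$ such that $Q \cap S = \emptyset$ implies $A(G'(G,S)) = A(G)$. The set $Q(G)$ is defined by running $A$ on $G$ and collecting, for each query, (i) the vertex $u_i$ named by the query (whether an adjacency list query $(u_i,j_i)$ or a degree query on $u_i$) and (ii) the vertex $v_i$ returned as the answer when the query is an adjacency list query. Since each of the $q$ queries contributes at most two vertices, $|Q(G)| \leq 2q$. To verify the pointwise claim when $Q(G)\cap S = \emptyset$, I would induct on the query index: the first query is deterministic, its queried vertex $u_1 \in Q$ avoids $S$, so the sorted adjacency list and degree at $u_1$ agree in $G$ and $G'(G,S)$, hence the answer (if any) is the same $v_1 \in Q \subseteq [n]\setminus S$. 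Inductively, if the first $i-1$ query-answer pairs agree in the two executions, the $i$-th query is identical, its vertex again avoids $S$, and the answer again agrees; after $q$ steps the transcripts match and $A$ outputs the same bit.

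Finally, for any fixed $G$, a union bound over $v \in Q(G)$ gives
$$\Pr_{S}[Q(G)\cap S \neq \emptyset] \leq |Q(G)|\cdot \frac{\ell}{n-\ell} \leq \frac{2q\ell}{n-\ell},$$
and averaging over $G \sim \gnp$ bounds $\bigl|\Pr_{G \sim D_Y^n}[A(G)=1] - \Pr_{(G,S)\sim D_N^{n,\ell}}[A(G'(G,S))=1]\bigr|$ by the same quantity. The one genuinely new subtlety compared to the adjacency matrix proof is the necessity of including the \emph{answer} vertex $v_i$ in $Q(G)$, not just the query vertex $u_i$: a later query naming some $v_i \in S$ would probe an adjacency list that does differ between $G$ and $G'(G,S)$ (since $v_i$ gains the rest of $S$ as neighbors), which would break the induction. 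Once this bookkeeping is in place, the argument is essentially identical to that of \Cref{lem:dy-dn-indist}.
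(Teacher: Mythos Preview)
Your proposal is correct and follows essentially the same approach as the paper's proof: define $Q(G)$ from the run of $A$ on $G$ as the set of queried vertices together with their returned neighbors, observe that $Q(G)\cap S=\emptyset$ forces identical transcripts (since only adjacency lists and degrees of vertices in $S$ change), and finish with the union bound $\Pr_S[Q\cap S\neq\emptyset]\le 2q\ell/(n-\ell)$. One small note: in the paper's query model the response to $(u_j,i_j)$ also includes the degree of the returned neighbor $v_j$, which is the actual reason $v_j$ must be placed in $Q$; your stated rationale (a later query might name $v_j$) would already be covered by that later query's own $u$-vertex, but your bookkeeping is correct regardless.
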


\begin{proof}
    We prove the following stronger statement that immediately implies the lemma. For every graph $G$
    $$\Pr_S[A(G'(G,S)) \ne A(G)] \leq \frac{2q\ell}{n-\ell}.$$
    Given $G$, let $\{(u_j,i_j)\}_{j=1}^q$ denote the queries of $A$ on input $G$ (where the response to the query $(u_j,i_j)$ includes the degree of $u_j$, the name $v_j$ of the $i_j$th neighbor of $u_j$ and the degree of $v_j$). 
    Let $Q = \{u_j | j \in [q]\} \cup \{v_j | j \in [q]\}$. Then we have $\Pr_S[S \cap Q \ne \emptyset] \leq \frac{2q\ell}{n-\ell}$. Conditioned on $S \cap Q = \emptyset$ we have that every query of $A$ to $G'$ gets the same answers as the corresponding query in $G$, and so $A(G') = A(G)$. This proves the lemma.
\end{proof}

\begin{lemma}\label{lem:triangle-sep}
     For every $p \in (0,1/2]$ and $n \geq 10 p^{-1} \varepsilon^{-2}$, and every $\ell \geq 2 n p$, the following hold:
    $$\Pr_{G \sim D_Y^n}[| \rho_{\Delta}(G) - 1| \leq 0.1] \geq 1 - 1/12, $$
    $$\mbox{ and } \Pr_{G' \sim D_N^{n,\ell}}[ \rho_{\Delta}(G') \geq 1.9] \geq 1 - 1/12.$$  
\end{lemma}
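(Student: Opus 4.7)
\textbf{Proof proposal for Lemma \ref{lem:triangle-sep}.} The two parts of the statement correspond to two largely independent calculations, so I would handle them separately. For the YES part, the plan is to directly invoke Lemma \ref{lem:Gnp-clique-concentration} with $k = 3$, which guarantees that for $p = \Omega(n^{-1}\varepsilon^{-2})$ we have $C_3(G) \in (1 \pm \varepsilon/100) \mu_\Delta$ with constant probability, where $\mu_\Delta = \binom{n}{3}6\, p^3$ (or the appropriately normalized count matching the definition of $\rho_\Delta$). The hypothesis $n \geq 10\, p^{-1}\varepsilon^{-2}$ is exactly what is needed to invoke this. Since the definition of $\rho_\Delta$ normalizes $C_3(G)$ by its expectation, this immediately gives $|\rho_\Delta(G) - 1| \leq \varepsilon/100 \leq 0.1$; the only mild care is that Lemma \ref{lem:Gnp-clique-concentration} is stated with failure probability $1/10$ rather than $1/12$, so I would note that the constant $9/10$ in that lemma is flexible (the lemma is proved via a second-moment / Chebyshev argument whose constants can be absorbed into the hidden constant in the lower bound on $p$), and this strengthening is what allows $1 - 1/12$.

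For the NO part, the observation is that the argument is essentially deterministic: the construction $G'(G,S)$ adds the complete graph on $S$, so independently of the randomness in $G$ or the choice of $S$, the graph $G'$ contains every triangle supported inside $S$. Hence
\[
C_3(G') \;\geq\; \binom{\ell}{3}
\]
with probability $1$. The plan is then to show that the deterministic ratio $\binom{\ell}{3}/\mu_\Delta$ already exceeds $1.9$ given the hypotheses. Using $\ell \geq 2np$ and the consequence $np \geq 10/\varepsilon^2 \geq 10$ of the hypothesis on $n$, the standard estimate $\ell(\ell-1)(\ell-2) \geq (1-o(1))(2np)^3 \geq 6\, n^3 p^3$ (for $np$ large enough) combined with $n(n-1)(n-2) \leq n^3$ gives
\[
\rho_\Delta(G') \;=\; \frac{6\,C_3(G')}{n(n-1)(n-2)\,p^3} \;\geq\; \frac{6\binom{\ell}{3}}{n^3 p^3} \;\geq\; 6,
\]
which is far more than $1.9$. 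So the event $\rho_\Delta(G') \geq 1.9$ holds with probability $1$, well above $1 - 1/12$.

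Putting this together, there is no real obstacle: the YES case is standard $\gnp$ triangle concentration and the NO case is a deterministic lower bound. The only subtlety worth flagging in the write-up is the bookkeeping between labeled/unlabeled triangle counts (Lemma \ref{lem:Gnp-clique-concentration} is stated with the labeled normalization $\binom{n}{k}k!\,p^{\binom{k}{2}}$, whereas $\rho_\Delta$ in the lower-bound section is defined via $n(n-1)(n-2)p^3/6$), but these differ only by a factor of $6$ that cancels consistently. A secondary sanity check I would perform is to verify that the slack in the $2np \leq \ell \leq n/2$ window from the calling theorem is consistent with the slack needed here (namely that $np$ is a sufficiently large constant so that the lower-order $(\ell-1)(\ell-2)$ corrections are negligible); this is guaranteed by the assumption $n \geq 10\, p^{-1}\varepsilon^{-2}$.
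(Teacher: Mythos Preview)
Your proposal is correct. The YES case is identical to the paper's: invoke \Cref{lem:Gnp-clique-concentration} with $k=3$ and adjust the success probability constant.

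For the NO case, your route is actually cleaner than the paper's. The paper counts two contributions: the $\binom{\ell}{3}$ triangles inside the planted clique \emph{and} the roughly $0.9\binom{n-\ell}{3}p^3$ triangles in the $\gnp$ portion on $[n]\setminus S$, the latter requiring a separate concentration step (and hence the $1-1/12$ probability). You observe that the clique alone already forces $\rho_\Delta(G') \gtrsim (2-o(1))^3 \geq 1.9$ deterministically, so no probabilistic argument is needed and the event holds with probability $1$. Both approaches are valid; yours is shorter and yields a nominally stronger conclusion, while the paper's version keeps the structure parallel to the general-motif lower bound in \Cref{lem:param-sep}, where counting copies outside the planted region is genuinely necessary.
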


\begin{proof}
    First, the expected number of triangles in $\gnp$ is $\mu_3 = \binom{n}{3} p^3$. The number of triangles is concentrated in $\gnp$ given $n \geq 10 p^{-1} \varepsilon^{-2}$, by \Cref{lem:Gnp-clique-concentration}. This implies $\Pr_{G \sim D_Y^n}[| C_3(G) - \mu_3| \leq 0.1 \cdot \mu_3] = 1 - o_n(1) \geq 1 - 1/12$.

    Second, for $G' \sim D_N^{n, \ell}$, the number of triangles is at least $2 \binom{n}{3} p^3$ with high probability. Suppose the planted clique is on $S$. Then, the graph on $[n] \setminus S$ is an \erdosrenyi graph itself, and has at least $0.9 \binom{n-\ell}{3} p^3$ triangles with probability at least $1 - 1/12$. Since $S$ is a clique, it has $\binom{\ell}{3}$ triangles. Since $\ell \geq 2 n p$, the number of triangles in this graph is at least $2 \binom{n}{3} p^3$ with high probability.
\end{proof}

\begin{proof}[Proof of \Cref{thm:triangle-lb}]
    Let $A$ be an algorithm that makes $q$ queries, and accepts $G \sim \gnp$ with proability $\geq 2/3$ and rejects $G$ with $\rho_{H}(G) > 1.5$ with probability at most $1/3$. By \Cref{lem:triangle-sep}, $\Pr_{G' \sim D_N^{n, \ell}}[\rho_{\Delta}(G)\leq 1.9] \leq 1/12$, and therefore $\Pr_{G\sim \gnp,R}[A(G) = 1] - \Pr_{G'\sim D_N^{n, \ell},R}[A(G')=1] \geq 1/6$, for $R$ denoting the random coin tosses made by algorithm $A$. Thus, there exists $R_0$ satisfying $\E_{G\sim \gnp}[A'(G) = 1] - \E_{G'\sim D_N^{n, \ell}}[A'(G')=1] \geq 1/6$, for algorithm $A'$ that is the deterministic version of $A$ with the random tape set to be $R_0$. Combining this with \Cref{lem:dy-dn-indist-triangle} yields $q \geq (n-\ell)/(12 \ell)$.
\end{proof}

\subsection{The challenge of extending to k-cliques}\label{sec:challenge-extending-literature}

Suppose we wanted to extend the approach above to quality control of $k$-cliques as compared to $\gnp$. When we apply \cite{eden2022approximating} (arboricity approximation) and \cite{eden2018faster} ($k$-clique count approximation in low arboricity graphs), we obtain a query complexity and runtime scaling with $1/p^{O(k^2)}\cdot \text{poly}(\log (n), 1/\varepsilon)$. Considering low arboricity graphs eliminates some lower-order $O(k)$ terms in the exponent, but the ratio between the arboricity and the number of $k$-cliques in $\gnp$ is not enough to bring the query complexity and runtime to $1/p^{O(k)}$. Indeed, additionally assuming access to uniform edge queries and using corresponding algorithms \cite{assadi2018simple} does not get rid of the $O(k^2)$ term in the exponent. The issue, at an intuitive level, arises because the number of $k$-cliques in $\gnp$, for sparse $p$, is so few, so bounds that scale inversely with the number of $k$-cliques seem to encounter a $1/p^{O(k^2)}$-style bound. It is an interesting future direction to adapt and simplify algorithms such as those in \cite{eden2018faster, assadi2018simple} for quality control of $k$-clique counts as compared to $\gnp$. In the next section, we instead take a different approach, iteratively understanding the number of $\ell$-cliques, for $\ell \leq k$, in the graph and developing a notion of ``robust quasirandomness'' of clique counts on subgraphs of the input graph.

\section{Quality control for Sum-NC0 graph parameters}\label{sec:NC0-properties}

In this section, we provide an alternative perspective on quality control of motif counts and its implications for the structure of an input graph. As we have noted, quality control on motif counts is a fundamental question for several reasons, including that motif counts can provide evidence that a graph
looks quasirandom (e.g. \cite{chung1989quasi, DBLP:journals/rsa/ShapiraY10}). Quality control for motif counts can further be seen to be a fundamental problem because quality control algorithms for motif counts can be used to construct quality control algorithms for more general parameters of graphs, which we show in this section.

We consider graph parameters (as defined in \Cref{def:graph-parameter}) computable by $\sum-NC^0$ circuits, defined formally below.

\begin{definition}\label{def:sum-nc0}
    A Sum-NC$^0$ ($\sum-NC^0$) circuit $C = \sum \circ ~C_{n, m}$ consists of a summation gate $\sum_m :\{0, 1\}^m \to \mathbb{N}$ composed with a circuit $C_{n, m}: \{0, 1\}^n \to \{0, 1\}^m$ with constant fan-in $k$, constant depth $d$, and $m = \mathrm{poly}(n)$, consisting of AND, OR, and NOT gates. 
\end{definition}

Recall the definition of a graph parameter from \Cref{def:graph-parameter}. For a graph parameter $F$, the corresponding quality control algorithm must, with high probability, accept $G \sim \gnp$ and reject graphs $G'$ for which $F(G') \not \in (1 \pm \varepsilon) \cdot \mathbb{E}_{G \sim \gnp}\left(F(G) \right)$.

For a graph parameter $F: \mathcal{G}(n) \to \mathbb{N}$, let its circuit representation have input from $\{0, 1\}^{\binom{n}{2}}$ representing which edges exist in the graph, and an output of the function value of the graph. Let $\rho_{F}(G) := F(G)/\mathbb{E}_{G' \sim \gnp}\left[ F(G')\right]$.

\begin{theorem}\label{thm:sum-nc0-parameter}
    Consider any graph parameter $F: \mathcal{G}(n) \to \mathbb{N}$ whose circuit representation can be expressed as a $\sum-NC^0$ circuit with gate fan-in at most $k$ and depth $d$, such that $\mathbb{E}_{G' \sim \gnp}\left[ F(x^{G'}) \right] > 0$. Let $p = \omega\left( 1/n^{2/(2 k^d-1)}\right)$. The $(\gnp, \rho_{F})$-quality control problem uses $1/p^{O(k^d)}$ queries to the adjacency matrix of the input graph, and runtime $1/p^{O(k^d)}$.
\end{theorem}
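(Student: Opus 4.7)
The plan is to reduce $(\gnp, \rho_F)$-quality control to simultaneous quality control of a constant number of motif counts, using the fact that a $\sum$-NC$^0$ parameter has a ``locality'' property that lets it be rewritten as a nonnegative linear combination of labeled induced motif counts on at most $2k^d$ vertices. Then I apply \Cref{thm:qc-motif-in-section} to each such motif and combine via a union bound.

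First, the decomposition. Each output bit $b_j$ of the NC$^0$ subcircuit depends on at most $k^d$ of the input edges, and hence on the induced subgraph $G|_{V_j}$, where $V_j$ is the (at most $2k^d$-element) set of endpoints of those edges. Writing $b_j(G) = \sum_{H \in T_j} \mathbbm{1}[G|_{V_j} = H]$ for some collection $T_j$ of labeled graphs on $V_j$, and using that $F$ is graph-invariant (so we may symmetrize by averaging over permutations of $[n]$ without changing $F$), I obtain
\[
F(G) \;=\; \sum_{H} \alpha_H\, C_H(G),
\]
where the sum ranges over isomorphism classes of graphs $H$ on at most $2k^d$ vertices, $C_H(G)$ is the labeled induced count of \Cref{def:C-H-G}, and the coefficients $\alpha_H$ are \emph{nonnegative integers} (because each $b_j \in \{0,1\}$). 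The number of motifs in the sum is bounded by a constant $M = M(k,d)$ depending only on $k$ and $d$.

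Second, the algorithm. For each $H$ with $\alpha_H > 0$, run the $(\gnp, \rho_H)$-quality control algorithm from \Cref{thm:qc-motif-in-section} with error parameter $\varepsilon/2$ and failure probability boosted to $1/(10M)$ (by standard median-of-$O(\log M)$ amplification). Accept iff every subtest accepts. The hypothesis $p = \Omega(n^{-2/(2k^d-1)})$ is precisely the regime in which \Cref{lem:Gnp-motif-concentration} gives concentration of $C_H$ for all $H$ on at most $2k^d$ vertices. \textbf{Completeness} follows by a union bound: under $G \sim \gnp$, each subtest accepts with probability at least $1 - 1/(10M)$. \textbf{Soundness}: if $|F(G) - \E[F]| > \varepsilon \E[F]$, then by the triangle inequality and nonnegativity of the $\alpha_H$,
\[
\varepsilon \sum_{H} \alpha_H\, \E[C_H] \;<\; \sum_{H} \alpha_H\, \bigl|C_H(G) - \E[C_H]\bigr|,
\]
so there exists an $H$ with $|C_H(G) - \E[C_H]| > (\varepsilon/2)\E[C_H]$ (by averaging), and the corresponding subtest rejects with high probability. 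Each subtest uses $p^{-O(\Delta(H))} \le p^{-O(k^d)}$ queries and time, so the total complexity is $p^{-O(k^d)}$ as claimed.

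The main obstacle is the decomposition step, specifically producing a representation with \emph{nonnegative} coefficients on induced (not merely non-induced) motifs of bounded size. The symmetrization under permutations of $[n]$ is conceptually straightforward but requires care: one needs to argue that the per-bit contributions $c_{V',H} = \#\{j : V_j = V',\, H \in T_j\}$ can be grouped by the isomorphism type of $H$ so that the final coefficients $\alpha_H$ depend only on the abstract graph $H$ and not on its embedding, with the bookkeeping between labeled and unlabeled counts kept consistent. The nonnegativity of the $\alpha_H$ is essential in the soundness step: without it, catastrophic cancellation could let $F$ drift far from its expectation while every individual $C_H$ remains concentrated, and the reduction to motif quality control would fail.
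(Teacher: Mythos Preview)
Your approach is correct and matches the paper's: decompose $F$ via symmetrization into a nonnegative combination of labeled induced motif counts on at most $2k^d$ vertices, then run motif quality control on each and combine by union bound. Your soundness step (nonnegativity plus weighted averaging) is actually slightly cleaner than the paper's, which instead pigeonholes over the AND gates and compares the $n!$-scaled deviation to $\mu_H \le \binom{n}{2k^d}$ under the normalization $\mathbb{E}[F]\ge 1$; one minor slip is that after the $1/n!$ symmetrization your $\alpha_H$ are nonnegative \emph{rationals} rather than integers, but this is immaterial to the argument.
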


Graph parameters with Sum-NC$^0$ circuits are a generalization and alternative formulation of motif counts. These graph parameters count the number of appearances of local patterns in a graph.

We will, without loss of generality, assume and use that $\mathbb{E}_{G' \sim \gnp}\left[ F(x^{G'}) \right] \geq 1$. If this is not the case, scaling $F$ by some factor to obtain a graph parameter $F'$ is possible, and will allow us to study and perform quality control on $F$. $\rho_F$ is invariant under scaling.

The main technical work for proving \Cref{thm:sum-nc0-parameter} arises in reducing a graph parameter  $F$ whose circuit representation can be expressed as a Sum-NC$^0$ circuit with gate fan-in at most $k$ and depth $d$ to a sum of counts of motifs of size at most $k^d$. Once we have done this, we can use the composition of quality control problems. We will perform $(\gnp, \rho_H)$-quality control for all motifs $H$ on at most $k^d$ vertices. If any of the $(\gnp, \rho_H)$-quality control algorithms called rejects, then so does the algorithm we construct for $(\gnp, \rho_F)$-quality control, where $\rho_F(G) = F(G)/\mathbb{E}_{G' \sim \gnp}\left[F(G') \right]$ for graph parameter $F$. We argue that, due to the representation of $F$ as a sum of counts of small motifs, this suffices for $(\gnp, \rho_F)$-quality control.

Since quality control for any such Sum-NC$^0$ graph parameter can be accomplished via quality control of all small motif counts, this also demonstrates the robustness of the composability property of quality control problems.

\subsection{Reducing graph parameters to sums of motif counts}

In this section we prove the following lemma.

\begin{lemma}\label{lem:nc0-parameter-sum}
    Consider any graph parameter $F: \mathcal{G}(n) \to \mathbb{N}$ whose circuit representation can be expressed as a $\sum-NC^0$ circuit $C = \sum \circ ~C_{\binom{n}{2}, m}$ where $C_{\binom{n}{2}, m}$ has gate fan-in $k$ and depth $d$. Then, $F$ can be expressed as a (scaled) sum over labeled induced counts of motifs on at most $2 k^d$ vertices in the input graph.
\end{lemma}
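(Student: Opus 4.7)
The plan is to decompose $F$ into a sum over the $m$ output bits of $C_{\binom{n}{2},m}$, expand each bit as a linear combination of indicators of labeled induced subgraph patterns, and then use the graph-parameter invariance of $F$ to symmetrize so that the indicators aggregate into labeled induced motif counts $C_H(G)$.

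First, because the circuit $C_{\binom{n}{2},m}$ has fan-in at most $k$ and depth at most $d$, each output bit $f_j:\{0,1\}^{\binom{n}{2}} \to \{0,1\}$ depends on at most $k^d$ input bits (edges). Since every edge is incident to two vertices, $f_j$ depends on edges confined to some vertex set $V_j\subseteq[n]$ with $|V_j|\leq 2k^d$. Thus $f_j(G)$ is a function of the labeled induced subgraph $G|_{V_j}$, and can be written as
\[
f_j(G) \;=\; \sum_{\phi} c_{j,\phi}\,\mathbb{1}\!\left[G|_{V_j} = \phi\right],
\]
where $\phi$ ranges over labeled graphs with vertex set $V_j$ and $c_{j,\phi}\in\{0,1\}$.

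Next, I would exploit that $F$ is a graph parameter, i.e.\ $F(G)=F(\pi\cdot G)$ for every vertex permutation $\pi\in S_n$, to write $F(G) = \frac{1}{n!}\sum_{\pi\in S_n}\sum_j\sum_\phi c_{j,\phi}\,\mathbb{1}[(\pi\cdot G)|_{V_j}=\phi]$. The key rewriting is $\mathbb{1}[(\pi\cdot G)|_{V_j} = \phi] = \mathbb{1}[G|_{\pi^{-1}(V_j)} = \pi^{-1}(\phi)]$. For fixed $j$ and $\phi$, as $\pi$ ranges over $S_n$, the pair $(\pi^{-1}(V_j), \pi^{-1}(\phi))$ runs uniformly (with multiplicity $(n-|V_j|)!$) over all injective embeddings of the labeled graph $\phi$ into $[n]$. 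Consequently the averaged sum collapses to a scalar multiple of $C_\phi(G)$, the number of labeled induced copies of $\phi$ in $G$ (as in \Cref{def:C-H-G}). Collecting terms by the isomorphism type of $\phi$ yields
\[
F(G) \;=\; \sum_{H} a_H\, C_H(G),
\]
where $H$ ranges over labeled induced motifs on at most $2k^d$ vertices and the $a_H$ are fixed scalars independent of $G$ (given by the circuit).

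The only subtle step is the bookkeeping in the symmetrization, so that the factor $(n-|V_j|)!/n!$ correctly accounts for all ways of embedding $\phi$, and so that patterns of different sizes $|V_j|\leq 2k^d$ are handled uniformly. I would verify this by grouping output bits first by the size $|V_j|$ and then by the isomorphism class of $\phi$, which gives an explicit closed form for each coefficient $a_H$ in terms of the $c_{j,\phi}$'s. Everything else is either definitional (depth/fan-in implies constant locality) or follows from linearity of expectation-style manipulations on indicators.
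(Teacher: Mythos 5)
Your proof is correct and follows essentially the same strategy as the paper: exploit $NC^0$ locality to bound the number of vertices each output bit can depend on (at most $2k^d$), decompose each output bit as a sum of indicators of local edge patterns, then symmetrize over vertex permutations via graph-parameter invariance of $F$ to convert those indicators into labeled induced motif counts. The one cosmetic difference is the intermediate decomposition: the paper first expands each output bit as a sum of AND gates over literals on the $\leq k^d$ edge slots actually read (a partial edge/non-edge pattern) and then completes those to full induced subgraphs, whereas you go directly to the full-pattern indicators $\mathbb{1}[G|_{V_j}=\phi]$ on the relevant $\leq 2k^d$ vertices; both routes give the same final representation $F(G)=\sum_H a_H C_H(G)$ with the same bound on $|H|$.
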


\begin{proof}
    First, any $NC^0$ circuit $C_{\binom{n}{2}, m}:\{0, 1\}^{\binom{n}{2}} \to \{0, 1\}^m$ with gate fan-in $k$ and depth $d$ can be written as a summation over at most $m \cdot 2^{k^d}$ AND gates with fan-in at most $k^d$. To see this, observe that such a circuit can only depend on at most $k^d$ input variables/coordinates. Therefore, the circuit can be written as a summation over at most $2^{k^d}$ AND gates with fan-in $k^d$, taken over input coordinates and/or their negations. To see this, represent each bit of the output of the $NC^0$ circuit as a truth table. Since the $NC^0$ circuit has fan-in $k^d$, and coordinates may be negated, the truth table corresponds to an OR taken over at most $2^{k^d}$ AND gates with fan-in $k^d$. This OR can be turned into a summation over AND gates, because this originated from a truth table in which only one AND gate will output 1 on an input.

    For $i \in [m \cdot M(k, d)]$, and $S_i \subseteq [\binom{n}{2}]$, $|S_i| \leq k^d$, let $g_i: S_i \to \{0, 1\}$ be the $i$-th AND gate in this new representation of $\Sigma-NC^0$ circuit $C$. Let $N_i \subseteq S_i$ consist of the bits of the input that should be negated before taking the AND gate. Let $h_i:S_i \to \{0, 1\}$ be defined as  $h_i(x_j) = 1 - x_j$ if $x_j  \in N_i$ and $h_i(x_j) = x_j$ if $x_j \not \in N_i$.
    
    We can express any $g_i$ for $i \in [m \cdot M(k, d)]$ as:
    \begin{equation}\label{eq:nc0-g-to-h}
       g_i(x) = \prod_{j \in S_i} h_i(x_j). 
    \end{equation}
    Therefore, equating a graph $G$ with the string $x^G \in \{0, 1\}^{\binom{n}{2}}$ indicating the edges present, we can rewrite $C:\{0, 1\}^n \to \{0, 1\}^m$ equivalently as $$C(x^G) = \sum_{i \in [2^{k^d}]} g_i(x^G) = \sum_{i \in [2^{k^d}]} \prod_{j \in S_i} h_i(x^G_j).$$

    Consider permutations $\sigma(x^G)$ of the \textit{vertices} of the $n$-vertex input graph $G$. First, since $F$ is invariant under graph isomorphism by definition of a graph parameter:
    $$F(x^G) = \frac{1}{n!} \sum_{\sigma} F(\sigma(x^G)) =  \frac{1}{n!} \sum_{\sigma}\sum_{i \in [m \cdot M(k, d)]} g_i(\sigma(x^G))= \frac{1}{n!} \sum_{i \in [m \cdot M(k, d)]} \sum_{\sigma}g_i(\sigma(x^G)).$$
    We now argue that $\sum_{\sigma}g_i(\sigma(x^G))$ corresponds to counting some induced motifs on at most $2k^d$ vertices in the graph. First, the set $\{\sigma(x^G)\}_{\sigma}$ includes all ordered sets of vertices in $G$. Therefore, using the formula \Cref{eq:nc0-g-to-h} for $g_i$, $\sum_{\sigma} g_i(\sigma(x^G))$ counts the number of appearances of a specific pattern $P$ of edges and non-edges among all labeled subsets of $|S_i|$ edges in the graph. This count can be computed by summing over all induced motifs on up to $2|S_i|$ vertices whose edge/non-edge pattern includes $P$. Therefore, $\sum_{\sigma} g_i(\sigma(x^G))$ is equivalent to a sum over a number of \textit{labeled induced counts} of some motifs on at most $2 |S_i| \leq 2 k^d$ vertices in the input graph.
\end{proof}

\subsection{Quality control algorithm and analysis}

We now present the quality control algorithm for graph parameters computable by Sum-NC$^0$ circuits, as we have developed the technical lemmas needed to prove its correctness as a quality control algorithm. 

\paragraph{Algorithm \textsc{Graph-Parameter-Quality}:} On inputs $G$ and $n, p, \varepsilon$
\begin{enumerate}
    \item For all motifs $H$ on at most $2 k^d$ vertices, run Algorithm \textsc{Induced-Motif-Quality-Efficient} (on inputs $G, n, p, \gnp, \varepsilon$, and $H$) to perform quality control on the count of labeled induced copies of $H$ in $G$. 
    \item If the quality control algorithm \textit{rejects} on any motif $H$ on at most $2 k^d$ vertices, then output reject. Else, output accept.
\end{enumerate}

Observe that the quality control algorithm is the \textit{same} for all graph parameter computable by a Sum-NC$^0$ circuit with fan-in $k$.

Towards proving \Cref{thm:sum-nc0-parameter}, we begin with the following lemma.

\begin{lemma}\label{lem:graph-parameter-to-motifs-outofrange}
    Let $F$ be a graph parameter $F: \mathcal{G}(n) \to \mathbb{N}$ whose circuit representation is a Sum-NC$^0$ circuit $C = \sum \circ ~C_{\binom{n}{2}, m}$, which satisfies $\mathbb{E}_{G' \sim \gnp}\left[ F(x^{G'}) \right] \geq 1$. Consider a graph $G \in \mathcal{G}(n)$ and the corresponding string with edges/non-edges $x^G \in \{0, 1\}^{\binom{n}{2}}$. If 
    $$\left| F(x^G) - \mathbb{E}_{G' \sim \gnp}\left[ F(x^{G'}) \right]\right| \geq \varepsilon \cdot \mathbb{E}_{G' \sim \gnp}\left[ F(x^{G'}) \right],$$
    then there exists a motif $H$ on at most $2 k^d$ vertices whose labeled induced count $C_H(G)$ satisfies:
    \begin{equation}\label{eq:nc0-1}
       \left| C_H(G) - \mathbb{E}_{G' \sim \gnp}\left[ C_H(G') \right]\right| \geq \varepsilon \cdot \mathbb{E}_{G' \sim \gnp}\left[ C_H(G') \right].
    \end{equation}
\end{lemma}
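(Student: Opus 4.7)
The plan is to use the representation of $F$ provided by \Cref{lem:nc0-parameter-sum} to reduce this lemma to a simple averaging (pigeonhole) argument over the motifs appearing in that representation. Specifically, \Cref{lem:nc0-parameter-sum} and its proof express $n! \cdot F(x^G)$ as a nonnegative integer combination $\sum_H b_H \cdot C_H(G)$, where the sum ranges over motifs $H$ on at most $2k^d$ vertices and the coefficients $b_H \geq 0$ count how many ordered-vertex patterns in $G$ are picked up by each AND gate $g_i$ after symmetrizing over $S_n$. The first step is simply to verify (by inspecting the proof of \Cref{lem:nc0-parameter-sum}) that the $b_H$ are indeed nonnegative, since each $g_i$ is a $0/1$-valued AND of (possibly negated) edge indicators and each contribution to $\sum_\sigma g_i(\sigma(x^G))$ is either $0$ or $1$.

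Next, I would take expectations over $\gnp$ on both sides of this representation. By linearity of expectation,
\[
n! \cdot \mathbb{E}_{G' \sim \gnp}[F(x^{G'})] \;=\; \sum_H b_H \cdot \mathbb{E}_{G' \sim \gnp}[C_H(G')],
\]
so subtracting gives
\[
n!\bigl(F(x^G) - \mathbb{E}_{G' \sim \gnp}[F(x^{G'})]\bigr) \;=\; \sum_H b_H \cdot \bigl(C_H(G) - \mathbb{E}_{G' \sim \gnp}[C_H(G')]\bigr).
\]
Applying the triangle inequality to the right-hand side and combining with the hypothesis $|F(x^G) - \mathbb{E}[F]| \geq \varepsilon \mathbb{E}[F]$ yields
\[
\sum_H b_H \cdot \bigl|C_H(G) - \mathbb{E}_{G'\sim\gnp}[C_H(G')]\bigr| \;\geq\; \varepsilon \sum_H b_H \cdot \mathbb{E}_{G'\sim\gnp}[C_H(G')].
\]

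Finally, a pigeonhole/averaging argument finishes the proof. If every motif $H$ on at most $2k^d$ vertices with $b_H > 0$ satisfied $|C_H(G) - \mathbb{E}[C_H(G')]| < \varepsilon \cdot \mathbb{E}[C_H(G')]$, then summing with weights $b_H \geq 0$ would contradict the displayed inequality. Hence there must exist at least one motif $H$ (on at most $2k^d$ vertices, with $b_H > 0$) for which \Cref{eq:nc0-1} holds, which is exactly the conclusion of the lemma. I do not anticipate a real obstacle: once the nonnegativity of the coefficients $b_H$ is observed, the rest is essentially a one-line pigeonhole argument. The only bookkeeping point worth being careful about is ensuring that the motifs produced by \Cref{lem:nc0-parameter-sum} are treated as \emph{labeled induced} copies, matching the definition of $C_H(G)$ used by the quality control subroutines in Algorithm \textsc{Graph-Parameter-Quality}.
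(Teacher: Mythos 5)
Your proposal is correct, and it takes a genuinely different (and in my reading cleaner) route than the paper's. The paper applies the triangle inequality at the level of the AND gates $g_i$: it extracts a single gate whose symmetrized deviation $\bigl|\sum_\sigma g_i(\sigma(x^G)) - \mathbb{E}[\cdot]\bigr|$ is at least $\varepsilon\, \mathbb{E}[F]\cdot n!/(m \cdot M(k,d))$, identifies that term with a motif count, and then uses the crude numerical bounds $\mathbb{E}[F]\geq 1$, $m = \mathrm{poly}(n)$, and $\mu_H \leq \binom{n}{2k^d}$ to argue that this absolute deviation exceeds $\varepsilon \mu_H$. This is somewhat loose (it compares an $\Omega(n!/\mathrm{poly}(n))$ quantity to a $\mathrm{poly}(n)$ quantity) and also elides the fact that $\sum_\sigma g_i(\sigma(x^G))$ is really a \emph{sum} of scaled motif counts rather than a single one. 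Your version instead writes $n!\,F(x^G) = \sum_H b_H C_H(G)$ with $b_H \geq 0$ (nonnegativity holding because each $g_i$ is a $\{0,1\}$-valued AND and the decomposition into induced patterns contributes with coefficient $(n - v(H))! \geq 0$), takes expectations so that $\mathbb{E}[F]$ is the \emph{same} nonnegative combination of the $\mathbb{E}[C_H]$, and then applies the triangle inequality directly at the level of motifs. The pigeonhole step then falls out at the right (multiplicative) scale: if every motif with $b_H > 0$ were within relative error $\varepsilon$, the weighted sum would be too. This buys you a tighter conclusion that does not hinge on the $\mathbb{E}[F]\geq 1$ normalization (you only need $\mathbb{E}[F]>0$ to guarantee some $b_H>0$) nor on the size bounds $m = \mathrm{poly}(n)$ and $\mu_H \leq \binom{n}{2k^d}$. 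One small thing to make explicit if you write this up: you should state that every motif $H$ on at most $n$ vertices has $\mathbb{E}_{G'\sim\gnp}[C_H(G')] > 0$ for $p\in(0,1)$, so the multiplicative form of \Cref{eq:nc0-1} is well-posed, and that at least one $b_H$ is strictly positive because $\mathbb{E}[F] \geq 1 > 0$.
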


\begin{proof}

    Let $\mu^{(F)} := \mathbb{E}_{G' \sim \gnp}\left[ F(x^{G'}) \right]$. Let $\mu_H := \mathbb{E}_{G' \sim \gnp}\left[ C_H(G') \right]$. Let 
    $M(k, d) = \sum_{j = 1}^{2 k^d}2^{\binom{j}{2}}$, which is the number of induced subgraphs on at most $2 k^d$ vertices.

    By the triangle inequality, we see:
    $$\varepsilon \cdot \mu^{(F)} \leq \left| F(x^G) - \mu^{(F)}\right| \leq \frac{1}{n!} \sum_{i \in [m \cdot M(k, d)]} \left|\sum_{\sigma}g_i(\sigma(x^G)) - \mathbb{E}_{G' \sim \gnp}\left[\sum_{\sigma} g_i(\sigma(x^{G'})) \right]\right|.$$
    Therefore, there exists an $i \in [m \cdot M(k, d)]$ such that $\left|\sum_{\sigma}g_i(\sigma(x^G)) - \mathbb{E}_{G' \sim \gnp}\left[\sum_{\sigma} g_i(\sigma(x^{G'})) \right]\right|$ is at least $\varepsilon n!/(m \cdot M(k, d))$. Since $\sum_{\sigma}g_i(\sigma(x^G))$ corresponds to counting some induced motif on at most $k^d$ vertices in the graph (as reasoned about in the proof of \Cref{lem:nc0-parameter-sum}), we find that:
    $$\left| C_H(G) - \mathbb{E}_{G' \sim \gnp}\left[ C_H(G') \right]\right| \geq \frac{\varepsilon \cdot \mathbb{E}_{G' \sim \gnp}\left[ F(x^{G'}) \right] \cdot n!}{m \cdot M(k, d)} \geq \varepsilon \cdot \mu_H,$$ 
    where the last inequality holds because $m = \mathrm{poly}(n)$, $\mathbb{E}_{G' \sim \gnp}\left[ F(x^{G'}) \right] \geq 1$ and $\mu_H \leq \binom{n}{2 k^d}$. Thus \Cref{eq:nc0-1} holds.
\end{proof}

We now prove \Cref{thm:sum-nc0-parameter} for Algorithm \textsc{Graph-Parameter-Quality}.

\begin{proof}[Proof of \Cref{thm:sum-nc0-parameter}]

\textbf{Completeness:} Suppose the input graph $G$ is drawn from $\gnp$. Then, by the completeness of Algorithm \textsc{Induced-Motif-Quality-Efficient} proven in \Cref{sec:motifs-upperbound}, each call of Algorithm \textsc{Induced-Motif-Quality-Efficient} in Step 1 accepts $G$ with probability at least $1 - o(1)$. 
By a union bound over the $M(k, d)$ (a constant) possible induced subgraphs over at most $2 k^d$ vertices, Step 2 then accepts $G \sim \gnp$ with probability $1 - o(1)$. 

\textbf{Soundness:} Suppose the input $G$ satisfies
$$\left| F(x^G) - \mathbb{E}_{G' \sim \gnp}\left[ F(x^{G'}) \right]\right| \geq \varepsilon \cdot \mathbb{E}_{G' \sim \gnp}\left[ F(x^{G'}) \right].$$
By \Cref{lem:graph-parameter-to-motifs-outofrange}, there exists a motif $H$ on at most $k^d$ vertices whose labeled induced count $C_H(G)$ satisfies:
    $$\left| C_H(G) - \mathbb{E}_{G' \sim \gnp}\left[ C_H(G') \right]\right| \geq \varepsilon \cdot \mathbb{E}_{G' \sim \gnp}\left[ C_H(G') \right].$$
By the soundness of Algorithm \textsc{Induced-Motif-Quality-Efficient} proven in \Cref{sec:motifs-upperbound}, therefore the algorithm rejects input graph $G$ due to the count of motif $H$ with high probability.

\textbf{Query complexity and runtime:} These are inherited from the calls of Algorithm \textsc{Induced-Motif-Quality-Efficient} on motifs on at most $k^d$ vertices. Therefore, the query complexity and runtime are $1/p^{O(k^d)}$.
\end{proof}

\section{Connection to average-case complexity}\label{sec:average-case-complexity}

\subsection{From average-case complexity to quality control and back}\label{sec:general-avg-case-qc}

Quality control has a close connection to the \textit{average-case complexity} of algorithmic / computational problems, which we further explore in this section. Average-case complexity typically asks for the following: Given a distribution $\mathcal{D}$ over inputs, we want to solve the problem on \textit{every} input, such that the \textit{expected runtime} taken over $\mathcal{D}$ is low. We discuss how to view average-case problems through the perspective of quality control, and vice versa.

\paragraph{Average-case problems to quality control problems.}

Works on average-case algorithms typically define a ``good'' property of inputs with the following guarantees. With probability at least $1 - 1/t_2$, for $t_2 = \omega(1)$, an input drawn from $\mathcal{D}$ possesses the good property, and thus there exists an algorithm $A$ with worst-case runtime $t_1$ for this input. This ``good'' property is possible to certify/verify. On the other hand, for worst-case inputs, the algorithm $A$ has worst-case runtime $t_2$. In such a set-up, the average-case complexity of the algorithm $A$ is:
$$(1 - 1/t_2) \cdot t_1 + (1/t_2) \cdot t_2  = O(t_1).$$
See \cite{DBLP:journals/jal/DyerF89, marcussen2025fast} for examples of papers that define and work with such a good property.

When an average-case problem has such a ``good'' property, we can define a quality control problem as follows. We consider $(\mathcal{D}, \rho)$-quality control, with the same distribution $\mathcal{D}$ over inputs, and $\rho$ defined as:
$$\rho(x) = \begin{cases}
    1 ~~ \text{ if $A(x) \leq t_1$} \\ 
    0 ~~ \text{ otherwise.}
\end{cases}$$

Consider the algorithm $B$ that, on input $x$, accepts if $A$ halts within $\leq t_1$ steps on input $x$. If algorithm $A$ has average-case complexity $t_1$, and $1/t_2 \leq 1/3$, then algorithm $B$ is an algorithm for $(\mathcal{D}, \rho)$-quality control. For completeness, $x \sim \mathcal{D}$ is \textit{good} with high probability and thus $\rho(x) = 1$. For soundness, if the runtime is longer than $O(t_1)$, this is detectable and $\rho(x) = 0$.

\paragraph{Quality control problems to average-case problems}

Quality control problems serve as an intermediate between the problems of ``Constructing an algorithm that is correct with high probability over $\mathcal{D}$'' and average-case complexity, i.e.,  ``Constructing an algorithm that is correct on every input, and has average-case time $T$ over $\mathcal{D}$.'' 

$(\mathcal{D}, \rho)$-quality control serves as an intermediate because, first, it checks which inputs the algorithm is correct with (by soundness) as opposed to the former, but does not give worst-case algorithmic guarantees for computing $\rho(x)$ when $\rho(x) \not \approx \mathbb{E}_{y \sim \mathcal{D}}[\rho(y)]$, which average-case algorithms must.

In general, we have the following.

\begin{proposition}\label{prop:avg-from-qc}
    Let $\mathcal{D}$ be a distribution supported over $\mathcal{X}$, and let $\rho: \mathcal{X} \to \mathbb{R}_{\geq 0}$ be a parameter satisfying $$\mathbb{P}_{x \sim \mathcal{D}}\left[\left| \rho(x)  -  \mathbb{E}_{y \sim \mathcal{D}}[\rho(y)]\right| \geq \varepsilon \cdot  \mathbb{E}_{y \sim \mathcal{D}}[\rho(y)]\right] \leq 1/T,$$
    for some $T \geq 0$.
    
    Suppose $C$ is a $(\mathcal{D}, \rho)$-quality control algorithm with query complexity and runtime $t_3$. Suppose $E$ is an algorithm that computes a $(1 \pm \varepsilon)$-approximation of $\rho(x)$ for any $x \in \mathcal{X}$ and has query complexity and runtime $t_4 \leq T/2$. Then there exists an algorithm that approximates $\rho(x)$ to a $(1 \pm \varepsilon)$ multiplicative factor with high constant probability (over the algorithm's randomness) on any input $x \in \mathcal{X}$ and has average-case query complexity and runtime $O(t_3)$ over $\mathcal{D}$.
\end{proposition}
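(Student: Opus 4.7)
The plan is to use $C$ as a fast filter that only invokes the expensive worst-case approximator $E$ when $C$ flags the input as atypical. Concretely, on input $x$, algorithm $B$ first runs $C(x)$: if $C$ accepts, $B$ outputs $\mu := \mathbb{E}_{y \sim \mathcal{D}}[\rho(y)]$; otherwise, $B$ runs $E(x)$ and returns its output. This matches the ``composability'' intuition emphasized throughout the paper: $C$ is a cheap certificate that the cheap estimate $\mu$ is a valid answer, and if the certificate fails we fall back to a worst-case guarantee from $E$.

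For worst-case correctness, I would split on whether $x$ lies in the concentration set $A := \{x : |\rho(x) - \mu| \leq \varepsilon \mu\}$. When $x \in A$, both $\mu$ and $E(x)$ are valid $(1\pm\varepsilon)$-approximations, so $B$ always succeeds. When $x \notin A$, the soundness of $C$ gives $\Pr_R[C(x) = \textsc{Reject}] \geq 2/3$, so $B$ invokes $E$ with probability at least $2/3$ and outputs a valid approximation; the remaining $\leq 1/3$ probability is the only source of error. Hence $B$ succeeds with probability at least $2/3$ on every input.

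For the expected runtime, I would write $\mathbb{E}_{x \sim \mathcal{D}, R}[\text{time}(B,x)] \leq t_3 + t_4 \cdot \Pr_{x, R}[C(x) = \textsc{Reject}]$ and aim to bound the rejection probability by $O(1/T)$, so that $t_4 \cdot O(1/T) \leq O(t_4/T) \leq O(1)$ by the hypothesis $t_4 \leq T/2$. To obtain this bound I would first boost the completeness of $C$ by running it $k = \Theta(\log T)$ times with fresh randomness and voting, producing an amplified filter $C'$ of runtime $O(t_3 \log T)$. Using the hypothesis $\Pr_x[x \notin A] \leq 1/T$ together with soundness (which forces $p(x) := \Pr_R[C(x) = \textsc{Accept}] \geq 2/3$ to be violated only on $A^c$, up to slack), the measure of inputs on which $C'$ rejects with nonnegligible probability is at most $O(1/T)$, and a Chernoff bound over the $k$ repetitions drives the within-$A$ rejection probability below $1/T$ as well. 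Combining, the expected runtime is $O(t_3 \log T) + O(1)$, which is $O(t_3)$ after absorbing the logarithmic factor.

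The main obstacle is exactly this amplification step: the vanilla quality control definition only guarantees completeness averaged over $(x,R)$ jointly, and a naive majority vote does not pointwise amplify such an averaged guarantee (it is consistent with $C$ systematically rejecting a constant measure of inputs drawn from $\mathcal{D}$). The extra $1/T$-concentration hypothesis on $\mathcal{D}$ is what closes the gap, since it bounds the measure of inputs that could plausibly have low acceptance probability (by soundness, those are confined to $A^c$, which has measure $\leq 1/T$). Making this ``soundness-plus-concentration implies pointwise completeness on a $(1-O(1/T))$-measure set'' argument precise is the technical core I anticipate; once that is in place the remainder of the proof is bookkeeping on the two contributions $t_3$ and $t_4 \cdot \Pr[C' \text{ rejects}]$ to the expected runtime.
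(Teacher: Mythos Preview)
Your algorithm and the worst-case correctness argument match the paper (and your correctness case split is in fact more careful than the paper's one-line justification). The divergence is in the runtime analysis: the paper does no amplification. It simply asserts that $\Pr_{x,R}[C(x)=\textsc{Accept}] \geq (1-1/T)\cdot 2/3 \geq 1 - 1/t_4$ and concludes the expected runtime is $t_3(1-1/t_4) + (1/t_4)\cdot t_4 = O(t_3)$. You are right that this step deserves scrutiny.

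However, your proposed fix has a genuine logical error. You claim soundness ``forces $p(x) := \Pr_R[C(x)=\textsc{Accept}] \geq 2/3$ to be violated only on $A^c$.'' This reverses the direction of the implication: soundness says $p(x)\leq 1/3$ for $x\in A^c$, but places \emph{no constraint whatsoever} on $p(x)$ for $x\in A$. Concretely, let $C$ deterministically reject a fixed $1/4$-measure subset of $A$ and deterministically accept the remaining $3/4$ of $\mathcal{D}$; this satisfies averaged completeness ($\Pr_{x,R}[\textsc{Accept}]=3/4\geq 2/3$) and, if $A^c$ is empty, vacuously satisfies soundness. Majority-voting over $\Theta(\log T)$ fresh runs of such a $C$ changes nothing on the rejected subset, so $\Pr_{x,R}[C'(x)=\textsc{Reject}]$ remains at least $1/4$ and the $t_4$ contribution does not shrink to $O(1)$. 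The concentration hypothesis $\Pr[x\notin A]\leq 1/T$ only bounds how often soundness is relevant; it cannot convert averaged completeness into pointwise completeness on $A$. To close the gap one needs a stronger completeness guarantee for $C$ (e.g., acceptance probability $\geq 1-O(1/T)$ over $x\sim\mathcal{D}$, which the paper's concrete quality-control constructions do satisfy), not the amplification argument you sketch.
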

The algorithm for approximating $\rho(x)$ is given as follows. As in \Cref{prop:avg-from-qc}, suppose $C$ is a $(\mathcal{D}, \rho)$-quality control algorithm with query complexity and runtime $t_3$. Suppose $E$ is an algorithm that computes a $(1 \pm \varepsilon)$-approximation for any $x \in \mathcal{X}$ and has query complexity and runtime $t_4$.
\paragraph{Average-case-from-quality-control:} On input $x$:
\begin{enumerate}
    \item If $C(x) = $ \textsc{accept}, then output $ \mathbb{E}_{y \sim \mathcal{D}}[\rho(y)]$.
    \item Else, output $E(x)$.
\end{enumerate}

\begin{proof}[Proof of \Cref{prop:avg-from-qc}]
First, by the correctness of algorithms $C$ and $E$ for quality control and $\rho$-approximation, respectively, the algorithm always outputs a $(1 \pm \varepsilon)$-approximation of $\rho(x)$.

We now argue about the runtime. First, with probability at least $1 - 1/T$, $x \sim \mathcal{D}$ satisfies $\rho(x) \in (1 \pm \varepsilon)$, and so with probability at least $(1 - 1/T) \cdot (1 - o(1)) \geq 1 - 1/t_4$ (over the input distribution), $x \sim \mathcal{D}$ is accepted by Algorithm $C$. In this case, the algorithm takes time $t_3$. If $x$ is not accepted by Algorithm $C$, then Algorithm $E$ is called, and will take time $t_4$. Therefore, the average-case runtime of the algorithm is:
$$t_3 \cdot (1 - 1/t_4) + (1/t_4) \cdot t_4 = O(t_3). \eqno \qedhere$$
\end{proof}

\subsection{Approximate motif counting}\label{sec:avg-case-approximate-clique}

We now instantiate the framework of moving from quality control problems to average-case problems for the case of approximately counting the number of copies of a motif in a graph. Such an algorithm must be correct with high constant probability (over the algorithm's randomness) on all inputs.

To put the various types of average-case questions, including quality control, in comparison for motif counting, recall that $G \sim \gnp$ satisfies the property that the number of copies of a motif $H$ is approximately equal to its expected amount under $\gnp$, with high probability. However, outputting this expectation for all inputs $G$ is not meaningful. Quality control gives us a way of efficiently verifying, with high probability, whether the number of copies of $H$ is indeed close to its expectation under $\gnp$, telling us whether this is indeed a good estimate for the count of the motif $H$ in the specific graph $G$. Suppose we want an algorithm that outputs a good estimate for \textit{all} $G$, with good average-case query complexity and runtime over $\gnp$. As described more generally in \Cref{sec:general-avg-case-qc}, we can do so by combining quality control and a worst-case algorithm that is called very infrequently.

Specifically, from \Cref{sec:general-avg-case-qc}, we obtain the following.

\begin{theorem}\label{thm:avg-case-k-clique-count}
    Let $H$ be a motif on $k$ vertices. Consider parameters $n, p$ such that $n \geq p^{-c \cdot m_H}$ for some constant $c$. The average-case complexity over $\gnp$ of approximately counting motifs in a graph with high probability is $1/p^{O(\Delta(H))}$. 
\end{theorem}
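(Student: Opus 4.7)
The plan is to invoke \Cref{prop:avg-from-qc} (the generic ``average-case-from-quality-control'' recipe) with the quality control algorithm from \Cref{thm:qc-motif-in-section} playing the role of $C$, and a straightforward worst-case counting algorithm playing the role of $E$. Concretely, let $C$ be the $(\gnp,\rho_H)$-quality control algorithm from \Cref{thm:qc-motif-in-section}, with query complexity and runtime $t_3 = p^{-O(\Delta(H))}$; and let $E$ be the brute-force algorithm that enumerates all ordered $k$-tuples of vertices, tests for each whether it induces a labeled copy of $H$ via $\binom{k}{2}$ adjacency matrix queries, and reports $C_H(G)/\mu_H$ exactly. This $E$ runs in time $t_4 = O(n^k)$ on every input and clearly computes $\rho_H(G)$ exactly (hence is a $(1\pm\varepsilon)$-approximation).

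The combined algorithm, per \Cref{prop:avg-from-qc}, is: on input $G$, run $C$; if $C$ accepts, output $\mu_H = \mathbb{E}_{G'\sim\gnp}[C_H(G')]$ (equivalently $\rho=1$); otherwise, run $E(G)$ and output its exact count. Correctness on every input follows because the soundness of $C$ ensures that whenever $|\rho_H(G)-1|>\varepsilon$ the algorithm proceeds to $E$ (hence outputs correctly) with probability at least $2/3$, while $E$ is always correct by construction.

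The runtime analysis of \Cref{prop:avg-from-qc} requires one to check the concentration hypothesis $\Pr_{G\sim\gnp}[|\rho_H(G)-1|\geq\varepsilon]\leq 1/T$ with $T\geq 2t_4 = \Omega(n^k)$. This is the main non-routine ingredient: \Cref{lem:Gnp-motif-concentration} only guarantees concentration with constant probability $9/10$, so I will upgrade it to an exponentially strong concentration statement. For $n \geq p^{-c\cdot m_H}$ with a sufficiently large universal constant $c$, Janson's inequality (for the lower tail) together with Kim--Vu's polynomial concentration (for the upper tail), applied to the number of labeled induced copies of $H$ in $\gnp$, yield $\Pr_{G\sim\gnp}[|\rho_H(G)-1|>\varepsilon]\leq \exp\bigl(-\Omega(n^{\alpha}p^{\beta})\bigr)$ for parameters $\alpha,\beta$ depending only on $H$; for $c$ large enough this bound is at most $1/(2n^k)$, which is exactly the $T$ we need. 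The constant $c$ in the hypothesis of the theorem is chosen to absorb both this requirement and the one already used by \Cref{thm:qc-motif-in-section}.

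Applying \Cref{prop:avg-from-qc} then yields an algorithm that, on every input $G$, outputs a $(1\pm\varepsilon)$-approximation of $\rho_H(G)$ (equivalently of $C_H(G)$) with constant success probability, and whose expected runtime over $G\sim\gnp$ is $O(t_3) = p^{-O(\Delta(H))}$, proving \Cref{thm:avg-case-k-clique-count}. The main technical obstacle is the strengthening of \Cref{lem:Gnp-motif-concentration} from constant-probability concentration to inverse-polynomial failure probability $1/\mathrm{poly}(n^k)$; once this is in place the rest is a direct application of the generic reduction.
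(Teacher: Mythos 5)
Your high-level reduction matches the paper's exactly: run the quality-control algorithm from \Cref{thm:qc-motif-in-section} (playing $C$ with $t_3 = p^{-O(\Delta(H))}$), fall back to brute-force enumeration ($E$ with $t_4 = n^{O(k)}$), and invoke \Cref{prop:avg-from-qc}. You also correctly identify the one nontrivial ingredient: the constant-probability concentration of \Cref{lem:Gnp-motif-concentration} must be upgraded so that $\Pr_{G\sim\gnp}[|\rho_H(G)-1|>\varepsilon]$ is small compared to $1/t_4$.

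The gap is in the claim that Janson's inequality and Kim--Vu concentration apply ``to the number of labeled induced copies of $H$.'' Neither does, directly. Janson's inequality controls the lower tail of a count of monotone \emph{increasing} events, and ``$S$ is an \emph{induced} copy of $H$'' is not increasing in the edge indicators (adding an edge can destroy an induced copy). Kim--Vu's polynomial concentration is stated for polynomials with \emph{nonnegative} coefficients in the $x_{uv}$'s, while the induced-copy indicator $\prod_{(u,v)\in E(H)} x_{uv}\prod_{(u,v)\notin E(H)}(1-x_{uv})$ expands with coefficients of both signs. The standard fix, which the paper itself employs for induced motifs in \Cref{sec:induced-motifs-algo}, is to write the induced count as a $\pm 1$ combination of the non-induced counts $\widetilde C_{H'}$ over the at most $2^{\binom{k}{2}}$ supergraphs $H'\supseteq H$ on $k$ vertices; Janson (lower tail) and Kim--Vu (upper tail) apply to each $\widetilde C_{H'}$, and a union bound over the $H'$ finishes. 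With that patch your exponential-tail argument is sound, and in fact stronger than what the paper invokes: \Cref{lem:gnp-extra-concentration-average-case} asserts failure probability $1/n^{O(k)}$ ``almost exactly'' via the Chebyshev/second-moment proof of \Cref{lem:Gnp-motif-concentration}, but the $i=1$ overlap term in that variance ratio is $\Theta(1/n)$ regardless of how large the constant $c$ is taken, so Chebyshev alone cannot drive the failure probability below about $1/n$. Your (corrected) concentration route is the more robust one here.
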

More generally, our results imply an average-case complexity guarantee when any distribution over graphs is taken that satisfies the concentration guarantees of \Cref{thm:general-graph-motif-count-qc}.

Let $C_H(G)$ be the number of labeled induced copies of motif $H$ in $G$. We utilize the following lemma, whose proof follows almost exactly as that of \Cref{lem:Gnp-motif-concentration}.

\begin{lemma}\label{lem:gnp-extra-concentration-average-case}
    Let $H$ be a motif on $k$ vertices. There exists a $c$ such that for parameters $n, p$ such that $n/\log(n) \geq p^{-c \cdot m_H}$, $G \sim \gnp$ satisfies $C_H(G) \in (1 \pm \varepsilon/100) \binom{n}{k} p^{e(H)} (1 - p)^{\binom{k}{2} - e(H)}$ with probability at least $1 - 1/n^{O(k)}$.
\end{lemma}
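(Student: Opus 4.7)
The plan is to essentially rerun the proof of \Cref{lem:Gnp-motif-concentration}, but tracking how the failure probability in the underlying Janson-type concentration inequality scales with the parameter $n^{v(F)} p^{e(F)}$ for subgraphs $F \subseteq H$, and then observing that the strengthened hypothesis $n/\log(n) \geq p^{-c\, m_H}$ (with $c$ chosen sufficiently large in terms of $k$) makes this parameter at least $\Omega(k\log n)$, which drives the failure probability down to $1/n^{O(k)}$.

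First, I would reduce the induced count $C_H(G)$ to non-induced counts via inclusion-exclusion over labeled supergraphs $H'$ with $V(H') = V(H)$ and $E(H') \supseteq E(H)$, exactly as in \Cref{sec:induced-motifs-algo}: it suffices to show that each non-induced count $\widetilde{C}_{H'}(G)$ is within a $(1 \pm \varepsilon/(100 \cdot 2^{k^2}))$ factor of its expectation with failure probability $1/n^{\Omega(k)}$, and then take a union bound over the (constantly many) supergraphs $H'$, all of which still satisfy $m_{H'} = \Theta(m_H)$ up to constants absorbed in $c$.

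Next, for a fixed $H'$, let $\mu := \E[\widetilde{C}_{H'}(G)] = \Theta(n^k p^{e(H')})$ and let $\Delta := \sum_{A \cap B \neq \emptyset,\, e(A \cap B) \geq 1} \Pr[A \subseteq G \text{ and } B \subseteq G]$, where the sum ranges over ordered pairs of distinct labeled copies of $H'$ sharing at least one edge. Standard calculations (see the proof referenced for \Cref{lem:Gnp-motif-concentration} via \cite{janson1990poisson, janson2004upper}) give
\[
\Delta \;=\; O\!\Bigl(\mu^2 \cdot \max_{\emptyset \ne F \subseteq H'} \frac{1}{n^{v(F)} p^{e(F)}}\Bigr).
\]
Janson's inequality for the lower tail and Janson's upper-tail inequality (or the Spencer--Ruci\'nski-type bound used in \cite{janson2004upper}) then yield
\[
\Pr\!\bigl[|\widetilde{C}_{H'}(G) - \mu| \geq \tfrac{\varepsilon}{100 \cdot 2^{k^2}} \mu\bigr] \;\leq\; \exp\!\Bigl(-\Omega\bigl(\varepsilon^2 \min_{\emptyset \ne F \subseteq H'} n^{v(F)} p^{e(F)}\bigr)\Bigr),
\]
with the $\Omega(\cdot)$ hiding a factor $2^{-O(k^2)}$ from the slack in $\varepsilon$.

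Finally, I would verify that the hypothesis $n/\log(n) \geq p^{-c\, m_H}$ implies $n^{v(F)} p^{e(F)} \geq C k \log n$ for every $\emptyset \neq F \subseteq H'$, for any desired constant $C$, provided $c$ is chosen large enough. Indeed, using $e(F)/v(F) \leq m_{H'} \leq 2 m_H$ and $p \leq 1$:
\[
n^{v(F)} p^{e(F)} \;\geq\; \bigl(p^{-c\, m_H} \log n\bigr)^{v(F)} p^{e(F)} \;\geq\; p^{(1 - c/2) e(F)} (\log n)^{v(F)} \;\geq\; \log n,
\]
for $c \geq 2$, and more generally the bound grows like $(\log n)^{v(F)}$ times a factor increasing with $c$, so taking $c$ large (depending on $k$ and $\varepsilon$) makes the exponent at least $O(k) \log n$, yielding a failure probability of $1/n^{O(k)}$ for a single $H'$, hence for the induced count $C_H(G)$ after the union bound.

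The main obstacle is the upper-tail bound: Janson's classical upper-tail inequality is not tight, so I would either invoke the refined upper-tail result of \cite{janson2004upper} used in the original \Cref{lem:Gnp-motif-concentration}, or, since we only need multiplicative concentration (not an exponential decay that is sharp in the polynomial regime), appeal directly to the Spencer--Ruci\'nski-style bound that for the parameter regime $n^{v(F)} p^{e(F)} \gg \log n$ already suffices for concentration with inverse-polynomial failure probability. Either way, no new ideas beyond those in the proof of \Cref{lem:Gnp-motif-concentration} are required; the entire argument is about converting the constant-probability concentration into inverse-polynomial-probability concentration by strengthening the hypothesis on $n$ versus $p^{-m_H}$ by a logarithmic factor.
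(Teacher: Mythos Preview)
Your approach works but differs from what the paper sketches. The paper gives no explicit proof, only the remark that it ``follows almost exactly as that of \Cref{lem:Gnp-motif-concentration}''; that proposition's proof (in \Cref{sec:appendix-gnp}) handles the \emph{induced} count $C_H(G)$ directly via a second-moment computation followed by Chebyshev, and does not go through Janson's inequalities --- the citations to \cite{janson1990poisson, janson2004upper} you appeal to are mentioned only as context for the non-induced case, not as the paper's method.

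Your route instead (i) reduces the induced count to non-induced supergraph counts via inclusion-exclusion, and (ii) applies Janson-type exponential tail bounds to each non-induced count. This does deliver the $1/n^{O(k)}$ failure probability. The price is the inclusion-exclusion detour, which forces you to control supergraphs $H' \supseteq H$ whose $m_{H'}$ can exceed $m_H$ by a factor up to $\Theta(k^2)$ (not the factor $2$ your final display implicitly uses via ``$m_{H'} \leq 2 m_H$''); this is still absorbable into $c$, so the argument survives with the constant adjusted. A more direct alternative in the spirit of the paper's pointer is to skip the reduction and apply McDiarmid's inequality (edge-exposure martingale) straight to the induced count: each edge flip changes $C_H$ by $O_k(n^{k-2})$, giving a bound $\exp\!\bigl(-\Omega_{k,\varepsilon}(n^2 p^{2e(H)})\bigr)$, and the hypothesis with $c$ large makes $n p^{e(H)} \geq n^{\alpha}$ for some $\alpha>0$, hence failure probability $\exp(-n^{\Omega(1)}) \ll 1/n^{O(k)}$ with no separate upper/lower-tail treatment. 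Note that a literal rerun of the Chebyshev step in \Cref{sec:appendix-gnp} would only yield $O(1/n^2)$, so some exponential-tail input is needed regardless of which route one takes.
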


For our worst-case algorithm (which returns an approximation for the number of $H$ copies for any graph), we use the trivial algorithm that enumerates over all $|H|$-size subgraphs of the input graph and checks if this subgraph is a copy of $H$. (There may be more efficient algorithms, but we want something that works independent of the number of $H$ copies in the graph or other properties, and this suffices.)

Let $\rho_H(G) = C_H(G)/\mathbb{E}_{G' \sim \gnp}\left[ C_H(G')\right]$, for $C_H(G)$.

\begin{proof}[Proof of \Cref{thm:avg-case-k-clique-count}]
    First, Algorithm \textsc{Induced-Motif-Quality-Efficient} from \Cref{sec:induced-motifs-algo} is a $(\gnp, \rho_H)$-quality control algorithm with query complexity and runtime $1/p^{O(\Delta(H))}$. Combining this with \Cref{lem:gnp-extra-concentration-average-case} and the worst-case algorithm above using $n^{O(k)}$ queries and runtime in the worst case over graphs, by applying \Cref{prop:avg-from-qc}, we find that the average-case complexity of approximate counting is $1/p^{O(\Delta(H))}$.
\end{proof}
\section{Concentration of subgraph counts}\label{sec:appendix}

For the soundness arguments of the paper -- i.e., the statement that a $(\gnp, \rho)$-quality control algorithm accepts $G \sim \gnp$ with high probability -- we will utilize that, in $\gnp$, the counts of small motifs are well-concentrated around their expectations. Additionally, for lower-bounding the query complexity of quality control for motif counts, we will consider an alternative random graph distribution known as the \textit{stochastic block model} and need concentration of counts of small motifs. In this section, we provide proofs of concentration of small motif counts for these two models, relying only on the second-moment method.

\subsection{\erdosrenyi graphs}\label{sec:appendix-gnp}

In this section, we prove \Cref{lem:Gnp-motif-concentration} using the second moment method. For convenience, we re-state \Cref{lem:Gnp-motif-concentration}.

\begin{proposition}[\Cref{lem:Gnp-motif-concentration}, restated]
 Let $H$ be an induced motif on $k$ vertices. Consider parameters $n, p$ with $n \geq \frac{\omega_{1/p}(1)}{\varepsilon^2} \cdot p^{-m_H}$, where $m_H := \max_{F \subseteq H} e(F)/v(F)$.
 
For $G \sim \gnp$, with probability at least $1 - o_{1/p}(1)$, the number of copies of $H$ in $G$ is in the range $(1 \pm \frac{\varepsilon}{100}) \binom{n}{v(H)} v(H)! \cdot p^{e(H)} (1-p)^{\binom{v(H)}{2} - e(H)}$. 
\end{proposition}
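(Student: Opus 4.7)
The plan is to apply the second-moment method. Write $X = C_H(G)$ as a sum of indicators, one for each injection $\psi : V(H) \hookrightarrow [n]$, where $\mathbbm{1}_\psi$ equals $1$ iff the edges of $G$ among $\psi(V(H))$ realize a labeled induced copy of $H$. Each indicator is a product of $\binom{k}{2}$ independent edge/non-edge events, giving $\mathbb{E}[\mathbbm{1}_\psi] = p^{e(H)}(1-p)^{\binom{k}{2}-e(H)}$ and hence $\mu := \mathbb{E}[X] = \binom{n}{k} k! \cdot p^{e(H)}(1-p)^{\binom{k}{2}-e(H)}$, which is the target. Chebyshev's inequality with deviation $(\varepsilon/100)\mu$ reduces the statement to showing that $\mathrm{Var}(X)/\mu^2$ is small compared to $\varepsilon^2$.

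To bound the variance, I would expand $\mathrm{Var}(X) = \sum_{\psi_1,\psi_2}\mathrm{Cov}(\mathbbm{1}_{\psi_1},\mathbbm{1}_{\psi_2})$ and partition pairs by the overlap size $j = |\psi_1(V(H)) \cap \psi_2(V(H))|$ and, for each $j$, by the induced subgraph $F \subseteq H$ realized on the $j$ shared vertices. Pairs with $j = 0$ have independent indicators and contribute zero covariance. For $j \geq 1$, the edges inside the shared $j$-vertex set are counted only once in the product, yielding
\[
    \mathbb{E}[\mathbbm{1}_{\psi_1}\mathbbm{1}_{\psi_2}] = p^{2e(H)-e(F)}(1-p)^{2(\binom{k}{2}-e(H)) - (\binom{j}{2}-e(F))},
\]
while $\mathbb{E}[\mathbbm{1}_{\psi_1}]\mathbb{E}[\mathbbm{1}_{\psi_2}]$ has exponents without the $-e(F)$ and $-(\binom{j}{2}-e(F))$ corrections. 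The number of ordered pairs of injections with prescribed overlap pattern $F$ is $O_k(n^{2k-j})$, while $\mu^2 = \Theta_k(n^{2k} p^{2e(H)})$.

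Assembling these pieces, the contribution to $\mathrm{Var}(X)/\mu^2$ from overlap size $j$ and pattern $F$ is at most $C_{H,j,F} \cdot (n^j p^{e(F)})^{-1}$, absorbing the bounded $(1-p)^{-O(k^2)}$ factor into the constant (one may restrict to $p \leq 1/2$ WLOG by exchanging edges and non-edges, since $H$ and its complement play symmetric roles in this concentration statement). Using the definition $m_H = \max_{F \subseteq H} e(F)/v(F)$, we have $e(F) \leq j \cdot m_H$, so $n^j p^{e(F)} \geq (np^{m_H})^j$, and the hypothesis $n \geq (10/\varepsilon^2) p^{-m_H}$ gives $np^{m_H} \geq 10/\varepsilon^2$. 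Hence each term is at most $(\varepsilon^2/10)^j$ up to $H$-dependent combinatorial constants. Summing the resulting geometric series over $j \in \{1,\ldots,k\}$ and over the at most $2^{\binom{k}{2}}$ possible shared patterns $F$ yields $\mathrm{Var}(X)/\mu^2 \leq C_H \cdot \varepsilon^2$, and Chebyshev then delivers the desired $9/10$-concentration bound.

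The main obstacle is simply bookkeeping the $H$-dependent constants and, relatedly, verifying that the stated hypothesis $n \geq (10/\varepsilon^2) p^{-m_H}$ is indeed sufficient (if not, the constant $10$ in the hypothesis or the constant $1/100$ in the deviation margin can absorb the finite $H$-dependent factor, since $k = v(H)$ is constant throughout). Everything else is a routine application of the second-moment method as used in the Erd\H{o}s--R\'enyi subgraph-counting literature; this is essentially the argument of Ruci\'nski and Vince, or the induced-subgraph adaptation of the classical Bollob\'as threshold proof.
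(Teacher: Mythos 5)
Your proposal follows essentially the same second-moment argument as the paper's appendix proof: write $C_H(G)$ as a sum of indicators over ordered $k$-tuples, expand the variance by overlap size and shared pattern, invoke $e(F) \leq v(F) \cdot m_H$ together with the hypothesis $n p^{m_H} \geq 10/\varepsilon^2$ to make each non-diagonal contribution small, and finish with Chebyshev. If anything, your treatment is slightly more careful than the paper's on the bookkeeping of the $(1-p)^{-O(k^2)}$ factor (the paper's displayed equality $\binom{n}{k}^2(k!)^2 p^{2e(H)} \cdot (\cdots) = \mathbb{E}[X_G]^2 \cdot (\cdots)$ silently drops $(1-p)^{-2(\binom{k}{2}-e(H))}$, which also needs absorbing into the constants) and on the $H$-dependent combinatorial constants, which your proposal correctly flags as absorbable into the hypothesis's constant $10$ since $k$ is fixed.
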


\begin{proof}
    Let $X_G$ be the number of labeled induced copies of $H$ in a graph $G$. First, observe that $\mathbb{E}\left[X_G \right] = \binom{n}{v(H)} v(H)! \cdot p^{e(H)} (1-p)^{\binom{v(H)}{2} - e(H)}$.

    Let's now analyze the variance of $X_G$. For an ordered set $S$, let $S \equiv H$ mean that $S$ forms a labeled induced copy of $H$. We can write $X_G$ as:
    $$X_G = \sum_{S \in [n]^k} \mathbbm{1}(S \equiv H).$$
    Therefore, the variance is:
    $$\mathrm{Var}\left[X_G \right] = \mathbb{E}\left[ X_G^2\right] - \mathbb{E}\left[ X_G\right]^2$$
    $$= \sum_{S \in [n]^k} \sum_{T \in [n]^k} \mathbb{E}\left[\mathbbm{1}(S \equiv H) \mathbbm{1}(T \equiv H)\right]- \mathbb{E}\left[ X_G\right]^2$$
    $$= \mathbb{E}\left[ X_G\right]^2 + \sum_{i = 1}^{k-1} \sum_{\substack{S, T \in [n]^k \\ |S \cap T| = i}} \mathbb{E}\left[\mathbbm{1}(S \equiv H) \mathbbm{1}(T \equiv H)\right] + \mathbb{E}\left[ X_G\right] - \mathbb{E}\left[ X_G\right]^2$$
    $$= \mathbb{E}\left[ X_G\right] +  \sum_{i = 1}^{k-1} \sum_{\substack{H' \subseteq H \\ |H'| = i}}\sum_{\substack{S, T \in [n]^k \\ |S \cap T| = i}} \mathbb{P}((S \cap T) \equiv H', S \equiv H, T \equiv H)$$
    $$= \mathbb{E}\left[ X_G\right] +  \sum_{i = 1}^{k-1} \sum_{\substack{H' \subseteq H \\ |H'| = i}} \binom{n}{k} k! \cdot \binom{n-k}{k-i} (k-i)! \cdot k \cdot p^{2e(H) - e(H')} (1-p)^{2\binom{k}{2} - 2e(H) - \binom{i}{2} + e(H')}$$
    $$\leq \mathbb{E}\left[ X_G\right] +  \sum_{i = 1}^{k-1} \sum_{\substack{H' \subseteq H \\ |H'| = i}} \binom{n}{k}^2 (k!)^2 p^{2e(H)} \cdot \binom{n}{i}^{-1} \cdot p^{- e(H')} $$ $$= \mathbb{E}\left[ X_G\right] + \mathbb{E}\left[ X_G\right]^2 \cdot \sum_{i = 1}^{k-1} \sum_{\substack{H' \subseteq H \\ |H'| = i}} \binom{n}{i}^{-1} \cdot p^{- e(H')}.$$
    Since $n \geq \frac{\omega_{1/p}(1)}{\varepsilon^2} \cdot p^{-m_H}$, this gives us:
    $$\text{Var}\left[ X_G\right]\leq \mathbb{E}\left[X_G \right]^2 \cdot \frac{\varepsilon^2}{\omega(1)} = \mathbb{E}\left[X_G \right]^2 \cdot \varepsilon^2 \cdot o_{1/p}(1).$$
    Finally, by Chebyshev's inequality, 
    $$\mathbb{P}\left[\left|X_G - \mathbb{E}\left[ X_G\right] \right| \geq \varepsilon \mathbb{E}\left[ X_G\right] \right] \leq \frac{\text{Var}\left[X_G \right]}{\varepsilon^2 \mathbb{E}\left[ X_G\right]^2} = o_{1/p}(1).\eqno \qedhere$$
\end{proof}

\subsection{Stochastic Block Models}\label{sec:appendix-sbm}

We now prove that motifs are concentrated in stochastic block models with two communities by using the second moment method.

Consider a two-community stochastic block model. Community 1 has $n_1$ vertices (with $n_1 \geq n/2$), and in-community edge probability $p$. Community 2 has $n_2$ vertices, and in-community edge probability $1/2$. The between-community edge probability is $1/2$. Let $\mathcal{G}_{\mathrm{SBM}, n, p}$ be the distribution over graphs given by this model. Without loss of generality, suppose Community 1 is on the first $n_1$ vertices of the graph, according to some arbitrary ordering.

\begin{proposition}\label{thm:sbm-H-concentration}

    Consider parameters $n, n_1, n_2, p$ satisfying  $n_1 + n_2 = n$, $n \geq 20 (2k)^{2k + 1} p^{-\Delta(H)} /  \varepsilon^2 $, and  $2^{2\binom{k}2+1} n p^{\Delta(H)/2} \leq n_2 \leq n/2$, where $\Delta(H)$ is the maximum vertex degree of $H$.   
    
    Consider any motif $H$ on $k$ vertices. With probability at least $\frac{9}{10}$, the number of labeled induced copies of $H$ in $G \sim \mathcal{G}_{\mathrm{SBM}, n, p}$ is within a $(1 \pm \varepsilon)$ multiplicative factor of its expectation. 
\end{proposition}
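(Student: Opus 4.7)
The plan is to apply the second-moment method, extending the $\gnp$ argument from Proposition~\ref{lem:Gnp-motif-concentration} to track the heterogeneous edge probabilities. Let $X_G = \sum_{S \in [n]^k} \mathbbm{1}(S \equiv H)$ count the ordered $k$-tuples forming a labeled induced copy of $H$, so showing $\Pr[|X_G - \mathbb{E}[X_G]| > \varepsilon \mathbb{E}[X_G]] \leq 1/10$ via Chebyshev's inequality reduces to bounding $\text{Var}[X_G] \leq (\varepsilon^2/10) \mathbb{E}[X_G]^2$. For the lower bound on $\mathbb{E}[X_G]$, restrict the defining sum to tuples lying entirely in Community 1: each such tuple contributes exactly the $\gnp$-probability $p^{e(H)}(1-p)^{\binom{k}{2}-e(H)}$, and there are at least $(1 - o(1))(n/2)^k$ of them, giving $\mathbb{E}[X_G] \geq (1-o(1))(n/2)^k p^{e(H)}(1-p)^{\binom{k}{2}-e(H)}$.

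The heart of the proof is upper bounding $\text{Var}[X_G]$. Since $\text{Cov}(\mathbbm{1}(S \equiv H), \mathbbm{1}(T \equiv H)) = 0$ whenever $|S \cap T| \leq 1$ (the two indicators then depend on disjoint edge sets), we have
$$\text{Var}[X_G] \leq \mathbb{E}[X_G] + \sum_{i = 2}^{k-1} \sum_{\substack{(S,T) \\ |S \cap T| = i}} \mathbb{P}[S \equiv H,\ T \equiv H].$$
For each ordered pair, the joint probability factors as a product of $p$'s, $1-p$'s, and $1/2$'s over the $2\binom{k}{2} - \binom{i}{2}$ relevant edges, where an edge gets a $p$-type factor iff both endpoints lie in Community 1. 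I would stratify the inner sum by the triple $(j_U, j_A, j_B)$ recording how many vertices of $U = S \cap T$, $A = S \setminus T$, and $B = T \setminus S$ lie in Community 2, and within each stratum use the exact edge-probability product. The all-Community-1 stratum ($j_U = j_A = j_B = 0$) reproduces the $\gnp$ calculation verbatim and contributes at most $\mathbb{E}[X_G]^2 \cdot \sum_{i=2}^{k-1} \binom{n}{i}^{-1} \max_{H'} p^{-e(H')}$, which is $\leq (\varepsilon^2/20)\mathbb{E}[X_G]^2$ under the stated lower bound on $n$, exactly as in Proposition~\ref{lem:Gnp-motif-concentration}.

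The main obstacle — the only new complication beyond the $\gnp$ calculation — is bounding the contribution of strata where some vertices lie in Community 2. Each required edge incident to a Community 2 vertex inflates the joint probability by a factor of at most $(1/2)/p = 1/(2p)$ relative to the all-Community-1 case, which could be enormous for small $p$. The saving is that placing a vertex in Community 2 costs a factor $n_2/n \leq 2^{2\binom{k}{2}+1} p^{\Delta(H)/2}$ in the number of eligible tuples, and in $H$ no vertex contributes more than $\Delta(H)$ incident edges to the product. Using $e(H) \leq k\Delta(H)/2$, each extra Community 2 vertex therefore contributes at most $(n_2/n) \cdot (1/(2p))^{\Delta(H)} \leq O_k(p^{-\Delta(H)/2})$ to the ratio of the stratum's contribution versus the all-Community-1 stratum; this is absorbed by the gap in the hypothesis $n \geq 20(2k)^{2k+1} p^{-\Delta(H)}/\varepsilon^2$ (each Community-2-vertex factor is paid for by one factor of $n$ in $\mathbb{E}[X_G]^2$). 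Summing over the $O_k(1)$ strata and the $i \in \{2,\dots,k-1\}$ values gives $\text{Var}[X_G] \leq (\varepsilon^2/10)\mathbb{E}[X_G]^2$, and Chebyshev's inequality completes the proof.
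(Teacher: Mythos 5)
There is a genuine gap, and it stems from a misreading of the hypothesis on $n_2$. You invoke the bound ``$n_2/n \leq 2^{2\binom{k}{2}+1}p^{\Delta(H)/2}$'' to argue that placing a vertex in Community~2 shrinks the number of eligible tuples enough to offset the probability inflation of $(1/(2p))^{\Delta(H)}$. But the statement's constraint is $2^{2\binom{k}{2}+1}np^{\Delta(H)/2} \le n_2 \le n/2$, i.e., the displayed quantity is a \emph{lower} bound on $n_2/n$; the only upper bound available is $n_2/n \le 1/2$. With the correct inequality, the per-vertex factor $(n_2/n)\cdot(1/(2p))^{\Delta(H)}$ is $\Theta(p^{-\Delta(H)})$, not $O_k(p^{-\Delta(H)/2})$, so the ``saving'' you rely on does not exist. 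The follow-up claim that ``each Community-2-vertex factor is paid for by one factor of $n$ in $\mathbb{E}[X_G]^2$'' is also not justified: when you compare a variance term with $|S\cap T|=i$ to $\mathbb{E}[X_G]^2$, only $i$ net factors of $n$ are available (from the $i$ merged vertices), whereas a stratum can contain up to $2k-i$ Community-2 vertices, each of which injects an uncancelled $p^{-\Theta(\Delta(H))}$ into the joint probability $\Pr[S\equiv H, T\equiv H]$ relative to the all-Community-1 stratum. The bookkeeping does not close.

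The underlying trouble is that you bound the variance by the raw joint probability $\Pr[S,T]$ and then compare strata to the all-Community-1 stratum (and to a Community-1-only lower bound on $\mathbb{E}[X_G]$), which forces you to account for Community-2 probability inflation over \emph{all} edges of $S$ and $T$, not just the shared ones. The role of the hypothesis $n_2 \ge 2^{2\binom{k}{2}+1}np^{\Delta(H)/2}$ is in fact the opposite of what you assume: it ensures Community~2 has enough vertices that overlaps falling inside Community~2 are sufficiently rare for the second moment to be controlled there (this is exactly how it is used in the $i=0, j\ge 1$ case of \Cref{lem:sbm-H-concentration-specific}). The paper's route is to decompose $C_H(G) = \sum_{H'\subseteq H} N_{H',H}(G)$, where $N_{H',H}$ counts copies with the $H'$-part in Community~1 and the remainder in Community~2, prove concentration of each $N_{H',H}$ separately by a second-moment argument in which both the expectation and the variance carry the same fixed community structure (so only the shared-edge blowup survives), and finish with a union bound over the at most $2^k$ choices of $H'$. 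If you want to run the second-moment method directly on $X_G$ without this decomposition, you would need to work with the covariance $\Pr[S]\Pr[T]\bigl(\Pr[\text{shared edges}]^{-1} - 1\bigr)$ rather than $\Pr[S,T]$, and stratify the overlap vertices by community so that only the shared-edge factors appear; your proposal as written does not do this.
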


We prove this by proving the following stronger statement.

\begin{lemma}\label{lem:sbm-H-concentration-specific}
    Consider parameters $n, n_1, n_2, p$ satisfying  $n_1 + n_2 = n$, $n \geq 20 (2k)^{2k + 1} p^{-\Delta(H)} /  \varepsilon^2 $, and  $2^{2\binom{k}2+1} n p^{\Delta(H)/2} \leq n_2 \leq n/2$, where $\Delta(H)$ is the maximum vertex degree of $H$.   
    
    Consider any motif $H$ on $k$ vertices, and any $H' \subseteq H$ on $k'$ vertices. For a graph $G$ with $n$ vertices, let $N_{H', H}(G)$ be the number of copies of $H$ in $G$, such that $H'$ is placed within the first $n_1$ vertices of $G$, and $H \setminus H'$ is placed within the other $n_2$ vertices.
    
    With probability at least $1 - \frac{1}{10} \cdot \frac{1}{2^k}$ over $G \sim \mathcal{G}_{\mathrm{SBM}, n, p}$, 
    \begin{equation}\label{eq:sbm-good-range}
      N_{H', H}(G) \in (1 \pm \varepsilon) \binom{n_1}{k'} \binom{n_2}{k - k'} k! \cdot p^{e(H')} (1-p)^{\binom{k'}{2} - e(H')} (1/2)^{\binom{k}{2} - \binom{k'}{2}}.  
    \end{equation}
\end{lemma}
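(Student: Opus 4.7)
\textbf{Proof proposal for \Cref{lem:sbm-H-concentration-specific}.}

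The plan is to follow the second-moment approach used for \Cref{lem:Gnp-motif-concentration}, adapting the bookkeeping to track the two communities. First, I would write $N_{H',H}(G) = \sum_S \mathbbm{1}(S \equiv H)$, where the sum is over ordered tuples $S = (s_1,\ldots,s_k)$ with $s_1,\ldots,s_{k'} \in [n_1]$ (placed as the vertices of $H'$) and $s_{k'+1},\ldots,s_k \in \{n_1+1,\ldots,n\}$ (placed as the vertices of $H \setminus H'$), all distinct. For each such $S$, every edge of $H$ within $H'$ is sampled with probability $p$, every non-edge within $H'$ with probability $1-p$, and every edge or non-edge incident to at least one vertex of $H \setminus H'$ (whether between communities or within Community 2) with probability $1/2$. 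A direct count of the tuples and multiplication by the per-tuple probability gives
\[
\mathbb{E}[N_{H',H}(G)] \;=\; \binom{n_1}{k'}\binom{n_2}{k-k'} k! \cdot p^{e(H')}(1-p)^{\binom{k'}{2}-e(H')} (1/2)^{\binom{k}{2}-\binom{k'}{2}},
\]
which is exactly the $\mu$ appearing in \eqref{eq:sbm-good-range}.

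Next I would bound the variance by partitioning pairs $(S,T)$ by the intersection pattern: $i_1 := |S \cap T \cap [n_1]|$ and $i_2 := |S \cap T \cap ([n] \setminus [n_1])|$, with total overlap $i = i_1+i_2 \in \{0,\ldots,k\}$. The $i = k$ case ($S = T$) contributes exactly $\mu$. For $i = 0$, the indicators are independent, so the contribution is at most $\mu^2$; combined with the outer $-\mu^2$ this is the usual cancellation. For intermediate $i$, I would enumerate the possible shared subgraphs: a tuple-overlap of size $(i_1,i_2)$ corresponds to choosing subgraphs $F_S \subseteq H$ (with $|F_S \cap H'|=i_1$, $|F_S \setminus H'|=i_2$) and likewise $F_T$, together with an identification of their vertex sets. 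Each pair $(S,T)$ with such an overlap pattern contributes a probability in which every edge whose endpoints are shared between $S$ and $T$ is counted only once; all other edges of the two copies are independent.

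The ratio of each overlap contribution to $\mu^2$ is then bounded by
\[
\frac{\binom{n_1-k'}{k'-i_1}\binom{n_2-(k-k')}{(k-k')-i_2}}{\binom{n_1}{k'}\binom{n_2}{k-k'}} \cdot p^{-e^{\mathrm{sh}}_1} \cdot 2^{e^{\mathrm{sh}}_1 + e^{\mathrm{sh}}_2} \cdot (1-p)^{-(\binom{i_1}{2}-e^{\mathrm{sh}}_1)},
\]
where $e^{\mathrm{sh}}_1$ is the number of shared edges inside Community 1 and $e^{\mathrm{sh}}_2$ the remaining shared edges; these quantities are at most $e(H')$ and $\binom{k}{2}-\binom{k'}{2}$ respectively, and are bounded by $i_1 \cdot \Delta(H)$ and $i_2 \cdot \Delta(H)$. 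Using $n_1 \geq n/2$, $n_2 \geq 2^{2\binom{k}{2}+1} n p^{\Delta(H)/2}$, and $p \leq 1/2$, the first binomial ratio is $O(n_1^{-i_1} n_2^{-i_2})$ up to factors polynomial in $k$, and the key combined bound takes the form $n_1^{-i_1} n_2^{-i_2} \cdot p^{-i_1 \Delta(H)} \cdot 2^{O(k^2)}$. The worst case for any $(i_1,i_2)$ with $i \geq 1$ is controlled by $n p^{\Delta(H)}$ being large (from $i_2 \geq 1$ we pick up a factor $n_2^{-1} \leq 2^{-2\binom{k}{2}-1} n^{-1} p^{-\Delta(H)/2}$, a saving by our assumption on $n_2$; from $i_1 \geq 1$ we pick up $n_1^{-1} p^{-\Delta(H)} \leq (2/n) p^{-\Delta(H)}$). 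Summing the at most $(2k)^{2k}$ possible overlap patterns gives
\[
\mathrm{Var}[N_{H',H}(G)] \;\leq\; \mu + \mu^2 \cdot (2k)^{2k} \cdot \max_{i \geq 1}\bigl(n_1^{-i_1} n_2^{-i_2} p^{-i_1\Delta(H)} 2^{O(k^2)}\bigr) \;\leq\; \frac{\varepsilon^2}{10 \cdot 4^k}\,\mu^2,
\]
where the last inequality uses $n \geq 20(2k)^{2k+1} p^{-\Delta(H)}/\varepsilon^2$ to absorb all the combinatorial factors. Applying Chebyshev's inequality then yields the desired deviation bound with failure probability at most $1/(10 \cdot 2^k)$.

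The main obstacle I expect is the bookkeeping in the variance computation: one must cleanly separate shared edges into those inside Community 1 (contributing the critical $p^{-e^{\mathrm{sh}}_1}$ factor) and those involving Community 2 (contributing only constants of the form $2^{O(k^2)}$), and verify that the worst overlap pattern — shared vertices concentrated entirely in Community 1 on a maximum-degree subgraph — is still tamed by the hypothesis $n \geq 20(2k)^{2k+1}p^{-\Delta(H)}/\varepsilon^2$, while overlaps that include Community-2 vertices are killed by the much stronger lower bound $n_2 \geq 2^{2\binom{k}{2}+1} n p^{\Delta(H)/2}$. Once this careful case analysis is carried out, the rest is a routine application of the second moment method, mirroring the $\gnp$ proof above.
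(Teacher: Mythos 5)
Your proposal is correct and follows essentially the same approach as the paper's proof: write $N_{H',H}$ as a sum of indicators over ordered tuples split across the two communities, bound the variance by decomposing the cross-terms according to the overlap pattern $(i_1,i_2)$ of shared vertices in each community, control the Community-1 overlaps via the hypothesis $n\gtrsim p^{-\Delta(H)}$ and the Community-2 overlaps via the hypothesis on $n_2$, and finish with Chebyshev. The paper carries out the same case split (their $i,j$ are your $i_1,i_2$) with a single worst-subgraph $F$ playing the role of your $e^{\mathrm{sh}}_1$, so the two arguments are interchangeable.
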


\begin{proof}
    We apply Chebyshev's inequality; therefore, we must analyze the expectation and variance of $N_{H', H}(G)$. For ease of notation, we refer to $ N_{H', H}(G)$ as $X_G$

    First, we see
    $$\mathbb{E}_{G \sim \mathcal{G}_{\mathrm{SBM}, n, p}}\left[ X_G\right] = \binom{n_1}{k'} \binom{n_2}{k - k'} k! \cdot p^{e(H')} (1-p)^{\binom{k'}{2} - e(H')} (1/2)^{\binom{k}{2} - \binom{k'}{2}}.$$

    Next, we analyze the variance. For an ordered set $S$, let $S \equiv H$ mean that $S$ forms a labeled induced copy of $H$. First, observe that we can write $X_G$ as:
    $$X_G = \sum_{\substack{S_1 \in [n_1]^{k'} \\ S_2 \in [n_2]^{k - k'}}} \mathbbm{1}(S_1 \equiv H') \mathbbm{1}(S_1 \cup S_2 \equiv H).$$
    Therefore, $$\mathrm{Var}\left[X_G \right] = \mathbb{E}\left[ X_G^2\right] - \mathbb{E}\left[ X_G\right]^2 $$ $$= \sum_{\substack{S_1 \in [n_1]^{k'} \\ S_2 \in [n_2]^{k - k'}}} \sum_{\substack{T_1 \in [n_1]^{k'} \\ T_2 \in [n_2]^{k - k'}}} \mathbb{E}\left[\mathbbm{1}(S_1 \equiv H') \mathbbm{1}(S_1 \cup S_2 \equiv H) \mathbbm{1}(T_1 \equiv H') \mathbbm{1}(T_1 \cup T_2 \equiv H)\right] - \mathbb{E}\left[ X_G\right]^2.$$

    We break up the summation into various characterizations of $S_1, S_2, T_1, T_2$ in terms of the various amounts of intersection between the sets.

    Consider the summation over all $S_1, S_2, T_1, T_2$ satisfying the following properties: $S_1$ and $T_1$ intersect on $i$ vertices, and $S_2$ and $T_2$ intersect on $j$ vertices.
    
    First, suppose that $i \geq 1$. Let $F$ be the subgraph of $H'$ with the fewest edges. Then, for $p \leq 1/2$, we find that the summation taken over the expectation of these indicators for these particular $S_1, S_2, T_1, T_2$ is at most:
    \begin{equation}\label{eq:variance-sum}
        \leq \binom{n_1}{2k' - i} \binom{n_2}{2k - 2k' - j} (2k' - i)! (2k - 2k' - j)! \cdot p^{2 e(H') - e(F)} (1-p)^{2\binom{k'}{2} - 2e(H') + e(F)} (1/2)^{2 \binom{k}{2} - 2 \binom{k'}{2} - \binom{j}{2} + \binom{i}{2}}.
    \end{equation}

    Since $\binom{n_1}{2k' - i} \leq \binom{n_1}{k'}^2 / \binom{n_1}{i}$, $\binom{n_2}{2k - 2k' - j} \leq \binom{n_2}{2k - 2k'}/\binom{n_2}{j}$, and $(2k' - i)! (2k - 2k' - j)! \leq (k!)^2 \cdot (2k)^{2k}$, the above is:
    \begin{equation} \label{eq:XG-expectation}
       \leq \mathbb{E}\left[ X_G \right]^2 \cdot \binom{n_1}{i}^{-1} \binom{n_2}{j}^{-1} \cdot p^{-e(F)} (1-p)^{e(F)} \cdot (2k)^{2k}. 
    \end{equation}

    First, when $i \geq 1$, $n_1 \geq n/2 \geq C p^{-\Delta(H)} \geq C p^{-e(F)/i}$, with some sufficiently large constant $C$; e.g., $C \geq 10 (2k)^{2k + 1} /  \varepsilon^2 $ suffices.

    Therefore, $\binom{n_1}{i}^{-1} \binom{n_2}{j}^{-1} \cdot p^{-e(F)} \leq \frac{1}{10} \cdot \frac{1}{(2k)^{2k}} \cdot \frac{1}{k^2} \cdot \frac{1}{2^k} \cdot \varepsilon^2$. Therefore, \Cref{eq:XG-expectation} is less than:
    $$\leq \mathbb{E}\left[X_G \right]^2 \cdot \frac{1}{10} \cdot \frac{1}{k^2} \cdot \frac{\varepsilon^2}{2^k}.$$

    Next, consider upper-bounding \Cref{eq:XG-expectation} when $i = 0$, $j \geq 1$. Then \Cref{eq:variance-sum} is upper-bounded by:
    $$\leq \binom{n_1}{2k'} \binom{n_2}{2k - 2k' - j} (2k')! (2k - 2k' - j)! \cdot p^{2 e(H')} (1-p)^{2\binom{k'}{2} - 2e(H')} (1/2)^{2 \binom{k}{2} - 2 \binom{k'}{2} - \binom{j}{2}}.$$
    $$\leq \mathbb{E}\left[X_G \right]^2 \cdot \binom{n_2}{j}^{-1} 2^{\binom{j}{2}} \cdot (2k)^{2k}.$$
    Since $n_2 \geq 2^{2\binom{k}2+1} n p^{\Delta(H)/2}$ and $n \geq C p^{-\Delta(H)}$, this is 
    $$\leq \mathbb{E}\left[X_G \right]^2 \cdot \frac{1}{10} \cdot \frac{1}{k^2} \cdot \frac{\varepsilon^2}{2^k}.$$

    Finally, when $i = j = 0$, the summation over the expectation of the indicators for the corresponding $S_1, S_2, T_1, T_2$ is $\leq \mathbb{E}\left[ X_G \right]^2$ due to the independence of the $S_1, S_2$ indicators with the $T_1, T_2$ indicators.

    Therefore, combining each of the summations (with different $i, j$ values), we find that 
    $$\text{Var}\left[X_G \right] \leq \mathbb{E}\left[ X_G\right]^2 \cdot \frac{1}{10} \cdot \frac{\varepsilon^2}{2^k}.$$

    By Chebyshev's inequality, 
    $$\mathbb{P}\left[\left|X_G - \mathbb{E}\left[ X_G\right] \right| \geq \varepsilon \mathbb{E}\left[ X_G\right] \right] \leq \frac{\text{Var}\left[X_G \right]}{\varepsilon^2 \mathbb{E}\left[ X_G\right]^2} \leq \frac{1}{10} \cdot \frac{\varepsilon^2}{2^k} \cdot \frac{1}{\varepsilon^2} = \frac{1}{10 \cdot 2^k}.\eqno \qedhere$$
\end{proof}

\begin{proof}[Proof of \Cref{thm:sbm-H-concentration} from \Cref{lem:sbm-H-concentration-specific}]
Apply \Cref{lem:sbm-H-concentration-specific} over every $H' \subseteq H$. There are at most $2^k$ such $H'$. By a union bound, the probability (over $G \sim \mathcal{G}_{\mathrm{SBM}, n, p}$ that there exists an $N_{H', H}(G)$ not in the range given in \Cref{eq:sbm-good-range} is at most $\frac{1}{10}$. Since the number of labeled induced copies of $H$ in $G$ is the sum over $H'$ of the number of labeled induced copies of $H$ with $H'$ in Community 1, we find that, with probability at least $\frac{9}{10}$, the number of labeled induced copies of $H$ in $G \sim \mathcal{G}_{\mathrm{SBM}, n, p}$ is in a $(1 \pm \varepsilon)$ multiplicative factor of its expectation.
\end{proof}

\end{document}